\documentclass[12pt]{article}
\usepackage[margin=1in]{geometry}
\usepackage{float}

\usepackage{cancel}
\usepackage{amsmath,amssymb,bm,url,mathrsfs,mathtools}
\usepackage{algorithmic}
\usepackage[ruled,vlined]{algorithm2e}
\usepackage{setspace}
\usepackage[ruled,vlined]{algorithm2e}
\SetAlFnt{\small\setstretch{1.2}}

\usepackage{amsthm}
\usepackage{hyperref}
\usepackage{makecell}
\usepackage{tablefootnote}
\usepackage{comment}
\usepackage{soul}
\usepackage{graphics,graphicx,subfigure}
\usepackage{subcaption,caption,cleveref}
\hypersetup{
    colorlinks=true,
    linkcolor=blue,
    citecolor=blue,
    filecolor=magenta,      
    urlcolor=cyan,
}
\usepackage{booktabs}
\usepackage{multirow}
\usepackage{multicol}
\usepackage{natbib}
\usepackage{xcolor}
\usepackage{dsfont}
\usepackage{bbm}
\usepackage{enumitem}
\usepackage{blkarray}
\usepackage{amsfonts}
\usepackage{caption}

\usepackage{xr}

\usepackage{tikz}
\usetikzlibrary{positioning}
\usetikzlibrary{arrows.meta}

\makeatletter
\newcommand*{\addFileDependency}[1]{
  \typeout{(#1)}
  \@addtofilelist{#1}
  \IfFileExists{#1}{}{\typeout{No file #1.}}
}
\makeatother

\usepackage{footmisc}
\allowdisplaybreaks[4]
\def\bar{\overline}

\def\calA{{\mathcal A}}

\def\calC{{\mathcal C}}
\def\calD{{\mathcal D}}
\def\calE{{\mathcal E}}
\def\calF{{\mathcal F}}
\def\calG{{\mathcal G}}
\def\calH{{\mathcal H}}
\def\calI{{\mathcal I}}
\def\calJ{{\mathcal J}}
\def\calK{{\mathcal K}}

\def\calM{{\mathcal M}}
\def\calN{{\mathcal N}}

\def\calP{{\mathcal P}}

\def\calT{{\mathcal T}}
\def\calU{{\mathcal U}}
\def\calV{{\mathcal V}}

\def\calX{{\mathcal X}}

\def\calZ{{\mathcal Z}}

\def\MN{{\mathbb N}}

\def\PP{{\mathbb P}}
\def\QQ{{\mathbb Q}}
\def\RR{{\mathbb R}}

\def\ZZ{{\mathbb Z}}

\def\MA{{\mathbf{A}}}
\def\MB{{\mathbf{B}}}

\def\MG{{\mathbf{G}}}

\def\MI{{\mathbf{I}}}

\def\ML{{\mathbf{L}}}

\def\MN{{\mathbf{N}}}

\def\MP{{\mathbf{P}}}
\def\MQ{{\mathbf{Q}}}
\def\MR{{\mathbf{R}}}

\def\MU{{\mathbf{U}}}
\def\MV{{\mathbf{V}}}

\def\MX{{\mathbf{X}}}
\def\MY{{\mathbf{Y}}}
\def\MZ{{\mathbf{Z}}}
\def\MTheta{{\boldsymbol{\Theta}}}

\def\MXi{{\boldsymbol{\Xi}}}
\def\Mtheta{{\boldsymbol{\theta}}}
\def\Mmu{{\boldsymbol{\mu}}}

\def\MPa{{\mathbf{Pa}}}

\def\i{{\boldsymbol i}}
\def\j{{\boldsymbol j}}

\def\p{{\boldsymbol p}}

\def\r{{\boldsymbol r}}

\def\x{{\boldsymbol x}}

\def\z{{\boldsymbol z}}

\def\bgamma{{\boldsymbol \gamma}}

\def\bbeta{{\boldsymbol \beta}}
\def\boeta{{\boldsymbol \eta}}

\def\calA{{\cal  A}} 
 
\def\calC{{\cal  C}} 
\def\calD{{\cal  D}} 
\def\calE{{\cal  E}} 
\def\calF{{\cal  F}} 
\def\calG{{\cal  G}} 
\def\calH{{\cal  H}} 
\def\calI{{\cal  I}} 
\def\calJ{{\cal  J}} 
\def\calK{{\cal  K}} 
 
\def\calM{{\cal  M}} 
\def\calN{{\cal  N}} 
 
\def\calP{{\cal  P}}

\def\calT{{\cal  T}} 
\def\calU{{\cal  U}} 
\def\calV{{\cal  V}} 
 
\def\calX{{\cal  X}} 
 
\def\calZ{{\cal  Z}}  

\def\bone{\bfm 1}

\newcommand{\bfm}[1]{\ensuremath{\mathbf{#1}}}

\def\be{\bfm e}

\def\bh{\bfm h}

     \def\PP{\mathbb{P}}
     \def\QQ{\mathbb{Q}}
     \def\RR{\mathbb{R}}

\def\bu{\bfm u}     
\def\bv{\bfm v}     
     
\def\bx{\bfm x}     
     
\def\bz{\bfm z}     \def\ZZ{\mathbb{Z}}

\def\hat{\widehat}

\newtheoremstyle{customstyle}
  {\topsep}   
  {\topsep}   
  {\itshape} 
  {}          
  {\bfseries} 
  {.}         
  { }        
  {}  
\theoremstyle{customstyle}
\newtheorem{customassumption}{Assumption}

\newtheorem{theorem}{Theorem}
\newtheorem{definition}{Definition}
\newtheorem{proposition}{Proposition}
\newtheorem{remark}{Remark}

\newtheorem{assumption}{Assumption}
\newtheorem{claim}{Claim}
\newtheorem{lemma}{Lemma}
\newtheorem{corollary}{Corollary}

\theoremstyle{remark}

\usepackage{tikz}
\usetikzlibrary{shapes, calc, shapes, arrows, positioning}
\tikzstyle{qedge}=[->,thick,black]

\definecolor{myblue}{rgb}{0.0265,    0.6137,    0.8135}
\definecolor{myyellow}{rgb}{0.9290,    0.6940,    0.1250}

\tikzstyle{neuron}=[draw, circle, minimum size=25pt,inner sep=0pt, fill=black!10]

\tikzstyle{hidden}=[draw, circle, minimum size=25pt,inner sep=0pt, fill=white]

\usetikzlibrary{arrows.meta}
\tikzset{>={Latex[width=3mm,length=2mm]}}

\tikzstyle{arr}=[->, thick, black]

\usetikzlibrary{decorations.text}

\usetikzlibrary{shadows,shadings,shapes.symbols}
\tikzset{
    double color fill/.code 2 args={
        \pgfdeclareverticalshading[%
            tikz@axis@top,tikz@axis@middle,tikz@axis@bottom%
        ]{diagonalfill}{100bp}{%
            color(0bp)=(tikz@axis@bottom);
            color(50bp)=(tikz@axis@bottom);
            color(50bp)=(tikz@axis@middle);
            color(50bp)=(tikz@axis@top);
            color(100bp)=(tikz@axis@top)
        }
        \tikzset{shade, left color=#1, right color=#2, shading=diagonalfill}
    }
}

\newcommand{\E}{\mathbb{E}}

\newcommand{\yg}[1]{\textcolor{red!70!black}{[YG: #1]}}

\usepackage{titlesec}
\titlespacing*{\section}{0pt}{1pt}{1pt}
\titlespacing*{\subsection}{0pt}{1pt}{1pt}
\titlespacing*{\paragraph}{0pt}{*1}{*0}

\newcommand\blfootnote[1]{%
  \begingroup
  \renewcommand\thefootnote{}\footnote{#1}%
  \addtocounter{footnote}{-1}%
  \endgroup
}

\def\spacingset#1{\renewcommand{\baselinestretch}%
{#1}\small\normalsize}
\spacingset{1}
\begin{document}
\title{Discrete Causal Representation Learning}
\author{Wenjin Zhang$^*$\and Yixin Wang$^\dagger$ \and Yuqi Gu$^*$}
\date{$^*$Department of Statistics, Columbia University\\
$^\dagger$Department of Statistics,  University of Michigan}

\maketitle
\begin{abstract}
Causal representation learning seeks to uncover causal relationships among high-level latent variables from low-level, entangled, and noisy observations. Existing approaches often either rely on deep neural networks, which lack interpretability and formal guarantees, or impose restrictive assumptions like linearity, continuous-only observations, and strong structural priors. These limitations particularly challenge applications with a large number of discrete latent variables and mixed-type observations. To address these challenges, we propose \emph{discrete causal representation learning (DCRL)}, a generative framework that models a directed acyclic graph among discrete latent variables, along with a sparse bipartite graph linking latent and observed layers. This design accommodates continuous, count, and binary responses through flexible measurement models while maintaining interpretability. Under mild conditions, we prove that both the bipartite measurement graph and the latent causal graph are identifiable from the observed data distribution alone. We further propose a three-stage \emph{estimate-resample-discovery} pipeline: penalized estimation of the generative model parameters, resampling of latent configurations from the fitted model, and score-based causal discovery on the resampled latents. We establish the consistency of this procedure, ensuring reliable recovery of the latent causal structure. Empirical studies on educational assessment and synthetic image data demonstrate that DCRL recovers sparse and interpretable latent causal structures. 
\end{abstract}
\noindent \textbf{Keywords:} Causal Discovery; Causal Representation Learning; Directed Acyclic Graph; Identifiability; Discrete Latent Variables.

\blfootnote{Correspondence to: Yuqi Gu. Email: \texttt{yuqi.gu@columbia.edu}. Address: 928 SSW Building, 1255 Amsterdam Avenue, New York, NY 10025.}

\spacingset{1.7}
\section{Introduction}\label{sec:intro}

Causal representation learning (CRL) seeks to recover high-level latent variables and their causal structure from low-level, entangled observations such as images, text, or time series \citep{scholkopf2021toward,moran2026towards}. While deep generative modeling approaches to CRL have shown strong empirical performance on complex data \citep{yang2021causal,khemakhem2020variational,javaloy2023causal,fan2023causal}, their  neural architectures remain black boxes with limited interpretability, impeding validation and understanding of what latent variables represent \citep{moran2026towards}.

Interpretability and identifiability of models are therefore central to uncovering latent causal mechanisms in complex datasets in a trustworthy manner. Informally, a causal representation is \emph{identifiable} if the observed distribution uniquely determines the parameters of the latent variable model and the causal relations among these latents, up to a specified equivalence relation capturing the unavoidable indeterminacies. Without identifiability, representation learning is prone to practical failures such as underspecification and posterior collapse \citep{d2022underspecification,wang2021posterior}.
In this work, we study causal structure learning in latent variable models with \emph{discrete} latent variables and general-response observed variables. The model has two structural components:  a directed acyclic graph (DAG) among the latent variables and a sparse bipartite measurement graph linking the latents to the observables. Our goal is to determine when these structures are identifiable from the observational distribution alone, and to develop a consistent procedure for recovering them.

A growing body of CRL work has established that, with \emph{continuous} latent variables, one can typically recover latents only up to permutations and per-coordinate reparameterizations \citep{vonkuegelgen2023nonparametric,jin2024learning}, rather than to a unique canonical form. Moreover, such equivalence classes are essentially tight: without additional structure or side information, stronger identifiability is unattainable \citep{varici2025score}. Even in a fully linear model with perfect single-node interventions, identifiability is limited to scaling and permutation~\citep{squires2022causal,buchholz2023learning}. This inherent indeterminacy prevents specific numerical values of latent variables from carrying stable semantic meaning, even in identifiable continuous-latent CRL models.

By contrast, discrete latent models with even highly nonlinear measurements can achieve identifiability up to latent-coordinate permutation alone \citep{lee2024new,lee2025deep}. Consequently, while continuous variables appear more expressive, only a limited portion of the information they encode is invariant under allowable reparameterizations and therefore robustly interpretable. Discrete latent variable models thus offer a more stable form of interpretability: the equivalence classes are smaller and easier to characterize, and latent coordinates can be more directly aligned with the ground truth causal factors. 

From a practical perspective, this discreteness is also often the right abstraction: in many settings, the goal is to infer an unobserved state that drives observations and supports downstream decisions, rather than to estimate a calibrated real-valued quantity. For instance, in medicine, probabilistic models often represent diseases as discrete latent variables that generate observed symptoms or test results, so different latent-state configurations correspond to different clinical regimes \citep{shwe1991probabilistic}. In educational measurement, cognitive diagnosis models are popular tools that employ discrete latent variables to model a student’s mastery/deficiency of multiple latent skills\citep{rupp2008unique,von2019handbook}. In such domains, learning a continuous latent coordinate first and then imposing cutoffs yields non-canonical thresholds whose meanings can shift under scaling without additional anchoring. Discrete latent variables instead represent the abstractions directly.

Motivated by these considerations, we propose a \emph{discrete causal representation learning} (DCRL) framework in which (i) discrete latent variables follow a latent DAG, and (ii) observations are generated through a sparse measurement graph linking the latent and observed layers with flexible mixed-type likelihoods. This framework accommodates continuous, count, and binary responses while allowing highly nonlinear latent-observation relationships.
Within this new framework, our contributions are threefold.

\emph{First}, despite the expressiveness of the proposed framework, we establish formal identifiability guarantees from a single observational distribution, without requiring interventions, multiple environments, or observed auxiliary variables.
Our main contribution is generic identifiability: under mild conditions, outside a measure-zero set of parameter values, the latent distribution, measurement layer, and latent DAG are identifiable from the observed data distribution, uniquely up to latent label permutations, so the unavoidable equivalence class consists solely of relabelings of latent variables.
Generic identifiability is directly analogous to how faithfulness excludes measure-zero violations of conditional independences in causal graphical models \citep{spirtes2000causation,ghassami2020characterizing}.
Under additional design conditions, we also obtain a stronger strict identifiability statement.

\emph{Second}, we propose and analyze a modular three-stage estimation pipeline. Stage~I fits the discrete generative process by penalized maximum likelihood via a stochastic approximation expectation-maximization (SAEM) algorithm with spectral initialization, yielding estimates of the latent distribution and measurement graph while remaining computationally efficient. Stage~II resamples latent configurations from the fitted latent law to construct a synthetic dataset in the latent space. Stage~III applies Greedy Equivalence Search (GES) \citep{chickering2002optimal} to this resampled latent dataset to recover the latent DAG. The validity of this algorithm relies on a key theoretical question: whether GES remains valid when it operates on samples from an \emph{estimated} latent law rather than from the true one. We answer this by extending classical notions of consistency and local consistency for scoring criteria to a rate-robust setting that permits the scoring distribution to converge to the truth at a controlled rate. Our analysis provides an explicit coupling between the convergence rate in Stage~I and the resampling size in Stage~II that guarantees GES applied to the resampled latents still recovers the Markov equivalence class of the true latent DAG.
 
\emph{Third}, we show that DCRL can reveal meaningful latent causal structure from data and yield interpretable discrete causal factors in practice. Through two empirical studies in educational assessment data and high-dimensional image data, we find that the learned latents align closely with domain-specific concepts and that the recovered latent DAG captures the underlying causal dependencies.

\medskip
\noindent\textbf{Organization.}~Section~\ref{sec: model} introduces the DCRL framework. In Section~\ref{sec: identifiablity}, we establish identifiability results for DCRL. Section~\ref{sec:est-theta} describes the proposed three-stage estimation pipeline and establishes its theoretical consistency guarantees. Section~\ref{sec:sim} and Section~\ref{sec:real} present simulation studies and real data applications, respectively, to demonstrate the effectiveness of our approach. Section~\ref{sec-discussion} concludes the paper with a discussion of potential extensions. All proofs are deferred to the Supplementary Material.
 
\medskip
\noindent\textbf{Notation.}
We write $a_N=\omega(b_N)$ (resp. $a_N=o(b_N)$) if $\tfrac{a_N}{b_N}\to\infty$
(resp. $\tfrac{a_N}{b_N}\to0$) as $N\to\infty$. For a positive integer $n$, we write $[n]=\{1,\ldots,n\}$. For any vector \(\bu\in \RR^K\), define \(\mathrm{supp}(\bu):=\{k\in[K]:u_k\ne 0\}\).
For a matrix $\MA$, we use $\MA_{i,:}$ (resp. $\MA_{:,j}$) for its $i$-th row (resp. $j$-th column).
For vectors $x,y\in\mathbb{R}^d$, write $x\succeq y$ (resp. $x\preceq y$) if $x_k\ge y_k$ (resp. $x_k\le y_k$) for all $k=1,\dots,d$.
We write $G \rtimes H$ for the semidirect product of groups $G$ and $H$.
For a set $A$ in a topological space, we write $A^\circ$ for the interior of $A$.

\section{Discrete Causal Representation Learning Framework}\label{sec: model}

\paragraph{Causal Graphical Models.}~
We begin by reviewing essential definitions to fix notation. Let $\MR = (R_1,\dots,R_d)$ be random variables with joint distribution $p^\star(\MR)$. We consider graphs $G = (V,E)$, where $V = {1,\dots,d}$ corresponds to the variables $R_i$ and $E \subseteq V \times V$ is the set of edges. A directed acyclic graph (DAG) is a directed graph with no cycles.

We now relate these graphs to conditional independencies. For a distribution $p$ on $\MR=(R_1,\dots,R_d)$, let
\(\mathcal{I}(p)
  :=
  \{ (A \perp\!\!\!\perp B \mid C)_p : R_A \perp\!\!\!\perp_p R_B \mid R_C \}
\)
denote the set of all conditional independence statements that hold under $p$, where $A,B,C \subseteq V$ are pairwise disjoint and $R_A = \{R_i : i \in A\}$. Let
\(\mathcal{I}(G)
  :=\{ (A \perp\!\!\!\perp B \mid C)_G : R_A \perp\!\!\!\perp_G R_B \mid R_C \}
\)
be the collection of conditional independences encoded by a DAG $G$ via d-separation. We say $p$ is Markov with respect to $G$ if $  \mathcal{I}(G) \subseteq \mathcal{I}(p)$. Write $\mathcal M(G):=\{p: \mathcal{I}(G)\subseteq \mathcal{I}(p)\}.$ For DAGs, $p\in\mathcal M(G)$ is equivalent to the factorization of the joint density according to $G$:
$p(\MR)=\prod_{i=1}^d p(R_i \mid R_{\MPa_i^G})$,
where $\MPa_i^G$ is the parent set of node $i$ in $G$ \citep{lauritzen1996graphical}. We call $p$ faithful to $G$ if $\mathcal{I}(p) \subseteq \mathcal{I}(G)$ \citep{koller2009probabilistic}. Combining the two inclusions, a distribution is DAG-perfect \citep{chickering2002optimal} to $G$ if $\mathcal{I}(G) = \mathcal{I}(p)$, so that $G$ encodes exactly all conditional independences of $p$. In this case, $G$ is a perfect map of $p$.

We say $G_1$ and $G_2$ are Markov equivalent if $\mathcal{I}(G_1) = \mathcal{I}(G_2)$, and write \(G_1\equiv G_2\).
Markov-equivalent DAGs form a Markov equivalence class. Two DAGs are Markov equivalent if and only if they have the same skeleton and v-structures \citep{verma1990equivalence}. Each Markov equivalence class can be uniquely represented by a completed partially directed acyclic graph (CPDAG), in which an edge $i \to k$ is directed if and only if it has the same orientation in every DAG in the class, and an edge $i - k$ is undirected if both orientations $i \to k$ and $i \leftarrow k$ occur among the DAGs in the class \citep{verma1990equivalence,chickering2002optimal}. 

A \emph{causal graphical model} consists of a DAG $G$ and a distribution $p$ that is Markov to $G$, with directed edges interpreted causally. In this paper we are primarily interested in \emph{discrete causal graphical models}, where each variable $R_i$ takes values in a finite state space and $(G,p)$ is a causal graphical model.

\paragraph{Discrete Causal Representation Learning.}~
We take the causal structure as primitive and place probabilistic distributions on top of it. We work with a collection of variables consisting of both observable and latent components. Let \(\MX=(X_1,\ldots,X_J)\in\times_{j=1}^J \calX_j\) denote the observed variables, where \(\mathcal{X}_j\subseteq \mathbb{R}\) is allowed to be general. In particular, our formulation accommodates a wide range of data types, including continuous measurements, count-valued observations, and binary or categorical responses. 
We consider binary latent variables and denote them by \(\MZ=(Z_1,\ldots,Z_K)\in\{0,1\}^K\), where $K\ge2$. The causal structure is specified by (i) a directed acyclic graph (DAG) \(\calG\) on the latent variables \(Z_1,\ldots,Z_K\), and (ii) a directed bipartite structure \(\MQ=(q_{j,k})\in\{0,1\}^{J\times K}\) from the latent variables to the observed variables, where \(q_{j,k}=1\) if and only if \(Z_k\) is a direct cause of \(X_j\) for \(j\in[J]\) and \(k\in[K]\). The bipartite graph describes the measurement modeling structure. Together, \((\calG,\MQ)\) determines a full acyclic causal graph on \((\MZ,\MX)\). An illustration is provided in Figure~\ref{fig:CDMM}.

\begin{figure}[h!]\centering
\resizebox{0.9\textwidth}{!}{
    \begin{tikzpicture}[scale=1.8]
       
    \node (v2)[hidden] at (1.8,0) {$Z_1$};
    \node (v3)[hidden] at (0.3,-0.8) {$Z_2$};
    \node (v4)[hidden] at (0.8,-1.7) {$\cdots$};
    \node (v5)[hidden] at (2.8,-1.7) {$\cdots$};
    \node (v6)[hidden] at (3.3,-0.8) {$Z_{K}$};
    
    \node (vv0)[neuron] at (-2.4,-3.4) {$X_1$};
    \node (vv1)[neuron] at (-1.6,-3.4) {$X_2$};
    \node (vv2)[neuron] at (-0.8,-3.4) {$\cdots$};
    \node (vv3)[neuron] at (0,-3.4) {$\cdots$};
    \node (vv4)[neuron] at (0.8,-3.4) {$\cdots$};
    \node (vv5)[neuron] at (1.6,-3.4) {$\cdots$};
    \node (vv6)[neuron] at (2.4,-3.4) {$\cdots$};
    \node (vv7)[neuron] at (3.2,-3.4) {$\cdots$};
    \node (vv8)[neuron] at (4.0,-3.4) {$\cdots$};
    \node (vv9)[neuron] at (4.8,-3.4) {$\cdots$};
     \node (vv10)[neuron] at (5.6,-3.4) {$X_J$};
    
    \draw[arr] (v2) -- (vv4);
    \draw[arr] (v2) -- (vv5);
    \draw[arr] (v2) -- (vv1);
    \draw[arr] (v3) -- (vv0);
    \draw[arr] (v3) -- (vv1);
    \draw[arr] (v3) -- (vv8);
    \draw[arr] (v4) -- (vv2);
    \draw[arr] (v4) -- (vv6);
    \draw[arr] (v4) -- (vv8);
    \draw[arr] (v5) -- (vv3);
    \draw[arr] (v5) -- (vv7);
    \draw[arr] (v5) -- (vv8);
    \draw[arr] (v6) -- (vv9);
    \draw[arr] (v6) -- (vv10);
    
    \node at (0,-2.2) {$\cdots$};
    \node at (0.7,-2.5) {$\cdots$};
    \node at (2.3,-2.8) {$\cdots$};
    \node at (3.8,-2.4) {$\cdots$};
    
    \draw[arr,red] (v3) -- (v2);
    \draw[arr,red] (v6) -- (v2);
    \draw[arr,red] (v2) -- (v5);
    \draw[arr,red] (v6) -- (v4);
    \path (v5) edge[dotted,thick,red] node [right] {} (v4);
    
    \node[anchor=west] (g0) at (6, -2.6) {$\MQ$: causal relations between $\MX$ and $\MZ$};
    \node[anchor=west] (g1) at (4.2, -0.8) {$\calG$: causal relations among $\MZ$};
\end{tikzpicture}
}
\caption{Latent DAG and measurement graph in discrete causal representation learning}
\label{fig:CDMM}
\end{figure}
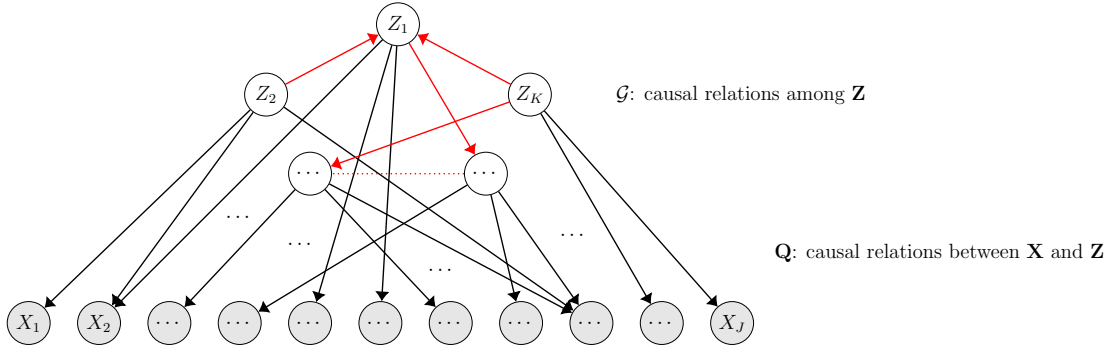

To obtain a data-generating process from $(\calG,\MQ)$, we view the joint distribution of the latent variables as the primitive object and assume that it satisfies the Markov factorization associated with $\calG$. Let $\p=(\PP(\MZ=\bz))_{\bz\in\{0,1\}^K}$ denote the $2^K$-dimensional probability vector of $\MZ$ with $\p\in\calM(\calG)$. Equivalently, $\p$ factorizes according to $\calG$ as
  $\p(\MZ)
  =
  \prod_{k=1}^K \p(Z_k | Z_{\MPa^{\calG}_k})$,
which encodes the directed dependencies prescribed by the latent DAG $\calG$. 

Next, we specify how the latent causes act on the observed variables according to $\MQ$. For each item $j\in[J]$, let $K_j=\{k\in[K]: q_{j,k}=1\}$ be the index set of its latent parents. We write the linear predictor \(\eta_j(\bz)\) as a multilinear polynomial in the binary latent vector \(\bz=(z_1,\ldots,z_K)\in\{0,1\}^K\).
For any subset \(S\subseteq [K]\), define the monomial feature \(\phi_S(\bz):=\prod_{k\in S} z_k\), with the convention \(\phi_\varnothing(\bz)\equiv 1\).
Fix once and for all an ordering \((S_1,\ldots,S_{2^K})\) of all subsets of \([K]\). Let \(\phi(\bz)\in\{0,1\}^{2^K}\) be the corresponding feature vector with entries \(\phi_m(\bz)=\phi_{S_m}(\bz)\).
For item \(j\), collect coefficients into \(\bbeta_j\in\RR^{2^K}\) by \((\bbeta_j)_m:=\beta_{j,S_m}\), and impose \((\bbeta_j)_m=0\) whenever \(S_m\nsubseteq K_j\), so that only main effects and interactions among coordinates in \(K_j\) are allowed.
Equivalently, \(\eta_j(\bz)=\bbeta_j^\top \phi(\bz)=\sum_{S\subseteq K_j}\beta_{j,S}\prod_{k\in S} z_k\). Stacking the rows yields the matrix
$\MB=[\bbeta_1\
\dots\ \bbeta_J]^\top
\in\mathbb{R}^{J\times 2^K}.$

In practice, it is often sufficient to restrict attention to the main effects of the latent causes, which corresponds to setting $\beta_{j,S}=0$ whenever \(|S|>1\). In this case, the linear predictor reduces to $\eta_j(\bz)=\beta_{j,\varnothing}+ \sum_{k\in K_j}\beta_{j,{k}}z_k,$ which is analogous to a generalized linear specification with an intercept and main effects in the binary latent vector \(\bz\) as covariates. 
Compared with the all-effect specification, this restriction reduces the number of parameters from \(2^{K}\) to \(|K_j|+1\) per item, yielding a more parsimonious and interpretable specification. In our subsequent estimation procedure, we will primarily focus on this main-effect specification.

Conditionally on $\MZ$, we model each $X_j$ by an item-specific parametric family consistent with $\MQ$ and assume conditional independence across $j$ given $\MZ$:
\begin{equation}\label{eq:observed exp fam}
X_j \mid \MZ \ \sim\ \mathrm{ParFam}_j\Big(g_j\big(\eta_j(\MZ),\ \gamma_j\big)\Big),
\qquad j\in[J].
\end{equation}
Here $\mathrm{ParFam}_j=\{\PP_{j,\Mtheta}:\Mtheta\in H_j\}$ is a known family with parameter space $H_j\subseteq\RR^{h_j}$, and $g_j:\RR\times[0,\infty)\to H_j$ is a known link mapping the linear predictor $\eta_j(\cdot)$ and, when applicable, a dispersion parameter $\gamma_j> 0$ to the parameter of $\mathrm{ParFam}_j$. Write $\bgamma=(\gamma_1,\ldots,\gamma_J)$.

Integrating out the latent $\MZ$ yields the marginal law of the observables,
\begin{equation}\label{eq:marginal probability}
\PP(\MX)\ =\ \sum_{\bz\in\{0,1\}^K} \PP\big(\MX \mid \MZ=\bz\big)\, \PP(\MZ=\bz),
\end{equation}
which is determined by the triple $(\p,\MB,\bgamma)$ together with the causal structure $(\calG,\MQ)$. 
\begin{definition}
We consider the following discrete causal representation learning \emph{data-generating process} parameterized by $(\p,\calG,\MB,\MQ,\bgamma)$, where the marginal law of the observed data is given by \eqref{eq:marginal probability}.
When $\calG$ and $\MQ$ are known and fixed, this induces a family of probability distributions on $\MX$, indexed by $(\MTheta,\calG,\MQ)$, which we denote by $\PP_{\MTheta,\calG,\MQ}$, where $\MTheta:=(\p,\MB,\bgamma)$.
\end{definition}

\begin{remark}\label{rem:poly_extension_preview}
For clarity, Section~\ref{sec: model} presents the framework under binary latent attributes $Z_k\in\{0,1\}$.
All components extend to ordered polytomous attributes $Z_k\in[M_k]=\{0,1,\ldots,M_k-1\}$.
In the extension, $\eta_j$ is still a linear combination of coefficients $\{\beta_{j,\bu}\}$ indexed by latent ``states'' $\bu$, and a term $\beta_{j,\bu}$ contributes to $\eta_j(\bz)$ only when $\bu\preceq\bz$ coordinatewise.
Moreover, $\beta_{j,\bu}$ is nonzero only if $\mathrm{supp}(\bu)\subseteq K_j$.
The extension and its generic identifiability result are stated in Section~\ref{sec:poly_extension}, with full definitions in Appendix~\ref{app:poly_extension}.
\end{remark}

Taken together, the DCRL framework yields a highly flexible and potentially strongly nonlinear measurement layer from the latent configuration to the observed responses. On the one hand, the all-effect specification for $\eta_j(\bz)$ allows arbitrary interaction patterns among the latent variables, including highest–order interactions over all attributes in $K_j$. On the other hand, by allowing a general parametric family $\mathrm{ParFam}_j$ and link mapping $g_j$, we impose no single fixed response family, so that a wide range of nonlinear conditional distributions $X_j \mid \MZ$ can be accommodated. 

Under DCRL, the complete causal structure comprises two components: the DAG $\calG$ among latent variables $Z_1,\ldots,Z_K$, and the directed bipartite structure $\MQ$ from latent variables to observed variables. Our goal is to jointly recover $\calG$ and $\MQ$ from $\MX$.

\paragraph{Connections with Existing Studies.}~
Most identifiable models for causal discovery with partially unobserved variables rely on linearity \citep{anandkumar2013learning,squires2022causal,huang2022latent,dong2026score}, an assumption that often fails in practice. Beyond linearity, identifiability has also been established for certain nonlinear latent hierarchical models \citep{prashant2024differentiable}. However, those results impose much stronger restrictions on the latent causal architecture than ours. See Supplement~\ref{sec: Connections with Existing Studies} for more details.

\section{Identifiability}\label{sec: identifiablity}

Before introducing our identifiability notion, we first relate our framework to the statistical-identifiability formulation used in recent CRL work \citep{xi2023indeterminacy,moran2026towards}. Consider the model $X=f(Z)+\epsilon$, where \(Z\in\calZ\) is the latent variable, \(f:\calZ\to\calX\) is the representation map, and \(\epsilon\) is noise. Let \(\calF\) be a class of admissible maps \(f\), and let \(\calP\) be a class of admissible latent distributions \(p\) on \(\calZ\). For any bijection \(\xi:\calZ\to\calZ\), one may rewrite the model as $X=(f\circ\xi^{-1})(\xi(Z))+\epsilon$. Therefore, without restrictions on \(\calF\) and \(\calP\), the model is trivially nonidentifiable. To formalize this, let \(\xi_{\#}p\) denote the push-forward of \(p\) by \(\xi\). One calls \(\xi\) an indeterminacy transformation if $f\circ\xi^{-1}\in\calF$ and $\xi_{\#}p\in\calP$. The collection of all such transformations is the indeterminacy set $A(\calF,\calP)$. Equivalently, $A(\calF,\calP)$ indexes the ``transformation-based'' comparison class $\{(f\circ\xi^{-1},\xi_{\#}p):\xi\in A(\calF,\calP)\}$ associated with a fixed pair \((f,p)\). The key question is to determine which restrictions on \(\calF\) and \(\calP\) make \(A(\calF,\calP)\) small while remaining flexible.

Our DCRL framework fits naturally into this framework. Here \(\calZ=\{0,1\}^K\). For Gaussian responses and identity link, we may write $X_j=\eta_j(Z)+\varepsilon_j$ with $\varepsilon_j\sim N(0,\gamma_j^2)$, so that \(f(Z)=(\eta_1(Z),\ldots,\eta_J(Z))\in\RR^J\). For a fixed latent DAG \(\calG\) and measurement graph \(\MQ\), the corresponding classes can be viewed as $\calP_{\calG}
:=
\{
p \text{ on }\{0,1\}^K:\ \p\text{ is Markov to }\calG
\}$, and $\calF_{\MQ}:=\{
f=(\eta_1,\ldots,\eta_J):\{0,1\}^K\to\RR^J:\ 
\eta_j(z)=\eta_j(z')\ \text{whenever } z_{K_j}=z'_{K_j},\ \forall j
\}$.
Additional assumptions used later in the paper can be understood precisely as further restrictions on \(\calP_{\calG}\) and \(\calF_{\MQ}\), imposed to shrink the admissible indeterminacy set.

This viewpoint is conceptually useful, but it is important to distinguish settings where the full statistical equivalence class coincides with the transformation-based class indexed by \(A(\calF,\calP)\) from those where it does not. In identifiable CRL, it is widely assumed that the generative map \(f\) is injective \citep{ahuja2023interventional,hartford2023beyond} and that the observation model is well-posed in the sense that the distribution of \(f(Z)\) is determined by that of \(X=g(f(Z),\epsilon)\). A simple sufficient condition is an additive and independent observation-noise model \(X=f(Z)+\epsilon\) with Gaussian noise distribution. Under these commonly used injective-generator and well-posed observation assumptions, \citet[Lemma~2.1]{xi2023indeterminacy} shows that if two parameterizations \((f_a,p_a)\) and \((f_b,p_b)\) induce the same observational distribution, then they must be related by a latent-space automorphism \(\xi\in\mathrm{Aut}(\calZ)\) in the sense that \(f_b=f_a\circ\xi^{-1}\) and \(p_b=\xi_{\#}p_a\) up to null sets. Hence, for such models, restricting attention to the transformation-based comparison class (equivalently, studying the indeterminacy set \(A(\calF,\calP)\)) entails no loss of generality. In the discrete case \(\calZ=\{0,1\}^K\), such a \(\xi\) is simply a permutation of the finite latent state space, so \(\xi_{\#}\p\) is merely a relabeling of \(\p\) and the remaining transformation-based comparison is finite.

In contrast, our DCRL framework does not directly assume a well-posed observation model in the above sense. Without this end-to-end well-posedness assumption, which would bypass much of the substantive difficulty, the reduction from general parameter-level equivalence to the transformation-based comparison class indexed by \(A(\calF,\calP)\) is no longer automatic. Accordingly, we begin by comparing arbitrary admissible parameter triples \((\Theta,\calG,\MQ)\) and \((\widetilde{\Theta},\widetilde{\calG},\widetilde{\MQ})\) that induce the same observational distribution, and then enforce this reduction by imposing explicit, verifiable structural restrictions encoded by \(\MQ\), together with more delicate techniques tailored to our model class. This additional collapse step is precisely what makes our analysis more involved than approaches that assume well-posedness directly.


Before proceeding, we first specify the parameter spaces. Define
\begin{align*}
&\Omega_{K}(\MTheta;\calG,\MQ) :=
\Bigl\{
\MTheta:\ \calG \ \text{is a perfect map of }\p,\
\beta_{j,S}=0\ \text{if }S\nsubseteq K_j,\ 
\beta_{j,\{k\}}\neq 0\ \text{iff }k\in K_j
\Bigr\}.
\\&\Omega_{K}(\MTheta,\calG,\MQ) :=
\Bigl\{
(\MTheta,\calG,\MQ):\MTheta\in\Omega_K(\MTheta;\calG,\MQ)
\Bigr\}.
\end{align*}
Now we introduce the equivalence relation that specifies the unavoidable ambiguity, and then define generic identifiability relative to this equivalence.
\begin{definition}\label{def:identifiability upto permutation saturated}
For the discrete CRL framework, define an equivalence relationship ``$\sim_{K}$'' by setting $(\MTheta, \calG,\MQ) \sim_{\calK} (\tilde{\MTheta},\tilde{\calG}, \tilde{\MQ})$ iff $\bgamma = \tilde{\bgamma}$ and there exists a permutation $\sigma \in S_{[K]}$ such that the following hold. First, $\p_{(z_{\sigma(1)}, \ldots, z_{\sigma(K)})} = \tilde{\p}_{\bz}$ for all $\bz \in \{0, 1\}^{K}$ and $\calG \equiv \sigma(\tilde{\calG})$, where $\sigma(\tilde{\calG})$ denotes the DAG obtained from $\tilde{\calG}$ by relabeling each node $k$ as $\sigma(k)$. Second, $q_{j,k} = \tilde{q}_{j,\sigma(k)}$ for all $j \in [J],\, k \in [K]$, and for all $j$ and $S \subseteq [K]$, $\beta_{j,S} = \tilde{\beta}_{j,\sigma(S)}$, where $\sigma(S) := \{\sigma(k) : k \in S\}$.
\end{definition}
\begin{definition}\label{def:generic identifiability}
Let $(\MTheta^\star,\calG^\star,\MQ^\star)\in\Omega_{K}(\MTheta, \mathcal{G}, \MQ)$ be the true parameter triple of the discrete causal representation learning framework.
The framework is generically identifiable up to $\sim_\calK$ if $\{\MTheta \in \Omega_K(\MTheta;\calG^\star,\MQ^\star)\!:\, \exists\, (\widetilde{\MTheta},\widetilde{\MQ},\widetilde{\calG}) \not\sim_\calK (\MTheta,\MQ^\star,\calG^\star) \ \text{such that } 
\PP_{\widetilde{\MTheta},\widetilde{\MQ},\widetilde{\calG}} = \PP_{\MTheta,\MQ^\star,\calG^\star}\}$ is a measure-zero set with respect to $\Omega_K(\MTheta;\calG^\star,\MQ^\star)$.
\end{definition}

This equivalence relation \(\sim_{\calK}\) is the discrete analogue of the transformation-based indeterminacies encoded by \(A(\calF,\calP)\), specialized to latent-coordinate relabelings. 

The measure-zero qualifier in Definition~\ref{def:generic identifiability} parallels the faithfulness convention in causal discovery: for a fixed DAG, distributions that are Markov but unfaithful form a Lebesgue-null subset of the parameter space, so faithfulness excludes only a negligible set of degenerate configurations. Our generic-identifiability definition plays the same role here. After imposing the latent Markov-plus-faithfulness condition on \(\p\), non-identifiability can still occur for exceptional values of the continuous measurement parameters \((\MB,\bgamma)\), but these exceptional configurations form a Lebesgue-null set. Hence, restricting attention to generic identifiability excludes only a measure-zero subset of ``bad'' $(\MB,\bgamma)$ configurations. In this sense, the loss incurred by discarding measure-zero subsets of parameters is as harmless as the loss incurred when imposing faithfulness in the first place.

One may also consider the stronger notion of strict identifiability, under which equality of observational distributions implies equivalence up to $\sim_{\calK}$ for every admissible parameter triple, rather than for all but a measure-zero subset. We do not emphasize this stronger notion in the main text, because the corresponding strict-identifiability statements together with several related extensions can be obtained by adapting existing arguments from \citet{liu2025exploratory,lee2025deep}. We therefore record these formal results in Section~\ref{app:strict_id}.

We now introduce the assumptions needed for our generic identifiability result.
\begin{assumption}\label{assm1}
\begin{enumerate}
    \item[(a)] $\calG$ is a perfect map of $\p$ and $\p_{\bz}\in(0,1)$ for all $\bz\in\{0,1\}^{K}$.
    \item[(b)] For each item $j$,
    \(\eta_j(\bz) >\eta_j(\bz')\) whenever \(
    \bz\succeq \MQ_{j,:}\) and \( \bz'\not\succeq \MQ_{j,:}.\)
\end{enumerate}
\end{assumption}
Assumption~\ref{assm1}(a) is a restriction on the latent distribution class \(\calP_\calG\). Assumption~\ref{assm1}(b) is a restriction on \(\calF_\MQ\): it imposes a monotonicity condition on each item response function \(\eta_j\). This type of condition is also used in \citet{liu2025exploratory,lee2025deep} to avoid the sign-flipping for each latent variable.

To ensure generic identifiability, we introduce an additional analytic assumption, which holds for regular minimal exponential families on the interior of natural parameter spaces.

\begin{assumption}\label{assm2}
For each $j\in[J]$, define the canonical countable separating class
$\calC_j^{\mathrm{can}}:=
\{\calX_j\cap(a,b] : a,b\in\QQ,\ a<b\}\cup\{\calX_j\}$.
Assume that $H_j^\circ\neq\emptyset$, and that the following hold.
\begin{enumerate}
\item[(i)] For every $S\in\calC_j^{\mathrm{can}}$, the map $\Mtheta\mapsto\PP_{j,\Mtheta}(S)$ is real-analytic on $H_j^\circ$.
\item[(ii)]
$\PP_{j,\Mtheta}=\PP_{j,\Mtheta'}$ implies $\Mtheta=\Mtheta'$ for all $\Mtheta,\Mtheta'\in H_j^\circ$.
\item[(iii)] The link $g_j$ maps $\RR\times(0,\infty)$ into $H_j^\circ$ and is real-analytic.
Moreover, exactly one of the following holds.
(a) (No dispersion) The link is independent of $\gamma$ and the slice map $\eta\mapsto g_j(\eta,\gamma_0)$ is injective for some (equivalently, any) fixed $\gamma_0\in[0,\infty)$.
(b) (With dispersion) The full map $(\eta,\gamma)\mapsto g_j(\eta,\gamma)$ is injective on $\RR\times(0,\infty)$.
\end{enumerate}
\end{assumption}

Our main identifiability result is stated in the following theorem.

\begin{theorem}\label{thm2}
Under Assumptions~\ref{assm1} and~\ref{assm2}, DCRL is generically identifiable if the following hold.
\begin{enumerate}
\item[(i)] After a row permutation, we can write $\MQ^\star=[\MQ_1^\top, \MQ_2^\top,\MQ_3^\top]^\top$, where $\MQ_1,\MQ_2\in\{0,1\}^{K\times K}$ have unit diagonals (off–diagonals arbitrary), and $\MQ_3$ has no all-zero column.
\item[(ii)] No column of $\MQ^\star$ contains another: for any $p\neq q$,
neither $\MQ^\star_{:,p}\succeq \MQ^\star_{:,q}$ nor $\MQ^\star_{:,q}\succeq \MQ^\star_{:,p}$.
\end{enumerate}
\end{theorem}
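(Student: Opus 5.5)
The proof proceeds in three stages: reduce to the binary-response case, identify $(\p,\MB,\MQ)$ up to a latent permutation by a tensor/Kruskal argument, and read off $\calG$ from $\p$.

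\emph{Reduction to binary responses.} Assumption~\ref{assm2} lets us replace each general response by binary ones. For each $j$ and each set $S\in\calC_j^{\mathrm{can}}$, the indicator $\mathbb{1}\{X_j\in S\}$ is a Bernoulli variable. Its success probability $\PP_{j,g_j(\eta_j(\bz),\gamma_j)}(S)$ depends on $\bz$ only through $\bz_{K_j}$. Because $\calC_j^{\mathrm{can}}$ is a countable separating class, the marginal law of $\MX$ is equivalent to the collection of joint probabilities
\[
\PP\Bigl(\textstyle\bigcap_{j\in A}\{X_j\in S_j\}\Bigr)=\sum_{\bz}\p_{\bz}\prod_{j\in A}\PP_{j,g_j(\eta_j(\bz),\gamma_j)}(S_j).
\]
We split the items into three blocks by the row partition $\MQ_1,\MQ_2,\MQ_3$ of condition (i). For each block we form the $2^{K}$-column ``indicator matrix'' $T_i$, whose rows are indexed by products of the chosen sets and whose columns are indexed by $\bz$. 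The observed three-way tensor then equals $[\![\operatorname{diag}(\p)T_1^\top,T_2^\top,T_3^\top]\!]$.

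\emph{Kruskal step.} The next step is to show that, generically in $(\MB,\bgamma)$, $T_1$ and $T_2$ have full column rank $2^K$ and $T_3$ has Kruskal rank at least $2$. Kruskal's theorem then identifies the columns of each $T_i$ and $\p$ up to a common permutation of $\{0,1\}^K$.
\begin{itemize}
\item[] \emph{Full rank of $T_1$ and $T_2$.} By the analyticity in Assumption~\ref{assm2}(i),(iii), $\det$ of a suitable $2^K\times2^K$ submatrix is real-analytic in $(\MB,\bgamma)$. So it suffices to exhibit one parameter value in $\Omega_K$ where it is nonzero; the zero set of a nonzero analytic function is Lebesgue-null. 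The unit-diagonal structure of $\MQ_1$ lets us take, for item $k$ of $\MQ_1$, sets $S$ that separate $\eta_k(\bz)$ by whether $z_k=1$. Using Assumption~\ref{assm1}(b) and the injectivity in Assumption~\ref{assm2}(ii),(iii), products over subsets of these items produce a triangular (Möbius-type) matrix in the $\bz$ ordering, with nonzero diagonal.
\item[] \emph{Kruskal rank of $T_3$.} Since $\MQ_3$ has no zero column, any two distinct $\bz,\bz'$ differ in some $z_k$ measured by an item of $\MQ_3$. Generically this gives distinct success probabilities, so no two columns of $T_3$ are proportional.
\end{itemize}

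\emph{From state permutation to coordinate permutation.} Having recovered $\p$ and all conditional laws up to a permutation $\pi$ of $\{0,1\}^K$, we determine $\eta_j(\bz)$ and $\gamma_j$ from the conditional law of $X_j$, using Assumption~\ref{assm2}(ii),(iii). We then show that $\pi$ is induced by a coordinate permutation $\sigma$, and that $\MQ$ is recovered. For each $j$, the states partition according to the level sets of $\eta_j$. Assumption~\ref{assm1}(b) makes the set $\{\bz\succeq\MQ_{j,:}\}$ exactly the argmax set of $\eta_j$. The unique all-zero state is the common minimizer, fixed by $\beta_{j,\{k\}}\neq0$. Using items of $\MQ_1$, the rows of $\MQ$ are recovered as ``top sets''. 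Condition (ii), that no column of $\MQ$ contains another, pins down which state classes correspond to single attributes, following the argument of \citet{lee2025deep}. I expect this step to be the main obstacle, namely ruling out permutations of $\{0,1\}^K$ that preserve all the partitions but are not coordinate relabelings. I would handle it by first identifying the $K$ singleton states $\be_k$ through minimal-nonzero patterns across items. I would then propagate by the lattice order, invoking genericity once more to exclude accidental coincidences among the $\eta_j$ values.

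\emph{Recovering the DAG.} With $\p$ identified up to $\sigma$, the CI set $\mathcal{I}(\p)$ is identified up to relabeling. By Assumption~\ref{assm1}(a), $\calG$ is a perfect map, so $\mathcal{I}(\calG)=\mathcal{I}(\p)$. Therefore $\calG\equiv\sigma(\tilde\calG)$ by \citet{verma1990equivalence}. Finally, the union of the finitely many exceptional null sets is null, which gives the generic identifiability in Theorem~\ref{thm2}.
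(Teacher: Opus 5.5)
Your set-indicator discretization, the three-block Kruskal step (an analytic determinant made nonzero at a witness where each anchor item of $\MQ_1,\MQ_2$ loads only on its diagonal attribute), and the perfect-map step for $\calG$ all match the paper. One detail you skip: before passing to full conditional laws, you must show the aligning permutation is the same across nested refinements of the discretization. The paper does this in Lemma~\ref{lem:perm_stability}, using full column rank of the first block.

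The genuine gap is the step you flag yourself, and the one concrete claim you make there is false. Assumption~\ref{assm1}(b) constrains only the \emph{top} of each $\eta_j$: it makes $\{\bz\succeq\MQ_{j,:}\}$ the strict argmax set and says nothing about minimizers, and $\beta_{j,\{k\}}\neq 0$ does not help. For example, take $K_j=\{1,2\}$ with $\beta_{j,\varnothing}=0$, $\beta_{j,\{1\}}=\beta_{j,\{2\}}=-1$, $\beta_{j,\{1,2\}}=10$. This satisfies Assumption~\ref{assm1}(b) and the constraints of $\Omega_K$ on an open set, so genericity cannot exclude it. Yet $\eta_j$ is minimized at $(z_1,z_2)\in\{(1,0),(0,1)\}$, not at $(0,0)$. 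So there is no zero anchor from which to locate ``singleton states $\be_k$''. The usable anchor is the all-ones state. It is the unique state in $\bigcap_j\arg\max\eta_j$, and $\mathfrak S$ maps argmax sets onto argmax sets. Since Assumption~\ref{assm1}(b) holds for both parameterizations, $\mathfrak S$ must fix it.

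More seriously, the mechanism by which condition (ii) forces a coordinate relabeling is missing. ``Minimal-nonzero patterns'' and ``propagate by the lattice order'' do not say what structure $\mathfrak S$ preserves. The paper's argument has three parts.
\begin{enumerate}
\item[(1)] Match $\widetilde\MQ$ to $\MQ^\star$ without assuming genericity or condition (ii) for the competitor. Knowing the top set of each single item, as you propose, only gives the image of an up-set of unlabeled states. Instead, for every $S\subseteq[J]$, the number of states in $\bigcap_{j\in S}\arg\max\eta_j$ is $2^{K-|\bigcup_{j\in S}K_j|}$ by Assumption~\ref{assm1}(b), and this count is the same for both models. Inclusion--exclusion then gives all $|\bigcap_{j\in T}K_j|$, and M\"obius inversion gives equal column multisets. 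Hence $\widetilde\MQ$ is $\MQ^\star$ with columns permuted by some $\pi$.
\item[(2)] For generic $\MB$, $\eta_j$ takes distinct values on distinct patterns $\bz_{K_j}$. So $\mathfrak S$ maps each $K_j$-block $\{T:T\cap K_j=U\}$ onto a $\pi(K_j)$-block. Two states differing only in coordinate $k$ share a $K_j$-block for every $j$ with $k\notin K_j$. Their images therefore differ only in coordinates outside $\bigcup_{j:\,k\notin K_j}\pi(K_j)$, and condition (ii) says exactly that this leftover set is $\{\pi(k)\}$. Thus single-bit flips in coordinate $k$ go to single-bit flips in coordinate $\pi(k)$, which yields $\mathfrak S(T)=\pi(T)\,\Delta\,A$ for a fixed $A\subseteq[K]$.
\item[(3)] The all-ones anchor forces $A=\varnothing$.
\end{enumerate}
Without (1) and (2), your sketch does not establish that the state permutation is a coordinate relabeling. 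That is exactly where condition (ii) and Assumption~\ref{assm1}(b) do their work.
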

Condition (i) is best viewed as a weak coverage requirement on the measurement design: it guarantees that every latent coordinate affects at least one observed variable, and that there exist some anchor-like items in which each latent is forced to appear.
However, it is weak in the sense that such anchor-like items may still depend on many other latents.
Our condition (ii) coincides with Condition 3.1 (the subset condition) in \cite{kivva2021learning}.
As they observed, violating this subset condition can lead to non-identifiability. Similarly, we emphasize that condition (i) alone is not sufficient for generic identifiability.
In particular, Appendix~\ref{counterexample-2} constructs a model satisfying condition (i) in which $\MQ$ has distinct columns and each column contains at least one zero, yet the framework fails to be generically identifiable.
\textbf{}

Our proof is a three-step reduction of the comparison class. First, a Kruskal-type tensor argument collapses the original continuous parameter comparison to a finite, transformation-generated comparison class: any remaining competitor with the same observational law must be of the form \((\xi_{\#}\p,\boeta\circ \xi^{-1})\), where \(\xi\in S_{2^K}\) is a permutation of the \(2^K\) latent states of \(\MZ\). Thus, after the tensor step, we are essentially in the setting of \citet{moran2026towards}, and the problem becomes how to shrink the indeterminacy set \(\calA(\calF,\calP)=S_{2^K}\).
Before further reducing \(\calA(\calF,\calP)\), we first identify \(\MQ\): using Assumption~\ref{assm1}(b) and an inclusion--exclusion argument on the transformed \(\boeta\)-array, we show that all these \(2^K\) admissible competitors must have measurement matrices that agree with the true \(\MQ\) up to a coordinate permutation. This observation makes the structural constraint on \(\calF_\MQ\) much clearer. We then return to reducing \(\calA(\calF,\calP)\): the subset condition, together with the structural constraint on \(\calF_\MQ\), reduces the indeterminacy set from \(S_{2^K}\) to \((\ZZ_2)^K\rtimes S_K\), corresponding to coordinate permutations combined with coordinatewise bit-flips, and Assumption~\ref{assm1}(b) further rules out bit-flips, leaving only coordinate relabelings in \(S_K\). Finally, \(\bbeta\) and \(\calG\) are recovered from the invertible linear map \(\boeta\leftrightarrow\bbeta\) and the perfect-map condition, yielding identifiability.

\paragraph{Extension to Polytomous Attributes.}\label{sec:poly_extension}~The binary-latent framework of Section~\ref{sec: model} extends naturally to the case where each latent attribute \(Z_k\) takes values in \([M_k]=\{0,1,\ldots,M_k-1\}\) with \(M_k\ge2\). The linear predictor \(\eta_j(\bz)\) generalizes to a sum over coefficients \(\{\beta_{j,\bu}\}\) indexed by \(\bu\in\prod_{k=1}^K [M_k]\), where \(\beta_{j,\bu}\) contributes to \(\eta_j(\bz)\) only if \(\operatorname{supp}(\bu)\subseteq K_j\) and \(\bu\preceq\bz\) coordinatewise, thereby preserving both the sparsity structure encoded by \(\MQ\) and the causal interpretation of each entry \(q_{j,k}\). Under Assumption~\ref{assm1}(a), Assumption~\ref{assm2}, and an ordered-level analogue of the monotonicity condition in Assumption~\ref{assm1}(b) (Assumption~\ref{assm:poly_monotone_binary_style}), we establish generic identifiability (Theorem~\ref{thm:poly_generic_ident_delta} in Section~\ref{app:poly_extension}) with two key differences from the binary case. First, the measurement design requires at least \(2\sum_{k=1}^K \lceil \log_2 M_k\rceil+1\) observed variables, which grows only logarithmically in the numbers of categories and is order-sharp for the general-response setting. Second, because the inclusion--exclusion reconstruction of \(\MQ\) does not directly extend to the polytomous setting, we instead use the subset condition to identify \(\MQ\) and shrink the indeterminacy set;
this corresponds to within-coordinate level permutations composed with across-coordinate relabelings.
We then use the monotonicity condition to eliminate the within-coordinate permutations, leaving only coordinate relabelings.
Unlike in the binary case, this group-reduction step excludes a Lebesgue-null exceptional set, reflecting the genuine additional difficulty of the polytomous setting; see Supplement~\ref{app:poly_extension} for details.

\paragraph{Nonparametric Disentanglement. }~As a further consequence, we obtain a nonparametric disentanglement result in the following corollary.
\begin{corollary}\label{cor:poly_moran_indeterminacy}
Let $\calZ=\prod_{k=1}^K [M_k]$, where $M_k\ge 2$, and consider the representation model $\MX=f(\MZ)+ \epsilon$, where $f=(f_1,\ldots,f_J):\calZ\to\RR^J$. For a binary matrix \(\MQ=(q_{j,k})\in\{0,1\}^{J\times K}\), write $K_j:=\{k\in[K]:q_{j,k}=1\}$.
Define
$\calP:=\{\p\text{ on }\calZ:\ \sum_{\bz\in\calZ} p_\bz=1\}$ and 
\[
\calF
:=
\Biggl\{
f:\calZ\to\RR^J\ \Bigg|\ 
\exists\,\MQ\in\{0,1\}^{J\times K}\ \text{s.t., }\begin{aligned}
&\text{neither }\MQ_{:,a}\succeq \MQ_{:,b}\ \text{nor }\MQ_{:,b}\succeq \MQ_{:,a},\quad (\forall\ a\neq b),\\
&f_j(\bz)=f_j(\bz')\ \text{ whenever }\ \bz_{K_j}=\bz'_{K_j}\quad (j\in[J]),\\
&f_j(\bz)\neq f_j(\bz')\ \text{ whenever }\ \bz_{K_j}\neq \bz'_{K_j}\quad (j\in[J])
\end{aligned}\Biggr\}
\]
Then every admissible indeterminacy transformation \(\xi\in A(\calF,\calP)\) can only permute coordinates (with the same number of categories) and relabel the levels within each coordinate.
\end{corollary}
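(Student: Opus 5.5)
The plan is to work entirely at the level of set partitions of $\calZ$ and show that any $\xi\in A(\calF,\calP)$ induces an isomorphism between two Boolean lattices of ``coordinate partitions''. The atoms of these lattices correspond to single coordinates, and this forces $\xi$ to act coordinatewise. Since $\calP$ contains every distribution, the condition $\xi_{\#}\p\in\calP$ is vacuous, so only $f\circ\xi^{-1}\in\calF$ matters.

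\textbf{Setup.} For $A\subseteq[K]$, let $\pi_A$ be the partition of $\calZ$ into fibres of $\bz\mapsto\bz_A$; here $\pi_\varnothing$ is the one-block partition and $\pi_{[K]}$ is the discrete partition. Because every $M_k\ge2$, the map $A\mapsto\pi_A$ is injective. It is also a lattice anti-isomorphism onto its image, with
\[
\pi_A\wedge\pi_B=\pi_{A\cup B}, \qquad \pi_A\vee\pi_B=\pi_{A\cap B}.
\]
The meet identity is immediate. For the join, take $\bz,\bz'$ that agree on $A\cap B$. First move from $\bz$ to the point that keeps $\bz_B$ and copies $\bz'$ off $B$; this step preserves $\pi_B$. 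Then change the coordinates in $B\setminus A$ to those of $\bz'$; this step preserves $\pi_A$.

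\textbf{The partitions $\pi_{K_j}$ are carried over.} Now fix $f\in\calF$ with witness $\MQ$, and let $\MQ'$ be a witness for $g:=f\circ\xi^{-1}\in\calF$. The two injectivity clauses in the definition of $\calF$ say exactly that the level-set partition of $f_j$ equals $\pi_{K_j}$, and that the level-set partition of $g_j$ equals $\pi_{K'_j}$. Level sets of $g_j$ are $\xi$-images of level sets of $f_j$. Hence the bijection $\xi$, which induces an automorphism of the partition lattice of $\calZ$ preserving $\wedge$ and $\vee$, maps $\pi_{K_j}$ to $\pi_{K'_j}$ for every $j$.

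\textbf{Generating all coordinate partitions.} The key step is to show that the sublattice generated by $\{\pi_{K_j}\}_j$ is all of $\{\pi_A:A\subseteq[K]\}$, using the subset condition (no column of $\MQ$ contains another). For a fixed $k$ and each $l\ne k$, column $l$ is not contained in column $k$. So some $j$ has $l\in K_j$ and $k\notin K_j$, which gives
\[
\bigcup_{j:\,k\notin K_j}K_j=[K]\setminus\{k\}.
\]
Taking the corresponding meets yields every $\pi_{[K]\setminus\{k\}}$. Taking joins of these over $k\ne l$ yields $\pi_{\{l\}}$, and finite meets of those yield every $\pi_A$. The same argument applies to $\MQ'$, since membership in $\calF$ imposes the subset condition on $\MQ'$ as well. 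Therefore $\xi$ restricts to a lattice isomorphism between two copies of the Boolean lattice on $[K]$. This is the step I expect to need the most care, especially checking that lattice operations on partitions commute with the bijection $\xi$ and that both generated lattices are full.

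\textbf{Reading off the coordinatewise action.} A lattice isomorphism of Boolean lattices sends atoms to atoms. In our anti-isomorphic picture the atoms are the coarsest nontrivial elements $\pi_{\{k\}}$. So there is a permutation $\sigma$ of $[K]$ with $\xi(\pi_{\{k\}})=\pi_{\{\sigma(k)\}}$. Comparing numbers of blocks gives $M_k=M_{\sigma(k)}$. The block correspondence then gives bijections $\tau_k:[M_k]\to[M_{\sigma(k)}]$ with
\[
\xi(\bz)_{\sigma(k)}=\tau_k(z_k)\quad\text{for all }\bz\in\calZ,\ k\in[K].
\]
Thus $\xi$ is a coordinate permutation between coordinates with equal numbers of categories, composed with within-coordinate level relabelings, which is the claim.
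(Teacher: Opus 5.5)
Your proof is correct, and it takes a genuinely different route from the paper's. The paper proves the corollary through Lemma~\ref{lem:poly_disentangle_core}, which has three stages:
\begin{enumerate}
\item It works with item-difference patterns $D(\bz,\bz')=\{j: f_j(\bz)\neq f_j(\bz')\}=\bigcup_{k:\,z_k\neq z'_k}C_k$, where $C_k$ is the support of column $k$ of $\MQ$.
\item It uses no-containment to show that the inclusion-minimal nonempty patterns are exactly the $C_k$, which yields $C_k=C'_{\varpi(k)}$.
\item On the Hamming graph it shows that $\xi$ sends edges changing only coordinate $k$ to edges changing only coordinate $\varpi(k)$. A separate fibre-and-path argument then makes the level maps $\tau_k$ independent of the context $\bz_{-k}$.
\end{enumerate}
You instead transport the level-set partitions $\pi_{K_j}$ and work in the partition lattice of $\calZ$. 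There the subset condition enters as exactly the statement that the $\pi_{K_j}$ generate the Boolean lattice $\{\pi_A\}$; indeed, for $K\ge 2$, no-containment is equivalent to $\{K_j\}$ generating $2^{[K]}$ under unions and intersections. The coordinatewise form then drops out at once, because $\xi$ maps the blocks $\{z_k=t\}$ of $\pi_{\{k\}}$ bijectively onto the blocks of $\pi_{\{\sigma(k)\}}$. This gives context-independence and $M_{\sigma(k)}=M_k$ together.

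What the paper's route buys is an explicit intermediate identification of $\MQ$ up to column permutation, which it reuses in the proof of Theorem~\ref{thm:poly_generic_ident_delta}; it also assumes no-containment only for the true $\MQ$. Your route gives both of these with no extra work:
\begin{itemize}
\item Meet-preservation gives $\xi(\pi_A)=\pi_{\sigma(A)}$, so $K'_j=\sigma(K_j)$ by injectivity of $A\mapsto\pi_A$.
\item You do not need the subset condition on $\MQ'$. The map $\xi$ carries $\{\pi_A\}$, the lattice generated by the $\pi_{K_j}$, onto the lattice generated by the $\pi_{K'_j}$. That lattice lies inside $\{\pi_A\}$, since this set is closed under $\wedge$ and $\vee$. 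Equal finite cardinalities then force equality.
\end{itemize}
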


Corollary~\ref{cor:poly_moran_indeterminacy} establishes disentanglement in the sense of \citet{moran2026towards}: among all admissible latent bijections \(\xi:\calZ\to\calZ\) that preserve membership in \((\calF,\calP)\), the only ones that remain are element-wise transformations and permutations. Notably, this conclusion is obtained with \(f\) treated nonparametrically, subject only to the restrictions in \(\calF\), rather than under a specific parameterization like all-effect forms. Moreover, under injective-generator and well-posed observation assumptions that are common in identifiable CRL \citep{ahuja2023interventional,khemakhem2020variational}, the full statistical equivalence class coincides with the transformation-based indeterminacy class \citep{xi2023indeterminacy}, so the corollary yields a direct disentanglement conclusion from observational data alone. This contrasts with much of the existing literature, where even after adopting the same baseline, additional sources of variation---such as auxiliary variables, multiple environments, or interventions---are typically invoked to further shrink the indeterminacy set down to permutations and component-wise transformations \citep{khemakhem2020variational,ahuja2023interventional}. In our framework, the structural constraints on $f$ encoded by $\MQ$ together with the subset condition are already sufficient to enforce this shrinkage at the level of $A(\calF,\calP)$ from observational data alone. This can be viewed as a natural restriction when $\MQ$ is used to describe which latents affect which measurements. Compared with other observational-only identifiability results, our assumptions are often milder: prior work commonly relies on anchor features \citep{moran2022identifiable,prashant2024differentiable}, Gaussian-mixture latent structure \citep{kivva2022identifiability}, or access to a mixture oracle \citep{kivva2021learning}, whose existence can fail for common discrete-response observation models (see Supplement~\ref{sec: Connections with Existing Studies}). If one further imposes a monotonicity condition, the within-coordinate relabelings here can also be removed.

\section{Estimation Procedure and Theoretical Guarantees}\label{sec:est-theta}
With a slight abuse of notation, we let $\MX$ denote the $N\times J$ data matrix with rows $\MX_1,\ldots,\MX_N$. Similarly, let $\MZ$ denote the corresponding $N\times K$ latent variable matrix with rows $\MZ_1,\ldots,\MZ_N$. Given the bottom-layer data $\MX\in\RR^{N\times J}$  with unknown latent causal structures and parameters $(\p, \calG,\MB, \MQ,\bgamma)$, our objective is to recover $\MQ$ and $\calG$. In this section, we provide the complete pipeline for recovering the measurement graph and latent DAG (Algorithm~\ref{al:pipeline}).

\begin{algorithm}[h!]
\SetAlgoLined
\caption{{\small Discrete Causal Representation Learning: Estimate-Resample-Discovery}}
\label{al:pipeline}
\KwData{$\MX$, $K$, strictly increasing sampling rule $f: \ZZ_+\to\ZZ_+$.}

\BlankLine
\textbf{Stage I: Parameter Estimation}\;
  Obtain the estimated $(\hat{\p}, \hat{\MB}, \hat{\bgamma})$\;
  Set $\hat{\MQ}$ by the support of $\hat{\MB}$: 
  $\hat{q}_{j,k} \gets \mathbf{1}(\hat{\beta}_{j,k} \neq 0)$\;

\BlankLine
\textbf{Stage II: Latent Resampling from $\hat{\p}$}\;
  Draw $f(N)$ i.i.d.\ samples $\MZ^{(1)}, \dots, \MZ^{(f(N))} \stackrel{\text{i.i.d.}}{\sim} \hat{\p}$ and stack them into $\hat{\MZ} \in \{0,1\}^{f(N)\times K}$\;

\BlankLine
\textbf{Stage III: Causal Discovery on Latent Space}\;
  Perform a causal discovery method on $\hat{\MZ}$ to obtain the estimated $\hat{\calG}$\;

\BlankLine
\textbf{Output:} $(\hat{\p}, \hat{\MB}, \hat{\bgamma}, \hat{\MQ}, \hat{\calG})$.
\end{algorithm}

Next, we describe the pipeline in detail. In Stage~I, we apply a penalized maximum likelihood estimator to obtain $(\hat{\p}, \hat{\MB}, \hat{\bgamma})$ and hence $\hat{\MQ}$, as detailed in Section~\ref{subsec: EM}. In Stage~II, we fix a strictly increasing sampling rule $f:\ZZ_+\to\ZZ_+$ and draw $f(N)$ i.i.d.\ latent samples from $\hat{\p}=(\hat p_\bz)_{\bz\in\{0,1\}^K}$ to form the resampled matrix $\hat{\MZ} \in \{0,1\}^{f(N)\times K}$. The requirement that $f$ be strictly increasing is imposed purely for notational convenience, and our analysis depends only on the growth rate of $f(N)$ relative to $N$. In Stage~III, we run Greedy Equivalence Search (GES; \citealp{chickering2002optimal}) on $\hat{\MZ}$ to obtain an estimate $\hat{\calG}$; details on GES are given in Section~\ref{subsec: GES}. In Section~\ref{subsec: Local Consistency}, we specify suitable ranges for the resampling size $f(N)$ and conditions to ensure that this three-stage pipeline enjoys rigorous consistency guarantees. 

While our identifiability results hold for the all-effect specification, in practice we focus on the main-effect specification for parsimony, interpretability, and computational tractability.

\subsection{Penalized likelihood estimation via Gibbs–SAEM}\label{subsec: EM}
Let $\ell(\MTheta\mid\MX)=\sum_{i=1}^N \log \PP(\MX_i\mid\MTheta)$ be the marginal log-likelihood from \eqref{eq:marginal probability}, and define

\vspace{-5mm}
\begin{equation}\label{eq:penalized optimization}
\hat{\MTheta}\in\arg\max_{\MTheta}\Big\{\ell(\MTheta\mid\MX)-p_{\lambda_N,\tau_N}(\MB)\Big\},
\end{equation}
where $p_{\lambda_N,\tau_N}$ is an entrywise penalty extended additively over $\MB=(\beta_{j,k})$: $p_{\lambda_N,\tau_N}(\MB)=\sum_{k=1}^K\sum_{j=1}^J p_{\lambda_N,\tau_N}(\beta_{j,k}).$
The bipartite matrix is then estimated by thresholding:

\vspace{-5mm}
\begin{equation}\label{eq:graphical matrix estimator}
    \hat{q}_{j,k}=\mathbf{1}(\hat{\beta}_{j,k}\neq 0),\qquad j\in[J],\ k\in[K].
\end{equation}
Throughout, we assume that the penalty $p_{\lambda_N,\tau_N}$ satisfies the regularity conditions in \citet[Supplement~S.1.4]{lee2025deep}, including standard truncated sparsity–inducing penalties such as the truncated Lasso penalty (TLP) \citep{shen2012likelihood} and the smoothly clipped absolute deviation (SCAD) penalty \citep{fan2001variable}, and we do not reproduce them here 

The following result establishes consistency of the parameters and the bipartite graph.
\begin{theorem}[Theorem 3 in \cite{lee2025deep}]\label{thm:estimation consistency}
Let $(\MTheta^\star,\MQ^\star,\calG^\star)$ denote the true parameters in the discrete CRL framework. Suppose the parameter space is compact and all entries of $\MB$ are bounded, the data-generating process at $\MTheta^\star$ is identifiable and has nonsingular Fisher information, and the tuning parameters satisfy $1/\sqrt{N}\ll\tau_N\ll \lambda_N/\sqrt{N}\ll 1$. If $\hat{\MTheta}$ solves \eqref{eq:penalized optimization}, then there exists a relabeling $\tilde{\MTheta}\sim_\calK\hat{\MTheta}$ such that $\|\tilde{\MTheta}-\MTheta^\star\|=O_p(1/\sqrt{N})$. Moreover, with $\tilde{\MQ}$ computed from $\tilde{\MTheta}$ via \eqref{eq:graphical matrix estimator}, one has $\PP(\tilde{\MQ}\neq \MQ^\star)\to 0$.
\end{theorem}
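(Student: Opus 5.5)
The argument follows the standard penalized-likelihood (oracle-type) template, adapted to the discrete latent-variable likelihood. Identifiability is only up to $\sim_\calK$, so I first fix the label ambiguity. The marginal law \eqref{eq:marginal probability} is invariant under the finite group of coordinate relabelings in Definition~\ref{def:identifiability upto permutation saturated}, and so is the entrywise penalty. For each $\hat{\MTheta}$ I therefore choose the representative $\tilde{\MTheta}\sim_\calK\hat{\MTheta}$ closest to $\MTheta^\star$. Compactness of the parameter space and identifiability (for instance via Theorem~\ref{thm2}) then give a separation property: for every $\epsilon>0$,
\[
\sup_{\min_{\sigma}\|\sigma(\MTheta)-\MTheta^\star\|\ge\epsilon}\ \E_{\MTheta^\star}\log\PP(\MX\mid\MTheta)<\E_{\MTheta^\star}\log\PP(\MX\mid\MTheta^\star).
\]

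\emph{Step 1 (consistency).} The log-likelihood is continuous and bounded on the compact space, with an integrable envelope supplied by the real-analyticity in Assumption~\ref{assm2}. Hence a uniform law of large numbers holds for $N^{-1}\ell$. Because $\lambda_N/\sqrt N\to0$ and the penalty is bounded by a multiple of $\lambda_N\tau_N$-type quantities (the truncated-penalty regularity conditions), the penalty is $o(N)$. A Wald-type argument then gives $\tilde{\MTheta}\to\MTheta^\star$ in probability.

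\emph{Step 2 ($\sqrt N$ rate).} I work in a shrinking neighborhood $\|\MTheta-\MTheta^\star\|=u/\sqrt N$ with $\|u\|\le C$. Nonsingular Fisher information together with smoothness gives the LAN expansion
\[
\ell(\MTheta^\star+u/\sqrt N)-\ell(\MTheta^\star)=u^\top\Delta_N-\tfrac12u^\top I u+o_p(1),\qquad \Delta_N=O_p(1).
\]
For the penalty difference I split coordinates into true zeros and true nonzeros. On true nonzeros, $|\beta^\star_{j,k}|$ is bounded away from $0$ and $\tau_N\to0$, so the truncated penalty is flat there and the difference is zero for large $N$. On true zeros the penalty is nonnegative and vanishes at $\MTheta^\star$, so adding it only lowers the objective of competitors. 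Comparing the objective on the sphere $\|u\|=C$ with its value at $\MTheta^\star$, the quadratic term dominates for large $C$, which gives $\|\tilde{\MTheta}-\MTheta^\star\|=O_p(N^{-1/2})$.

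\emph{Step 3 (support recovery).} This is the main obstacle. I must show that every true-zero $\tilde\beta_{j,k}$ is exactly $0$ with probability tending to one. I would argue by contradiction. Suppose $\tilde\beta_{j,k}\neq0$ for some $q^\star_{j,k}=0$; by Step 2, $|\tilde\beta_{j,k}|=O_p(N^{-1/2})$.
\begin{itemize}
\item If $|\tilde\beta_{j,k}|\ge\tau_N$: then $\tau_N\gg N^{-1/2}$ makes this event have vanishing probability.
\item If $|\tilde\beta_{j,k}|<\tau_N$: the penalty has slope of order $\lambda_N/\tau_N$ near zero. Setting this coordinate to zero changes the likelihood by only $O_p(\sqrt N\,|\tilde\beta_{j,k}|)$, because the score is $O_p(\sqrt N)$ in a root-$N$ neighborhood. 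It lowers the penalty by roughly $(\lambda_N/\tau_N)|\tilde\beta_{j,k}|$. The condition $\tau_N\ll\lambda_N/\sqrt N$ means $\lambda_N/\tau_N\gg\sqrt N$, so the zeroed candidate strictly improves the objective, contradicting optimality.
\end{itemize}
The delicate part is making this uniform over the random $\tilde{\MTheta}$. I would handle it with a mean-value bound on the score over the $O_p(N^{-1/2})$ ball.

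Conversely, true nonzeros remain nonzero because $\tilde{\MTheta}\to\MTheta^\star$ and these entries are bounded away from $0$. Combining the two directions gives $\PP(\tilde{\MQ}\neq\MQ^\star)\to0$, and taking a union over the finitely many $(j,k)$ completes the argument.
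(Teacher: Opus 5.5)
The paper does not prove this statement. It is imported verbatim as Theorem~3 of \cite{lee2025deep}, so the benchmark is the standard oracle-type argument for truncated penalties. Your outline follows that argument: fix the relabeling, Wald consistency, root-$N$ rate, then a zeroing argument for exact sparsity. Two steps, as written, do not yet deliver the stated conclusion.

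\emph{Step 2 gives the rate for a local maximizer, not for the global solution.} Showing that the objective on the sphere $\{\|u\|=C\}$ lies below its value at $\MTheta^\star$ only shows that the maximum over the ball $\{\|u\|\le C\}$ is interior. That is the Fan--Li conclusion: \emph{some} local maximizer is root-$N$ consistent. The theorem is about the solution of \eqref{eq:penalized optimization}, and the mixture log-likelihood is not concave. So nothing you wrote excludes $\tilde{\MTheta}$ lying in the annulus $C/\sqrt N\le\|\MTheta-\MTheta^\star\|\le\delta$. The fix has two parts. First, use Step~1 to place $\tilde{\MTheta}$ in a fixed ball $B(\MTheta^\star,\delta)$. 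Second, replace the LAN expansion at scale $N^{-1/2}$ by a quadratic bound that is uniform on that ball. This comes from a second-order Taylor expansion, a uniform law of large numbers for $N^{-1}\nabla^2\ell$, and continuity of the Fisher information:
\[
\ell(\MTheta)-\ell(\MTheta^\star)\le\|\MTheta-\MTheta^\star\|\,\|\nabla\ell(\MTheta^\star)\|-cN\|\MTheta-\MTheta^\star\|^2,\qquad \MTheta\in B(\MTheta^\star,\delta),
\]
with probability tending to one. True nonzeros exceed $\tau_N$ by consistency alone, so the penalty difference is $\le 0$. The basic inequality at $\tilde{\MTheta}$ then gives $\|\tilde{\MTheta}-\MTheta^\star\|=O_p(N^{-1/2})$ directly.

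\emph{Step 3 uses a property of the penalty that is not among the stated conditions.} Your zeroing argument needs a lower bound $p_{\lambda_N,\tau_N}(b)\ge c(\lambda_N/\tau_N)|b|$ for $0<|b|<\tau_N$. The conditions recorded in Section~\ref{sec: penalty assumptions} give only the upper bound $p'_{\lambda_N,\tau_N}(b)\le C\lambda_N/\tau_N$, which points the wrong way. For example, take a penalty equal to $\epsilon|b|$ with fixed $\epsilon>0$ near the origin, rising smoothly to $\lambda_N$ on $[\tau_N/2,\tau_N]$. It satisfies every listed condition. Yet optimality at $\hat\beta_{j,k}=0$ only requires $|\partial_{\beta_{j,k}}\ell(\hat{\MTheta})|\le\epsilon$, while this score is of order $\sqrt N$ with a nondegenerate limit. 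So exact zeros are not forced.

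Add the lower bound as a hypothesis. It holds for TLP, $p(b)=\lambda_N\min\{|b|/\tau_N,1\}$. It also holds for any penalty concave on $[0,\infty)$ with $p(\tau_N)=\lambda_N$ (SCAD included), because concavity gives $p(b)\ge\lambda_N|b|/\tau_N$ on $[0,\tau_N]$. With it your argument is correct. Zeroing a true-zero coordinate moves toward $\MTheta^\star$, so the segment stays in the $O_p(N^{-1/2})$ ball where $\sup\|\nabla\ell\|=O_p(\sqrt N)$. Since $\lambda_N/\tau_N\gg\sqrt N$, the zeroed candidate strictly improves the objective.

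Smaller points:
\begin{itemize}
\item The integrable envelope for your uniform law of large numbers does not follow from the real-analyticity in Assumption~\ref{assm2}. You need a domination condition, which is automatic for regular exponential families on compact natural-parameter sets.
\item Your separation inequality needs identifiability against the whole compact search space. That is the theorem's hypothesis, not Theorem~\ref{thm2}. Theorem~\ref{thm2} is generic and only compares against alternatives satisfying Assumption~\ref{assm1}(b). If the search space does not impose that sign constraint, bit-flipped maximizers with identical penalized likelihood are not excluded.
\item In Step~1 the penalty is simply bounded by $JK\lambda_N=o(\sqrt N)$.
\end{itemize}
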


We compute \eqref{eq:penalized optimization} via a penalized Gibbs--SAEM algorithm \citep{delyon1999convergence,kuhn2004coupling,lee2025deep}. A direct EM implementation would require evaluating the conditional expectation of the complete-data log-likelihood by summing over all \(2^K\) latent configurations. This leads to an \(O(NJ\,2^K)\) cost per outer iteration, which becomes prohibitive once \(K\) is moderate. Gibbs–SAEM sidesteps this by replacing the exact E–step with low-cost Gibbs updates. The full pseudocode is given in Algorithm~\ref{al: PSAEM} in Section~\ref{sec: Detailed SAEM}.

In the E--step we run an alternating coordinate Gibbs sampler~\citep{gu2023a} targeting the posterior \(\PP(\MZ\mid \MX;\MTheta^{[t]})\) under the parameter iterate \(\MTheta^{[t]}=(\bbeta^{[t]},\bgamma^{[t]},\mathbf p^{[t]})\). In our implementation we take a single Gibbs sweep per outer iteration (\(C=1\)) following \citet{delyon1999convergence}. 
Within this sweep we visit each coordinate \(Z_{i,k}\) in turn. Conditional on the current value of all other coordinates \(\MZ_{i,-k}\), the conditional distribution of \(Z_{i,k}\) is Bernoulli with success probability
$\PP\big(Z_{i,k}=1 \mid \mathbf Z_{i,-k},\MY;\MTheta^{[t]}\big)
\;\propto\;
\PP\big(Z_{i,k}=1,\mathbf Z_{i,-k};\mathbf p^{[t]}\big)
\prod_{j=1}^J
\PP\big(X_{ij}\mid \mathbf Z_i;\bbeta^{[t]}_j,\gamma^{[t]}_j\big)$,
and similarly for \(Z_{i,k}=0\). Equivalently, the log-odds of \(Z_{i,k}=1\) versus \(0\) is given by the difference in log joint densities when flipping that bit. This conditional has a closed form and is straightforward to sample from. To keep each flip inexpensive, we maintain for every sample–item pair the linear score $\psi_{ij}= \beta_{j,0}^{[t]} + \sum_{k=1}^K \beta_{j,k}^{[t]} Z_{ik},$ so that the likelihood contribution of \(\mathbf Z_i\) to \(X_{ij}\) can be updated by rank-one changes in \(\psi_{ij}\). Flipping a single bit \(Z_{i,k}\) modifies all \(\psi_{ij}\) by adding or subtracting \(\beta_{j,k}^{[t]}\), which costs \(O(J)\) operations. A full Gibbs sweep visits all \(K\) coordinates for each of the \(N\) individuals, so the E--step costs \(O(NJK)\) per outer iteration, in contrast to the \(O(NJ\,2^K)\) cost of exact EM when \(K\) grows.

The Gibbs sweep produces an updated \(\mathbf Z^{[t+1]}\), inducing an empirical distribution on \(\{0,1\}^K\) that we use to update \(\mathbf p\) via Robbins--Monro stochastic approximation. Writing \(\widehat{\mathbf p}^{[t+1]}\) for the empirical proportions of the sampled configurations at iteration \(t+1\), we set
\(\mathbf p^{[t+1]}=
(1-\theta_{t+1})\,\mathbf p^{[t]} + \theta_{t+1}\,\widehat{\mathbf p}^{[t+1]},
\)
with stepsizes \(\{\theta_t\}_{t\ge1}\) satisfying \(\sum_t \theta_t=\infty\) and \(\sum_t \theta_t^2<\infty\). 

In the M--step, for each row \(j\) of \(\MB\) we set \[ F_j^{[t+1]}(\bbeta_j,\gamma_j) =\;(1-\theta_{t+1})\,F_j^{[t]}(\bbeta_j,\gamma_j) +\frac{\theta_{t+1}}{C}\sum_{r=1}^{C}\sum_{i=1}^{N}\log \PP\!\big(X_{ij}\mid \mathbf Z_i=\mathbf Z^{[t+1],r}_i;\bbeta_j,\gamma_j\big), \]where $F_j^{[0]}\equiv0$ for $j\in[J]$, and then solve the penalized maximization $ (\bbeta_j^{[t+1]},\gamma_j^{[t+1]})\;=\;\arg\max_{\bbeta_j,\gamma_j}\,\Big\{F_j^{[t+1]}(\bbeta_j,\gamma_j)\;-\;p_{\lambda_N,\tau_N}(\bbeta_j)\Big\}.$
Together, the Gibbs E--step and penalized SAEM M--step provide an efficient stochastic optimization scheme for the penalized objective \eqref{eq:penalized optimization}.

In complex latent variable models, the penalized SAEM algorithm may converge to local maxima if poorly initialized. To mitigate this, we adopt a spectral initialization that first applies the universal singular value thresholding \citep{chatterjee2015matrix}  procedure for low-rank generalized linear factor models \citep{zhang2020note}, followed by an SVD-based Varimax rotation to obtain sparse factor loadings \citep{rohe2023vintage}. See Section~\ref{sec: Detailed Ini} for details.

Algorithm~\ref{al: PSAEM} in the Supplement yields estimates of $\p$ and $\MQ$, which suffices to infer the latent causal structure. It remains to sample latent variables from $\hat\p$ and apply a causal discovery method. In this work, we employ GES and therefore first review its relevant details.

\subsection{Greedy Equivalence Search}\label{subsec: GES}
Greedy Equivalence Search (GES) is a score-based causal discovery method that searches over Markov equivalence classes (MECs) and is guaranteed to identify the MEC of the true DAG under suitable conditions on the scoring criterion. This subsection briefly reviews these conditions, as they will guide our choice of score in the presence of estimation error from $\hat{\p}$. 

Let $\calD$ denote $N$ i.i.d.\ samples from a distribution $p^\star$ that is faithful to DAG $G^\star$. In a score-based framework, each DAG $G$ is assigned a score $S(G;\calD)$, and the estimation problem is formulated as $ G^\star \in \arg\max_{G} S(G;\calD).$ Based on a scoring criterion, GES performs a two-phase greedy search over equivalence classes represented by CPDAGs: a forward phase that repeatedly applies a valid insertion whenever it increases the score, followed by a backward phase that greedily applies valid deletions until no further score increase is possible.

We recall several key properties of scoring criteria. The theoretical guarantees of GES rely on the following standard notions \citep{chickering2002optimal}: score equivalence and local consistency.

\emph{First}, a score is score equivalent if it gives identical scores to all DAGs in the same MEC.



\emph{Second}, the score is locally consistent if, when $G'$ is obtained from $G$ by adding an edge $u\to v$, we have in the limit as $N\to\infty$ that $S(G',\mathcal D)>S(G,\mathcal D)$ if $R_v \not\!\perp\!\!\!\perp_{p^\star} R_u \mid \mathbf{Pa}_v^{G}$, and $S(G',\mathcal D)<S(G,\mathcal D)$ if $R_v \perp\!\!\!\perp_{p^\star} R_u \mid \mathbf{Pa}_v^{G}$.

A score commonly used in score-based methods is BIC, which is defined for graphical models as $S(G; \mathcal{D}) = \log p_{\hat\Mtheta}(\mathcal{D}) - \tfrac{1}{2} d_G \log N$
\citep{schwarz1978estimating,koller2009probabilistic},
where $\hat\Mtheta$ is the maximum likelihood estimate over $\calM(G)$ and $d_G$ denotes the number of free parameters in the model associated with $G$. Since we focus on discrete Bayesian networks, we adopt the score-equivalent BDeu score (Bayesian Dirichlet equivalent uniform), defined as the log marginal likelihood under a uniform Dirichlet prior on each conditional probability table \citep{heckerman1995learning}, to recover $\calG$ in the following estimation pipeline. For a fixed equivalent sample size, the BDeu score and BIC differ only by an $O(1)$ term \citep{koller2009probabilistic}, and are therefore asymptotically equivalent.

If a score-equivalent criterion is locally consistent, GES returns the MEC of $G^\star$ as $N\to \infty$ \citep[Theorem~4]{chickering2002optimal}. Moreover, BDeu is locally consistent for discrete Bayesian networks \citep{chickering2002optimal}, guaranteeing that GES recovers the MEC of $G^\star$ as $N\to\infty$.

\subsection{Theoretical Guarantees}\label{subsec: Local Consistency} 
Our main result is stated as follows.

\vspace{-5mm}
\begin{theorem}
\label{thm:consistency_pipeline}
Consider the discrete causal representation learning framework with true parameters $(\p^\star, \calG^\star, \MB^\star, \MQ^\star, \bgamma^\star)$. 
Suppose Assumption~\ref{assm1} holds, the framework is identifiable up to $\sim_\calK$, the Fisher information at $\Theta^\star$ is nonsingular, and the entries of $\MB$ are uniformly bounded. 
Assume that Stage~I of Algorithm~\ref{al:pipeline} employs the penalized optimization problem~\eqref{eq:penalized optimization} with tuning parameters $(\tau_N,\lambda_N)$ satisfying $\frac{1}{\sqrt{N}} \ll \tau_N \ll \frac{\lambda_N}{\sqrt{N}} \ll 1$, and that Stage~III applies Greedy Equivalence Search with the BDeu score to the resampled latent data $\hat{\MZ}$. 
Further assume that the sampling rule $f:\ZZ_+\to\ZZ_+$ is strictly increasing and satisfies $f(N) = o(N \log N)$. 
Then, as $N\to \infty$:
\begin{enumerate}
    \item[\textnormal{(i)}]  
    There exists $\tilde{\Theta} \sim_\calK (\hat{\p}, \hat{\MB}, \hat{\bgamma})$ such that 
    $\|\tilde{\Theta} - \Theta^\star\| = O_p(N^{-1/2})$ and 
    $\mathbb{P}(\hat{\MQ} = \MQ^\star) \to 1$.
    \item[\textnormal{(ii)}] 
    $\hat{\calG}$ recovers the MEC of $\calG^\star$ with probability tending to $1$.
\end{enumerate}
\end{theorem}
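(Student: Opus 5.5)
Part (i) follows from Theorem~\ref{thm:estimation consistency}, whose hypotheses (compactness, bounded $\MB$, identifiability up to $\sim_\calK$, nonsingular Fisher information, and the tuning-parameter rates) are all assumed here. It yields a relabeling $\tilde\Theta\sim_\calK\hat\Theta$ with $\|\tilde\Theta-\Theta^\star\|=O_p(N^{-1/2})$ and $\PP(\tilde\MQ\neq\MQ^\star)\to0$. Since $\sim_\calK$ is a finite relabeling by some $\sigma\in S_{[K]}$, I will fix this $\sigma$ throughout. The statement $\PP(\hat\MQ=\MQ^\star)\to1$ is read modulo that relabeling, and the relabeled latent law $\tilde\p$ satisfies $\|\tilde\p-\p^\star\|_\infty=O_p(N^{-1/2})$. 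Relabeling columns of $\hat\MZ$ by $\sigma$ maps the GES output to $\sigma$ of itself, because GES with BDeu is equivariant under node relabeling. So for (ii) it suffices to show that GES applied to $f(N)$ i.i.d.\ draws from $\tilde\p$ returns the MEC of $\calG^\star$ with probability tending to one.

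For (ii), I condition on Stage~I and work with the resampled data $\hat\MZ\sim\tilde\p^{\otimes M}$, where $M=f(N)\to\infty$. Let $\hat p_M$ be the empirical law of $\hat\MZ$. Then
\[
\|\hat p_M-\p^\star\|_\infty\le\|\hat p_M-\tilde\p\|_\infty+\|\tilde\p-\p^\star\|_\infty=O_p(M^{-1/2})+O_p(N^{-1/2}).
\]
The plan is to prove a rate-robust local consistency of BDeu and then invoke the proof of \citet[Theorem~4]{chickering2002optimal}. That proof uses only score equivalence, which BDeu has deterministically, and local consistency along the finitely many (at most $K^2 2^K$) edge additions of the form $u\to v$ with parent set $\mathbf{Pa}_v$. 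Because there are finitely many such comparisons, it suffices to show that each one has the correct sign with probability $\to1$. For each comparison, BDeu equals BIC up to $O_p(1)$, and the score difference is
\[
M\,\widehat{I}_M(Z_u;Z_v\mid Z_{\mathbf{Pa}_v})-\tfrac12\Delta d\log M+O_p(1),\qquad \Delta d\ge1,
\]
where $\widehat I_M$ is the empirical conditional mutual information.

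I split into two cases. In the dependent case, $\p^\star$ has conditional mutual information $I^\star>0$. By continuity of conditional mutual information on the interior of the simplex (Assumption~\ref{assm1}(a) gives $\p^\star_\bz>0$), we have $\widehat I_M\to I^\star$ in probability. The difference is then of order $M I^\star\gg\log M$, so it is positive. In the independent case, $I^\star=0$, and this is the main obstacle: $\tilde\p$ need not satisfy the independence exactly. Since $I$ is smooth with vanishing gradient at independence, a second-order expansion gives
\[
\widehat I_M=O_p\big(\|\hat p_M-\p^\star\|^2\big)=O_p(M^{-1}+N^{-1}).
\]
Hence $M\widehat I_M=O_p(1+M/N)$. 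To get the negative sign I need $M/N=o(\log M)$, which holds because $f(N)=o(N\log N)$ implies $f(N)=o(N\log f(N))$ (using $\log f(N)\ge\log N$ eventually, since $f$ is strictly increasing). So the penalty $\tfrac12\Delta d\log M$ dominates and the difference is negative with probability $\to1$. To make the quadratic expansion rigorous, I would write the statistic as a smooth function of the cell frequencies near $\p^\star$, whose value and gradient vanish on the independence manifold. I would then bound it by $C\,\mathrm{dist}(\hat p_M,\text{manifold})^2\le C\|\hat p_M-\p^\star\|^2$.

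Finally, I combine everything. Intersecting the finitely many high-probability events on which every local comparison has the correct sign puts GES in the same regime as in Chickering's argument. Its forward and backward phases then terminate at the MEC of $\calG^\star$, which is a perfect map of $\p^\star$ by Assumption~\ref{assm1}(a). Unconditioning over Stage~I via dominated convergence gives (ii).
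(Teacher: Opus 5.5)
Your proposal is correct and is essentially the paper's own argument. The paper also proves a rate-robust local consistency of BDeu. Its Case 2 bound $V_H-V_G=O_p(\|\hat\mu_N-\mu^\star\|^2)$ is exactly your quadratic bound on the empirical conditional mutual information; the paper reaches single-edge comparisons through complete DAGs and global $\{c_N\}$-consistency, whereas you go there directly. Its rate condition $g(f^{-1}(N))=\omega(\sqrt{N/\log N})$ is your $f(N)/N=o(\log f(N))$.

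The paper spells out two points that you only assert. First, it proves the $O_p(1)$ BDeu--BIC remainder under the random resampling law, using Stirling's formula together with Chernoff bounds on the cell counts. Second, Chickering's original backward-phase step invokes global consistency, so your claim that his proof uses only local consistency is not literally accurate. The paper handles this by reducing that deletion to (L2) through the local Markov property, and that comparison is already covered by your bound.
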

In practice, we specifically employ Penalized Gibbs--SAEM (Algorithm~\ref{al: PSAEM}) to solve \eqref{eq:penalized optimization} and thereby complete Stage I of Algorithm~\ref{al:pipeline}. Below, we provide a detailed account of how the theorem is established and highlight the main technical challenges.

Because Stage~III in Algorithm~\ref{al:pipeline} applies GES to resampled latents drawn from an estimated law rather than the unobserved $p^\star$, our setting is more complex than \cite{chickering2002optimal}, where the data are drawn directly from a fixed $p^\star$. Classical local consistency must therefore be strengthened to tolerate sampling from a sequence $\{p_N\}$ approaching $p^\star$ at a specified rate. This refinement is crucial for applying GES to $\hat{\MZ}$ rather than the unobserved $\MZ$. We now introduce the rate-robust notions needed for this purpose.

\begin{definition}\label{def:c_N local consistency}
Let $p^\star=P_{\Mtheta^\star}$ and $\calD_N=\{\MR_{N,1},\dots,\MR_{N,N}\}$ be i.i.d.\ from some $p_N=P_{\Mtheta_N}$.  Let $G'$ be obtained from $G$ by adding the edge $i\to j$. We say a score $S$ is $\{c_N\}$-locally consistent if $\|\Mtheta_N-\Mtheta^\star\|=O_p(1/c_N)$ with $c_N\to\infty$, and as $N\to\infty$ the following hold:
(i) if $R_j\not\!\perp\!\!\!\perp_{p^\star}R_i\mid\mathbf{Pa}_j^{G}$, then $S(G',\calD_N)>S(G,\calD_N)$ with probability $\to1$;
(ii) if $R_j\!\perp\!\!\!\perp_{p^\star}R_i\mid\mathbf{Pa}_j^{G}$, then $S(G',\calD_N)<S(G,\calD_N)$ with probability $\to1$.
\end{definition}
If a score-equivalent score criterion $S$ is $\{c_N\}$-locally consistent and the sampling law $\p_N$ used for resampling remains within the prescribed $\{c_N\}$-local tolerance of $\p^\star$, then it can be shown that applying GES with $S$ to samples $\calD_N$ drawn from $\p_N$ still returns the MEC of $G^\star$ as $N$ increases. Our first task is therefore to identify a $\{c_N\}$-locally consistent criterion; a standard choice already does so. The next theorem places BDeu in our $\{c_N\}$ framework. 
\begin{theorem}\label{cor: local consistency}
The BDeu score is $\{c_N\}$-locally consistent for discrete causal graphical models, where $c_N=\omega(\sqrt{\frac{N}{\log N}})$.
\end{theorem}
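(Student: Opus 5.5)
The plan is to reduce the BDeu comparison to a comparison of an empirical conditional mutual information against a $\log N$ penalty. Then I bound that mutual information using the fact that it vanishes quadratically at conditional independence. Fix $G$ and $G'=G+\{i\to j\}$, and write $\Pi=\mathbf{Pa}_j^{G}$. Let $\hat p_N$ denote the empirical distribution of $\calD_N$. Two facts are used throughout.

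First, since $p^\star$ has full support on a finite space, $p_N\to p^\star$ in probability. Hence, with probability tending to one, all cells of $p_N$ and of $\hat p_N$ are bounded below by $\min_{\bz} p^\star_{\bz}/2$, and every cell count is of order $N$.

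Second, on that event the standard Laplace/Stirling expansion of the Dirichlet marginal likelihood (as in \citealp{koller2009probabilistic}) gives $\mathrm{BDeu}=\mathrm{BIC}+O_p(1)$, uniformly over the finitely many graphs involved. Score decomposability then yields
\[
S(G',\calD_N)-S(G,\calD_N)= N\,\hat I_N-\tfrac12\,\Delta d\,\log N+O_p(1),
\]
where $\hat I_N:=I_{\hat p_N}(R_j;R_i\mid R_\Pi)$ is the plug-in conditional mutual information and $\Delta d=d_{G'}-d_G>0$ is a fixed integer.

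For case (ii), where $R_j\perp\!\!\!\perp_{p^\star}R_i\mid R_\Pi$, the key step is a local quadratic bound. Consider the map $q\mapsto I_q(R_j;R_i\mid R_\Pi)$ on the simplex. It is smooth (real-analytic) on a neighbourhood of $p^\star$ because all cells there are positive. Moreover, it is nonnegative and equals $0$ at $p^\star$. Therefore its gradient vanishes at $p^\star$, and Taylor's theorem gives $I_q\le C\|q-p^\star\|^2$ for $q$ in a fixed neighbourhood $U$ of $p^\star$.

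Next I split $\|\hat p_N-p^\star\|\le\|\hat p_N-p_N\|+\|p_N-p^\star\|$. The first term is $O_p(N^{-1/2})$ by a multinomial CLT/Chebyshev bound applied conditionally on $p_N$. The second is $O_p(1/c_N)=o_p(\sqrt{\log N/N})$, using the hypothesis $c_N=\omega(\sqrt{N/\log N})$ together with the smoothness of $\Mtheta\mapsto P_{\Mtheta}$. Hence $N\hat I_N\le CN\|\hat p_N-p^\star\|^2=O_p(1)+o_p(\log N)$. The score difference is then $-\tfrac12\Delta d\log N+o_p(\log N)\to-\infty$, which gives (ii).

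For case (i), $I^\star:=I_{p^\star}(R_j;R_i\mid R_\Pi)>0$. By continuity of $I$ and $\hat p_N\to p^\star$ in probability, $\hat I_N\ge I^\star/2$ with probability tending to one. Thus $N\hat I_N\ge NI^\star/2$, which dominates $\tfrac12\Delta d\log N+O_p(1)$, and the score difference tends to $+\infty$. This case only uses $c_N\to\infty$.

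I expect the main obstacle to be the uniformity in case (ii). The bound $I_q\le C\|q-p^\star\|^2$ must hold with one constant on a fixed neighbourhood, and the BDeu-to-BIC remainder must be $O_p(1)$ even though the data come from the moving law $p_N$ rather than from $p^\star$. I will handle both by working on the high-probability event where $\hat p_N$ and $p_N$ lie in a compact neighbourhood of $p^\star$ with cells bounded away from $0$. On that event the Hessian of $I$ is bounded, and the Stirling remainders in the Dirichlet integrals are bounded uniformly. This is also exactly where the rate threshold enters: the bias term $N\|p_N-p^\star\|^2$ is $o_p(\log N)$ precisely when $1/c_N=o(\sqrt{\log N/N})$, which is the stated condition $c_N=\omega(\sqrt{N/\log N})$.
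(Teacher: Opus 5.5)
Your proposal is correct, but it takes a genuinely different route from the paper. The paper never analyzes a single edge addition directly. It first proves a reduction lemma: decomposability plus $\{c_N\}$-consistency implies $\{c_N\}$-local consistency, obtained by embedding the edge $i\to j$ in a pair of complete DAGs $H\subset H'$ that differ only in that edge. It then proves the stronger \emph{global} $\{c_N\}$-consistency for arbitrary pairs of DAGs, using convex analysis.

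In that global step, the saturated multinomial is written as a minimal exponential family. The maximized log-likelihood over a model $M$ becomes $N V_M(\hat\Mmu_N)$, where $V_M(\Mmu)=\sup_{\Mtheta\in\MXi_M}\{\langle\Mmu,\Mtheta\rangle-A(\Mtheta)\}$.
\begin{itemize}
\item In the dependence case, the paper uses a strict gap $\epsilon_0>0$, obtained from closedness of $\MXi_G$ and strong concavity.
\item In the independence case, it uses the sandwich $0\le V_M(\Mmu)-f(\Mtheta^\star;\Mmu)\le A^\ast(\Mmu)-f(\Mtheta^\star;\Mmu)$ together with a second-order expansion of the conjugate $A^\ast$, which gives $|V_H-V_G|=O(\|\Mmu-\Mmu^\star\|^2)$.
\end{itemize}

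You instead use the closed-form MLE of a discrete Bayesian network. The local score change is exactly $N\hat I_N-\tfrac12\Delta d\log N$, up to the BDeu remainder. Case (i) then needs only continuity of conditional mutual information at $p^\star$, with no separation argument. Case (ii) needs only that a nonnegative smooth function vanishing at the relative-interior point $p^\star$ is locally quadratic. Read ``gradient vanishes'' along the tangent space of the simplex; alternatively, note that conditional mutual information is $1$-homogeneous on the positive orthant.

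The core estimate is the same in both proofs. A likelihood-ratio statistic between nested models that both contain $p^\star$ is $N\cdot O(\|\hat p_N-p^\star\|^2)=O_p(1)+O_p(N/c_N^2)$, and the rate condition on $c_N$ is exactly what makes the bias part $o_p(\log N)$.

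Your version is shorter and more transparent. The paper's version yields the stronger global consistency statement, and because it does not depend on a closed-form constrained MLE, it would carry over to curved submodels where no such formula exists.

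The one piece you only sketch, the $O_p(1)$ BDeu-to-BIC remainder under the moving law $p_N$, is handled in the paper exactly as you suggest. It works on a high-probability event where every cell count is of order $N$, making this explicit through Stirling bounds and a per-cell Chernoff bound conditional on $p_N$.
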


Leveraging this key property, we apply GES with the BDeu score to the resampled latent data. Concretely, in our pipeline we first use the \(N\) observed samples to construct an estimator $\widetilde\Mtheta_N$ of the true \(\Mtheta^\star\), then draw \(f(N)\) samples from $P_{\widetilde\Mtheta_N}$ and run GES scored by BDeu on these resamples. For ease of further discussion, assume the estimator obeys the rate
$\|\widetilde\Mtheta_N-\Mtheta^\star\| = O_p(1/g(N))$,
where \(g:\ZZ_{+}\to\RR\) is strictly increasing with \(g(N)\to\infty\).

Under the indexing convention of Definitions~\ref{def:c_N local consistency}, the distribution generating \(\calD_N\) should be labeled by the sample size that produced it; in particular, it is an estimate from $f^{-1}(N)$ samples rather than from $N$. Equivalently, $\widetilde\Mtheta_{f^{-1}(N)}=\Mtheta_N$, so $\|\Mtheta_N-\Mtheta^\star\|=O_p(1/g(f^{-1}(N))).$
Invoking Theorem~\ref{cor: local consistency}, it thus suffices that
$g\big(f^{-1}(N)\big)=\omega(\sqrt{\tfrac{N}{\log N}})$
to guarantee that GES on the new data recovers the correct MEC.
Recall that Theorem~\ref{thm:estimation consistency} gives $\|\hat\p-\p^\star\|=O_p(N^{-1/2})$. Combining this \(N^{-1/2}\) rate with $g\big(f^{-1}(N)\big)=\omega(\sqrt{\tfrac{N}{\log N}})$
yields \(f^{-1}(N)=\omega(N/\log N)\), which holds whenever $f(N)=o( N \log N)$. This determines an appropriate sampling rule $f(N)$ guaranteeing consistency for the causal graph.


\section{Simulation Studies}\label{sec:sim}
We conduct simulation studies across diverse settings to evaluate the proposed pipeline (Algorithm~\ref{al:pipeline}), using Algorithm~\ref{al: PSAEM} in Stage I. 
We begin with the comparison experiments and the generative setup shared across all simulations. These settings are chosen to satisfy the strict identifiability conditions in Theorem~\ref{thm1} and create scenarios of varying difficulty for structure recovery. In all simulations, we use three measurement families—Gaussian, Poisson, and Bernoulli—to represent continuous, count, and binary observations, respectively. 
The exact constructions of two possible measurement graphs $\MQ_1$, $\MQ_2$, and parameters $\p$, $(\MB,\bgamma)$ are given in Supplement~\ref{sec:sim-details}. For the resampling step, we fix $f(N)=N$ in all main-text experiments, while Supplement~\ref{sec: choice of fn} reports a sensitivity analysis over $f(N)\in\{2N,3N\}$.

\paragraph{Simulation Study I: Comparison experiments.}~
We benchmark our method against the influential work of \cite{kivva2021learning}, which proposed a theoretically well-founded framework for learning DAGs with discrete latent variables. Its publicly available implementation currently supports latent dimension only up to \(K \le 5\), so in this comparison we focus on Gaussian models with \(K=5\). 
For these comparison runs we set \(\MQ=\MQ_1\). We adopt three standard benchmark latent DAGs with \(K=5\) shown in Figure~\ref{Different DAGs for Comparison Experiments}. 

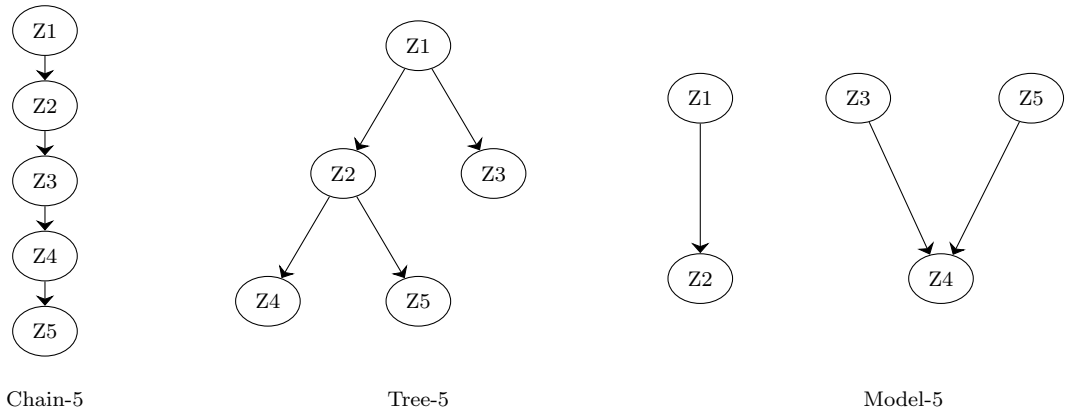
\begin{figure}[h!]
\centering

\begin{tikzpicture}[
  var/.style={draw, ellipse, minimum width=0.6cm, minimum height=0.6cm},font=\scriptsize,>={Stealth[length=4pt,width=7pt]}
]
  \node[var] (Z1) at (0,4) {Z1};
  \node[var] (Z2) at (0,3) {Z2};
  \node[var] (Z3) at (0,2) {Z3};
  \node[var] (Z4) at (0,1) {Z4};
  \node[var] (Z5) at (0,0) {Z5};
  \draw[->] (Z1) -- (Z2);
  \draw[->] (Z2) -- (Z3);
  \draw[->] (Z3) -- (Z4);
  \draw[->] (Z4) -- (Z5);
  \node[draw=none] at (0,-0.9) {Chain-5};
\end{tikzpicture}
\hspace{1.6cm}
\begin{tikzpicture}[
  var/.style={draw, ellipse, minimum width=0.6cm, minimum height=0.6cm},font=\scriptsize,>={Stealth[length=4pt,width=7pt]}
]
  \node[var] (Z1) at (0,5.5) {Z1};
  \node[var] (Z2) at (-1,3.8) {Z2};
  \node[var] (Z3) at (1,3.8) {Z3};
  \node[var] (Z4) at (-2,2.1) {Z4};
  \node[var] (Z5) at (0,2.1) {Z5};
  \draw[->] (Z1) -- (Z2);
  \draw[->] (Z1) -- (Z3);
  \draw[->] (Z2) -- (Z4);
  \draw[->] (Z2) -- (Z5);
  \node[draw=none] at (0,0.8) {Tree-5};
\end{tikzpicture}
\hspace{1.6cm}
\begin{tikzpicture}[
  var/.style={draw, ellipse, minimum width=0.6cm, minimum height=0.6cm},font=\scriptsize,>={Stealth[length=4pt,width=7pt]}
]
  \node[var] (Z1) at (0,4.8) {Z1};
  \node[var] (Z2) at (0,2.4) {Z2};
  \draw[->] (Z1) -- (Z2);

  \node[var] (Z3) at (2.1,4.8) {Z3};
  \node[var] (Z5) at (4.4,4.8) {Z5};
  \node[var] (Z4) at (3.2,2.4) {Z4};
  \draw[->] (Z3) -- (Z4);
  \draw[->] (Z5) -- (Z4);

  \node[draw=none] at (2.7,0.8) {Model-5};
\end{tikzpicture}
\caption{Simulation benchmarks for the comparison experiments}
\label{Different DAGs for Comparison Experiments}
\end{figure}

We fix \(f(N)=N\) with \(N\in\{1000,5000,10000\}\). Structural Hamming Distance (SHD) is computed on the full composite graph obtained by combining the latent DAG \(\calG\) with the bipartite measurement graph \(\MQ\). Since the composite graph contains both edges between the latent variables and edges in the bipartite graph, the absolute SHD values can be numerically large, yet the comparison remains fair because both methods are evaluated on the same composite graph. Table~\ref{tab:SHD-MT8} summarizes results from 100 independent replicates and shows that our method substantially outperforms \cite{kivva2021learning}. 
\begin{table}[ht]
\centering
\begin{tabular}{cccccccc}
\toprule
 & \multicolumn{3}{c}{Proposed DCRL} & \multicolumn{3}{c}{Mixture-Oracle}  \\
\cmidrule(lr){2-4}\cmidrule(lr){5-7}
 & 1000 & 5000 & 10000 & 1000 & 5000 & 10000 \\
\midrule
Chain-5  & 0.38 & 0.02 & 0 & 23.66 & 22.68 & 21.7 \\
Tree-5   & 0.47 & 0.07 & 0 & 23.16 & 22.3  & 21.94 \\
Model-5  & 1.42 & 0.06 & 0 & 22.6  & 22.38 & 22.37 \\
\bottomrule
\end{tabular}
\caption{SHD on the composite graph \(\calG\cup\MQ\) under two methods with \(f(N)=N\);\\ penalized Gibbs--SAEM attains far smaller SHD across all settings and improves to near-perfect recovery as \(N\) grows.}
\label{tab:SHD-MT8}
\end{table}

To interpret these results, recall that \citet{kivva2021learning} formulate their identifiability theory at a high level of generality: they do not assume a specific likelihood, nor fix $K$ or the number of categories in advance. Instead, such information is, in principle, recoverable from the geometry of the mixture distribution over \(\MX\). In practice, this is implemented via a mixture oracle that is approximated by clustering on the observed space, effectively treating each latent configuration as a separate mixture component and thus requiring the recovery of up to \(2^K\) clusters on the full \(\MX\). For moderate or large \(K\), especially when the mixture components are only weakly separated, this reliance on clustering becomes statistically fragile. In addition to these statistical challenges, Theorem~4.8 of \citet{kivva2021learning} shows that even the subroutine for recovering the bipartite graph \(\Gamma\) has complexity \(O(N^4)\), further limiting the applicability of their implementation to relatively small sample sizes and latent dimensions, consistent with the constraint \(K \le 5\) in the released code.

By contrast, we work within a more structured but still flexible framework: we posit a sparse measurement graph and an explicit item-wise likelihood for \(\MX\mid \MZ\). This additional structure allows us to exploit the likelihood via a penalized SAEM algorithm, which leads to substantially more accurate estimation in the weakly separated regimes considered here.

\paragraph{Simulation Study II: Larger $K$ and more challenging settings.}~
We next consider larger latent dimensions and a broader collection of latent DAGs, illustrated in Figure~\ref{Different DAGs}. The five benchmark latent DAGs in Figure~\ref{Different DAGs} (Chain-10, Tree-10, Model-7, Model-8, and Model-13) have latent dimensions \(K = 10, 10, 7, 8,\) and \(13\), respectively. The latent dimension \(K\) here is substantially larger than in \cite{kivva2021learning, huang2022latent, prashant2024differentiable}, adding to the difficulty of accurate recovery. All three distributional types (Gaussian, Poisson, Bernoulli) and both $\MQ_1$ and $\MQ_2$ are considered.

\begin{figure}[h!]
\centering
\resizebox{\textwidth}{!}{
\begin{tikzpicture}[
  var/.style={draw, ellipse, minimum width=0.6cm, minimum height=0.5cm},font=\scriptsize,
  line width=0.5pt
]

\begin{scope}[shift={(-7,0)},>={Stealth[length=3pt,width=7pt]}]
  \node[var] (C1)  at (0,4.5) {Z1};
  \node[var] (C2)  at (0,3.7) {Z2};
  \node[var] (C3)  at (0,2.9) {Z3};
  \node[var] (C4)  at (0,2.1) {Z4};
  \node[var] (C5)  at (0,1.3) {Z5};
  \node[var] (C6)  at (0,0.5) {Z6};
  \node[var] (C7)  at (0,-0.3) {Z7};
  \node[var] (C8)  at (0,-1.1) {Z8};
  \node[var] (C9)  at (0,-1.9) {Z9};
  \node[var] (C10) at (0,-2.7){Z10};

  \foreach \i/\j in {1/2,2/3,3/4,4/5,5/6,6/7,7/8,8/9,9/10}
    {\draw[->] (C\i) -- (C\j);}

  \node[draw=none, anchor=north] at (0,-3.5) {\small Chain-10};
\end{scope}

\begin{scope}[shift={(-2,0)},>={Stealth[length=4pt,width=7pt]}]
  \node[var] (T1)  at (0,4.5)  {Z1};
  \node[var] (T2)  at (-1.2,3.4){Z2};
  \node[var] (T3)  at ( 1.2,3.4){Z3};
  \node[var] (T4)  at (-1.8,2.3){Z4};
  \node[var] (T5)  at (-0.6,2.3){Z5};
  \node[var] (T6)  at ( 0.6,2.3){Z6};
  \node[var] (T7)  at ( 1.8,2.3){Z7};
  \node[var] (T8)  at (-2.4,1.2){Z8};
  \node[var] (T9)  at (-1.2,1.2){Z9};
  \node[var] (T10) at ( 0.0,1.2){Z10};

  \draw[->] (T1) -- (T2);
  \draw[->] (T1) -- (T3);
  \draw[->] (T2) -- (T4);
  \draw[->] (T2) -- (T5);
  \draw[->] (T3) -- (T6);
  \draw[->] (T3) -- (T7);
  \draw[->] (T4) -- (T8);
  \draw[->] (T4) -- (T9);
  \draw[->] (T5) -- (T10);

  \node[draw=none, anchor=north] at (0,0.8) {\small Tree-10};
\end{scope}

\begin{scope}[shift={(3,0)},>={Stealth[length=4pt,width=7pt]}]
  \node[var] (M81) at (0,4.5)   {Z1};
  \node[var] (M82) at (-1.2,3.4){Z2};
  \node[var] (M83) at ( 1.2,3.4){Z3};
  \node[var] (M84) at (-1.8,2.3){Z4};
  \node[var] (M85) at (-0.6,2.3){Z5};
  \node[var] (M86) at ( 0.6,2.3){Z6};
  \node[var] (M87) at ( 1.8,2.3){Z7};
  \node[var] (M88) at ( 0,1.2){Z8};

  \draw[->] (M81) -- (M82);
  \draw[->] (M81) -- (M83);
  \draw[->] (M82) -- (M84);
  \draw[->] (M82) -- (M85);
  \draw[->] (M83) -- (M86);
  \draw[->] (M83) -- (M87);
  \draw[->] (M85) -- (M88);
  \draw[->] (M86) -- (M88);

  \node[draw=none, anchor=north] at (0,0.8) {\small Model-8};
\end{scope}

\begin{scope}[shift={(0,-4)},>={Stealth[length=4pt,width=7pt]}]
  \node[var] (M131) at (0,4.5)   {Z1};
  \node[var] (M132) at (-1.8,3.4){Z2};
  \node[var] (M133) at ( 1.8,3.4){Z3};
  \node[var] (M134) at (-2.5,2.3){Z4};
  \node[var] (M135) at (-1.1,2.3){Z5};
  \node[var] (M136) at ( 1.1,2.3){Z6};
  \node[var] (M137) at ( 2.5,2.3){Z7};
  \node[var] (M138) at (-3.2,1.2){Z8};
  \node[var] (M139) at (-1.8,1.2){Z9};
  \node[var] (M1310)at (-0.6,1.2){Z10};
  \node[var] (M1311)at ( 0.6,1.2){Z11};
  \node[var] (M1312)at ( 1.8,1.2){Z12};
  \node[var] (M1313)at ( 3.2,1.2){Z13};

  \draw[->] (M131) -- (M132);
  \draw[->] (M131) -- (M133);
  \draw[->] (M132) -- (M134);
  \draw[->] (M132) -- (M135);
  \draw[->] (M133) -- (M136);
  \draw[->] (M133) -- (M137);

  \draw[->] (M134)  -- (M138);
  \draw[->] (M134)  -- (M139);
  \draw[->] (M134)  -- (M1310);
  \draw[->] (M135)  -- (M138);
  \draw[->] (M135)  -- (M139);
  \draw[->] (M135)  -- (M1310);
  \draw[->] (M136)  -- (M1311);
  \draw[->] (M136)  -- (M1312);
  \draw[->] (M136)  -- (M1313);
  \draw[->] (M137)  -- (M1311);
  \draw[->] (M137)  -- (M1312);
  \draw[->] (M137)  -- (M1313);

  \node[draw=none, anchor=north] at (0,0.4) {\small Model-13};
\end{scope}

\begin{scope}[shift={(8,0)},>={Stealth[length=4pt,width=7pt]}]
  \node[var] (M71) at (0,4.5)   {Z1};
  \node[var] (M72) at (-1.2,2.4){Z2};
  \node[var] (M73) at ( 1.2,2.4){Z3};
  \node[var] (M74) at (-1.2,0.3){Z4};
  \node[var] (M75) at ( 1.2,0.3){Z5};
  \node[var] (M76) at (-1.2,-1.8){Z6};
  \node[var] (M77) at ( 1.2,-1.8){Z7};

  \draw[->] (M71) -- (M72);
  \draw[->] (M71) -- (M73);
  \draw[->] (M72) -- (M74);
  \draw[->] (M72) -- (M75);
  \draw[->] (M73) -- (M75);
  \draw[->] (M73) -- (M74);
  \draw[->] (M74) -- (M76);
  \draw[->] (M75) -- (M77);

  \node[draw=none, anchor=north] at (0,-2.5) {\small Model-7};
\end{scope}

\end{tikzpicture}}
\caption{True latent DAGs for Simulation Study II.}
\label{Different DAGs}
\end{figure}
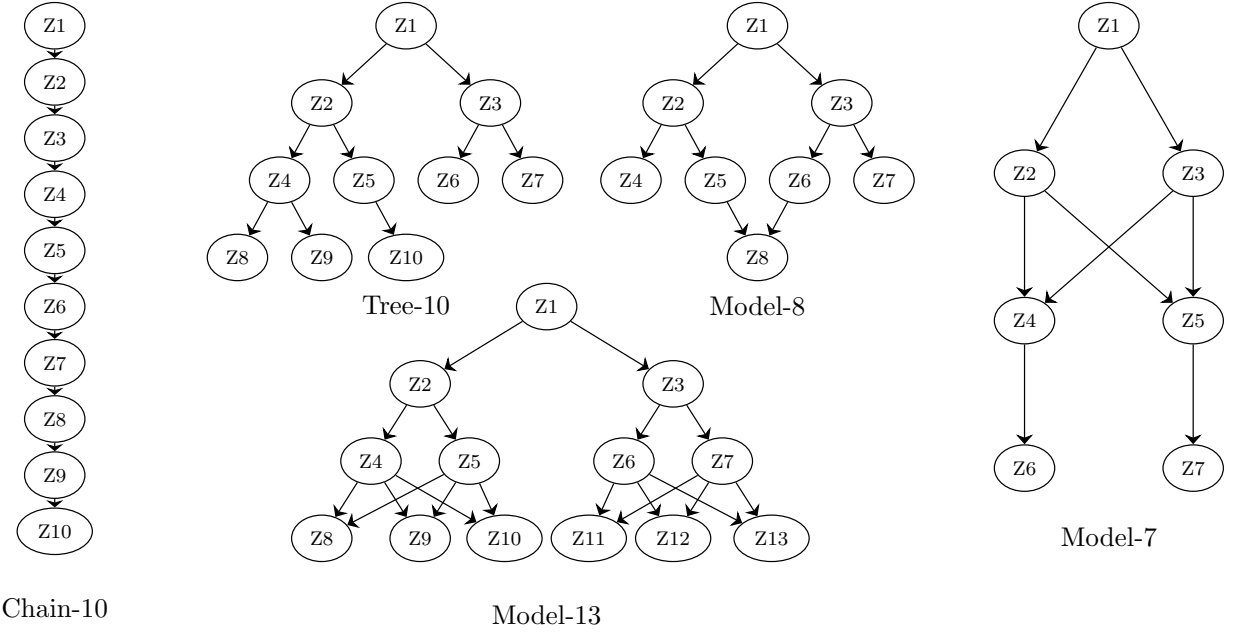

\begin{table}[h!]
\centering  \footnotesize
\setlength{\tabcolsep}{4pt} 
\begin{tabular}{llcccccccccc}
\toprule
& & & \multicolumn{3}{c}{Bernoulli} & \multicolumn{3}{c}{Poisson} & \multicolumn{3}{c}{Gaussian} \\
\cmidrule(lr){4-6}\cmidrule(lr){7-9}\cmidrule(lr){10-12}
& & & \multicolumn{3}{c}{$N$} & \multicolumn{3}{c}{$N$} & \multicolumn{3}{c}{$N$} \\
Model & $\MQ$ & {\scriptsize \#edges} & 3000 & 5000 & 7000 & 3000 & 5000 & 7000 & 3000 & 5000 & 7000 \\
\midrule
\multirow{2}{*}{Chain-10} 
  & $\MQ_1$ & {\scriptsize 57} & 5.55  & 3.294 & 2.938 & 2.248 & 1.362 & 0.638 & 0.412 & 0.254 & 0.122 \\
  & $\MQ_2$ & {\scriptsize 73} & 5.664 & 3.258 & 3.042 & 2.174 & 0.704 & 0.488 & 0.406 & 0.208 & 0.146 \\
\addlinespace
\multirow{2}{*}{Tree-10} 
  & $\MQ_1$ & {\scriptsize 57} & 4.372 & 2.308 & 1.308 & 1.85  & 1.692 & 1.36  & 0.94  & 0.51  & 0.308 \\
  & $\MQ_2$ & {\scriptsize 73} & 4.3   & 1.936 & 1.26  & 2.676 & 1.556 & 1.146 & 1.14  & 0.618 & 0.346 \\
\addlinespace
\multirow{2}{*}{Model-7} 
  & $\MQ_1$ & {\scriptsize 41} & 7.692 & 6.334 & 5.798 & 5.838 & 5.422 & 4.892 & 0.196 & 0.042 & 0     \\
  & $\MQ_2$ & {\scriptsize 51} & 7.554 & 6.304 & 5.68  & 5.848 & 5.526 & 5.082 & 0.262 & 0.054 & 0.004 \\
\addlinespace
\multirow{2}{*}{Model-8} 
  & $\MQ_1$ & {\scriptsize 46} & 4.336 & 2.682 & 2.19  & 2.106 & 1.916 & 1.878 & 0.132 & 0.048 & 0     \\
  & $\MQ_2$ & {\scriptsize 58} & 4.342 & 2.72  & 2.374 & 2.264 & 1.9   & 1.782 & 0.202 & 0.052 & 0.002 \\
\addlinespace
\multirow{2}{*}{Model-13} 
  & $\MQ_1$ & {\scriptsize 81}  & 22.37  & 16.454 & 14.062 & 24.65  & 16.472 & 14.162 & 3.206 & 1.646 & 1.008 \\
  & $\MQ_2$ & {\scriptsize 103} & 22.252 & 16.29  & 14.606 & 25.134 & 15.626 & 12.934 & 3.032 & 1.872 & 0.994 \\
\bottomrule
\end{tabular}
\caption{Average SHD on the composite graph $\calG\cup\MQ$ with $f(N)=N$; SHD decreases with $N$ across all designs and is smallest for Gaussian, intermediate for Poisson, and largest for Bernoulli.
The column $\#\text{edges}$ reports the number of edges in the composite graph.}
\label{tab:SHD-MT7}
\end{table}


We conduct 500 independent replicates in each simulation setting, and report in Table~\ref{tab:SHD-MT7} the average SHD computed on the full composite graph \(\calG \cup \MQ\). Table~\ref{tab:SHD-MT7} reveals a clear pattern: Bernoulli data are the most challenging, followed by Poisson, and Gaussian is the easiest. This is consistent with the intuition that discrete observations carry less information, which is why most existing simulation studies focus on the continuous Gaussian case. Although some of the reported SHD values may look large in absolute terms, they should be interpreted relative to the size of the underlying graph: the composite object contains not only the latent DAG \(\calG\) but also the bipartite layer induced by \(\MQ\), which contributes a substantial number of edges. For instance, in the Model-13 design, the target structure consists of a DAG on 13 latent variables together with a bipartite graph between 13 latent and 39 observed variables; when \(\MQ=\MQ_1\) the resulting composite graph already contains 81 edges, and when \(\MQ=\MQ_2\) it contains 103 edges. Consequently, even a small relative error can translate into a seemingly large SHD. Viewed in this light, the results in Table~\ref{tab:SHD-MT7} already indicate accurate recovery across all three measurement families. Moreover, within each simulation configuration, the average SHD decreases systematically as the sample size \(N\) increases, which provides empirical support for our identifiability theory.

\vspace{-5mm}
\section{Applications to Educational Data and Image Data}\label{sec:real}
We evaluate our method on an educational assessment dataset and a synthetic ball-image dataset. In the educational dataset, we examine whether the recovered causal structure among latent cognitive skills is interpretable. The image dataset is a high-dimensional benchmark with a known generative process and ground-truth latent DAG, inspired by ``balls'' image setups in causal representation learning \citep{ahuja2023interventional,ahuja2024multi}.
The image data allows us to assess whether our method simultaneously recovers the true causal relationships and learns interpretable latent representations from high-dimensional observations.


\subsection{TIMSS 2019 Response Time Data}\label{sec: TIMSS}
We apply DCRL to response time data from the TIMSS 2019 eighth-grade mathematics assessment for U.S. students \citep{fishbein2021TIMSS}, recording each student's time (in seconds) on each item screen. The assessment evaluates seven skills: four content skills (``Number'', ``Algebra'', ``Geometry'', ``Data and Probability'') and three cognitive skills (``Knowing'', ``Applying'', ``Reasoning''). 
We follow the preprocessing steps of \cite{lee2024new}, but pursue the more demanding goal of recovering causal relationships among the latent skills.

We consider students who received booklet 14. Following \cite{lee2024new}, we log-transform and truncate response times to mitigate outlier influence, yielding a dataset of $N = 620$ students and $J = 29$ items. We fit this dataset under a lognormal observation specification within the DCRL framework.

Since the TIMSS 2019 database already specifies which skills are assessed by each item, the skill-item association matrix $\MQ$ is known and can be directly constructed as a $29 \times 7$ binary matrix (see Table~\ref{tab:TIMSS}). Notably, this matrix satisfies the identifiability conditions in Corollary~\ref{cor2}, ensuring that the latent causal structure can be uniquely recovered up to label permutations and Markov equivalence. Therefore, 
we accordingly make minor modifications to our previous algorithm: we eliminate the sparsity-inducing penalty term and restrict each latent variable to be connected only by those items that are designed to measure it, as determined by the known matrix $\MQ$. 
Moreover, the screen-level response time matrix contains some missing entries. 
Similar to \cite{lee2025deep}, we treat all such entries as missing at random.
Algorithmically, objective functions are computed by summing only over person–item cells with observed response times. 

With these modifications, the algorithm outputs the causal structure as a CPDAG (Figure~\ref{TIMSS}), since the DAG is identifiable only up to its Markov equivalence class.

\begin{figure}[ht]
\centering
\resizebox{0.85\textwidth}{!}{
\begin{tikzpicture}[
  node distance = 9mm and 18mm,
  >={Stealth[length=2.2mm,width=2.2mm]},
  every node/.style={font=\small},
  var/.style={ellipse,draw,thick,minimum height=12mm,minimum width=4mm,align=center},
  content/.style={var,fill=blue!6},
  cognitive/.style={var,fill=orange!9},
  dir/.style={-Latex,very thick},
  undir/.style={-,line width=1.2pt} 
]
\node[content] (X1) at (3,4.2){Number};
\node[content] (X2) at (-3,2.2) {Algebra};
\node[content] (X3) at (3,0.2) {Geometry};
\node[content] (X4) at (1,2.2) {Data \& Probability};

\node[cognitive] (X5) at (5,2.2) {Knowing};
\node[cognitive] (X6) at (8,2.2) {Applying};
\node[cognitive] (X7) at (11,2.2) {Reasoning};

\draw[undir] (X2) -- (X3);
\draw[undir] (X3) -- (X4);
\draw[dir] (X4) -- (X1);
\draw[dir] (X2) -- (X1);
\draw[undir] (X5) -- (X6);
\draw[undir] (X6) -- (X7);
\draw[undir] (X4) -- (X5);

\node[draw,rounded corners,inner sep=1.5mm,anchor=north west,font=\scriptsize] at ($(X6)+(1,-1.2)$)  {%
  \begin{tabular}{@{}l@{}}
  \tikz{\node[content,minimum width=6mm,minimum height=3mm] {};} content skills \\
  \tikz{\node[cognitive,minimum width=6mm,minimum height=3mm] {};} cognitive skills 
  \end{tabular}
};

\end{tikzpicture}}
\caption{Learned Causal Relationships Among Seven Latent Skills}\label{TIMSS}
\end{figure}
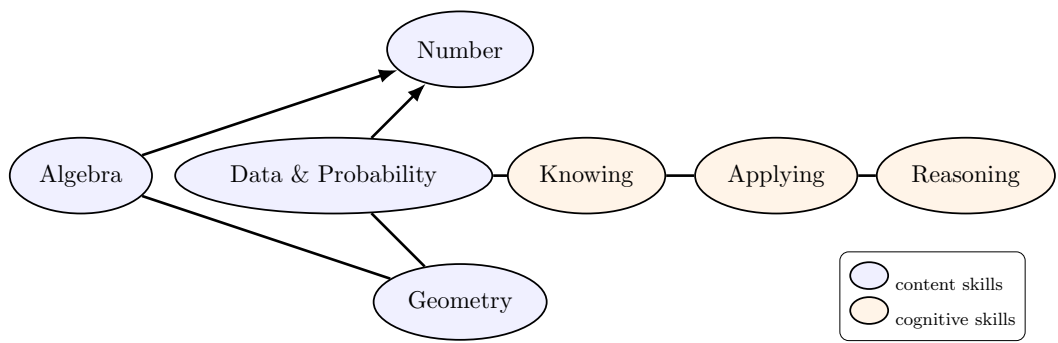
The recovered causal graph aligns with cognitive expectations and curriculum structure. The three cognitive skills, ``Knowing", ``Applying", and ``Reasoning", form a directed path, consistent with the progressive nature of cognitive processing. Among the content skills, ``Number" is foundational and shows strong links to both ``Algebra" and ``Data and Probability", reflecting their shared reliance on numerical reasoning. Although ``Geometry" typically requires less numerical computation, it remains connected to both ``Algebra" and ``Data and Probability", suggesting overlapping problem-solving strategies. Among these four content skills, ``Data and Probability" is likely the most comprehensive, as it requires a broad range of skills, making it directly linked to the three cognitive skills. This supports the interpretation that tasks involving data interpretation demand a combination of factual knowledge, application, and reasoning, positioning it as an integrative skill in mathematical cognition.

\subsection{Ball Image Data}
We build a  ``seesaw + occlusion'' experiment in which each latent variable 
represents a visibility, presence, or configuration event in the observed image.
Two binary variables indicate the presence of balls on two well-separated slots of a forked tray, which acts as a load on the right side of a seesaw.
Let $Z_1,Z_2\in\{0,1\}$ denote the presence indicators of these two tray balls.
The seesaw's mechanical response determines whether the left-side ball rises to an ``up'' configuration, denoted $Z_3\in\{0,1\}$.
To reflect natural heterogeneity in physical conditions (e.g., slight variations in ball weights or instrument failures), we model this mechanism stochastically rather than deterministically by setting $\mathbb P(Z_3=1\mid Z_1=1,Z_2=1)=0.8$ and $\mathbb P(Z_3=1\mid (Z_1,Z_2)\neq(1,1))=0.2$,
so that the tray balls increase the probability of the ``rise'' event without forcing it.

Finally, we introduce a fourth ball that is physically present but typically occluded by the left ball when the seesaw is in the down configuration.
When the seesaw rises ($Z_3=1$), the fourth ball may become visible; we define $Z_4\in\{0,1\}$ as its visibility indicator, with $\mathbb P(Z_4=1\mid Z_3=1)=0.99$ and $\mathbb P(Z_4=1\mid Z_3=0)=0$.
Overall, this construction yields a physically motivated latent causal structure $Z_1,Z_2 \to Z_3 \to Z_4$ while keeping the latent variables binary for a principled reason:
each $Z_k$ corresponds to a discrete, image-level event (presence, configuration, or visibility).

Figure~\ref{fig:seesaw_examples} shows representative images; full rendering and preprocessing details are in Supplement~\ref{app:seesaw_generation}. Each rendered grayscale image is converted to a balls-only binary mask, resized to $96\times 96$, and then pooled to a $16\times 16$ binary image. We fit the model using this pooled representation, so each sample is a $256$-dimensional binary vector, where $1$ denotes a bright pixel and $0$ denotes a dark pixel. We generate $10000$ images.

\begin{figure}[h!] 
\begin{minipage}{0.99\linewidth} \vspace{3pt} \centerline{\includegraphics[width=\textwidth]{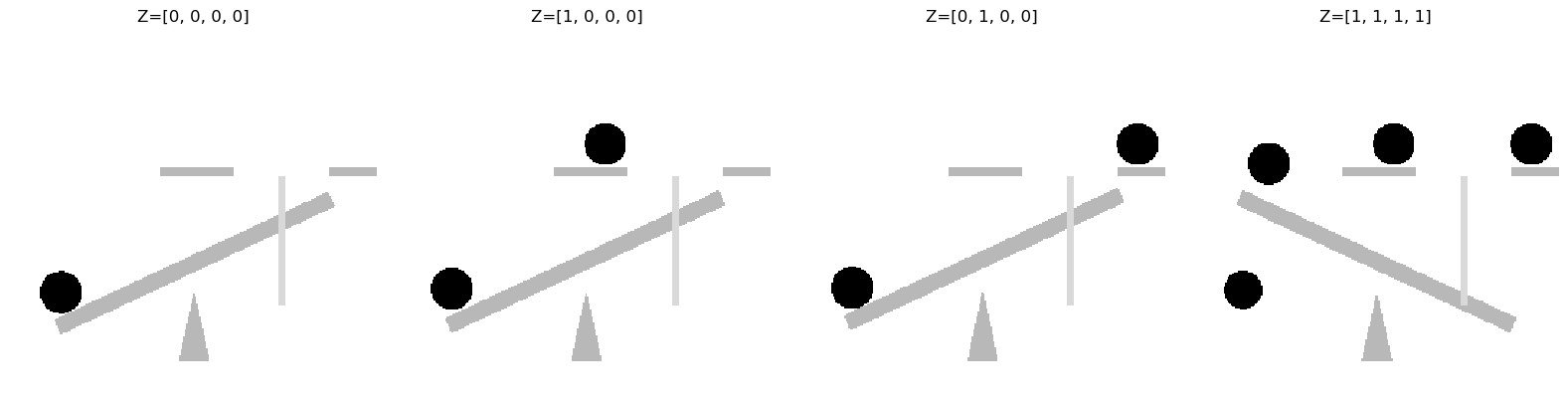}} \end{minipage}
\caption{Representative samples from the ``seesaw + occlusion'' image generator. The tray balls correspond to $(Z_1,Z_2)$, the up/down state of the left seesaw-side ball corresponds to $Z_3$, and the occluded ball corresponds to $Z_4$.}
\label{fig:seesaw_examples}
\end{figure}

We fit DCRL with $K=4$ binary latent variables and Bernoulli responses, a high-dimensional setting with $J=256$ observed dimensions substantially larger than in our simulation studies.
The observed variables are modeled as: $X_j\mid\MZ\sim\text{Ber}(g_{\text{logistic}}(\beta_{j,0}+\sum_{k=1}^K\beta_{j,k}Z_k)),$ where $g_{\text{logistic}}$ is the sigmoid function. Our goal is to recover both the latent DAG $G$ and the bipartite structure $\MQ$ from the pooled binary data. Although $\MQ$ has $256$ rows, the estimated $\widehat{\MQ}$ remains highly sparse. Most rows are either zero vectors (blocks where no ball appears) or nearly one-hot vectors (blocks primarily associated with a single latent variable), which matches the generative design in which most spatial cells contain at most one object. The overall sparse support pattern also satisfies the identifiability conditions in Corollary~\ref{cor2}.
The recovered DAG over the latent variables in Figure~\ref{Causal Relationships Among Four Latent Variables} matches the data-generating mechanism.

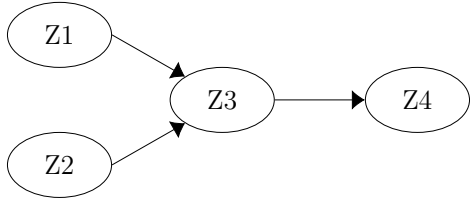
\begin{figure}[h!]
\centering
\resizebox{0.38\textwidth}{!}{
\begin{tikzpicture}[
  var/.style={draw, ellipse, minimum width=1.6cm, minimum height=1cm}
]

  \node[var] (Z1) at (-0.5,2)  {Z1};
  \node[var] (Z2) at (-0.5,0)  {Z2};
  \node[var] (Z3) at (2,1)  {Z3};
  \node[var] (Z4) at (5,1)  {Z4};

  \draw[->] (Z1.east) -- (Z3.north west);
  \draw[->] (Z2.east) -- (Z3.south west);
  \draw[->] (Z3.east) -- (Z4.west);
\end{tikzpicture}}
\caption{Estimated causal relationships among four latent variables by DCRL. This latent DAG matches the ground-truth causal relations exactly.}
\label{Causal Relationships Among Four Latent Variables}
\end{figure}

\begin{figure}[h!]\centering
\begin{minipage}{0.23\linewidth}
\vspace{3pt}
\centerline{\includegraphics[height=4cm]{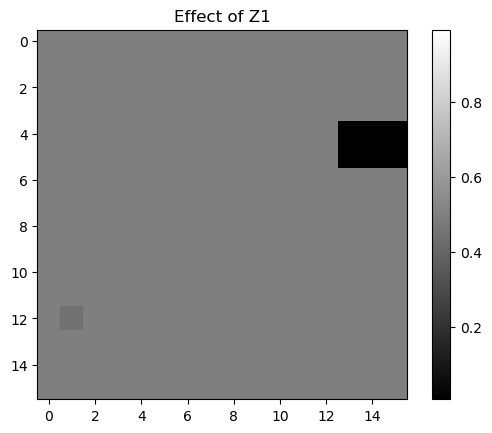}}
\centerline{$g_{\text{logistic}}(\hat\MB\be_{1})$}
\end{minipage}
\begin{minipage}{0.23\linewidth}
\vspace{3pt}
\centerline{\includegraphics[height=4cm]{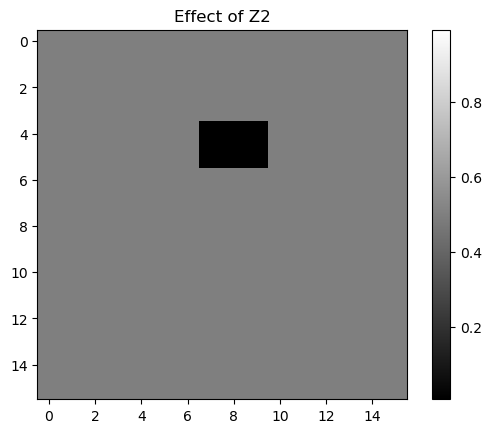}}
\centerline{$g_{\text{logistic}}(\hat\MB\be_{2})$}
\end{minipage}
\begin{minipage}{0.23\linewidth}
\vspace{3pt}
\centerline{\includegraphics[height=4cm]{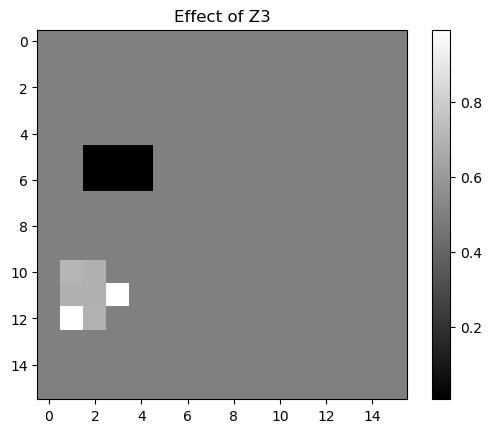}}
\centerline{$g_{\text{logistic}}(\hat\MB\be_{3})$}
\end{minipage}
\begin{minipage}{0.23\linewidth}
\vspace{3pt}
\centerline{\includegraphics[height=4cm]{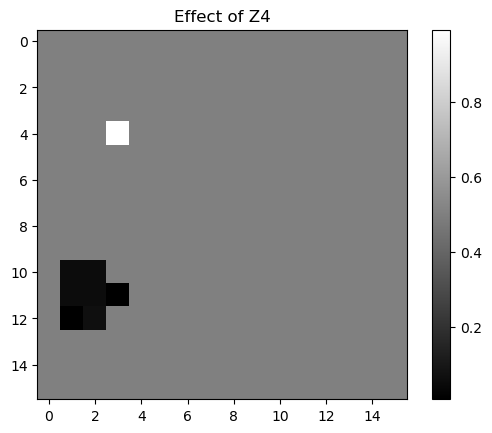}}
\centerline{$g_{\text{logistic}}(\hat\MB\be_{4})$}
\end{minipage}
\caption{Effect maps obtained by activating one latent coordinate at a time. Mid-gray corresponds to probability $0.5$ (no effect). Since the pooled representation is coded as background $=1$ and ball $=0$, white indicates an increased probability of background (ball less likely), while black indicates a decreased probability of background (ball more likely). The dominant localized regions align with the tray balls, the configuration-dependent movement of the seesaw-side ball, and the occluded ball’s visibility.}
\label{latent factors}
\end{figure}



Since the learned causal structure matches the data-generating mechanism, we expect each recovered latent coordinate to correspond to a spatially localized event (i.e., a tray ball being present, the up/down state of the seesaw-side ball, or the occluded ball becoming visible). To interpret the factors, we visualize the effect of activating one latent coordinate at a time on the pixelwise Bernoulli probabilities in the pooled representation. Recall that $\MB\in\mathbb{R}^{J\times (K+1)}$ stacks the intercept and main-effect parameters across the $J=256$ pixels, so that $\MB\bz$ gives the logits of the Bernoulli success probabilities for any latent feature vector $\bz\in\mathbb{R}^{K+1}$. For $k\in\{1,2,3,4\}$, we leave out intercept and activate only the $k$th coordinate, compute $g_{\mathrm{logistic}}(\hat\MB\be_k)\in(0,1)^{256}$, and reshape it into a $16\times16$ image, where $\be_k$ is the $k$th standard basis vector. Since $g_{\text{logistic}}(0)=0.5$, mid-gray indicates no effect, and white indicates an increased probability of background (equivalently, a decreased probability that a ball occupies that cell), while black indicates a decreased probability of background.
Figure~\ref{latent factors} reports the resulting effect maps. The tray-ball factors appear as localized dark patches at the corresponding tray locations, indicating that activating those coordinates increases the probability of a ball in those cells. The factor for the seesaw-side ball shows a signed bright/dark pattern, reflecting the ball's up/down state: one location becomes more likely to contain a ball while another becomes less likely. The visibility factor is concentrated near the occluded ball region, with a localized effect consistent with the intended semantics. The maps are not perfectly clean, as expected given the mild positional jitter and simple preprocessing pipeline. Despite these nuisances and the coarse pooled representation, the recovered causal graph and latent factors remain interpretable and align with the data-generating mechanism.

\vspace{-5mm}
\section{Discussion}\label{sec-discussion}
This paper develops a computationally efficient and provable estimate-resample-discovery pipeline for causal representation learning with discrete latent variables. Our procedure has a clean structure: (i) we estimate the measurement layer and the joint distribution of latent variables via a penalized Gibbs–SAEM algorithm, (ii) we resample pseudo-latent datasets from the fitted latent distribution, and (iii) we perform score-based causal discovery on the resampled latents using GES. Theoretically, we establish strict and generic identifiability for the proposed discrete causal representation learning framework. We prove $\{c_N\}$-consistency and $\{c_N\}$-local consistency of BDeu scores in discrete Bayesian networks, and show that, under mild conditions, this estimate-resample-discovery pipeline consistently recovers both the measurement structure and the Markov equivalence class of the latent DAG.

Although our exposition focuses on binary latent variables, the Gibbs--SAEM updates can also be modified for polytomous latent variables. For simplicity and clarity, we focus on the binary case in this paper.
Although we state our results for the BDeu score, the same analysis and guarantees apply to BIC, which appears as the leading term in the BDeu expansion in our proofs. Since most score-based methods in practice adopt either BIC or BDeu \citep{kitson2023a}, this BIC-type class already covers the dominant use cases.

Several avenues remain for future work. 
Incorporating procedures for estimating $K$, such as information criteria tailored to the latent layer, would make the pipeline automatic and reduce sensitivity to model size. Additionally, since our approach provides a general framework, it is natural to explore replacing the current estimation or causal discovery components with other alternatives \citep{spirtes2000causation,ramsey2016a}. These methods' empirical performance and theoretical guarantees remain open for future investigation.






\renewcommand{\thesection}{S.\arabic{section}}  
\renewcommand{\thetable}{S.\arabic{table}}  
\renewcommand{\thefigure}{S.\arabic{figure}}
\renewcommand{\theequation}{S.\arabic{equation}}
\renewcommand{\theassumption}{S.\arabic{assumption}}
\renewcommand{\thetheorem}{S.\arabic{theorem}}
\renewcommand{\thelemma}{S.\arabic{lemma}}

\setcounter{section}{0}
\setcounter{algocf}{0}
\setcounter{equation}{0}
\setcounter{figure}{0}
\setcounter{theorem}{0}
\setcounter{lemma}{0}





\spacingset{1}
{\small\bibliographystyle{apalike}
\bibliography{references}}
\newpage 

\begin{center}
    \LARGE Supplementary Material
\end{center}
\vspace{5mm}
\addcontentsline{toc}{section}{Appendix}
\setcounter{subsection}{0}
\renewcommand{\thesubsection}{S.\arabic{subsection}} 
\renewcommand{\thetable}{S.\arabic{table}}
\renewcommand{\thefigure}{S.\arabic{figure}}
\renewcommand{\theequation}{S.\arabic{equation}}
\renewcommand{\theHequation}{S.\arabic{equation}}
\renewcommand{\thealgocf}{S.\arabic{algocf}}
\renewcommand{\theHtheorem}{S.\arabic{theorem}}
\renewcommand{\thetheorem}{S.\arabic{theorem}}
\renewcommand{\theproposition}{S.\arabic{proposition}}
\renewcommand{\thelemma}{S.\arabic{lemma}}
\renewcommand{\thecorollary}{S.\arabic{corollary}}
\spacingset{1.7}
This Supplementary Material collects technical results, implementation details, and additional empirical summaries. Section~\ref{counterexample-2} presents a non-generic identifiability example. Sections~\ref{sec: proof of identifiability}-\ref{sec: proof of thm5} contain the main proofs for our identifiability and consistency results. Section~\ref{sec: Implementation Details} records implementation details. Section~\ref{sec: Connections with Existing Studies} discusses additional related works.

\medskip
\noindent\textbf{Notation.}
For $d\ge2$, let $\Delta_{d-1}=\{x\in\mathbb{R}^d: x_k\ge0,\ \sum_{k=1}^d x_k=1\}$ denote the $(d-1)$-dimensional probability simplex, and let $\Delta_{d-1}^\circ=\{x\in\Delta_{d-1}: x_k>0\ \text{for all }k\}$ denote its interior.
\subsection{More identifiability results}\label{app:strict_id}
We present additional identifiability results mentioned in Section~\ref{sec: identifiablity}. These results are adapted from existing work \citep{liu2025exploratory,lee2025deep}. Throughout, identifiability is understood up to the equivalence relation $\sim_{\calK}$ defined in Section~\ref{sec: identifiablity}.

\begin{definition}\label{def:strict identifiability}
Let $(\MTheta^\star,\calG^\star,\MQ^\star)\in\Omega_{K}(\MTheta, \mathcal{G}, \MQ)$ be the true parameter triple of the discrete causal representation learning framework.
The framework is strictly identifiable up to the equivalence relation $\sim_{\calK}$ if, for every alternative admissible triple $
(\MTheta',\calG',\MQ')\in\Omega_{K}(\MTheta,\calG,\MQ)$ satisfying the equality of marginal laws $\PP_{\widetilde\MTheta,\widetilde\calG,\widetilde\MQ}=\PP_{\MTheta^\star,\calG^\star,\MQ^\star},$ it necessarily holds that $(\widetilde\MTheta,\widetilde\calG,\widetilde\MQ)\sim_{\calK}(\MTheta^\star,\calG^\star,\MQ^\star).$
Here, $\PP_{\MTheta,\calG,\MQ}$ denotes the marginal distribution of the observable vector $\MX$, defined through \eqref{eq:observed exp fam} and \eqref{eq:marginal probability}.
\end{definition}

Now we show our main results for the identifiability, which is from Proposition 1 in \citep{liu2025exploratory} and the definition of faithfulness.
\begin{theorem}\label{thm1}
Under Assumption~\ref{assm1}, the framework is strictly identifiable if the following hold.
\begin{enumerate}
    \item[(i)] $\MQ^\star$ contains two identity matrices after permuting the rows. Without the loss of generality, suppose that the first $2K$ rows of $\MQ^\star$ are $\big[\MI_{K}, \MI_{K} \big]^\top.$
    \item[(ii)] For any $\bz \neq \bz' \in\{0,1\}^{K}$, there exists $j > 2K$ such that $\eta_j^\star(\bz)\ \neq\ \eta_j^\star(\bz').$
\end{enumerate}
\end{theorem}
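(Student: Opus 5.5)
The plan has three steps. First, recover the latent mixture up to a relabeling of the $2^K$ states. Second, collapse that relabeling to a coordinate permutation using the identity blocks and monotonicity. Third, read off $\MQ$, $\MB$, $\bgamma$ and $\calG$.

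\textbf{Step 1: Kruskal decomposition.} Suppose $(\widetilde\MTheta,\widetilde\calG,\widetilde\MQ)$ induces the same law of $\MX$. Group the items into three blocks: rows $1,\dots,K$, rows $K+1,\dots,2K$, and rows $j>2K$. Conditional on $\MZ$ the blocks are independent, so the joint law of $\MX$ is a three-way mixture over the $2^K$ latent states. I would discretize each $X_j$ through the countable separating classes of Assumption~\ref{assm2}, giving a three-way probability tensor with $2^K$ components.

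For the first block, Assumption~\ref{assm1}(b) with $\MQ_{j,:}=\be_j^\top$ says $\eta_j(\bz)$ takes one strictly larger value when $z_j=1$ than when $z_j=0$. By injectivity of $g_j$ and of the family (Assumption~\ref{assm2}), $X_j\mid\MZ$ has two distinct laws indexed by $z_j$. The product over $j\le K$ is then a tensor product of $K$ full-rank $2$-column matrices, so the block factor matrix has full column rank $2^K$. The second block is identical. For the third block, condition (ii) makes the conditional laws distinct across $\bz$, so its factor matrix has Kruskal rank at least $2$.

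Kruskal's theorem (in the form $2^K+2^K+2\ge 2\cdot 2^K+2$) then recovers, up to a common permutation $\xi$ of the $2^K$ states, the mixture weights $\p$ and every block's conditional law. By Assumption~\ref{assm2}(ii),(iii), the conditional laws determine each $\eta_j(\bz)$ (and $\gamma_j$) up to that same relabeling.

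\textbf{Step 2: collapsing $\xi$ to a coordinate permutation.} This is the main obstacle. For the alternative parameter, $\widetilde\MQ$ must also make each of the first $2K$ items take only two values across states. I would argue as follows. Item $j\le K$ partitions the $2^K$ states into two halves according to $z_j$. These $K$ bipartitions are "independent", meaning their intersections are all singletons, and the same holds under $\xi$.

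In the alternative model, item $j$ has $\eta$-values depending only on $\bz_{\widetilde K_j}$, takes exactly two values, and has its top value attained exactly on $\{\bz\succeq\widetilde\MQ_{j,:}\}$ by monotonicity. That set has size $2^{K-|\widetilde K_j|}$, while the true top set has size $2^{K-1}$. So $|\widetilde K_j|=1$, and each first-block item is a single-attribute indicator in the alternative model as well.

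Because the $K$ bipartitions must jointly separate all states, their single attributes must be distinct. Hence $\xi$ maps each coordinate half-space $\{z_j=1\}$ onto some $\{\tilde z_{\sigma(j)}=c_j\}$, which forces $\xi$ to be a coordinate permutation composed with bit-flips. Monotonicity fixes the high-$\eta$ side to be $z=1$ in both models, so $c_j=1$ and the bit-flips are excluded. Thus $\xi$ is the coordinate permutation induced by $\sigma\in S_{[K]}$.

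\textbf{Step 3: recovering the remaining parameters.} Once $\xi$ is a coordinate permutation, $\p$ matches $\widetilde\p$ after permutation. Each $\eta_j$ is identified as a function on $\{0,1\}^K$ up to $\sigma$. The Möbius (inclusion–exclusion) inversion $\eta_j\mapsto\bbeta_j$ is invertible, so $\MB$ is recovered up to $\sigma$. Because $\beta_{j,\{k\}}\neq0$ if and only if $k\in K_j$, this also recovers $\MQ$ up to $\sigma$, and $\bgamma$ follows from injectivity in Assumption~\ref{assm2}. Finally, $\calG$ is a perfect map of $\p$ by Assumption~\ref{assm1}(a), so $\mathcal I(\calG)=\mathcal I(\p)=\mathcal I(\sigma(\widetilde\calG))$. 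This gives Markov equivalence, and hence $\sim_\calK$.
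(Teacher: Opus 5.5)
Your proof is correct. The paper gives no argument of its own for this statement. It attributes identifiability of $(\p,\MB,\bgamma,\MQ)$ up to latent relabeling to Proposition~1 of \citet{liu2025exploratory}, and it recovers $\calG$ up to Markov equivalence from the perfect-map condition, which is your Step~3.

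Your Step~1 is the same Kruskal argument the paper runs in Lemma~\ref{lem:kruskal gid} for the generic theorem. The pure anchor rows give Kronecker products of invertible $2\times 2$ blocks, and condition~(ii) gives pairwise distinct block-3 columns. Together these replace the paper's analytic and hyperplane arguments, so no exceptional null set is needed.

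The real difference from the paper's own machinery is Step~2.
\begin{itemize}
\item The paper collapses the state permutation in Lemma~\ref{lem:sign flip}. It first recovers the full column multiset of $\MQ$ by counting coordinatewise-maximal rows of the $\boeta$-array for every item subset and inverting inclusion--exclusion. It then uses the no-containment condition to cut $S_{2^K}$ down to the coset $(\ZZ_2)^K\pi$ (Lemma~\ref{lem:main}). Finally, it excludes bit-flips via a column-sum identity for $\bbeta'_j$ together with monotonicity.
\item Your top-set count is the single-item case of that maximal-row count, but you then use the anchors directly. You get $|\widetilde K_j|=1$ for $j\le K$. Injectivity of $\bz\mapsto(\eta_1(\bz),\dots,\eta_K(\bz))$ makes the selected attributes distinct, and monotonicity of each anchor item orients its half-space. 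So $\xi(\bz)_{\sigma(j)}=z_j$ follows at once, and $\MQ$ is read off afterwards from the supports of the $\bbeta_j$.
\end{itemize}
Your route is shorter and more elementary, but it depends on the anchors being single-attribute items. The paper's route needs only the subset condition, which two identity blocks imply. That is what lets it carry over to Theorem~\ref{thm2}, where anchor rows may contain off-diagonal ones.

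Two details you should state explicitly.
\begin{itemize}
\item Kruskal's theorem applies to finite tensors. Fix a finite discretization containing a distinguishing set for each anchor item and for each pair $\bz\neq\bz'$ in block~3. Then pass to full conditional laws by refining. The aligning permutation cannot change under refinement, because the block-1 factor has distinct columns (the paper's Lemma~\ref{lem:perm_stability}).
\item The statement lists only Assumption~\ref{assm1}. Like the cited result, you genuinely need an identifiable response family and an injective link, i.e.\ Assumption~\ref{assm2}(ii)--(iii). The analyticity in Assumption~\ref{assm2}(i) is not needed in the strict case.
\end{itemize}
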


When each latent cause affects the observables only through its main effects, without any interaction terms, Assumption~\ref{assm1}(b) could be replaced by a weaker requirement.
\begin{customassumption}\label{assm1'}
\begin{enumerate}
    \item[(a)] $\calG$ is a perfect map of $\p$ and $\p_{\bz}\in(0,1)$ for all $\bz\in\{0,1\}^{K}$.
    \item[(b)]$\sum_{j=1}^J\beta_{j,k}>0$ for $k=1,\dots,K$.
\end{enumerate}
\end{customassumption}

Under a main-effect measurement specification, the condition can be further weakened, which are from Proposition 1, 2 in \cite{lee2025deep} and the definition of faithfulness.
\begin{corollary}\label{cor1}
Suppose the measurement is main-effect only (no interaction terms). Under Assumption~\ref{assm1'}, the framework is strictly identifiable if the following conditions hold.
\begin{enumerate}
    \item[(i)] $\MQ^\star$ contains two identity matrices after permuting the rows. Without the loss of generality, suppose that the first $2K$ rows of $\MQ^\star$ are $\big[\MI_{K}, \MI_{K} \big]^\top.$
    \item[(ii)] For any $\bz \neq \bz' \in\{0,1\}^{K}$, there exists $j > 2K$ such that $\sum_{k=1}^{K} \beta_{j,k}^\star (z_k - z_k^{'}) \neq 0$.
\end{enumerate}
\end{corollary}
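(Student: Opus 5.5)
The plan is to reduce Corollary~\ref{cor1} to the main-effect identifiability results of \citet{lee2025deep} (their Propositions 1 and 2) for the pair $(\p,\MB,\bgamma,\MQ)$, and then to recover the latent DAG from the perfect-map condition. Suppose $(\widetilde\MTheta,\widetilde\calG,\widetilde\MQ)$ is admissible and $\PP_{\widetilde\MTheta,\widetilde\calG,\widetilde\MQ}=\PP_{\MTheta^\star,\calG^\star,\MQ^\star}$. The marginal law \eqref{eq:marginal probability} depends on $(\calG,\MQ)$ only through the latent vector $\p$ and the support of $\MB$. Hence the first step is to forget the DAG and view both parameterizations as instances of the main-effect latent class model of \citet{lee2025deep}.

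In that model, condition (i) supplies two disjoint identity blocks of pure items. Condition (ii) says the remaining items separate all latent configurations through $\sum_k\beta^\star_{j,k}z_k$. Together with $\p_\bz\in(0,1)$ from Assumption~\ref{assm1'}(a), these are exactly the hypotheses under which a Kruskal-type three-way tensor decomposition applies. The three slices are the first identity block, the second identity block, and the remaining items. Under Assumption~\ref{assm2}-type analyticity and injectivity of the parametric families, the decomposition recovers $\p$ and the conditional laws of the three groups up to a permutation of the $2^K$ latent states.

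The pure-item structure of the identity blocks then restricts this state permutation. Each pure item $j\le K$ splits the states according to $z_j$, and its row of $\MB$ has a single nonzero main effect. It follows that the state permutation must act as a coordinate permutation composed with coordinatewise bit flips. Main-effect additivity then forces $\MB$ and $\MQ$ to transform accordingly, with the intercept absorbing any flip.

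The sign condition Assumption~\ref{assm1'}(b), $\sum_j\beta_{j,k}>0$, removes the flips. A flip of coordinate $k$ negates every $\beta_{j,k}$, so it would make $\sum_j\widetilde\beta_{j,\sigma(k)}<0$, which contradicts admissibility of the alternative. This leaves only a coordinate relabeling $\sigma\in S_K$ with $\widetilde\p$, $\widetilde\MB$, $\widetilde\MQ$ and $\widetilde\bgamma$ matching the truth as in Definition~\ref{def:identifiability upto permutation saturated}. Dispersions are identified because $g_j$ is injective in $(\eta,\gamma)$. I would cite this whole step from \citet{lee2025deep} rather than reprove it; the expected main obstacle is checking that their assumptions coincide with ours, in particular that their nondegeneracy requirement is implied by condition (ii).

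Finally, for the DAG: $\widetilde\p$ equals $\p^\star$ after relabeling coordinates by $\sigma$, so $\mathcal I(\widetilde\p)=\mathcal I(\sigma(\p^\star))$. Both DAGs are perfect maps of their respective latent laws by Assumption~\ref{assm1'}(a). Therefore $\mathcal I(\sigma(\widetilde\calG))=\mathcal I(\calG^\star)$, that is, $\calG^\star\equiv\sigma(\widetilde\calG)$. This completes the $\sim_\calK$ equivalence, so the framework is strictly identifiable.
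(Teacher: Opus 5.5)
Your proposal matches the paper's route. The paper also obtains identifiability of $(\p,\MB,\bgamma,\MQ)$ up to a latent-coordinate relabeling by citing Propositions~1--2 of \citet{lee2025deep}, and then recovers $\calG^\star$ up to Markov equivalence from the perfect-map (faithfulness) condition. One small remark: the analyticity in Assumption~\ref{assm2} plays no role in this strict, all-parameter statement; only identifiability of each $\mathrm{ParFam}_j$ and injectivity of the links $g_j$ are used.
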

\begin{remark}
In many applications of the proposed framework, the number of observed items $J$ is quite large, as is common in modern machine-learning settings. In such regimes, the strict identifiability requirement that $\MQ^\star$ contain two identity blocks is less restrictive than it may initially appear.
\end{remark}
\begin{corollary}\label{cor2}
Suppose the measurement is main-effect only (no interaction terms). Under Assumption~\ref{assm1'} and Assumption~\ref{assm2}, the framework is generic identifiable if the following hold.
\begin{enumerate}
\item[(i)] After a row permutation, we can write $\MQ^\star=[\MQ_1^\top, \MQ_2^\top, \MQ_3^\top]^\top$, where $\MQ_1,\MQ_2\in\{0,1\}^{K\times K}$ have unit diagonals (off–diagonals arbitrary), and $\MQ_3$ has no all-zero column.
\end{enumerate}
\end{corollary}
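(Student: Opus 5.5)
The plan is to run the same three-step reduction outlined after Theorem~\ref{thm2}, replacing the subset condition (ii) with a genericity argument that exploits the main-effect form. Throughout, write $\Lambda_j(\bz)=\PP_{j,g_j(\eta_j(\bz),\gamma_j)}$ for the conditional law of $X_j$ given $\MZ=\bz$.

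\textbf{Step 1 (Kruskal / tensor step).} Group the items into three blocks according to $\MQ_1,\MQ_2,\MQ_3$. For each block, discretize through the countable separating classes $\calC_j^{\mathrm{can}}$ and form the $2^K$-row conditional probability matrices $\MP_1,\MP_2,\MP_3$. These are Khatri--Rao products of the item-wise matrices, and the observed law is the three-way tensor $\sum_{\bz}p_\bz\,\MP_{1,\bz}\otimes\MP_{2,\bz}\otimes\MP_{3,\bz}$.

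I would show that, outside a null set of $(\MB,\bgamma)$, $\MP_1$ and $\MP_2$ have full column rank $2^K$, and $\MP_3$ has Kruskal rank at least $2$. By Assumption~\ref{assm2}, the relevant minors are real-analytic in $(\MB,\bgamma)$. So it suffices to exhibit one admissible point where each minor is nonzero.

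1. For $\MP_1$ and $\MP_2$: set the off-diagonal main effects to values close to zero while keeping them nonzero. The limiting matrix is a Kronecker product of $K$ matrices of size $2\times 2$, one per diagonal item. Each of these has rank $2$, because $\beta_{j,\{k\}}\neq0$ and the link and family are injective. A full-rank Kronecker product stays full rank under small perturbation, so the minor is nonzero at this point.
2. For $\MP_3$: since $\MQ_3$ has no zero column, any two states $\bz\neq\bz'$ differ in some coordinate $k$ that is used by some item of $\MQ_3$. Generically $\eta_j(\bz)\neq\eta_j(\bz')$ for that item, so the two columns of $\MP_3$ are distinct.

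Kruskal's theorem then identifies $\p$ and all $\Lambda_j(\bz)$ up to a common permutation $\xi\in S_{2^K}$ of latent states. Injectivity in Assumption~\ref{assm2}(ii)--(iii) then recovers $\eta_j(\bz)$ and $\gamma_j$ up to the same $\xi$.

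\textbf{Step 2 (collapsing $\xi$ via additivity; the main obstacle).} Any competitor is also main-effect. Hence $\bz\mapsto\eta_j(\xi^{-1}(\bz))$ must be an affine function $\tilde\beta_{j,\varnothing}+\sum_k\tilde\beta_{j,k}z_k$ for every $j$.

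For a fixed $\xi$, this requirement is a system of linear constraints on the true $\MB$: the higher-order Möbius coefficients of $\eta_j\circ\xi^{-1}$ must all vanish. I claim that unless $\xi$ is a signed coordinate permutation $\bz\mapsto(\sigma\text{-permuted } \bz)\oplus\mathbf c$, at least one such constraint is nontrivial on $\Omega_K(\MTheta;\calG^\star,\MQ^\star)$. The diagonal items of $\MQ_1$ force every coordinate function $z_k$ to appear with a free nonzero coefficient. So the constraint set is a proper linear subspace, and the bad set for this $\xi$ has measure zero.

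The concrete thing to verify is the following lemma: if $\xi$ maps every affine function on the cube, composed with it, to an affine function, then $\xi$ is a signed permutation. To prove it, I would look at the images of the edges $\{\bz,\bz+\be_k\}$. Additivity forces $\xi$ to map parallel edges to parallel edges in a way compatible with the generic, pairwise-distinct coefficients, and this yields a hypercube automorphism. Since $S_{2^K}$ is finite, taking the union over all $\xi$ keeps the exceptional set null.

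\textbf{Step 3 (removing flips, recovering $\MQ$ and $\calG$).} Suppose $\xi$ flips coordinate $k$. Then $\tilde\beta_{j,\sigma(k)}=-\beta_{j,k}$ for all $j$, which contradicts $\sum_j\beta_{j,k}>0$ holding for both parameterizations under Assumption~\ref{assm1'}(b). So $\mathbf c=\mathbf 0$ and only a relabeling $\sigma\in S_K$ remains.

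Then $\tilde{\MB}$ equals $\MB$ with columns permuted by $\sigma$. Because $\beta_{j,\{k\}}\neq0$ if and only if $q_{j,k}=1$, this gives $\tilde{\MQ}=\MQ^\star$ up to $\sigma$, and it gives $\tilde{\bgamma}=\bgamma$. Finally, $\tilde{\p}$ is the $\sigma$-relabeling of $\p$. Both $\calG^\star$ and $\tilde\calG$ are perfect maps of the same distribution, so $\calG^\star\equiv\sigma(\tilde\calG)$. This establishes $\sim_\calK$-equivalence outside a measure-zero set, which is the required generic identifiability.
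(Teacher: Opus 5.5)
Your proof is correct, and it is more explicit than what the paper gives. The paper does not prove Corollary~\ref{cor2} in detail. It attributes the result to Propositions~1--2 of \citet{lee2025deep}, which give generic identifiability of the main-effect measurement layer up to latent relabeling, combined with the perfect-map assumption; that second ingredient is your final step for $\calG$.

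Your Step~1 is the Kruskal argument the paper runs for Theorem~\ref{thm2} in Lemma~\ref{lem:kruskal gid}: a boundary Kronecker witness, a perturbation into the parameter space, analyticity, and distinct columns for the third block. The genuine difference is how you collapse $S_{2^K}$.

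\begin{itemize}
\item The paper's machinery for the all-effect Theorem~\ref{thm2} first recovers $\MQ$ by counting coordinatewise-maximal rows, which uses the monotonicity in Assumption~\ref{assm1}(b). It then applies the block-preservation Lemma~\ref{lem:main}, and that is where the subset condition (ii) is indispensable. The counterexample in Section~\ref{counterexample-2} relies on an interaction coefficient.
\item You instead use that, without interactions, $\eta_j\circ\xi^{-1}$ must be affine. For generic $\MB$ this forces $\xi$ to be an affine automorphism of the cube. $\MQ$ then falls out of the supports of $\widetilde{\MB}$ at the end, and only the weaker sign convention $\sum_j\beta_{j,k}>0$ is needed to remove flips.
\end{itemize}

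This makes transparent why condition (ii) can be dropped under main effects. The cost is that the argument is tied to the additive form, whereas the paper's route handles arbitrary interactions.

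There are two points to tighten.

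First, your lemma needs neither an edge argument nor genericity. If every $z_k\circ\xi^{-1}$ is affine, then each coordinate of $\xi^{-1}$ is an affine $\{0,1\}$-valued function on the cube, hence constant, $z_i$, or $1-z_i$. Bijectivity then makes $\xi^{-1}$ a signed permutation. Alternatively, let $A$ be the $2^K\times(K+1)$ matrix with rows $(1,\bz^\top)$ and $P$ the state permutation matrix. Generic full column rank of the $J\times(K+1)$ matrix with rows $(\beta_{j,0},\beta_{j,1},\dots,\beta_{j,K})$ gives $\mathrm{col}(A)=\mathrm{col}(PA)$ directly, which avoids the union over all $\xi\in S_{2^K}$.

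Second, Kruskal's theorem must be applied to a finite discretization. You also have to check that the aligning permutation stays fixed as the discretization is refined before concluding equality of the full conditional laws $\Lambda_j(\bz)$. This is the paper's Lemma~\ref{lem:perm_stability}, which uses that the columns of $\MN_1^{(t)}$ are distinct.
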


\subsection{Non-generic identifiability if Condition~(ii) in Theorem~\ref{thm2} is violated}\label{counterexample-2}

In this subsection we construct a concrete counterexample showing that Condition~(ii) in Theorem~\ref{thm2} is indispensable. The example is chosen so that Assumption~\ref{assm1} and Assumption~\ref{assm2} hold, and Condition~(i) of Theorem~\ref{thm2} is satisfied. The only assumption we deliberately violate is Condition~(ii). Nevertheless, we exhibit a positive-measure subset of the parameter space on which the framework is not identifiable, so the framework is not generically identifiable.

We consider a one-layer saturated all-effect Bernoulli--logistic model with $K=4$ latent variables and $J=12$ items. In particular, we take $\text{ParFam}_j$ to be the Bernoulli family and $g_j$ to be the logistic link in \eqref{eq:observed exp fam}.

Let
\[
\MQ^\star=\begin{bmatrix}
\MQ_1^{\top},  \MQ_1^{\top}, \MQ_1^{\top}
\end{bmatrix}^{\top}, \qquad 
\MQ_1=\begin{pmatrix}
    1&0&0&0\\
    0&1&0&0\\
    1&0&1&0\\
    0&0&0&1
\end{pmatrix}.
\]
It is straightforward to verify that this measurement design satisfies Condition~(i) in Theorem~\ref{thm2}. However, Condition~(ii) in Theorem~\ref{thm2} fails, since $\MQ^\star_{:,1}\succeq \MQ^\star_{:,3}$. 

Because we work with Bernoulli--logistic responses, Assumption~\ref{assm2} is automatically satisfied. We also fix a strictly positive latent distribution $(p_\bz)_{\bz\in\{0,1\}^4}$ so that Assumption~\ref{assm1}(a) holds.

We now construct an explicit positive-measure subset of the parameter space on which the framework is not identifiable. Define
\[
\widetilde B
=\bigl\{\bbeta\in\mathbb{R}^{16}: 2\beta_3+\beta_{13}=\beta_{0}\bigr\}
 \cup \bigl\{\bbeta\in\mathbb{R}^{16}: 2\beta_3+\beta_{13}=\beta_{1}\bigr\}
 \cup \bigl\{\bbeta\in\mathbb{R}^{16}: 2\beta_3+\beta_{13}=0\bigr\}.
\]
The set $\widetilde B$ is a finite union of proper algebraic varieties and therefore has Lebesgue measure zero in $\mathbb{R}^{16}$.

Index the items as $j=4m+r$ with $m\in\{0,1,2\}$ and $r\in\{1,2,3,4\}$.  
If $r\in\{1,2,4\}$, select $\bbeta_j$ from
\[
\bigl\{
\bbeta\in\mathbb{R}^{16}:\ 
\beta_{S}\ne 0\ \text{if and only if}\ S\subseteq \{r\},\ \beta_{r}>0
\bigr\},
\]
and if $r=3$, select $\bbeta_j$ from
\[
\bigl\{
\bbeta\in\mathbb{R}^{16}:
\beta_{S}\ne 0\ \text{if and only if}\ S\subseteq \{1,3\},
\ \beta_1+\beta_3+\beta_{13}>0,\ \beta_1+\beta_{13}>0,\ \beta_3+\beta_{13}>0
\bigr\}
\setminus\widetilde B,
\]
which has positive Lebesgue measure. Indeed, each inequality describes an open set in $\mathbb{R}^{16}$ and hence has positive relative measure. Removing $\widetilde B$, a measure-zero set, preserves positive relative measure. By construction, all such choices of $\bbeta_j$ satisfy the monotonicity requirement in Assumption~\ref{assm1}(b).

Next, we define a transformed parameterization $(\MB',\p')$ that induces the same marginal distribution of $\MX$ but cannot be obtained from $(\MB,\p)$ through any latent-coordinate permutation.  

For $j=4m+r$ with $m\in\{0,1,2\}$ and $r\in\{1,2,4\}$, set
\[
\beta'_{j,0}=\beta_{j,0}, \qquad
\beta'_{j,r}=\beta_{j,r},
\]
and set all other entries of $\beta'_j$ to zero.  
For items with indices $j=4m+3$ $(m=0,1,2)$, define
\[
\beta'_{j,0}=\beta_{j,0}+\beta_{j,3}, \qquad
\beta'_{j,1}=\beta_{j,1}-\beta_{j,3}, \qquad
\beta'_{j,3}=-\beta_{j,3}, \qquad
\beta'_{j,13}=2\beta_{j,3}+\beta_{j,13},
\]
and again set all remaining entries of $\beta'_j$ to zero.

Define a permutation $\pi$ of the $2^4$ latent states by
\[
\pi(0000)=0010,\quad \pi(1000)=1000,\quad \pi(0100)=0110,\quad \pi(0010)=0000,
\]
\[
\pi(0001)=0011,\quad\pi(1100)=1100,\quad \pi(1010)=1010,\quad \pi(1001)=1001,
\]
\[
\pi(0110)=0100,\quad\pi(0101)=0111,\quad \pi(0011)=0001,\quad \pi(1110)=1110,
\]
\[
\pi(1101)=1101,\quad \pi(1011)=1011,\quad \pi(0111)=0101,\quad \pi(1111)=1111.
\]
Set the transformed mixing weights by $\pi$:
\[
p'_{0000}=p_{0010},\quad p'_{1000}=p_{1000},\quad p'_{0100}=p_{0110},\quad p'_{0010}=p_{0000},
\]
\[
p'_{0001}=p_{0011},\quad p'_{1100}=p_{1100},\quad p'_{1010}=p_{1010},\quad p'_{1001}=p_{1001},
\]
\[
p'_{0110}=p_{0100},\quad p'_{0101}=p_{0111},\quad p'_{0011}=p_{0001},\quad p'_{1110}=p_{1110},
\]
\[
p'_{1101}=p_{1101},\quad p'_{1011}=p_{1011},\quad p'_{0111}=p_{0101},\quad p'_{1111}=p_{1111}.
\]

For all $\bz$ and $j$,
\[
\eta'_j(\bz)=\eta_j\big(\pi(\bz)\big),\qquad
q'_{j,\bz}=\sigma\!\big(\eta'_j(\bz)\big)=q_{j,\pi(\bz)}.
\]
Consequently, for every $\bx\in\{0,1\}^J$,
\[
\PP'(\MX=\bx)
=\sum_{\bz} p'_\bz \prod_{j=1}^J \bigl(q_{j,\bz}'\bigr)^{\bx_j}\bigl(1-q'_{j,\bz}\bigr)^{1-\bx_j}
=\sum_{\bz} p_{\pi(\bz)} \prod_{j=1}^J q_{j,\pi(\bz)}^{\bx_j}\bigl(1-q_{j,\pi(\bz)}\bigr)^{1-\bx_j}
=\PP(\MX=\bx).
\]
It is straightforward to verify that \(\eta'_j(\bz) >\eta'_j(\bz')\) whenever \(
\bz\succeq q_j\) and \( \bz'\not\succeq q_j,\) for $1\le j\le12$, so the monotonicity condition in Assumption~\ref{assm1}(b) continues to hold under the transformed parameterization.

By construction of $\bbeta_j$, it further follows that $\bbeta'_j$ cannot be obtained from $\bbeta_j$ through any latent-coordinate permutation if $j\equiv3\pmod4$, because the value of $\beta'_{j,13}$ differs from all four entries of the original vector $\bbeta_j$. Therefore, $(\MB,\p)$ and $(\MB',\p')$ are not related by any latent-coordinate permutation but induce the same observable law. Since the set of admissible $(\bbeta_j)_{j=1}^{12}$ has positive Lebesgue measure, the framework is not generically identifiable. In particular, this shows that even when Assumption~\ref{assm1}, Assumption~\ref{assm2}, and Condition~(i) of Theorem~\ref{thm2} all hold, generic identifiability can fail once Condition~(ii) is violated.

\subsection{Proof of Theorem \ref{thm2}}\label{sec: proof of identifiability}

Before presenting the proof of Theorem~\ref{thm2}, we introduce some additional notations.

For each $j$, fix an enumeration of $\calC_j^\mathrm{can}\setminus\{\calX_j\}$ as
\[
\calC_j^\mathrm{can}\setminus\{\calX_j\}=\{T_{1,j},T_{2,j},\dots\},
\qquad (j\in[J]).
\]
For each $t\ge 1$, define the parameter-independent finite discretization
\[
\bar{\mathcal D}_j^{(t)}:=\{T_{1,j},\dots,T_{t,j}\}\cup\{\calX_j\}\subseteq \calC_j^\mathrm{can},
\qquad
\bar{\mathcal D}^{(t)}:=(\bar{\mathcal D}_j^{(t)})_{j\in[J]}.
\]
Then $\kappa_j^{(t)}:=|\bar{\mathcal D}_j^{(t)}|=t+1$, and we index
$\bar{\mathcal D}_j^{(t)}=(S_{1,j}^{(t)},\dots,S_{\kappa_j^{(t)},j}^{(t)})$
with $S_{\kappa_j^{(t)},j}^{(t)}=\calX_j$.

For each $t\ge 1$, define $\MN_1^{(t)}$ to be a $\kappa_1^{(t)}\cdots\kappa_K^{(t)}\times 2^K$ matrix with entries
\[
\MN_1^{(t)}\big((l_1,\dots,l_K),\bz\big)
:=\PP\big(X_1\in S_{l_1,1}^{(t)},\dots,X_K\in S_{l_K,K}^{(t)}\mid \bz\big).
\]
Columns are indexed by $\bz\in\{0,1\}^K$ and rows by $\xi_1=(l_1,\dots,l_K)$ with $l_j\in[\kappa_j^{(t)}]$.
Similarly, let $\MN_2^{(t)}$ be the $\kappa_{K+1}^{(t)}\cdots\kappa_{2K}^{(t)}\times2^K$ matrix whose
$((l_{K+1},\dots,l_{2K}),\bz)$-entry is
\[
\PP\big(X_{K+1}\in S_{l_{K+1},K+1}^{(t)},\dots,X_{2K}\in S_{l_{2K},2K}^{(t)}\mid \bz\big),
\]
and let $\MN_3^{(t)}$ be the $\kappa_{2K+1}^{(t)}\cdots\kappa_{J}^{(t)}\times2^K$ matrix whose
$((l_{2K+1},\dots,l_J),\bz)$-entry is
\[
\PP\big(X_{2K+1}\in S_{l_{2K+1},2K+1}^{(t)},\dots,X_{J}\in S_{l_{J},J}^{(t)}\mid \bz\big).
\]
For brevity, set
\[
\upsilon_1^{(t)}=\prod_{k=1}^K\kappa_k^{(t)},\qquad
\upsilon_2^{(t)}=\prod_{k=K+1}^{2K}\kappa_k^{(t)},\qquad
\upsilon_3^{(t)}=\prod_{k=2K+1}^{J}\kappa_k^{(t)}.
\]
Since $S_{\kappa_j^{(t)},j}^{(t)}=\calX_j$, the last row of each $\MN_a^{(t)}$ equals $\mathbf{1}_{2^K}^\top$.

Define the $3$-way marginal probability tensor $\mathbf{P}_0^{(t)}$ of size
$\upsilon_1^{(t)}\times\upsilon_2^{(t)}\times\upsilon_3^{(t)}$ by
\begin{align*}
\mathbf{P}_0^{(t)}(\xi_1,\xi_2,\xi_3)
&=\PP\big(X_1\in S_{l_1,1}^{(t)},\dots,X_J\in S_{l_J,J}^{(t)}\big)\\
&=\sum_{z}\pi_{\bz}\,
\MN_1^{(t)}\big((l_1,\dots,l_K),\bz\big)\,
\MN_2^{(t)}\big((l_{K+1},\dots,l_{2K}),\bz\big)\,
\MN_3^{(t)}\big((l_{2K+1},\dots,l_{J}),\bz\big).
\end{align*}
Equivalently,
\begin{equation}\label{eq:tensor_decomposition_fixed_t}
\mathbf{P}_0^{(t)}=\big[\MN_1^{(t)}\mathrm{Diag}(\p),\,\MN_2^{(t)},\,\MN_3^{(t)}\big].
\end{equation}

We record two lemmas whose proofs are deferred to the end of the subsection.
The first establishes uniqueness of the tensor decomposition of $\mathbf P_0^{(t)}$ up to a common column permutation.
\begin{lemma}\label{lem:kruskal gid}
Consider a discrete causal representation learning framework with parameters
$(\p^\star,\calG^\star,\MB^\star,\MQ^\star,\bgamma^\star)$ satisfying the conditions of Theorem~\ref{thm2}.
For each $t\ge1$, let $\mathbf P_0^{(t)}$ be the tensor induced by the parameter-independent discretization
$\bar{\mathcal D}^{(t)}$ defined above, with factor matrices $\MN_1^{(t)},\MN_2^{(t)},\MN_3^{(t)}$, so that $\mathbf P_0^{(t)}=[\MN_1^{(t)}\mathrm{Diag}(\p),\MN_2^{(t)},\MN_3^{(t)}]$.
Then there exists a Lebesgue-null set
$\mathcal N_\infty\subset\Omega_K(\MTheta;\calG^\star,\MQ^\star)$
which constrains only $(\MB,\bgamma)$ such that the following holds.

For every $\MTheta\in\Omega_K(\MTheta;\calG^\star,\MQ^\star)\setminus\mathcal N_\infty$,
there exists an integer $t_0=t_0(\MTheta)<\infty$ such that for all $t\ge t_0$
the rank-$2^K$ CP decomposition of $\mathbf P_0^{(t)}$ is unique up to a common column permutation.
Moreover, since $\calX_j\in\bar{\mathcal D}_j^{(t)}$, each $\MN_a^{(t)}$ contains a row equal to
$\mathbf 1_{2^K}^\top$, hence the uniqueness contains no nontrivial scaling ambiguity.
\end{lemma}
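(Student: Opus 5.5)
We will deduce the lemma from Kruskal's uniqueness theorem for three-way tensors. Recall that if $\mathbf P=[\MA_1,\MA_2,\MA_3]$ has rank $R$ and the Kruskal ranks satisfy $\mathrm{krank}(\MA_1)+\mathrm{krank}(\MA_2)+\mathrm{krank}(\MA_3)\ge 2R+2$, then the CP decomposition is unique up to a common column permutation and column scalings.

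Here $R=2^K$ and $\MA_1=\MN_1^{(t)}\mathrm{Diag}(\p)$, with $\p_\bz>0$ by Assumption~\ref{assm1}(a). It therefore suffices to show three things for generic $(\MB,\bgamma)$ and all large $t$:
\begin{itemize}
\item $\MN_1^{(t)}$ has full column rank $2^K$;
\item $\MN_2^{(t)}$ has full column rank $2^K$;
\item $\MN_3^{(t)}$ has Kruskal rank at least $2$, i.e., no two of its columns are proportional.
\end{itemize}
Then $2^K+2^K+2\ge 2\cdot 2^K+2$ holds. Proportional columns would in fact be equal, because every column contains the all-ones row. The same all-ones row (from $S_{\kappa_j,j}^{(t)}=\calX_j$) forces any scaling ambiguity to be trivial: a rescaled column would no longer have entry $1$ in that row, and the scalings in $\MA_1$ are absorbed into $\p$ after normalisation.

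\textbf{Full column rank of $\MN_1^{(t)}$ and $\MN_2^{(t)}$.} By conditional independence, $\MN_1^{(t)}$ is a row-wise Khatri--Rao product: column $\bz$ equals $\bigotimes_{j=1}^K v_j(\bz)$, where $v_j(\bz)=(\PP_{j,g_j(\eta_j(\bz),\gamma_j)}(S^{(t)}_{l,j}))_{l}$.

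First take the ``identity'' point where $\MQ_1=\MI_K$ and only the main effect of $Z_j$ is active in item $j$. Then $v_j(\bz)$ takes one of two values depending on $z_j$, and these two vectors are linearly independent once $t$ is large. This follows from Assumption~\ref{assm2}(ii),(iii): distinct $\eta$ values give distinct parameters, hence distinct laws, and the countable class $\calC^{\mathrm{can}}_j$ separates measures, so some $T_{s,j}$ with $s\le t$ distinguishes them. The Kronecker product of $K$ full-rank $\kappa\times2$ matrices has rank $2^K$.

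For general parameters, $\MQ_1$ has unit diagonal, so the identity-type configuration lies in the closure of the admissible $\MB$. We then need a point in the actual stratum $\Omega_K(\MTheta;\calG^\star,\MQ^\star)$, or an argument that nonvanishing of an analytic function at a boundary or limit point forces it to be nonzero on the stratum. To handle this, I would fix $t$ and a $2^K\times2^K$ minor. The minor is a real-analytic function of $(\MB,\bgamma)$ on the connected open stratum by Assumption~\ref{assm2}(i),(iii). If it is not identically zero, its zero set is Lebesgue-null.

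\textbf{Main obstacle: exhibiting a point with nonzero minor.} This step needs care because off-diagonal entries of $\MQ_1$ must be nonzero in the stratum. I would argue by continuity. The admissible set contains points where the off-diagonal coefficients $\beta_{j,k}$ ($k\neq j$) are arbitrarily small but nonzero. The minor depends continuously on $\MB$ and is nonzero at the limit point with off-diagonal coefficients equal to $0$. Since the analytic map $(\MB,\bgamma)\mapsto\PP_{j,\cdot}(S)$ is defined on all of $\RR\times(0,\infty)$, the minor is nonzero near the limit and hence at some admissible point. The monotonicity constraints of Assumption~\ref{assm1}(b) define open sets, so such small perturbations can be kept admissible.

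\textbf{Kruskal rank of $\MN_3^{(t)}$ and choice of $t_0$.} For $\MN_3^{(t)}$ we need columns $\bz\neq\bz'$ to give different joint laws of $(X_{2K+1},\dots,X_J)$. Since $\MQ_3$ has no zero column, there is some item $j>2K$ whose $\eta_j$ depends on a coordinate where $\bz,\bz'$ differ. Generically $\eta_j(\bz)\neq\eta_j(\bz')$: the set where they are equal is a hyperplane in $\MB$, which is null. Distinct $\eta$ values give distinct laws, and a finite $t$ separates them.

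Let $\mathcal N_\infty$ be the countable union over $t$ and over the relevant minors and hyperplanes of the null exceptional sets; it is still null. Off $\mathcal N_\infty$, each required nondegeneracy appears at some finite $t$ and persists for larger $t$, because adding rows cannot lower rank. Taking $t_0$ to be the maximum of these finitely many indices gives the lemma, via Kruskal's theorem.
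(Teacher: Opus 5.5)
Your proposal is correct and follows the paper's proof essentially step for step. The shared steps are: Kruskal's theorem with Kruskal ranks $2^K$, $2^K$ and at least $2$; a Kronecker-product witness at the boundary point where each anchor item keeps only its diagonal main effect; Lebesgue-nullness of the zero set of a real-analytic minor; and, for $\MN_3^{(t)}$, a proper hyperplane in $\MB$ plus a separating set from $\calC_j^{\mathrm{can}}$.

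One small correction. The admissible stratum, where $\beta_{j,\{k\}}\neq 0$ for $k\in K_j$, is a disjoint union of sign orthants, so it is not connected. You can either build a witness in every orthant, as the paper does, or, more simply, use that the minor is analytic on the connected ambient domain (all $\beta$'s real, $\gamma_j>0$). In the second version the unperturbed boundary point already certifies non-vanishing, and your perturbation and monotonicity step becomes unnecessary. That step was also not quite right as stated, since the limit point lies on the boundary of the strictly monotone region, not inside it.
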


The next lemma constrains how the $2^K$ columns can be permuted.

\begin{lemma}\label{lem:sign flip}
Let $(\p,\MB,\calG,\MQ)$ and $(\p',\MB',\calG',\MQ')$ satisfy Assumption~\ref{assm1}, and suppose $\MQ$ also meets the conditions of Theorem~\ref{thm2}. 
Assume there exists a permutation $\mathfrak{S}\in S_{\{0,1\}^K}$ such that
\[
\eta_{j}(\bz)=\eta'_{j}(\mathfrak{S}(\bz))\quad\text{for all }j,\bz.
\]
Then $(\p,\MB,\calG,\MQ)\sim_{\calK}(\p',\MB',\calG',\MQ')$ for $\MB\in \Omega(\MB;\MQ)$, where
\[
\Omega(\MB;\MQ)=\{\MB:\ \beta_{j,S}=0\ \text{whenever }S\nsubseteq K_j,\ 
\beta_{j,\{k\}}\neq 0\ \text{if and only if }k\in K_j\}.
\]
\end{lemma}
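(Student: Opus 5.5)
The plan has three steps. First show that $\MQ'$ agrees with $\MQ$ up to a column permutation. Then show that $\mathfrak S$, after relabeling, is a coordinate permutation composed with coordinatewise bit flips, and use the monotonicity in Assumption~\ref{assm1}(b) to remove the flips. Finally, read off $\MB'$, $\p'$ and $\calG'$ from the resulting relabeling.

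\emph{Step 1: recovering $\MQ$.} For each item $j$ I would set $A_j:=\{\bz:\bz\succeq \MQ_{j,:}\}$. Since $\eta_j$ depends only on $\bz_{K_j}$, it is constant on $A_j$. By Assumption~\ref{assm1}(b) it is strictly larger there than anywhere off $A_j$, so $A_j=\arg\max_{\bz}\eta_j(\bz)$. The relation $\eta_j=\eta'_j\circ\mathfrak S$ then gives $\mathfrak S(A_j)=A'_j$ for every $j$. Because $\mathfrak S$ is a bijection, it also maps intersections to intersections. For any set $T\subseteq[J]$ of items we have $\bigcap_{j\in T}A_j=\{\bz:\bz_k=1\ \forall k\in\bigcup_{j\in T}K_j\}$, and comparing cardinalities gives
\[
\Bigl|\bigcup_{j\in T}K_j\Bigr|=\Bigl|\bigcup_{j\in T}K'_j\Bigr|\qquad\text{for all }T\subseteq[J].
\]
Now $K-|\bigcup_{j\in T}K_j|$ counts the columns of $\MQ$ whose support avoids $T$. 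Möbius inversion over $T$ therefore recovers, for each support pattern $s\subseteq[J]$, the number of columns of $\MQ$ equal to $\mathbf 1_s$. Hence $\MQ$ and $\MQ'$ have the same multiset of columns, so $\MQ'_{:,\sigma(k)}=\MQ_{:,k}$ for some $\sigma\in S_{[K]}$. After composing $\mathfrak S$ with this coordinate relabeling, I may assume $\MQ'=\MQ$ and $A'_j=A_j$.

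\emph{Step 2: reducing $\mathfrak S$ to a coordinate relabeling.} First, intersecting all the $A_j$ gives $\{\mathbf 1\}$ by Theorem~\ref{thm2}(i), so $\mathfrak S(\mathbf 1)=\mathbf 1$. Next, for a fixed $k$ let $U_k:=\{j:q_{j,k}=0\}$. By the subset condition (ii), for each $l\ne k$ there is an item with $q_{j,l}=1$ and $q_{j,k}=0$, so $\bigcup_{j\in U_k}K_j=[K]\setminus\{k\}$. Therefore $\bigcap_{j\in U_k}A_j=\{\mathbf 1,\mathbf 1-\be_k\}$, and $\mathfrak S$ preserves this set, which forces $\mathfrak S(\mathbf 1-\be_k)=\mathbf 1-\be_k$.

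To go further down the cube I would use the level-set structure of the $\eta_j$ rather than only their argmax sets. Since $\eta_j=\eta'_j\circ\mathfrak S$ and both depend only on coordinates in $K_j$, $\mathfrak S$ maps the partition of $\{0,1\}^K$ by the values of $\eta_j$ onto the corresponding partition for $\eta'_j$. Combining these partitions over items with the anchor rows $\MQ_1,\MQ_2$ (unit diagonals) and the sets $U_k$, I would argue by downward induction on $|\{k:z_k=0\}|$ that $\mathfrak S$ preserves each coordinate hyperplane pair $\{z_k=0\},\{z_k=1\}$. That places $\mathfrak S$ in $(\ZZ_2)^K\rtimes S_K$. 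The bit flips are then excluded because $\mathfrak S(\mathbf 1)=\mathbf 1$, which is exactly the place where monotonicity enters.

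\emph{Step 3: recovering the parameters.} With $\mathfrak S$ now a pure coordinate relabeling $\sigma$, the identity $\eta_j(\bz)=\eta'_j(\sigma(\bz))$ combined with the invertible Möbius map between $\boeta_j$ and $\bbeta_j$ on $\{0,1\}^K$ gives $\beta_{j,S}=\beta'_{j,\sigma(S)}$. Matching the permuted columns of the tensor decomposition gives $\p'$ as the relabeling of $\p$. Since both $\calG$ and $\calG'$ are perfect maps of their respective latent laws, $\calG\equiv\sigma(\calG')$.

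\emph{Main obstacle.} I expect the hard part to be the induction in Step 2. The $\eta_j$ need not be injective on $\bz_{K_j}$, so the level sets can be coarse, and the argmax argument by itself only pins down the top two layers of the cube. My first attempt would exploit that every $\beta_{j,\{k\}}\neq0$ for $k\in K_j$: this should make $\eta_j$ change whenever a single coordinate in $K_j$ flips away from $\mathbf 1$. That would let me propagate the identities $\mathfrak S(\bz)=\bz$ layer by layer through the sets $\{\bz:\bz_{K_j}=\mathbf 1\}$ for the anchor items.
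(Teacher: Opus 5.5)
Your Steps~1 and~3 are sound and follow the paper: argmax sets, union cardinalities and M\"obius inversion to match the column multisets, then the invertible $\boeta\leftrightarrow\bbeta$ map and the perfect-map condition. The gap is in Step~2 below the top two layers, and a cleverer induction will not close it.

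The identity $\eta_j=\eta'_j\circ\mathfrak S$ only tells you that $\mathfrak S$ carries level sets of $\eta_j$ onto level sets of $\eta'_j$. These level sets can be much coarser than the blocks $\{\bz:\bz_{K_j}=\bu\}$. The condition $\beta_{j,\{k\}}\neq0$ only gives $\eta_j(\be_k)\neq\eta_j(\mathbf 0)$. It says nothing about flipping $z_k$ at an interior state, where the increment is a sum of main and interaction effects that may vanish.

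Under the stated hypotheses alone, the conclusion actually fails. Take $K=4$ and items with $K_j=[4]\setminus\{m\}$, using each $m\in[4]$ three times; this satisfies (i) and (ii) of Theorem~\ref{thm2}. Let $\eta_j(\bz)$ equal $a_j,b_j,b_j,d_j$ according as $\bz_{K_j}$ has $0,1,2,3$ ones, with $a_j\neq b_j$ and $d_j>\max(a_j,b_j)$. Then $\beta_{j,\{k\}}=b_j-a_j\neq0$ for $k\in K_j$, and Assumption~\ref{assm1}(b) holds.
\begin{itemize}
\item The states $(1,1,0,0)$ and $(1,0,1,0)$ have one or two ones on every $K_j$, so their transposition $\mathfrak S$ satisfies $\eta_j\circ\mathfrak S=\eta_j$ for all $j$.
\item Take $\MB'=\MB$, $\MQ'=\MQ$, $\p'_{\mathfrak S(\bz)}=\p_{\bz}$ for a generic positive $\p$, and complete DAGs. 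Every hypothesis holds.
\item But $\mathfrak S$ fixes $\mathbf 0$ and each $\be_k$ without being the identity, so it is not in $(\ZZ_2)^K\rtimes S_K$.
\item Since $\MQ$ has distinct columns, the only relabeling compatible with $\MQ'=\MQ$ is the identity. That would require $\p_{(1,1,0,0)}=\p_{(1,0,1,0)}$, which fails for generic $\p$.
\end{itemize}
You can also add one item with $K_j=[4]$, injective $\eta_j$, and $\eta'_j:=\eta_j\circ\mathfrak S^{-1}$. This keeps all hypotheses and separates all $2^K$ states, yet the conclusion still breaks.

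What is missing is item-wise block separation, followed by a single-flip argument rather than a layer-by-layer one. Suppose each $\eta_j$ takes distinct values on distinct $K_j$-blocks; this is a Lebesgue-null exclusion in $\MB$, exactly as in Lemma~\ref{lem:full_ladder_implies_feature_separation_Q}. Then $\eta'_j=\eta_j\circ\mathfrak S^{-1}$ takes $2^{|K_j|}=2^{|K'_j|}$ distinct values, so its level sets are exactly the $K'_j$-blocks. Hence $\mathfrak S$ maps $K_j$-blocks onto $K'_j$-blocks.

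This block preservation is what the paper's proof actually works with: its sets $G_\pi(Q)$, characterized in Proposition~\ref{prop:single-Q}. For a fixed $\MB$, the paper's passage to $G_\pi(Q)$ tacitly needs the same separation.

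Given block preservation, your own sets $U_k$ finish the job once you apply them to every block rather than only the argmax block.
\begin{itemize}
\item If $\bz,\bz'$ differ only in coordinate $k$, they lie in a common $K_j$-block for each $j\in U_k$.
\item After your alignment $\MQ'=\MQ$, the images $\mathfrak S(\bz),\mathfrak S(\bz')$ therefore agree on $\bigcup_{j\in U_k}K_j=[K]\setminus\{k\}$. Being distinct, they differ exactly in coordinate $k$.
\item So $\mathfrak S$ commutes with every single-coordinate flip. Hence $\mathfrak S(\bz)$ is $\bz$ with the coordinates in $\mathrm{supp}(\mathfrak S(\mathbf 0))$ flipped.
\item Finally, $\mathfrak S(\mathbf 1)=\mathbf 1$ forces $\mathfrak S=\mathrm{id}$.
\end{itemize}
This is the paper's Proposition~\ref{prop:Di-characterization} ($D_i(T)=\{i\}$) together with its induction on $|T|$. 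It uses only the subset condition; the anchor blocks $\MQ_1,\MQ_2$ play no role in this step.
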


Assume $\MQ^\star$ satisfies the conditions of Theorem~\ref{thm2} and that
$\MTheta\in\Omega_K(\MTheta;\calG^\star,\MQ^\star)$.
Suppose there exist alternative parameters $(\widetilde{\MTheta},\widetilde{\calG},\widetilde\MQ)$ such that
\(\PP_{\widetilde{\MTheta},\widetilde{\MQ},\widetilde{\calG}}=\PP_{\MTheta,\MQ^\star,\calG^\star}.
\)
We will show that
$(\MTheta,\calG^\star,\MQ^\star)\sim_K(\widetilde{\MTheta},\widetilde{\calG},\widetilde{\MQ})$
for $\MTheta$ outside a Lebesgue-null set.

By Lemma~\ref{lem:kruskal gid}, there exists a Lebesgue-null set $\calN_\infty$ such that for every $\MTheta\in\Omega_K(\MTheta;\calG^\star,\MQ^\star)\setminus\mathcal N_\infty$, we could find an integer $t_0=t_0(\MTheta)<\infty$
such that for every $t\ge t_0$ the tensor decomposition of $\mathbf P_0^{(t)}$ is unique
up to a common column permutation.
Fix any $t\ge t_0$.
Then
\[
\mathbf{P}_0^{(t)}
=\big[\MN_1^{(t)}\mathrm{Diag}(\p),\,\MN_2^{(t)},\,\MN_3^{(t)}\big]
=\big[\widetilde{\MN}_1^{(t)}\mathrm{Diag}(\widetilde{\p}),\,\widetilde{\MN}_2^{(t)},\,\widetilde{\MN}_3^{(t)}\big],
\]
where the equality holds up to a common permutation of the $2^K$ columns.
Hence there exists a permutation $\mathfrak S^{(t)}\in S_{\{0,1\}^K}$ such that
\[
\MN_a^{(t)}(\cdot,\bz)=\widetilde{\MN}_a^{(t)}(\cdot,\mathfrak S^{(t)}(\bz)),
\qquad a=1,2,3,
\qquad
\p_\bz=\widetilde{\p}_{\mathfrak S^{(t)}(\bz)}.
\]
In particular, for every $j\in[J]$, every $l\in[\kappa_j^{(t)}]$, and every $\bz\in\{0,1\}^K$,
\begin{equation}\label{eq:probability_comparing_t}
\PP_{j,g_j(\eta_j(\bz),\gamma_j)}\big(S_{l,j}^{(t)}\big)
=
{\PP}_{j,g_j(\widetilde\eta_j(\mathfrak S^{(t)}(\bz)),\widetilde\gamma_j))}\big(S_{l,j}^{(t)}\big).
\end{equation}

We now justify the passage from the setwise equalities in \eqref{eq:probability_comparing_t} to equality of
the full conditional laws as probability measures, and simultaneously show that the aligning permutation
stabilizes as $t$ increases.

\begin{lemma}\label{lem:perm_stability}
Fix a countable separating class $\mathcal C_j$ for each $j\in[J]$. Let $\mathcal D_j\subseteq \mathcal D_j^+\subseteq \mathcal C_j$ be two finite collections for each $j$. Construct the corresponding factor matrices $(\MN_1,\MN_2,\MN_3)$ and $(\MN_1^+,\MN_2^+,\MN_3^+)$, and similarly $(\widetilde{\MN}_1,\widetilde{\MN}_2,\widetilde{\MN}_3)$ and $(\widetilde{\MN}_1^+,\widetilde{\MN}_2^+,\widetilde{\MN}_3^+)$ under an alternative parameterization. Assume that both tensors admit unique rank-$2^K$ CP decompositions up to a common column permutation, so that there exist $\mathfrak S,\mathfrak S^+\in S_{\{0,1\}^K}$ satisfying
\[
\MN_a(\cdot,\bz)=\widetilde{\MN}_a(\cdot,\mathfrak S(\bz)),
\qquad
\MN_a^+(\cdot,\bz)=\widetilde{\MN}_a^+(\cdot,\mathfrak S^+(\bz)),
\qquad a=1,2,3.
\]
If $\MN_1$ has full column rank $2^K$, then $\mathfrak S^+=\mathfrak S$.
\end{lemma}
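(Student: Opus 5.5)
The plan is to compare the two alignments through the nested structure of the discretizations. Since $\mathcal D_j\subseteq\mathcal D_j^+$ for every $j$, each row of $\MN_a$ is obtained from rows of $\MN_a^+$. A row of $\MN_1$ is indexed by $(l_1,\dots,l_K)$ with $S_{l_j,j}\in\mathcal D_j$. It equals the row of $\MN_1^+$ indexed by the same sets, viewed as elements of $\mathcal D_j^+$. Write this as $\MN_1=\MP_1\MN_1^+$, where $\MP_1$ is a $0/1$ row-selection matrix that depends only on the discretizations. The same selection applies to the alternative parameterization, $\widetilde\MN_1=\MP_1\widetilde\MN_1^+$, because the conditional probabilities of the selected sets are computed the same way.

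First, use the alignment on the larger discretization. From $\MN_1^+(\cdot,\bz)=\widetilde\MN_1^+(\cdot,\mathfrak S^+(\bz))$ and left-multiplication by $\MP_1$ we get
\[
\MN_1(\cdot,\bz)=\widetilde\MN_1(\cdot,\mathfrak S^+(\bz))\qquad\text{for all }\bz .
\]
Second, use the alignment on the smaller discretization, $\MN_1(\cdot,\bz)=\widetilde\MN_1(\cdot,\mathfrak S(\bz))$. Combining the two gives
\[
\widetilde\MN_1(\cdot,\mathfrak S(\bz))=\widetilde\MN_1(\cdot,\mathfrak S^+(\bz))\qquad\text{for all }\bz .
\]
Equivalently, $\MN_1(\cdot,\bz)=\MN_1(\cdot,\mathfrak S^{-1}\mathfrak S^+(\bz))$ for every $\bz$.

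Third, invoke full column rank. Since $\MN_1$ has rank $2^K$, its columns are pairwise distinct. So $\MN_1(\cdot,\bz)=\MN_1(\cdot,\bz')$ forces $\bz=\bz'$. Hence $\mathfrak S^{-1}\mathfrak S^+=\mathrm{id}$, that is, $\mathfrak S^+=\mathfrak S$.

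The only delicate step is the consistent row-selection claim for the alternative parameterization. It requires that the tilde factor matrices be built from the same set collections, with the same indexing of rows. This holds by construction, since the discretizations are parameter-independent. I would state this explicitly. Note that only $\MN_1$'s column distinctness is needed, so $\MN_2$, $\MN_3$ and $\p$ play no role beyond guaranteeing that both alignments exist.
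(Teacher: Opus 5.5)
Your proof is correct and follows the paper's argument essentially verbatim: restrict the refined alignment to the rows coming from $\mathcal D$, combine it with the coarse alignment, and conclude from pairwise distinctness of the columns. The only cosmetic difference is that you use distinctness of the columns of $\MN_1$ through $\MN_1(\cdot,\bz)=\MN_1(\cdot,\mathfrak S^{-1}\mathfrak S^+(\bz))$. The paper instead transfers distinctness to $\widetilde{\MN}_1$ as a column permutation of $\MN_1$.
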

\begin{proof}
Since $\MN_1$ has full column rank $2^K$, its $2^K$ columns are pairwise distinct. Because $\mathcal D_j\subseteq\mathcal D_j^+$ for all $j$, each row event used to define $\MN_1$
(i.e., each product event determined by choosing one set from each $\mathcal D_j$)
also appears among the row events defining $\MN_1^+$.
Thus, for each $\bz\in\{0,1\}^K$, the column $\MN_1(\cdot,\bz)$ is obtained from
$\MN_1^+(\cdot,\bz)$ by restricting to those rows corresponding to product events formed from $\mathcal D$.

Fix $\bz$.
From the two Kruskal conclusions, we have $\MN_1(\cdot,\bz)=\widetilde{\MN}_1(\cdot,\mathfrak S(\bz))$ and $\MN_1^+(\cdot,\bz)=\widetilde{\MN}_1^+(\cdot,\mathfrak S^+(\bz))$.
Restricting the second equality to the rows corresponding to $\mathcal D$ gives $\MN_1(\cdot,\bz)=\widetilde{\MN}_1(\cdot,\mathfrak S^+(\bz))$. Therefore, $\widetilde{\MN}_1(\cdot,\mathfrak S(\bz))=\widetilde{\MN}_1(\cdot,\mathfrak S^+(\bz)).$ Since $\widetilde{\MN}_1$ is a column permutation of $\MN_1$, its columns are also pairwise distinct, so the above equality forces
$\mathfrak S(\bz)=\mathfrak S^+(\bz)$.
As $\bz$ was arbitrary, $\mathfrak S=\mathfrak S^+$.
\end{proof}
Since $\MTheta\notin\mathcal N_\infty$, Lemma~\ref{lem:kruskal gid} implies that
for every $t\ge t_0(\MTheta)$ the Kruskal conclusion holds for $\mathbf P_0^{(t)}$,
and hence the associated aligning permutation $\mathfrak S^{(t)}$ is well-defined. Moreover, for every $t\ge t_0(\MTheta)$, the Kruskal argument in the proof of
Lemma~\ref{lem:kruskal gid} yields $\mathrm{rk}_k(\MN_1^{(t)})=2^K$ and hence
$\MN_1^{(t)}$ has full column rank $2^K$.
Applying Lemma~\ref{lem:perm_stability} to the nested discretizations
$\bar{\mathcal D}^{(t)}\subseteq\bar{\mathcal D}^{(t+1)}$ for $t\ge t_0(\MTheta)$
yields that $\mathfrak S^{(t)}$ is constant in $t$.
Denote the common permutation by $\mathfrak S$.

Now fix $j\in[J]$ and let $S\in\mathcal C_j^\mathrm{can}$ be arbitrary.
Since $\bigcup_{t\ge t_0}\bar{\mathcal D}_j^{(t)}=\mathcal C_j^\mathrm{can}$, there exists $t\ge t_0$ such that $S\in\bar{\mathcal D}_j^{(t)}$.
Therefore \eqref{eq:probability_comparing_t} implies
\[
\PP_{j,g_j(\eta_j(\bz),\gamma_j)}(S)={\PP}_{j,g_j(\widetilde\eta_j(\mathfrak S(\bz)),\widetilde\gamma_j))}(S)
\qquad
\text{for all } \bz\in\{0,1\}^K.
\]
Because $\mathcal C_j^\mathrm{can}$ is separating, we conclude that
\[
\PP_{j,g_j(\eta_j(\bz),\gamma_j)}={\PP}_{j,g_j(\widetilde\eta_j(\mathfrak S(\bz)),\widetilde\gamma_j))}
\qquad\text{as probability measures on }\calX_j,\quad\text{for all }\bz.
\]
By Assumption~\ref{assm2}(ii) and injectivity of $g_j$ in Assumption~\ref{assm2}(iii), this further implies
\[
\eta_{j}(\bz)=\widetilde{\eta}_{j}(\mathfrak S(\bz))
\qquad\text{and}\qquad
\gamma_j=\widetilde\gamma_j,
\qquad\text{for all }j,\bz.
\]
Finally, Lemma~\ref{lem:sign flip} yields
$(\p,\MB,\calG,\MQ)\sim_{\calK}(\widetilde{\p},\widetilde{\MB},\widetilde{\calG},\widetilde{\MQ})$,
completing the proof.

\subsubsection{Proof of Lemma~\ref{lem:kruskal gid}}

Fix $t\ge1$ and consider the tensor $\mathbf P_0^{(t)}$ induced by the parameter-independent discretization
$\bar{\mathcal D}^{(t)}$ constructed at the beginning of this subsection, with factor matrices
$\MN_1^{(t)},\MN_2^{(t)},\MN_3^{(t)}$ satisfying \eqref{eq:tensor_decomposition_fixed_t}.
Write $\mathrm{rk}_k(M)$ for the Kruskal column rank of a matrix $M$.

By Kruskal's theorem, it suffices to prove that there exist a Lebesgue-null set
$\mathcal N_\infty\subset\Omega_K(\MTheta;\calG^\star,\MQ^\star)$ which constrains only $(\MB,\bgamma)$ and,
for every $\MTheta\in\Omega_K(\MTheta;\calG^\star,\MQ^\star)\setminus\mathcal N_\infty$, an integer
$t_0=t_0(\MTheta)<\infty$ such that
\begin{equation}\label{eq:kruskal_target_three_clean}
\mathrm{rk}_k(\MN_1^{(t)})=2^K,\qquad
\mathrm{rk}_k(\MN_2^{(t)})=2^K,\qquad
\mathrm{rk}_k(\MN_3^{(t)})\ge2
\end{equation}
for all $t\ge t_0(\MTheta)$.

We first establish that $\mathrm{rk}_k(\MN_1^{(t)})=2^K$ generically for all sufficiently large $t$.
Let $\mathcal J^{(K)}_{\mathrm{disp}}\subseteq[K]$ denote the indices of items among $\{1,\dots,K\}$ whose response
family genuinely includes an unknown dispersion parameter $\gamma_j$.
For one-parameter families we do not treat $\gamma_j$
as a coordinate.

Define the local parameter block
\[
\Theta_1
:=\Bigl(\ \{\beta_{j,S}: j\in[K],\,S\subseteq K_j\},\ \{\gamma_j: j\in\mathcal J^{(K)}_{\mathrm{disp}}\}\ \Bigr),
\]
which we identify with a vector in a Euclidean space $\RR^{D_1}$ of dimension
$D_1:=\sum_{j=1}^K 2^{|K_j|}+|\mathcal J^{(K)}_{\mathrm{disp}}|$. Throughout we restrict attention to the open connected domain
\[
U_1:=\Bigl\{\Theta_1:\ \gamma_j>0\ \text{for all }j\in\mathcal J^{(K)}_{\mathrm{disp}}\Bigr\}
\subset \RR^{D_1}.
\]

Because the discretization $\bar{\mathcal D}^{(t)}$ is parameter-independent, every entry of $\MN_1^{(t)}$ is
a finite product of terms of the form
\[
\PP_{j,\;g_j(\eta_j(\bz),\gamma_j)}(S),
\qquad S\in\bar{\mathcal D}_j^{(t)},
\]
with $S$ fixed.
For $j\in\mathcal J^{(K)}_{\mathrm{disp}}$, Assumption~\ref{assm2}(i) and Assumption~\ref{assm2}(iii) imply that
$(\eta,\gamma)\mapsto \PP_{j,g_j(\eta,\gamma)}(S)$ is real-analytic on $\RR\times(0,\infty)$.
For $j\notin\mathcal J^{(K)}_{\mathrm{disp}}$, $g_j$ is independent of $\gamma$ and
$\PP_{j,\,g_j(\eta,\gamma)}(S)=\PP_{j,\,g_j(\eta,\gamma_0)}(S)$ for any fixed $\gamma_0\in[0,\infty)$; in particular,
$\eta \mapsto \PP_{j,g_j(\eta,\gamma_0)}(S)$
is real-analytic on $\RR$.
Since each $\eta_j(\bz)$ is a polynomial in the coefficients $\{\beta_{j,S}\}$, it follows that every
entry of $\MN_1^{(t)}$ is real-analytic on the domain $U_1$.
Consequently, $f_{1,t}(\Theta_1):=\det((\MN_1^{(t)})^\top \MN_1^{(t)})$ is a real-analytic function on the domain $U_1\subset\RR^{D_1}$.

Next we describe the projection of $\Omega_K(\MTheta;\calG^\star,\MQ^\star)$ onto $\Theta_1\subseteq U_1$.
Let $E_1=\{(j,k): j\in[K],\,k\in K_j\}$ index the main-effect coefficients $\beta_{j,\{k\}}$ for $j\le K$.
For each sign pattern $\sigma\in\{-1,+1\}^{E_1}$ define the orthant
\[
\mathcal E_1^{(\sigma)}
:=\Bigl\{\Theta_1\subseteq U_1:\ \beta_{j,\{k\}}\sigma_{j,k}>0\ \text{for all }(j,k)\in E_1\Bigr\},
\]
where all remaining coordinates in $\Theta_1$ are unrestricted.
Each $\mathcal E_1^{(\sigma)}$ is an open, connected domain of $\RR^{D_1}$, and
\[
\Pi_1\bigl(\Omega_K(\MTheta;\calG^\star,\MQ^\star)\bigr)
=\bigcup_{\sigma}\mathcal E_1^{(\sigma)},
\]
where $\Pi_1$ denotes the projection onto the coordinates~$\Theta_1$.

Fix any sign pattern $\sigma\in\{-1,+1\}^{E_1}$.
We will show that there exist an explicit parameter point
$\bar\Theta_1^{(\sigma)}\in\mathcal E_1^{(\sigma)}$ and an integer $t_\sigma<\infty$
such that $f_{1,t}\!\left(\bar\Theta_1^{(\sigma)}\right)>0$ for all $t\ge t_\sigma$.
In particular, for every $t\ge t_\sigma$ the restriction of $f_{1,t}$ to $\mathcal E_1^{(\sigma)}$ is a
nontrivial real-analytic function on the open, connected domain
$\mathcal E_1^{(\sigma)}$.
By \cite{mityagin2015the}, the zero set
$\calV_{1,\sigma,t}:=\{\Theta_1\in\mathcal E_1^{(\sigma)}: f_{1,t}(\Theta_1)=0\}$
has Lebesgue measure zero in $\mathcal E_1^{(\sigma)}$.

To construct the point, we now explicitly use condition~(i) of Theorem~\ref{thm2}.
There exists a permutation $\varrho_1$ of $[K]$ such that the permuted $K\times K$ block $\overline\MQ_1:=\MQ_{\varrho_1(1:K),:\ }$
has unit diagonal, equivalently $q_{\varrho_1(r),r}=1$ for all $r\in\{1,\dots,K\}$.
Define the induced bijection $\rho:[K]\to[K]$ by
$\rho(j):=\varrho_1^{-1}(j)$ for $j\in\{1,\dots,K\}$. Then, for every $j\in[K]$, we have $q_{j,\,\rho(j)}=1$, hence $\rho(j)\in K_j$.

Fix a sign pattern $\sigma\in\{-1,+1\}^{E_1}$.
Define a boundary point $\Theta_{1,0}^{(\sigma)}$ by setting all interaction terms to zero and keeping
only the single admissible main effect $\beta_{j,\{\rho(j)\}}$ for each $j\in[K]$
\[
\beta_{j,S}=0 \ \text{for all }j\in[K]\text{ and }|S|\ge2,
\quad
\beta_{j,\{\rho(j)\}}=\sigma_{j,\rho(j)},
\quad
\beta_{j,\{k\}}=0 \ \text{for all }(j,k)\in E_1,\ k\neq \rho(j),
\]
with all remaining coordinates in $\Theta_1$ arbitrary, and $\gamma_j>0$ for
$j\in\mathcal J^{(K)}_{\mathrm{disp}}$.
Because $\rho(j)\in K_j$, the coordinate $\beta_{j,\{\rho(j)\}}$ is indeed part of $\Theta_1$, so this assignment is admissible.
Under $\Theta_{1,0}^{(\sigma)}$, for each $j\in[K]$ the conditional law of $X_j$ depends on $\bz$
only through $\bz_{\rho(j)}$, since $\eta_j(\bz)=\beta_{j,\emptyset}+\beta_{j,\{\rho(j)\}}\bz_{\rho(j)}$.

Fix $j\in[K]$.
Let $\mu_{j,0}$ and $\mu_{j,1}$ denote the two conditional laws of $X_j$ under $\Theta_{1,0}^{(\sigma)}$
corresponding to $\bz_{\rho(j)}=0$ and $\bz_{\rho(j)}=1$, respectively.
Because $\beta_{j,\{\rho(j)\}}\neq0$, we have $\eta_j(0)\neq\eta_j(1)$ in the $\bz_{\rho(j)}$ coordinate.
By Assumption~\ref{assm2}(iii) the map $\eta\mapsto g_j(\eta,\gamma_j)$ is injective for fixed $\gamma_j$,
hence the induced parameters are distinct.
By identifiability in Assumption~\ref{assm2}(ii), it follows that $\mu_{j,0}\neq\mu_{j,1}$ as probability measures.
Since $\mathcal C_j^\mathrm{can}$ is separating, there exists a set $B_j\in\mathcal C_j^\mathrm{can}$ such that $\mu_{j,0}(B_j)\neq \mu_{j,1}(B_j).$

For each $r\in[K]$, define $j_r:=\varrho_1(r)$, so that $\rho(j_r)=r$ and $q_{j_r,r}=1$.
For $j_r$, choose a distinguishing set $B_{j_r}\in\mathcal C_{j_r}$ as above and write
$B_{j_r}=T_{i_r,\,j_r}$
for some $i_r\ge1$.
Define $t_\sigma:=\max_{r\in[K]} i_r  < \infty.$ For every $t\ge t_\sigma$ we have $B_{j_r}\in\bar{\mathcal D}_{j_r}^{(t)}$ for all $r\in[K]$, and moreover $S_{\kappa_{j_r}^{(t)},\,j_r}^{(t)}=\mathcal X_{j_r}\in\bar{\mathcal D}_{j_r}^{(t)}$.

Fix any $t\ge t_\sigma$.
Consider the $2^K\times 2^K$ submatrix of $\MN_1^{(t)}$ obtained by restricting to the $2^K$ rows indexed by
$l_{j_r}\in\{i_r,\ \kappa_{j_r}^{(t)}\} $ ($r=1,\dots,K$),
that is, for each $r$ we use either the event $B_{j_r}$ or the event $\mathcal X_{j_r}$.
Under $\Theta_{1,0}^{(\sigma)}$ and conditional independence of $(X_{j_1},\dots,X_{j_K})$ given $\bz$,
this submatrix factorizes as a Kronecker product:
\[
\MN_{1,\mathrm{sub}}^{(t)}\bigl(\Theta_{1,0}^{(\sigma)}\bigr)
=
\bigotimes_{r=1}^K
\begin{pmatrix}
\mu_{j_r,0}(B_{j_r}) & \mu_{j_r,1}(B_{j_r})\\
1 & 1
\end{pmatrix}.
\]
Each $2\times2$ factor has nonzero determinant
$\mu_{j_r,0}(B_{j_r})-\mu_{j_r,1}(B_{j_r})\neq0$, hence the Kronecker product is invertible.
Therefore $\MN_1^{(t)}\bigl(\Theta_{1,0}^{(\sigma)}\bigr)$ has full column rank $2^K$.

Next, we perturb $\Theta_{1,0}^{(\sigma)}$ into the open orthant $\mathcal E_1^{(\sigma)}$ while preserving
full column rank of $\MN_1^{(t)}$.
For $\varepsilon>0$, define $\Theta_{1,\varepsilon}^{(\sigma)}$ by keeping all coordinates of
$\Theta_{1,0}^{(\sigma)}$ unchanged except setting, for every $(j,k)\in E_1$ with $k\neq \rho(j)$, $\beta_{j,\{k\}}=\sigma_{j,k}\varepsilon$,
and leaving $\beta_{j,\{\rho(j)\}}=\sigma_{j,\rho(j)}$.
Then $\Theta_{1,\varepsilon}^{(\sigma)}\in\mathcal E_1^{(\sigma)}$ for every $\varepsilon>0$.
Since the determinant of the fixed $2^K\times2^K$ submatrix $\MN_{1,\mathrm{sub}}^{(t)}(\Theta_1)$ is continuous in
$\Theta_1$ on $U_1$ and is nonzero at $\Theta_{1,0}^{(\sigma)}$, there exists $\varepsilon_\sigma(t)>0$ such that for all
$\varepsilon\in(0,\varepsilon_\sigma(t))$ the submatrix remains invertible, and hence
$\MN_1^{(t)}\bigl(\Theta_{1,\varepsilon}^{(\sigma)}\bigr)$ has full column rank $2^K$.
Choose any such $\varepsilon$ and set $\bar\Theta_1^{(\sigma)}:=\Theta_{1,\varepsilon}^{(\sigma)}$.
Then $f_{1,t}\!\left(\bar\Theta_1^{(\sigma)}\right)>0$.

Since there are only finitely many sign patterns, define $t_1:=\max_{\sigma\in\{-1,+1\}^{E_1}} t_\sigma  <\infty$. Then for every $t\ge t_1$ and every sign pattern $\sigma$, the restriction of $f_{1,t}$ to $\mathcal E_1^{(\sigma)}$
is a nontrivial real-analytic function, and hence the set
$\calV_{1,\sigma,t}:=\{\Theta_1\in\mathcal E_1^{(\sigma)}: f_{1,t}(\Theta_1)=0\}$
has Lebesgue measure zero in $\mathcal E_1^{(\sigma)}$.

For each fixed $t\ge t_1$, define
\[
\calV_{1,t}
:=
\Pi_1\bigl(\Omega_K(\MTheta;\calG^\star,\MQ^\star)\bigr)\cap\bigcup_{\sigma} \calV_{1,\sigma,t}.
\]
Then $\calV_{1,t}$ has Lebesgue measure zero in
$\Pi_1\bigl(\Omega_K(\MTheta;\calG^\star,\MQ^\star)\bigr)$.
Define
\[
\mathcal N_{1,t}^{(1)}
:=\{(\p,\MB,\bgamma)\in\Omega_K(\MTheta;\calG^\star,\MQ^\star):\ \Theta_1\in \calV_{1,t}\}.
\]
Then $\mathcal N_{1,t}^{(1)}$ has Lebesgue measure zero in $\Omega_K(\MTheta;\calG^\star,\MQ^\star)$ and depends
only on $(\MB,\bgamma)$.
For every $t\ge t_1$ and every $\MTheta\notin\mathcal N_{1,t}^{(1)}$ we have $f_{1,t}(\Theta_1)\neq0$, hence $\MN_1^{(t)}$ has full
column rank $2^K$, which implies $\mathrm{rk}_k(\MN_1^{(t)})=2^K$.

An entirely analogous argument yields an integer $t_2<\infty$ and, for each $t\ge t_2$, a Lebesgue-null set
$\mathcal N_{1,t}^{(2)}\subset\Omega_K(\MTheta;\calG^\star,\MQ^\star)$ depending only on $(\MB,\bgamma)$ such that
\[
\mathrm{rk}_k(\MN_2^{(t)})=2^K
\qquad\text{for all }t\ge t_2,\ \MTheta\in\Omega_K(\MTheta;\calG^\star,\MQ^\star)\setminus\mathcal N_{1,t}^{(2)}.
\]

We now prove that $\mathrm{rk}_k(\MN_3^{(t)})\ge 2$ generically. It suffices to verify that
Condition~B of \citet{lee2024new} holds for generic parameters in
$\Omega_K(\MTheta;\calG^\star,\MQ^\star)$.

Fix any pair $\bz\neq\bz'$ and choose an index
$\ell=\ell(\bz,\bz')\in[K]$ such that $z_\ell\neq z_\ell'$.
Since $\MQ_3$ has no all-zero column, there exists an item
$j=j(\bz,\bz')\in\{2K+1,\dots,J\}$ such that $\MQ_{j,\ell}=1$, hence $\ell\in K_j$.

Similarly, let $\mathcal J_{\mathrm{disp}}\subseteq[J]$ denote the indices of items whose response family genuinely
includes an unknown dispersion parameter $\gamma_j$.
Collect the local measurement parameters for item $j$ into the free block
\[\Theta_j
  :=
  \begin{cases}
  \bigl(\{\beta_{j,S}: S\subseteq K_j\},\,\gamma_j\bigr)\in\RR^{\,2^{|K_j|}}\times(0,\infty), & j\in\mathcal J_{\mathrm{disp}},\\[2mm]
  \bigl(\{\beta_{j,S}: S\subseteq K_j\}\bigr)\in\RR^{\,2^{|K_j|}}, & j\notin\mathcal J_{\mathrm{disp}}.
  \end{cases}\]
By the definition of $\Omega_K(\MTheta;\calG^\star,\MQ^\star)$, we have
$\beta_{j,\{k\}}=0$ for all $k\notin K_j$ and $\beta_{j,\{\ell\}}\neq 0$.

For this fixed pair $(\bz,\bz')$, consider the difference
\[
  h_{j,\bz,\bz'}(\Theta_j)
  := \eta_j(\bz)-\eta_j(\bz')
  = \sum_{S\subseteq K_j} \beta_{j,S}
      \Bigl(\prod_{k\in S}z_k - \prod_{k\in S}z_k'\Bigr).
\]
This is a linear function of the coefficients $\{\beta_{j,S}\}$ (and is independent of $\gamma_j$ when
$j\in\mathcal J_{\mathrm{disp}}$).
Moreover, the coefficient of $\beta_{j,\{\ell\}}$ in $h_{j,\bz,\bz'}$ equals
$z_\ell-z_\ell'\neq 0$, hence $h_{j,\bz,\bz'}$ is a nontrivial linear functional.

Therefore the zero set
\[
  E_{j,\bz,\bz'}
  := \bigl\{\Theta_j:\ h_{j,\bz,\bz'}(\Theta_j)=0\bigr\}
\]
has Lebesgue measure zero in the free-coordinate space of $\Theta_j$:
it is an affine hyperplane in $\RR^{2^{|K_j|}}$ when $j\notin\mathcal J_{\mathrm{disp}}$, and it is
an affine hyperplane in the $\beta$-coordinates times $(0,\infty)$ when $j\in\mathcal J_{\mathrm{disp}}$.

Define the corresponding exceptional set in the full parameter space by
\[
  \mathcal N_{\bz,\bz'}^{(3)}
  :=\Bigl\{\MTheta\in\Omega_K(\MTheta;\calG^\star,\MQ^\star):\
  h_{j(\bz,\bz'),\bz,\bz'}(\Theta_{j(\bz,\bz')})=0\Bigr\}.
\]
Since $h_{j(\bz,\bz'),\bz,\bz'}$ is a nontrivial linear functional of the local block
$\Theta_{j(\bz,\bz')}$ (through its $\beta$-coordinates), the set
$\mathcal N_{\bz,\bz'}^{(3)}$ has Lebesgue measure zero in
$\Omega_K(\MTheta;\calG^\star,\MQ^\star)$.

Now fix $\MTheta\in\Omega_K(\MTheta;\calG^\star,\MQ^\star)\setminus \mathcal N_{\bz,\bz'}^{(3)}$.
Then $\eta_j(\bz)\neq\eta_j(\bz')$. Writing
\[
\PP_{j,\bz}=\PP_{j,\theta_{j,\bz}},
\qquad
\theta_{j,\bz}:=g_j\!\bigl(\eta_j(\bz),\gamma_j\bigr)\in H_j^\circ,
\]
injectivity of $\eta\mapsto g_j(\eta,\gamma_j)$ in Assumption~\ref{assm2}(iii) implies
$\theta_{j,\bz}\neq\theta_{j,\bz'}$.
By identifiability of $\mathrm{ParFam}_j$ in Assumption~\ref{assm2}(ii), we have
$\PP_{j,\bz}\neq \PP_{j,\bz'}$ as probability measures.
Since $\calC_j^\mathrm{can}$ is separating, there exists a set $S_{\bz,\bz'}\in\calC_j^\mathrm{can}$ such that
$\PP_{j,\bz}(S_{\bz,\bz'})\neq \PP_{j,\bz'}(S_{\bz,\bz'})$.
By the enumeration $\calC_j^\mathrm{can}=\{T_{1,j},T_{2,j},\dots\}$, we may write
$S_{\bz,\bz'}=T_{m(\bz,\bz'),\,j}$ for some index $m(\bz,\bz')\ge 1$.
Consequently, for every $t\ge m(\bz,\bz')$, we have
$S_{\bz,\bz'}\in\bar{\mathcal D}_j^{(t)}$, and therefore Condition~B holds for this pair
$(\bz,\bz')$.

Since there are finitely many pairs $(\bz,\bz')$ with $\bz\neq\bz'$,
define $t_3:=\max_{\bz\neq\bz'} m(\bz,\bz')  < \infty$ and $\mathcal N_1^{(3)}:=\bigcup_{\bz\neq\bz'} \mathcal N_{\bz,\bz'}^{(3)}$.
Then $\mathcal N_1^{(3)}$ has Lebesgue measure zero in $\Omega_K(\MTheta;\calG^\star,\MQ^\star)$ and constrains only
the measurement parameters. Moreover, for every $t\ge t_3$ and every
$\MTheta\in\Omega_K(\MTheta;\calG^\star,\MQ^\star)\setminus\mathcal N_1^{(3)}$,
Condition~B holds for the discretization $\bar{\mathcal D}^{(t)}$.

Finally, Condition~B implies that for every $\bz\neq\bz'$ there exists a row of $\MN_3^{(t)}$
(using $S_{\bz,\bz'}$ for item $j$ and $\calX$ for all other items in block~3) on which the
$\bz$-th and $\bz'$-th columns differ, hence all columns are pairwise distinct.
Together with the fact that $\MN_3^{(t)}$ contains the all-$\calX$ row
$\mathbf 1_{2^K}^\top$, this rules out collinearity of any two columns and yields
\[
\mathrm{rk}_k(\MN_3^{(t)})\ge 2,
\qquad\text{for all }t\ge t_3,\ 
\MTheta\in\Omega_K(\MTheta;\calG^\star,\MQ^\star)\setminus\mathcal N_1^{(3)}.
\]

For each $t\ge \max\{t_1,t_2,t_3\}$ define
$\mathcal N_1^{(t)}:=\mathcal N_{1,t}^{(1)}\cup\mathcal N_{1,t}^{(2)}\cup\mathcal N_1^{(3)}$.
By construction $\mathcal N_1^{(t)}$ has Lebesgue measure zero in
$\Omega_K(\MTheta;\calG^\star,\MQ^\star)$ and depends only on $(\MB,\bgamma)$, not on $\p$.

Now define the global exceptional set
\[
\mathcal N_\infty
:=\Big(\bigcup_{t\ge \max\{t_1,t_2,t_3\}}\mathcal N_{1,t}^{(1)}\Big)\ \cup\
\Big(\bigcup_{t\ge \max\{t_1,t_2,t_3\}}\mathcal N_{1,t}^{(2)}\Big)\ \cup\ \mathcal N_1^{(3)}.
\]
Since $\{\mathcal N_{1,t}^{(1)}\}_{t\ge \max\{t_1,t_2,t_3\}}$ and
$\{\mathcal N_{1,t}^{(2)}\}_{t\ge \max\{t_1,t_2,t_3\}}$ are countable families of Lebesgue-null sets, the union
$\mathcal N_\infty$ is also Lebesgue-null, and it still constrains only $(\MB,\bgamma)$.

Fix $\MTheta\in\Omega_K(\MTheta;\calG^\star,\MQ^\star)\setminus\mathcal N_\infty$.
Define $t_0(\MTheta):=\max\{t_1, t_2, t_3\}.$
Then for all $t\ge t_0(\MTheta)$ we have simultaneously
\[
\mathrm{rk}_k(\MN_1^{(t)})=2^K,\qquad
\mathrm{rk}_k(\MN_2^{(t)})=2^K,\qquad
\mathrm{rk}_k(\MN_3^{(t)})\ge2.
\]
This completes the proof.

\subsubsection{Proof of Lemma~\ref{lem:sign flip}}\label{subsubsection: key}
To avoid ambiguity, we fix the identification
\[
f_b:\{0,1\}^K \longrightarrow \mathcal P([K]),\qquad 
f_b(v):=\{\,k\in[K]: v_k=1\,\}.
\]
Conversely, for a subset $T\subseteq[K]$, we write $(f_b)^{-1}(T)\in\{0,1\}^K$ for its indicator vector.

A permutation $\pi\in S_K$ acts on subsets by $\pi(T):=\{\pi(i): i\in T\}\subseteq[K]$,
and induces the corresponding action on vectors by permuting coordinates:
\[
(\pi\cdot v)_k= v_{\pi^{-1}(k)}\qquad (v\in\{0,1\}^K,\ k\in[K]).
\]

We first establish that $\MQ \sim_{\calK} \MQ'$. Equivalently, there exists a column permutation $\pi\in S_K$ such that
\[
\MQ'_{j,:}=\pi\cdot \MQ_{j,:}\qquad\text{for all }j\in\{1,\dots,J\}.
\]

In fact, for $(\boeta_{j}(\bz))_{1\leq j\le J,\ \bz\in\{0,1\}^K}\in\RR^{2^K\times J}$ and $(\boeta'_{j}(\bz))_{1\leq j\le J,\ \bz\in\{0,1\}^K}\in\RR^{2^K\times J}$ that differ only by a row permutation, the counts of coordinatewise maximal rows are preserved. In the single–index case $S=\{j\}$, a row indexed by $\bz\in\{0,1\}^K$ is maximal in the $2^K\times 1$ submatrix $\boeta_{:,\,\{j\}}$ if and only if
\[
\bz \succeq (f_b)^{-1}(K_j),
\]
hence the number of maximal rows equals $2^{\,K-|K_j|}$. The same conclusion holds for $\boeta'$, so $2^{\,K-|K_j|}=2^{\,K-|K_j'|}$ and therefore $|K_j|=|K_j'|$.

For any distinct $j,j'$, a row is coordinatewise maximal in the $2^K\times 2$ submatrix $\boeta_{:,\,\{j,j'\}}$ if and only if
\[
\bz \succeq (f_b)^{-1}(K_j)\quad\text{and}\quad \bz \succeq (f_b)^{-1}(K_{j'}),
\]
which is equivalent to $\bz \succeq (f_b)^{-1}\!\big(K_j\cup K_{j'}\big)$. Consequently, the number of maximal rows is $2^{\,K-|K_j\cup K_{j'}|}$, and by row–permutation invariance we obtain
\[
|K_j\cup K_{j'}|=|K_j'\cup K_{j'}'|.
\]

More generally, for any finite $S\subseteq\{1,\ldots,J\}$, a row is coordinatewise maximal in $\boeta_{:,\,S}$ if and only if
\[
\bz \succeq (f_b)^{-1}\!\Big(\,\bigcup_{j\in S}K_j\Big),
\]
so the number of maximal rows is $2^{\,K-\big|\cup_{j\in S}K_j\big|}$, which is invariant under row permutations. Hence
\[
\Big|\bigcup_{j\in S}K_j\Big|=\Big|\bigcup_{j\in S}K_j'\Big|\qquad\text{for all }S\subseteq\{1,\ldots,J\}.
\]

For $T\subseteq [J]$, write
\[
N_T \;:=\; \big|\bigcap_{j\in T} K_j\big|
\qquad\text{and}\qquad
N_T' \;:=\; \big|\bigcap_{j\in T} K_j'\big|.
\]
It is straightforward to check
\[
\bigg|\bigcup_{j\in S} K_j\bigg| \;=\; \sum_{\emptyset\neq T\subseteq S} (-1)^{|T|+1} N_T
\quad\text{for all }S\subseteq [J],
\]
and \[\bigg|\bigcup_{j\in S} K'_j\bigg| \;=\; \sum_{\emptyset\neq T\subseteq S} (-1)^{|T|+1} N'_T
\quad\text{for all }S\subseteq [J].\] Since the union cardinalities match for all $S$, we deduce that
\[
N_T \;=\; N_T' \qquad \text{for all } T\subseteq [J].
\]

Next, for each pattern $v\in\{0,1\}^J$ with support $\text{supp}(v):=\{j: v_j=1\}$, define
\[
M_v \;:=\; \big|\{\,k\in [K]: \MQ_{:,k}=v\,\}\big|,
\qquad
M_v' \;:=\; \big|\{\,k\in [K]: \MQ'_{:,k}=v\,\}\big|.
\]
The intersection counts decompose as
\[
N_T \;=\; \sum_{v:\, \text{supp}(v)\supseteq T} M_v
\qquad\text{for all } T\subseteq [J],
\]
and likewise for $N_T'$ in terms of $M_v'$. Similarly, we have
\[
M_v \;=\; M_v' \qquad \text{for all } v\in\{0,1\}^J.
\]
Hence the multisets of columns of $\MQ$ and $\MQ'$ coincide. Equivalently, there exists a column permutation $\pi\in S_K$ such that
\[
\MQ'_{j,:}=\pi\cdot \MQ_{j,:}\qquad\text{for all }j\in\{1,\dots,J\}.
\]

Next, we have the following claim.

\begin{claim}
Suppose that $\MQ$ satisfies the conditions of Theorem~\ref{thm2}, and that $\MQ'_{j,:} = \pi \cdot \MQ_{j,:}$ for all $j$, then the admissible column permutation $\mathfrak{S} \in S_{2^K}$ (acting on the $2^K$ latent states) such that
\(
\eta_{j,\bz}=\eta'_{j,\mathfrak{S}(\bz)}\ (\forall \ j,\bz)
\) can be restricted to the right coset $(\ZZ_2)^K \pi$. 
In other words, every admissible $\mathfrak{S}$ can be written in the form
\[
f_b\big(\mathfrak{S}((f_b)^{-1}(T))\big)
= \pi(T) \,\Delta\, A,
\qquad T \subseteq [K],
\]
for some subset $A \subseteq [K]$, where $A \Delta B = (A\setminus B)\cup(B\setminus A)$ for any two sets $A$ and $B$ 
 
If we further suppose that both $(\MB,\MQ)$ and $(\MB',\MQ')$ satisfy Assumption~\ref{assm1}(b), then the only admissible permutation is $\pi$. 
\end{claim}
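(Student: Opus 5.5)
The plan is to show that $\mathfrak S$ respects the hypercube structure of $\{0,1\}^K$ coordinate by coordinate. I will first show that $\mathfrak S$ maps each coordinate-$k$ edge $\{\bz,\bz\oplus\be_k\}$ onto a coordinate-$\pi(k)$ edge. This forces $\mathfrak S$ to be a hypercube automorphism of the form $T\mapsto \pi(T)\Delta A$. Monotonicity then pins down $A=\emptyset$.

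\textbf{Step 1: the basic invariance.} From $\eta_j(\bz)=\eta'_j(\mathfrak S(\bz))$, and since $\eta_j$ depends on $\bz$ only through $\bz_{K_j}$ while $\eta'_j$ depends only through $\bz_{\pi(K_j)}$ (because $\MQ'_{j,:}=\pi\cdot\MQ_{j,:}$), I would show that $\mathfrak S$ maps the partition of $\{0,1\}^K$ into $\bz_{K_j}$-fibers onto the partition into $\bz_{\pi(K_j)}$-fibers. The inclusion I can check directly is only that $\mathfrak S$ sends level sets of $\eta_j$ to level sets of $\eta'_j$.

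The obstacle is that these level sets may be coarser than the fibers, if $\eta_j$ is not injective in $\bz_{K_j}$. I would handle this in two ways.
\begin{itemize}
\item Restrict first to the coarse but robust invariant supplied by Assumption~\ref{assm1}(b): the argmax set of $\eta_j$ is exactly $\{\bz\succeq\MQ_{j,:}\}$, and it is mapped onto $\{\bz\succeq\MQ'_{j,:}\}$.
\item Refine further using intersections over families of items.
\end{itemize}
I expect this to be the main difficulty. For the first part of the claim, which does not use monotonicity, I would try to argue that intersecting the level-set partitions over the items $j$ with $k\notin K_j$ already separates states differing outside coordinate $k$. If that fails, I would fall back on the generic exclusion already built into Lemma~\ref{lem:kruskal gid}.

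\textbf{Step 2: the subset condition produces coordinate edges.} Fix $k$. By condition (ii) of Theorem~\ref{thm2}, for every $k'\neq k$ there is an item $j$ with $k'\in K_j$ and $k\notin K_j$. Hence $\bigcup_{j:\,k\notin K_j}K_j=[K]\setminus\{k\}$. Intersecting the fiber partitions over these items gives the equivalence relation ``agree on $[K]\setminus\{k\}$''. Its classes are exactly the edges $\{\bz,\bz\oplus\be_k\}$.

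The same intersection taken for $\eta'$ uses the items with $\pi(k)\notin\pi(K_j)$, which are the same items. This yields the edges in direction $\pi(k)$. Therefore $\mathfrak S$ maps $k$-edges bijectively onto $\pi(k)$-edges, for every $k$.

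\textbf{Step 3: conclude the coset form.} A bijection of $\{0,1\}^K$ that maps direction-$k$ edges to direction-$\pi(k)$ edges for all $k$ is determined by the image of $\mathbf 0$. Set $A:=f_b(\mathfrak S(\mathbf 0))$. Induction on $|T|$, flipping one coordinate at a time along edges, gives $f_b(\mathfrak S((f_b)^{-1}(T)))=\pi(T)\Delta A$. Thus $\mathfrak S\in(\ZZ_2)^K\pi$.

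\textbf{Step 4: monotonicity eliminates the flips.} Under Assumption~\ref{assm1}(b) for both parameterizations, take any item $j$ with $k\in K_j$; such an item exists because $\MQ_3$ has no zero column. The argmax set $\{\bz\succeq\MQ_{j,:}\}$ contains $\mathbf 1$ and is mapped by $\mathfrak S$ onto $\{\bz\succeq\MQ'_{j,:}\}$.

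Intersecting these argmax sets over all $j$ gives $\{\mathbf 1\}$ on both sides, since $\bigcup_jK_j=[K]$. Hence $\mathfrak S(\mathbf 1)=\mathbf 1$. Then $\pi([K])\Delta A=[K]$, so $A=\emptyset$ and $\mathfrak S=\pi$.

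Feeding this back into $\eta_j(\bz)=\eta'_j(\pi\cdot\bz)$ and using the invertible map $\boeta\leftrightarrow\bbeta$ gives $\beta_{j,S}=\beta'_{j,\pi(S)}$, which is the remaining step of Lemma~\ref{lem:sign flip}.
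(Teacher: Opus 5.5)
Your Steps 2--4 match the paper's proof. Step 2 is Proposition~\ref{prop:Di-characterization}: there $D_i(T)\cap\pi(Q)=\varnothing$ for every $Q\in\mathcal F$ with $\pi^{-1}(i)\notin Q$, and the subset condition leaves only $\{i\}$, which is your ``agree on $[K]\setminus\{k\}$'' meet. Step 3 is the same induction on $|T|$. Step 4 is the same top-value argument that the paper writes through $\sum_S\beta'_{j,S}$.

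\textbf{The gap is Step 1.} You flag it but do not close it, and none of your remedies closes it as stated. Without a genericity restriction the claim is actually false. Take $K=4$ and items of the four types $K_j\in\{\{1,2\},\{2,3\},\{3,4\},\{1,4\}\}$, each used three times.
\begin{itemize}
\item Condition (i) of Theorem~\ref{thm2} holds with three copies of one $4\times4$ block with unit diagonal, and the columns of $\MQ$ are pairwise incomparable.
\item Put $\eta_j(\bz)=\beta_{j,\varnothing}+\beta(z_a+z_b)$ with $\beta>0$ for $K_j=\{a,b\}$. This satisfies Assumption~\ref{assm1}(b) and the nonzero-main-effect constraint.
\item Every item takes the same value at $(1,0,1,0)$ and $(0,1,0,1)$, because each type contains one odd and one even index.
\item With $\MQ'=\MQ$, $\eta'=\eta$ and $\pi=\mathrm{id}$ (the only permutation fixing every row), the transposition of these two states is admissible. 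It fixes $\mathbf 0$ and is not the identity, so it lies neither in $(\ZZ_2)^K\pi$ nor equals $\pi$.
\end{itemize}
Here all argmax sets are preserved, and the level-set meets are too coarse. So your remedies (a) and (b) cannot work.

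Your fallback to $\calN_\infty$ is not adequate as stated either. That set only enforces full column rank of $\MN_1^{(t)},\MN_2^{(t)}$ and separation of each pair $\bz\neq\bz'$ by one chosen block-3 item. It does not give separation of the $\bz_{K_j}$-fibers by each individual $\eta_j$. The example above does lie in $\calN_\infty$, since its block-1 columns at the two states coincide. Exploiting such a global condition would still need a new argument.

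\textbf{The fix} is an explicit per-item genericity restriction on the true parameter, plus a counting step.
\begin{itemize}
\item Exclude the set where $\eta_j(\bz)=\eta_j(\bz')$ for some $j$ and some $\bz_{K_j}\neq\bz'_{K_j}$. Each such equation is a proper hyperplane in $\bbeta_j$: the coefficient of $\beta_{j,\{\ell\}}$ is $z_\ell-z'_\ell\neq0$ for any $\ell\in K_j$ where the states differ. So this is a null set, exactly as in Lemma~\ref{lem:full_ladder_implies_feature_separation_Q} for the polytomous case.
\item Off this set $\eta_j$ takes $2^{|K_j|}$ distinct values, hence so does $\eta'_j=\eta_j\circ\mathfrak S^{-1}$. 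Since $\eta'_j$ is constant on the $2^{|K_j|}$ fibers of $\bz_{\pi(K_j)}$, its level sets are exactly those fibers.
\item Therefore $\mathfrak S$ maps $\bz_{K_j}$-fibers onto $\bz_{\pi(K_j)}$-fibers. This is your Step 1, and it is precisely the hypothesis $\MP_{\mathfrak S}\in G_\pi(K_j)$ of Proposition~\ref{prop:single-Q}.
\end{itemize}

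The paper's proof omits the same step. Calling Lemma~\ref{lem:main} an ``equivalent reformulation'' presupposes that $\mathfrak S$ preserves the whole subspaces $\MY E_{K_j}$. That does not follow from the single identity $\eta_j=\eta'_j\circ\mathfrak S$ without this restriction.
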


A detailed proof of this reduction is given in the next subsection.

Based on this claim, we directly conclude that $(\p,\MB,\calG,\MQ)\sim_{\calK}(\p',\MB',\calG',\MQ')$. 

\subsubsection{Proof of the Claim}
We now state a lemma, which is essentially an equivalent reformulation of the first part of the claim in Subsection~\ref{subsubsection: key}.  However, for clarity of exposition, we still restate it in a slightly different language.

Let $\mathcal P([K])$ denote the family of all subsets of $[K]=\{1,\dots,K\}$.  
All $2^K$-dimensional vectors and $2^K\times 2^K$ matrices below are indexed by $\mathcal P([K])$ in lexicographic order.   
Define
\[
Y_{Q,T}=\mathbf 1\{T\subseteq Q\},\qquad Q,T\subseteq[K].
\]
Given $\bbeta=(\beta_T)_{T\subseteq[K]}\in\mathbb R^{2^K}$, set 
\[
\boeta=\MY\bbeta,\qquad \text{so that }\ \eta_Q=\sum_{T\subseteq Q}\beta_T.
\]

Let $f:\mathcal P([K])\to\mathcal P([K])$ be a bijection, and let $\MP_f$ be its associated permutation matrix acting on coordinates indexed by $\mathcal P([K])$:
\[
(\MP_f v)_Q = v_{f^{-1}(Q)},\qquad v\in\mathbb R^{2^K}.
\]
Hence $\MP_f$ simply reorders the $2^K$ coordinates of any vector according to $f$, and thus $\MP_f\in S_{2^K}$.  
We then define
\[
\boeta'=\MP_f\boeta,\quad 
\bbeta'=\MY^{-1}\MP_f\MY\bbeta.
\]

For any $Q \subseteq[K]$, define
\[
\Pi_{Q}=\mathrm{Diag}\big(\mathbf{1}\{\,Q'\subseteq Q\,\}\big)_{Q'\subseteq[K]}\in\mathbb R^{2^K\times 2^K},\]
\begin{equation}\label{eq: E_Q}E_{Q}=\mathrm{Im}(\Pi_{Q})=\big\{\x\in\mathbb R^{2^K}:x_{Q'}=0\text{ if }Q'\nsubseteq Q\big\}.
\end{equation}

Let $\mathcal F=\{Q_1,\dots,Q_l\}\subseteq\mathcal P([K])$ be a family of subsets of $[K]$, and let $\pi\in S_K$ be a permutation on $\{1,\dots,K\}$.  
For each subset $Q\subseteq[K]$, write
\[
\pi(Q):=\{\pi(i):i\in Q\}\subseteq[K],
\]
so that $\pi$ acts naturally on $\mathcal P([K])$. We define the set of row permutations preserving the corresponding subspaces:
\[
G_\pi(Q)=\big\{\MP_f:\ \MY^{-1}\MP_f\MY E_{Q}\subseteq E_{\pi(Q)}\big\},
\]
and their intersection over the family $\mathcal F$:
\[
H_\pi(\mathcal F)=\bigcap_{Q\in\mathcal F}G_\pi(Q)=\bigcap_{i=1}^{l}G_\pi(Q_i).
\]

To describe coordinate structure, define the signature map
\[
\phi_{\mathcal F}:[K]\to\{0,1\}^{l},\qquad 
\phi_{\mathcal F}(i):=(\mathbf 1\{\,i\in Q_j\,\})_{1\leq j\le l}.
\]

\begin{lemma}\label{lem:main}
Assume that for every $i\in[K]$ there exists some $Q\in\mathcal F$ such that $i\notin Q$,  
and that for any distinct $i,j\in[K]$, neither $\phi_{\mathcal F}(i)\preceq \phi_{\mathcal F}(j)$ nor $\phi_{\mathcal F}(j)\preceq \phi_{\mathcal F}(i)$ holds.   
Then the intersection $H_\pi(\mathcal F)$ coincides with the coset of $(\mathbb Z_2)^K\rtimes S_K$ corresponding to $\pi$, namely
\[
H_\pi(\mathcal F)\ =\ (\mathbb Z_2)^K\pi,
\]
where $(\mathbb Z_2)^K$ denotes the subgroup of coordinatewise bit-flip operations 
\[
T\ \mapsto\ T\ \Delta\ A,\qquad A\subseteq[K].
\]
Equivalently, every bijection $f\in H_\pi(\mathcal F)$ is uniquely of the form
\[
f(T)\ =\ \pi(T)\ \Delta\ A,\qquad  A\subseteq[K].
\]
\end{lemma}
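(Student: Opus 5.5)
The plan is to read the condition defining $G_\pi(Q)$ as a statement about functions on $\mathcal P([K])$ and then reduce everything to partitions of $\mathcal P([K])$. Since $\MY$ is the subset-inclusion (zeta) matrix, $\bbeta\in E_Q$ holds exactly when $\boeta=\MY\bbeta$, viewed as a function $T\mapsto\eta_T$, depends on $T$ only through $T\cap Q$. Moreover, every such function arises this way, because Möbius inversion of a function of $T\cap Q$ is supported on subsets of $Q$.

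The condition $\MY^{-1}\MP_f\MY E_Q\subseteq E_{\pi(Q)}$ therefore says the following. Whenever $\eta$ depends only on $T\cap Q$, the function $\eta'_T=\eta_{f^{-1}(T)}$ depends only on $T\cap\pi(Q)$. Apply this to an $\eta$ that is injective on the classes $\{T: T\cap Q=C\}$. This shows $f\in G_\pi(Q)$ if and only if $f$ maps each block of the partition $\mathcal R_Q:=\{T\cap Q=C\}_{C\subseteq Q}$ into a block of $\mathcal R_{\pi(Q)}$.

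Both partitions have $2^{|Q|}$ blocks, each of size $2^{K-|Q|}$. Since $f$ is a bijection, the block-to-block map it induces is a bijection, so $f$ maps $\mathcal R_Q$ onto $\mathcal R_{\pi(Q)}$ exactly. The converse inclusion $(\mathbb Z_2)^K\pi\subseteq H_\pi(\mathcal F)$ is then immediate. If $f(T)=\pi(T)\Delta A$, then $f(T)\cap\pi(Q)=\pi(T\cap Q)\Delta(A\cap\pi(Q))$, which depends only on $T\cap Q$.

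For the forward inclusion, I will first show that the set of $Q$ whose partition is respected in this sense is closed under intersection. Given $Q,Q'$, the join (finest common coarsening) of $\mathcal R_Q$ and $\mathcal R_{Q'}$ is $\mathcal R_{Q\cap Q'}$. Indeed, if $T\cap Q\cap Q'=T'\cap Q\cap Q'$, the set $T'':=(T\cap Q)\cup(T'\setminus Q)$ satisfies $T''\cap Q=T\cap Q$ and $T''\cap Q'=T'\cap Q'$, so $T$ and $T'$ are linked in two steps. A bijection carrying $\mathcal R_Q\to\mathcal R_{\pi(Q)}$ and $\mathcal R_{Q'}\to\mathcal R_{\pi(Q')}$ carries joins to joins. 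Hence $f$ carries $\mathcal R_{Q\cap Q'}$ to $\mathcal R_{\pi(Q)\cap\pi(Q')}=\mathcal R_{\pi(Q\cap Q')}$.

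Next I will use the signature hypothesis to produce singletons. For $i\in[K]$, set $R_i:=\bigcap\{Q\in\mathcal F: i\in Q\}$. The family being intersected is nonempty: if $i$ lay in no $Q$, then $\phi_{\mathcal F}(i)=0\preceq\phi_{\mathcal F}(j)$ for any $j\neq i$, contradicting the hypothesis since $K\ge2$. For $j\neq i$, $j\in R_i$ would mean $\phi_{\mathcal F}(i)\preceq\phi_{\mathcal F}(j)$, which is excluded. Hence $R_i=\{i\}$, and $f$ maps the two-block partition $\{i\in T\},\{i\notin T\}$ onto $\{\pi(i)\in f(T)\},\{\pi(i)\notin f(T)\}$. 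Thus there is $a_i\in\{0,1\}$ with $\mathbf 1\{\pi(i)\in f(T)\}=\mathbf 1\{i\in T\}\oplus a_i$ for all $T$.

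Combining over all $i$ gives $f(T)=\pi(T)\Delta A$ with $A=\{\pi(i):a_i=1\}$. This $A$ is unique because $A=f(\emptyset)$, which yields $H_\pi(\mathcal F)=(\mathbb Z_2)^K\pi$. I do not plan to use the assumption that each $i$ is excluded from some $Q$. It only rules out $R_i$-type degeneracies such as $[K]\in\mathcal F$ contributing trivially, and I will remark on this rather than rely on it.

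The step I expect to need the most care is the first one. It requires checking that $E_Q$ corresponds exactly to functions of $T\cap Q$ under the paper's lexicographic indexing, and that "into" upgrades to "onto" for blocks. The join argument is short but must be stated precisely, for bijections rather than for arbitrary maps.
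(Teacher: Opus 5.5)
Your proof is correct, and its key step differs from the paper's. Both arguments use the same single-$Q$ characterization (Proposition~\ref{prop:single-Q}: $\MP_f\in G_\pi(Q)$ iff $f$ carries the $Q$-blocks onto the $\pi(Q)$-blocks) and the same verification that $(\mathbb Z_2)^K\pi\subseteq H_\pi(\mathcal F)$.

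\textbf{The paper's route.} For the reverse inclusion the paper uses the sets $Q\in\mathcal F$ that \emph{exclude} $\pi^{-1}(i)$. For each such $Q$, the sets $T$ and $T\Delta\{\pi^{-1}(i)\}$ share a $Q$-block, so $D_i(T)=f(T)\Delta f(T\Delta\{\pi^{-1}(i)\})$ avoids every such $\pi(Q)$. Incomparability then gives $D_i(T)\subseteq\{i\}$, bijectivity upgrades this to equality, and an induction on $|T|$ starting from $A=f(\emptyset)$ gives $f(T)=\pi(T)\Delta A$.

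\textbf{Your route.} You use the sets that \emph{contain} $i$. You observe that the family of $Q$ whose block partition $f$ transports is closed under intersection. This rests on $\mathcal R_Q\vee\mathcal R_{Q'}=\mathcal R_{Q\cap Q'}$, which your linking set $(T\cap Q)\cup(T'\setminus Q)$ establishes correctly, and on the fact that a bijection transporting two partitions transports their join. This lands directly on the two-block partitions $\mathcal R_{\{i\}}$, and you read off each output coordinate as an affine function of one input coordinate. No induction and no nonemptiness step for $D_i(T)$ is needed.

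\textbf{What each buys.} Your route is more structural: the conclusion visibly depends only on the $\cap$-closure of $\mathcal F$. The paper's route is more elementary. It avoids the partition-lattice join lemma and is the same ``single-bit flips map to single-bit flips'' Hamming-graph idea that reappears in Lemma~\ref{lem:poly_disentangle_core}(d).

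\textbf{Two small points.} First, substituting a block-injective $\eta$ into $(\MP_f\eta)_S=\eta_{f^{-1}(S)}$ literally shows that $f^{-1}$ maps $\pi(Q)$-blocks into $Q$-blocks. It is your equal-size counting that converts this into the stated form for $f$, so present the steps in that order. Second, your remark about the first hypothesis is not quite right. That hypothesis is simply implied by incomparability when $K\ge 2$: if $i$ lay in every $Q$, then $\phi_{\mathcal F}(i)$ would be the all-ones vector and would dominate every other signature. So it can be dropped outright. The paper invokes it only to make its index set $\{Q\in\mathcal F:\pi^{-1}(i)\notin Q\}$ nonempty.
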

Now we are ready to give the proof of Lemma \ref{lem:main}. We need three propositions.

\begin{proposition}\label{prop:single-Q}
Let $Q\subseteq[K]$ and, for each $U\in\mathcal P(Q)$, define the $Q$-blocks
\[
B^{Q}(U)\ :=\ \{\,T\subseteq[K]:\ T\cap Q=U\,\}\ \subseteq\mathcal P([K]).
\]
Recall that $f:\mathcal P([K])\to\mathcal P([K])$ is a bijection and $\MP_f$ is its associated permutation matrix acting by $(\MP_f v)_Q=v_{f^{-1}(Q)}$. Then the following are equivalent:
\begin{align*}
\text{\emph{(a)}}\quad& \MP_f\in G_\pi(Q).\\
\text{\emph{(b)}}\quad& \exists\ \text{a bijection }g:\mathcal P(Q)\to\mathcal P(\pi(Q))\ \text{ such that }\ 
f\big(B^{Q}(U)\big)=B^{\pi(Q)}\big(g(U)\big)\ \ \ \forall\,U\in\mathcal P(Q).
\end{align*}
\end{proposition}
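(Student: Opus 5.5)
The plan is to translate the condition defining $G_\pi(Q)$ from $\bbeta$-coordinates into $\boeta$-coordinates, where it becomes a purely combinatorial statement about how $f$ treats the partition of $\mathcal P([K])$ into $Q$-blocks. The hoped-for endpoint is a linear-algebra statement about spaces of block-constant functions, from which (a)$\Leftrightarrow$(b) follows by counting.

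\emph{Step 1: compute $\MY E_Q$.} For $\bbeta\in E_Q$ we have
\[
\eta_T=\sum_{S\subseteq T,\ S\subseteq Q}\beta_S=\sum_{S\subseteq T\cap Q}\beta_S ,
\]
so $\eta_T$ depends on $T$ only through $T\cap Q$. Thus $\MY E_Q\subseteq W_Q$, where
\[
W_Q:=\operatorname{span}\{\mathbf 1_{B^Q(U)}:U\in\mathcal P(Q)\}.
\]
The blocks $B^Q(U)$ are $2^{|Q|}$ disjoint nonempty sets, so $\dim W_Q=2^{|Q|}$. Also $\dim E_Q=2^{|Q|}$, and $\MY$ is invertible (it is unitriangular in a linear extension of inclusion). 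Hence $\MY E_Q=W_Q$, and likewise $\MY E_{\pi(Q)}=W_{\pi(Q)}$.

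\emph{Step 2: reformulate (a).} Multiplying by the invertible $\MY$, condition (a) is equivalent to $\MP_f W_Q\subseteq W_{\pi(Q)}$. Since $|\pi(Q)|=|Q|$, both spaces have dimension $2^{|Q|}$, and $\MP_f$ is injective, so the inclusion is in fact an equality. From $(\MP_f v)_R=v_{f^{-1}(R)}$ we get $\MP_f\mathbf 1_B=\mathbf 1_{f(B)}$ for every $B\subseteq\mathcal P([K])$. Therefore (a) holds iff, for every $U\in\mathcal P(Q)$, the indicator $\mathbf 1_{f(B^Q(U))}$ lies in $W_{\pi(Q)}$.

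\emph{Step 3: identify the indicators in $W_{\pi(Q)}$.} A $0/1$ vector lies in $W_{\pi(Q)}$ exactly when it is constant on each $\pi(Q)$-block. So the condition says that each $f(B^Q(U))$ is a union of $\pi(Q)$-blocks.

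\emph{Step 4: (a)$\Rightarrow$(b).} Since $f$ is a bijection, the $2^{|Q|}$ sets $f(B^Q(U))$ are disjoint, nonempty, and cover $\mathcal P([K])$. By Step 3, each is a union of at least one of the $2^{\pi(|Q|)}$, i.e.\ $2^{|Q|}$, $\pi(Q)$-blocks. Counting blocks then forces each image to be exactly one block, $f(B^Q(U))=B^{\pi(Q)}(g(U))$. Disjointness makes $g$ injective, and $g:\mathcal P(Q)\to\mathcal P(\pi(Q))$ maps between sets of equal size, so $g$ is a bijection.

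\emph{Step 5: (b)$\Rightarrow$(a).} Under (b), $\MP_f\mathbf 1_{B^Q(U)}=\mathbf 1_{B^{\pi(Q)}(g(U))}\in W_{\pi(Q)}$ for every $U$. By linearity $\MP_f W_Q\subseteq W_{\pi(Q)}$, and pulling back through $\MY$ as in Step 2 gives (a).

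The only step requiring real care is Step 1, namely the identity $\MY E_Q=W_Q$. This is handled by the dimension count together with invertibility of $\MY$; everything after it is bookkeeping with block partitions.
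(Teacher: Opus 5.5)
Your proposal is correct and takes essentially the same route as the paper: pass to $\boeta$-coordinates via the invertible $\MY$, identify $\MY E_Q$ with the vectors that are constant on $Q$-blocks, and read (a) as saying $\MP_f$ sends $Q$-block-constant vectors to $\pi(Q)$-block-constant vectors, hence maps blocks onto blocks. Your dimension count for $\MY E_Q=W_Q$ and your block-counting argument in Step~4 justify two points the paper only asserts; one minor typo: $2^{\pi(|Q|)}$ should read $2^{|\pi(Q)|}$.
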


\begin{proof}
Recall $E_Q=\big\{\x\in\mathbb R^{2^K}:x_{Q'}=0\text{ if }Q'\nsubseteq Q\big\}$(See \eqref{eq: E_Q}). For $\bbeta\in E_Q$ and any $S\subseteq[K]$,
\[
(\MY\beta)_S=\eta_S=\sum_{T\subseteq S}\beta_T=\sum_{T\subseteq S\cap Q}\beta_T,
\]
so the image can be written as
\[
Y_Q:=\MY E_Q
=\Big\{\,y\in\RR^{2^K}:\ \exists\,h:\mathcal P(Q)\to\RR\ \text{such that }\
y_S=h(S\cap Q)\ \ \forall\,S\subseteq[K]\Big\}.
\]
Similarly,
\[
Y_{\pi(Q)}
=\Big\{\,y\in\RR^{2^K}:\ \exists\,\tilde h:\mathcal P(\pi(Q))\to\RR\ \text{such that }\
y_S=\tilde h(S\cap \pi(Q))\ \ \forall\,S\subseteq[K]\Big\}.
\]
Since $\MY$ is invertible, $\MY^{-1}\MP_f\MY E_Q\subseteq E_{\pi(Q)}$ if and only if $\MP_fY_Q\subseteq Y_{\pi(Q)}.$

Thus $\MP_f\in G_\pi(Q)$ if and only if $\MP_f$ maps vectors that are constant on each $B^{Q}(U)$ to vectors that are constant on each $B^{\pi(Q)}(V)$. This holds if and only if $f$ maps each block $B^{Q}(U)$, as a set, onto some block $B^{\pi(Q)}(V)$. Because $\{B^{Q}(U)\}_{U\in\mathcal P(Q)}$ is a partition and $f$ is a bijection, these images define a unique bijection $g:\mathcal P(Q)\to\mathcal P(\pi(Q))$ with
\[
f\big(B^{Q}(U)\big)=B^{\pi(Q)}\big(g(U)\big)\qquad\forall\,U\in\mathcal P(Q).
\]
This proves \((a)\) and \((b)\) are equivalent.
\end{proof}

\begin{proposition}\label{prop:lower}
For any $\mathcal F$,
$(\mathbb Z_2)^K\pi \subseteq H_\pi(\mathcal F).$
\end{proposition}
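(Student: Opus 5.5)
We must show that every bijection of the form $f(T)=\pi(T)\,\Delta\,A$ with $A\subseteq[K]$ lies in $G_\pi(Q)$ for every $Q\in\mathcal F$. Since $H_\pi(\mathcal F)$ is the intersection of the $G_\pi(Q)$, it suffices to fix an arbitrary $Q\subseteq[K]$ and check membership in $G_\pi(Q)$. For that check we will use the block characterization in Proposition~\ref{prop:single-Q}, condition (b). No hypothesis on $\mathcal F$ is needed, which matches the statement ``for any $\mathcal F$''.

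Fix $Q$ and $A$, and let $f(T)=\pi(T)\Delta A$. This $f$ is a bijection of $\mathcal P([K])$, because it is the composition of the bijection $T\mapsto\pi(T)$ with the involution $S\mapsto S\Delta A$.

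The key computation is how $f$ interacts with intersection by $\pi(Q)$. Since $\pi$ is a bijection of $[K]$, we have $\pi(T)\cap\pi(Q)=\pi(T\cap Q)$. Since $\Delta$ distributes over $\cap$, we also have $(S\Delta A)\cap R=(S\cap R)\Delta(A\cap R)$. Combining these,
\[
f(T)\cap\pi(Q)=\pi(T\cap Q)\,\Delta\,(A\cap\pi(Q)).
\]
Define $g:\mathcal P(Q)\to\mathcal P(\pi(Q))$ by $g(U):=\pi(U)\Delta(A\cap\pi(Q))$. This $g$ is well defined, since both $\pi(U)$ and $A\cap\pi(Q)$ are subsets of $\pi(Q)$. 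It is a bijection by the same composition argument used for $f$.

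The displayed identity says that whenever $T\cap Q=U$, we get $f(T)\cap\pi(Q)=g(U)$. Hence $f(B^Q(U))\subseteq B^{\pi(Q)}(g(U))$ for every $U\in\mathcal P(Q)$.

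To upgrade this inclusion to equality, compare cardinalities. We have $|B^Q(U)|=2^{K-|Q|}=2^{K-|\pi(Q)|}=|B^{\pi(Q)}(g(U))|$, and $f$ is injective, so the inclusion is an equality. Alternatively, the same argument applied to $f^{-1}(S)=\pi^{-1}(S\Delta A)$ gives the reverse inclusion.

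Therefore condition (b) of Proposition~\ref{prop:single-Q} holds, so $\MP_f\in G_\pi(Q)$. Since $Q$ was arbitrary, $\MP_f\in H_\pi(\mathcal F)$, which proves $(\mathbb Z_2)^K\pi\subseteq H_\pi(\mathcal F)$.

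The only real step is verifying the identity $f(T)\cap\pi(Q)=g(T\cap Q)$, that is, that bit-flips and coordinate permutations respect the partition into $Q$-blocks. I expect no serious obstacle. The one point needing care is to keep the convention for the coset $(\mathbb Z_2)^K\pi$ consistent: the map must be written as $T\mapsto\pi(T)\Delta A$, matching Lemma~\ref{lem:main}. The other convention, $\pi(T\Delta A')$, gives the same set of maps anyway, via $A=\pi(A')$.
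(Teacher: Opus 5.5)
Your proof is correct and follows the paper's argument. Both compute $(\pi(T)\Delta A)\cap\pi(Q)=\pi(T\cap Q)\Delta(A\cap\pi(Q))$ to show that $T\mapsto\pi(T)\Delta A$ sends each $Q$-block into the $\pi(Q)$-block indexed by $\pi(U)\Delta(A\cap\pi(Q))$, and then invoke Proposition~\ref{prop:single-Q}(b). Your cardinality (or inverse-map) step, which upgrades the inclusion to equality, makes explicit the ``onto'' that the paper asserts without separate justification.
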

\begin{proof}
Fix $A\subseteq[K]$ and define $\tau_A:\mathcal P([K])\to\mathcal P([K])$ by
\[
\tau_A(T):=\pi(T)\ \Delta\ A.
\]
We claim that for every $Q\subseteq[K]$ and every $U\in\mathcal P(Q)$,
\[
\tau_A\big(B^{Q}(U)\big)\ =\ B^{\pi(Q)}\!\big(\,\pi(U)\ \Delta\ (A\cap \pi(Q))\,\big).
\]
Indeed, if $T\in B^{Q}(U)$ then $T\cap Q=U$, and by the identity 
$(X\Delta Y)\cap Z=(X\cap Z)\Delta(Y\cap Z)$ we have
\[
\big(\pi(T)\Delta A\big)\cap\pi(Q) = \big(\pi(T)\cap \pi(Q)\big) \Delta \big(A\cap \pi(Q)\big)
 = \pi(T\cap Q) \Delta\big(A\cap \pi(Q)\big)
 = \pi(U) \Delta \big(A\cap \pi(Q)\big),
\]
which is independent of the particular $T$ in the block and depends only on $U$.
Thus $\tau_A$ maps the block $B^{Q}(U)$ onto the block 
$B^{\pi(Q)}\!\big(\pi(U)\Delta(A\cap \pi(Q))\big)$, so the associated permutation
matrix $P_{\tau_A}$ satisfies $P_{\tau_A}\in G_\pi(Q)$ by Proposition~\ref{prop:single-Q}.
Since $Q\subseteq[K]$ was arbitrary, we have $$P_{\tau_A}\in \bigcap_{Q\in\mathcal F}G_\pi(Q)=H_\pi(\mathcal F).$$

Finally, the set $\{\tau_A: A\subseteq[K]\}$ is exactly the right coset $(\mathbb Z_2)^K\pi$, hence $(\mathbb Z_2)^K\pi\subseteq H_\pi(\mathcal F)$.
\end{proof}
\begin{proposition}\label{prop:upper}
Assume that for any distinct $i,j\in[K]$, neither $\phi_{\mathcal F}(i)\preceq \phi_{\mathcal F}(j)$ nor $\phi_{\mathcal F}(j)\preceq \phi_{\mathcal F}(i)$ holds, then we have
$$ H_\pi(\mathcal F)\subseteq (\mathbb Z_2)^K\pi.$$
\end{proposition}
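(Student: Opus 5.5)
The plan is to turn the block-preservation characterization of Proposition~\ref{prop:single-Q} into a statement about how $f$ acts on edges of the Boolean cube $\mathcal P([K])$. Fix $\MP_f\in H_\pi(\mathcal F)$. By Proposition~\ref{prop:single-Q}, applied to every $Q\in\mathcal F$, $f$ maps each block $B^{Q}(U)$ onto a block $B^{\pi(Q)}(g_Q(U))$. Since $f$ is a bijection, this gives the following equivalence for all $T,T'\subseteq[K]$ and $Q\in\mathcal F$:
\[
T\cap Q=T'\cap Q\iff f(T)\cap\pi(Q)=f(T')\cap\pi(Q).
\]
The aim is to show that $f(T\cup\{k\})=f(T)\,\Delta\,\{\pi(k)\}$ for every $T$ and every $k\notin T$. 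Once this is established, induction on $|T|$ starting from $T=\varnothing$ gives $f(T)=\pi(T)\,\Delta\,A$ with $A:=f(\varnothing)$. That is exactly the coset form $(\mathbb Z_2)^K\pi$.

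\textbf{Key step.} Fix $T$ and $k\notin T$, and set $T'=T\cup\{k\}$ and $D:=f(T)\,\Delta\,f(T')$. Because $f$ is injective, $D\neq\varnothing$. Also $T\cap Q\neq T'\cap Q$ holds exactly when $k\in Q$. By the equivalence above, for each $Q\in\mathcal F$,
\[
D\cap\pi(Q)\neq\varnothing\iff k\in Q.
\]
Now take any $i\in D$ and put $m:=\pi^{-1}(i)$. For every $Q\in\mathcal F$ with $m\in Q$ we have $i\in\pi(Q)$, so $D\cap\pi(Q)\neq\varnothing$, and therefore $k\in Q$. In the signature notation this says $\phi_{\mathcal F}(m)\preceq\phi_{\mathcal F}(k)$.

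The incomparability hypothesis forbids $\phi_{\mathcal F}(m)\preceq\phi_{\mathcal F}(k)$ whenever $m\neq k$, so $m=k$. Since $K\ge2$, this hypothesis also implies that no signature is the zero vector, because the zero vector would lie below every other signature. Hence every $i\in D$ equals $\pi(k)$, so $D\subseteq\{\pi(k)\}$. Combined with $D\neq\varnothing$, this gives $D=\{\pi(k)\}$.

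\textbf{Conclusion and main obstacle.} The key step is exactly the edge relation $f(T\cup\{k\})=f(T)\,\Delta\,\{\pi(k)\}$. The induction then yields $f(T)=\pi(T)\,\Delta\,f(\varnothing)$ for all $T$, which proves $H_\pi(\mathcal F)\subseteq(\mathbb Z_2)^K\pi$. The step I expect to be the main obstacle is the translation from ``the symmetric difference meets $\pi(Q)$'' to the signature order. The care needed there is to use only the forward implication, and to note that it is incomparability, rather than merely distinct columns, that excludes extra elements in $D$. The side hypothesis that each $i$ lies outside some $Q$ is not needed for this inclusion; it matters only for the reverse inclusion, which Proposition~\ref{prop:lower} already supplies.
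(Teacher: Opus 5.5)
Your proof is correct and follows essentially the same route as the paper. The paper also shows that $f(T)\,\Delta\,f(T\Delta\{k\})$ misses $\pi(Q)$ for every $Q\in\mathcal F$ with $k\notin Q$, so this set lies in $\{j:\phi_{\mathcal F}(\pi^{-1}(j))\preceq\phi_{\mathcal F}(k)\}=\{\pi(k)\}$; it then uses nonemptiness from bijectivity and runs the same induction from $A=f(\varnothing)$. Your remark about the hypothesis that each index lies outside some $Q\in\mathcal F$ is also right: it is not needed for this inclusion, since it is implied by incomparability when $K\ge 2$ and your argument never forms a possibly empty intersection.
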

\begin{proof}
Throughout we identify a permutation matrix $\MP_f$ with its underlying bijection 
$f:\mathcal P([K])\to\mathcal P([K])$ via $(\MP_f v)_S=v_{f^{-1}(S)}$.
Fix $i\in[K]$ and define, for $T\subseteq[K]$,
\[
D_i(T):=f(T)\ \Delta\ f\bigl(T\Delta\{\pi^{-1}(i)\}\bigr).
\]
We now state the following proposition, which characterizes $D_i(T)$ and will be used below.
\begin{proposition}\label{prop:Di-characterization}
Under the assumptions of Proposition~\ref{prop:upper}, for every $\MP_f\in H_\pi(\mathcal F)$, every $i\in[K]$, and every $T\subseteq[K]$, one has
\[
D_i(T)=\{\,i\,\}.
\]
\end{proposition}

With Proposition~\ref{prop:Di-characterization}, take any $j\in[K]$ and $T\subseteq[K]$. 
Substituting $i=\pi(j)$ in the definition of $D_i(T)$ yields
\[
f\bigl(T\Delta\{j\}\bigr)
\;=\;
f(T)\ \Delta\ \{\pi(j)\}.
\]
Let $A:=f(\varnothing)$, we will prove that $f(T)=A\ \Delta\ \pi(T)$ by induction on $m:=|T|$.

\begin{enumerate}
    \item{Base case $m=0$.} For $T=\varnothing$, we have $f(\varnothing)=A=A\ \Delta\ \pi(\varnothing)$.
    \item{Inductive step.}
Assume $f(S)=A\ \Delta\ \pi(S)$ holds for all $S\subseteq[K]$ with $|S|=m$. Let $T=S\cup\{j\}$ with $j\notin S$. By Proposition~\ref{prop:Di-characterization} with $i=\pi(j)$, 
\[
f(S\Delta\{j\})=f(S)\ \Delta\ \{\pi(j)\}.
\]
Since $S\Delta\{j\}=S\cup\{j\}=T$ and $\pi(T)=\pi(S\cup\{j\})=\pi(S)\ \Delta\ \{\pi(j)\}$, we obtain
\[
f(T)=f(S)\ \Delta\ \{\pi(j)\}
=\bigl(A\ \Delta\ \pi(S)\bigr)\ \Delta\ \{\pi(j)\}
= A\ \Delta\ \bigl(\pi(S)\ \Delta\ \{\pi(j)\}\bigr)
= A\ \Delta\ \pi(T).
\]
This completes the induction. 
\end{enumerate}
 Hence for every $\MP_f\in H_\pi(\mathcal F)$, the underlying bijection has the form $f(T)=A\Delta\pi(T)$. In other words, $H_\pi(\mathcal F)\subseteq(\mathbb Z_2)^K\pi$. 

\end{proof}
Combining Proposition~\ref{prop:upper} with Proposition~\ref{prop:lower}, we complete the proof of Lemma~\ref{lem:main}

Based on this lemma, we can obtain the sum of each column of $\MB'$. Let $\mathbf 1$ be the all-ones vector. For any $S\subseteq[K]$ and $j\in\{1,\dots,J\}$, using $\mathbf 1^\top \MY^{-1}=e_{[K]}^\top$, it follows that
\[
\sum_S \beta'_{j,S}
=\mathbf 1^\top \bbeta'
=e_{[K]}^\top \MP_f \MY\bbeta_j
=e_{\,f([K])}^\top \MY\bbeta_j
=\sum_{U\subseteq f([K])}\beta_{j,U}.
\]
Since $f([K])=\pi([K])\Delta A=[K]\setminus A$, we obtain:

$$\sum_{S}\beta'_{j,S}
=\sum_{U\subseteq [K]\setminus A}\ \beta_{j,U}.$$

Now since both $(\MB,\MQ)$ and $(\MB',\MQ')$ both satisfy Assumption~\ref{assm1}(b), we must have 
\[
\sum_S \beta'_{j,S}= \sum_S \beta_{j,S}\quad j=1,\dots,J.
\]This forces $A\cap K_j=\varnothing$ for all $j=1,\dots,J$. Consequently, $A=\varnothing$ and $f(T)=\pi(T)$ is the only admissible permutation.

\subsubsection{Proof of Proposition~\ref{prop:Di-characterization}}
We first claim that, for all $f\in H_\pi(\mathcal F)$ and all $i,T$, \[ D_i(T)\ \subseteq\ \bigcap_{\substack{Q\in\mathcal F\\ \pi^{-1}(i)\notin Q}} (\pi(Q))^{c} =\{\,j\in[K]:\ \phi_\calF(\pi^{-1}(j))\preceq \phi_\calF(\pi^{-1}(i))\}. \]Note that by the assumption that for every $i\in[K]$ there must exist some $Q\in\mathcal F$ such that $i\notin Q$, the index set 
$\{\,Q\in\mathcal F:\ \pi^{-1}(i)\notin Q\,\}$
is nonempty. Hence the intersection is taken over a nonempty family

In fact, let $Q\in\mathcal F$ satisfy $\pi^{-1}(i)\notin Q$. Then $T$ and $T\Delta\{\pi^{-1}(i)\}$ lie in the same $Q$-block:
\[
T\cap Q \;=\; (T\Delta\{\pi^{-1}(i)\})\cap Q.
\]
Since $f\in H_\pi(\mathcal F)$, Proposition~\ref{prop:single-Q} implies that $f$ maps each $Q$-block onto a $\pi(Q)$-block. Hence
\[
f(T)\cap \pi(Q) \;=\; f\bigl(T\Delta\{\pi^{-1}(i)\}\bigr)\cap \pi(Q),
\]
and therefore
\[
D_i(T)\cap \pi(Q)\;=\;\varnothing.
\]
As this holds for every $Q\in\mathcal F$ with $\pi^{-1}(i)\notin Q$, we obtain
\begin{equation}\label{eq: ast}
D_i(T)\ \subseteq\ \bigcap_{\substack{Q\in\mathcal F\\ \pi^{-1}(i)\notin Q}}\big(\pi(Q)\big)^c.
\end{equation}

Next we rewrite the right-hand side using signature vectors. Fix $j\in[K]$, $$j\in \bigcap_{\{Q\in\mathcal F:\,\pi^{-1}(i)\notin Q\}}(\pi(Q))^{c}$$ means precisely that for every $Q\in\mathcal F$ with $\pi^{-1}(i)\notin Q$ we have $j\notin \pi(Q)$. 
By the definition of the signature map, the condition $\pi^{-1}(i)\notin Q$ is the same as $\phi_{\mathcal F}(\pi^{-1}(i))_Q=0$, and $j\notin \pi(Q)$ is the same as $\phi_{\mathcal F}(\pi^{-1}(j))_Q=0$. 
Therefore the preceding sentence says: for all $Q\in\mathcal F$, 
$\phi_{\mathcal F}(\pi^{-1}(j))\preceq \phi_{\mathcal F}(\pi^{-1}(i))$. 
Hence
\[
\bigcap_{\substack{Q\in\mathcal F\\ \pi^{-1}(i)\notin Q}}\big(\pi(Q)\big)^c
=\bigl\{\,j\in[K]:\ \phi_{\mathcal F}(\pi^{-1}(j))\preceq \phi_{\mathcal F}(\pi^{-1}(i))\,\bigr\},
\]
and the claim follows from \eqref{eq: ast}.

By assumption, for each $i$ we have
\[
\bigl\{\,j\in[K]:\ \phi_{\cal F}(\pi^{-1}(j))\preceq \phi_{\cal F}(\pi^{-1}(i))\,\bigr\}
\;=\;\{\,i\,\}.
\]
Therefore, by \eqref{eq: ast} we already proved above,
\[
D_i(T)\ \subseteq\
\bigl\{\,j:\ \phi_{\cal F}(\pi^{-1}(j))\preceq \phi_{\cal F}(\pi^{-1}(i))\,\bigr\}
\;=\;\{i\}.
\]
Since $f$ is a bijection and $T\neq T\Delta\{\pi^{-1}(i)\}$, we have
$D_i(T)\neq\varnothing$, hence $D_i(T)=\{i\}$.

\subsection{Extension to the polytomous-attribute case}\label{app:poly_extension}
We extend the binary-latent-attribute framework in Section~\ref{sec: model} to the case where each latent attribute is polytomous with possibly different numbers of categories.
Fix integers \(M_k\ge 2\) for \(k\in[K]\) and let \(\MZ=(Z_1,\ldots,Z_K)\) take values in \(\prod_{k=1}^K [M_k]\) where \([M_k]=\{0,1,\ldots,M_k-1\}\).
The latent law \(\p\) is generated by a categorical Bayesian network on a latent DAG \(\calG\).
Conditionally on \(\MZ\), the items are independent and each \(X_j\mid \MZ\) follows the same item-specific family and link specification as in \eqref{eq:observed exp fam}.

Similar to the binary case, we still use \(\MQ=(q_{j,k})\in\{0,1\}^{J\times K}\) to encode the bipartite measurement graph, where \(q_{j,k}=1\) means that the latent variable \(Z_k\) is a direct cause of the observed variable \(X_j\), and set \(K_j:=\{k\in[K]:q_{j,k}=1\}\).
For any latent configuration vector \(\bu\in \prod_{k=1}^K [M_k]\), define \(\mathrm{supp}(\bu):=\{k\in[K]:u_k\ge 1\}\).
For each \(j\in[J]\), introduce coefficients \(\{\beta_{j,\bu}\}\) indexed by \(\bu\in \prod_{k=1}^K [M_k]\), and define the linear predictor for every latent state \(\bz\in \prod_{k=1}^K [M_k]\) by \(\eta_j(\bz):=
\sum_{\bu: \mathrm{supp}(\bu)\subseteq K_j, \bu\preceq \bz}\beta_{j,\bu}\). Equivalently, \(\eta_j(\bz)\) is a linear combination of the coefficients \(\{\beta_{j,\bu}\}\), and a term \(\beta_{j,\bu}\) contributes to \(\eta_j(\bz)\) only when \(\mathrm{supp}(\bu)\subseteq K_j\) and \(\bu\preceq \bz\).
Collect \(\bbeta_j=(\beta_{j,\bu})_{\bu\in \prod_{k=1}^K [M_k]}\in\RR^{r}\) and let \(\MB=(\bbeta_1^\top,\ldots,\bbeta_J^\top)^\top\in\RR^{J\times r}\).

With this parameterization in place, \(\MQ\) still admits a direct causal interpretation in the polytomous-attribute regime: if \(q_{j,k}=0\), then varying \(Z_k\) while holding the other latents fixed does not change \(\eta_j(\bz)\) and hence does not affect the conditional law of \(X_j\); whereas if \(q_{j,k}=1\), increasing \(Z_k\) from level \(s\) to \(s+1\) activates new contributions in \(\eta_j(\bz)\), including a level-\((s+1)\) main-effect contribution of \(Z_k\) and potentially additional interaction contributions with other latent causes in \(K_j\) that become available only after \(Z_k\) reaches level \(s+1\).
Because these newly activated effects are assigned their own parameters rather than being constrained to scale linearly across levels, the causal effect of \(Z_k\) on \(X_j\) can vary across levels and across configurations of the other latents, making the model highly flexible.

Similar to \cite{he2023a}, to encode the intrinsic ordering of levels, we also use cumulative threshold indicators
$\calI_{k,u}(\MZ)=\mathbf 1\{Z_k\ge u\}$ for $u\in\{1,\ldots,M_k-1\}$ and form the Kronecker feature map
$$\Phi(\MZ)=\otimes_{k=1}^K (1,\calI_{k,1}(\MZ),\ldots,\calI_{k,M_k-1}(\MZ))^\top\in\RR^{r}.$$
Then each item $j$ has a predictor
\[\eta_j(\MZ)=\langle \bbeta_j,\Phi(\MZ)\rangle=
\sum_{\bu\in\calZ} \beta_{j,\bu} \prod_{k=1}^K \calI_{k,u_k}(\MZ),
\] where $\bbeta_j:=\bigl(\beta_{j,\bu}\bigr)_{\bu\in\calZ}\in\RR^{r}$. 

We further impose the structural restriction that coefficients vanish outside the relevant coordinates,
\[
\beta_{j,\bu}=0
\qquad\text{whenever}\qquad
\mathrm{supp}(\bu)\nsubseteq K_j,
\]
and require a full main-effect ladder whenever $q_{j,k}=1$.
Specifically, for each $k\in[K]$ and each threshold $v\in\{1,\ldots,M_k-1\}$ define the main-effect index
\(
\bu(k,v)=(u_h)_{h\in[K]}
\)
with $u_k=v$ and $u_h=0$ for $h\neq k$.
We assume
\[
q_{j,k}=1
\ \Longleftrightarrow\
\beta_{j,\bu(k,v)}\neq 0\ \ \text{for all }v\in\{1,\ldots,M_k-1\},
\qquad (j\in[J],\ k\in[K]),
\]
while interaction coordinates $\bu$ with $|\mathrm{supp}(\bu)|\ge 2$ and $\mathrm{supp}(\bu)\subseteq K_j$ are allowed to be nonzero.

Because the latent variables may have unequal numbers of categories, admissible relabelings must preserve numbers of categories.
Define the permutation group
\begin{equation}\label{eq:poly_S_KM}
S_{K,\mathbf M}
:=
\Bigl\{\varpi\in S_K:\ M_{\varpi(k)}=M_k\ \ \forall k\in[K]\Bigr\}.
\end{equation}
Given $\varpi\in S_{K,\mathbf M}$, define the induced bijection $\sigma_{\varpi}:\calZ\to\calZ$ by
\begin{equation}\label{eq:poly_sigma_varpi}
\sigma_{\varpi}(z_1,\ldots,z_K)
:=
\bigl(z_{\varpi^{-1}(1)},\ldots,z_{\varpi^{-1}(K)}\bigr),
\end{equation}
and let $P_{\varpi}$ be the induced $r\times r$ Kronecker permutation matrix such that
\begin{equation}\label{eq:poly_perm_matrix}
\Phi\bigl(\sigma_{\varpi}(\bz)\bigr)=P_{\varpi}\,\Phi(\bz),
\qquad \bz\in\calZ.
\end{equation}
We say that $(\p,\calG,\MQ,\{\bbeta_j\}_{j\in[J]},\bgamma)$ and $(\p',\calG',\MQ',\{\bbeta'_j\}_{j\in[J]},\bgamma')$
are equivalent, denoted $\sim_{K,\mathbf M}^{\mathrm{ord}}$, if and only if $\bgamma=\bgamma'$ and there exists
$\varpi\in  S_{K,\mathbf M}$ such that, with $\sigma=\sigma_{\varpi}$,
\begin{equation}\label{eq:poly_equiv_Q}
p_{\bz}=p'_{\sigma(\bz)}\quad\forall \bz\in\calZ,
\qquad
\bbeta'_j=P_{\varpi}\,\bbeta_j\quad\forall j\in[J],
\qquad
q'_{j,k}=q_{j,\varpi^{-1}(k)}\quad\forall (j,k)\in[J]\times[K],
\end{equation}
and the relabeled DAG $\varpi(\calG)$ is Markov equivalent to $\calG'$.

Let $\MTheta:=(\p,\MB,\bgamma)$.
Define the admissible parameter space
\begin{align*}
&\Omega_{K}(\MTheta;\calG,\MQ):=
\Bigl\{
\MTheta:\ 
\calG \ \text{is a perfect map of }\p,\ \beta_{j,\bu}=0
\ \text{if}\
\mathrm{supp}(\bu)\nsubseteq K_j,
\\&q_{j,k}=1\
\text{iff}\
\beta_{j,\bu(k,v)}\neq 0\ \ \text{for all }v\in\{1,\ldots,M_k-1\}
\Bigr\},\\
&\Omega_{K}(\MTheta,\calG,\MQ)
:=
\Bigl\{
(\MTheta,\calG,\MQ):\ \MTheta\in\Omega_{K}(\MTheta;\calG,\MQ)
\Bigr\}.
\end{align*}

\begin{definition}\label{def:generic identifiability-polytomous-Q}
Let $(\MTheta^\star,\calG^\star,\MQ^\star)\in\Omega_{K}(\MTheta,\calG,\MQ)$ be the true parameter triple.
The framework is \emph{generically identifiable} up to $\sim_{K,\mathbf M}^{\mathrm{ord}}$ if
\begin{equation}\label{eq:generic_ident_Q}
\Bigl\{
\MTheta\in\Omega_{K}(\MTheta;\calG^\star,\MQ^\star):
\ \exists\,(\widetilde{\MTheta},\widetilde{\calG},\widetilde{\MQ})\not\sim_{K,\mathbf M}^{\mathrm{ord}}(\MTheta,\calG^\star,\MQ^\star)
\ \text{such that }\PP_{\widetilde{\MTheta},\widetilde{\calG},\widetilde{\MQ}}=\PP_{\MTheta,\calG^\star,\MQ^\star}
\Bigr\}
\end{equation}
is a Lebesgue-null subset of $\Omega_{K}(\MTheta;\calG^\star,\MQ^\star)$.
\end{definition}

Now we are ready to state our identifiability result in polytomous-attribute case. Conditions in Theorem~\ref{thm:poly_generic_ident_delta} are parallel to those in Theorem~\ref{thm2}, and Assumption~\ref{assm:poly_monotone_binary_style} is the ordered-level analogue of the monotonicity in Assumption~\ref{assm1}(b). 
With these in place, we obtain generic identifiability for the polytomous-attribute extension of DCRL. It is worth pointing out that, if $M_k=2$ for all $k$ (i.e., the binary-attribute case), then all assumptions and conditions in Theorem~\ref{thm:poly_generic_ident_delta} reduce to Theorem~\ref{thm2} exactly.
\begin{assumption}\label{assm:poly_monotone_binary_style}
For each item $j$, define the top-set $\mathcal M_j:=
\{\MZ\in\calZ: Z_k=M_k-1\ \text{for all }k\in K_j\}$. Assume $\max_{\MZ\in\calZ\setminus\mathcal M_j} \eta_j(\MZ)<
\min_{\MZ\in\mathcal M_j} \eta_j(\MZ)$
for $j\in[J]$.
Moreover, for each item $j$, each $k\in K_j$, and each threshold $u\in[M_k-1]$, define $\mathcal T^{(k,u)}_{1,j}:=\{\MZ\in\calZ: Z_k\ge u,\ Z_h=M_h-1 \ \forall h\in K_j\setminus\{k\}\}$ and $\mathcal T^{(k,u)}_{0,j}:=\bigl\{\MZ\in\calZ: Z_k\le u-1,\ Z_h=M_h-1\ \ \forall h\in K_j\setminus\{k\}\bigr\}$,
and assume $\max_{\MZ\in\mathcal T^{(k,u)}_{0,j}} \eta_j(\MZ)<\min_{\MZ\in\mathcal T^{(k,u)}_{1,j}} \eta_j(\MZ).$
\end{assumption}

\begin{theorem}\label{thm:poly_generic_ident_delta}
Under Assumption~\ref{assm1}(a), Assumption~\ref{assm2}, and Assumption~\ref{assm:poly_monotone_binary_style}, the polytomous-attribute DCRL is generically identifiable if the following conditions hold.
\begin{enumerate}
\item[(i)]
For each \(k\in[K]\), let \(m_k=\lceil \log_2 M_k\rceil\) and \(\tilde d=\sum_{k=1}^K m_k\).
After a row permutation, \(\MQ^\star=[\MQ_1^\top,\MQ_2^\top,\MQ_3^\top]^\top\) with \(\MQ_1,\MQ_2\in\{0,1\}^{\tilde d\times K}\). For each \(a\in\{1,2\}\), \(\MQ_a=[\MQ_{a,1}^\top,\ldots,\MQ_{a,K}^\top]^\top\), where \(\MQ_{a,k}\in\{0,1\}^{m_k\times K}\) and every row of \(\MQ_{a,k}\) has a \(1\) in column \(k\). Other entries are arbitrary. Moreover, \(\MQ_3\) has no all-zero column.

\item[(ii)]For any $p\neq q$, neither $\MQ_{:,p}\succeq \MQ_{:,q}$ nor $\MQ_{:,q}\succeq \MQ_{:,p}$.
\end{enumerate}
\end{theorem}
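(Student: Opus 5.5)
I would follow the three-step reduction used for Theorem~\ref{thm2}, adapting each step to the ordered polytomous coding. \textbf{Step 1 (Kruskal collapse).} Build the nested parameter-independent discretizations $\bar{\mathcal D}^{(t)}$ exactly as before. Now the three factor matrices $\MN_1^{(t)},\MN_2^{(t)},\MN_3^{(t)}$ have $r=\prod_k M_k$ columns indexed by $\bz\in\calZ$. The key change is the generic full-column-rank argument for $\MN_1^{(t)}$ (and $\MN_2^{(t)}$). Block $\MQ_{1,k}$ supplies $m_k$ items, each containing $Z_k$. For each such item I choose a boundary point where its predictor depends only on $Z_k$, through some function $\eta(Z_k)$ on $[M_k]$. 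The goal is that the $m_k$ items jointly separate the $M_k$ levels. Since $2^{m_k}\ge M_k$, I assign to each level a distinct binary code of length $m_k$. I then let the $i$-th item's parameters make level $z$ yield one of two distinguishable laws according to the $i$-th bit. The restricted Kronecker-type submatrix, using rows $\{B,\calX\}$ per item, then factorizes over $k$. Each $k$-factor is a $2^{m_k}\times M_k$ matrix whose columns are distinct rows of a product of invertible $2\times 2$ blocks, so it has full column rank $M_k$. The Kronecker product therefore has rank $r$.

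The difficulty in Step 1 is that, on the boundary point, distinct levels sharing a bit must produce the same law. This forces some ladder coefficients $\beta_{j,\bu(k,v)}$ to be zero, which lies outside $\Omega_K$. I would fix this exactly as in the binary proof: perturb into each orthant of nonzero main-effect ladder signs, and use continuity of a fixed $r\times r$ minor. Real-analyticity (Assumption~\ref{assm2}) together with \cite{mityagin2015the} then gives a null exceptional set for every large $t$. The condition $\mathrm{rk}_k(\MN_3^{(t)})\ge 2$ follows as before, from $\MQ_3$ having no zero column and the generic nonvanishing of linear functionals $\eta_j(\bz)-\eta_j(\bz')$. For this, the coefficient of some $\beta_{j,\bu(\ell,v)}$ in the difference is nonzero when $z_\ell\neq z'_\ell$. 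Lemma~\ref{lem:perm_stability} then stabilizes the permutation, and Assumption~\ref{assm2}(ii),(iii) yields $\eta_j(\bz)=\widetilde\eta_j(\mathfrak S(\bz))$ and $\gamma_j=\widetilde\gamma_j$ for some $\mathfrak S\in S_{\calZ}$.

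\textbf{Step 2 (identify $\MQ$ and shrink $\mathfrak S$).} Here I would not use the maximal-row counting. Instead I would use the fibre structure. For generic $\MB$, $\eta_j$ is injective on $\prod_{k\in K_j}[M_k]$; violations lie on finitely many hyperplanes, which I add to the null set. Hence the partition of $\calZ$ into level sets of $\eta_j$ is exactly the partition by $\bz_{K_j}$, and $\mathfrak S$ maps the $K_j$-blocks onto the $\widetilde K_j$-blocks. This is the polytomous analogue of Proposition~\ref{prop:single-Q}. I then repeat the argument of Proposition~\ref{prop:upper}, under condition (ii) and $\MQ_3$ covering. Consider changing a single coordinate $z_i$ while staying inside every block $Q\in\mathcal F=\{K_j\}$ that avoids $i$. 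This moves $\mathfrak S(\bz)$ only within coordinates whose signature is dominated by that of $i$. By the subset condition, the moved set is a single coordinate $\pi(i)$, and $\pi$ is forced to preserve $M_k$. Hence $\mathfrak S(\bz)_{\pi(i)}=\tau_i(z_i)$ for bijections $\tau_i$ of levels, so $\mathfrak S$ lies in the wreath-type group of level relabelings composed with $S_{K,\mathbf M}$. Consequently $\widetilde\MQ$ equals $\MQ$ up to $\pi$.

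\textbf{Step 3 (remove level relabelings).} Apply Assumption~\ref{assm:poly_monotone_binary_style} to both parameterizations. First, the top-set condition forces $\tau_k(M_k-1)=M_k-1$ for each $k$, because the top set $\mathcal M_j$ is characterized as the unique block maximizing $\eta_j$. Next, the threshold conditions $\mathcal T^{(k,u)}$, applied inductively downward in $u$, force the remaining levels of $\tau_k$ to be preserved, giving $\tau_k=\mathrm{id}$. I expect this step to need an extra generic exclusion, namely ties among $\eta_j$ values across the threshold sets. This matches the remark that the group-reduction step discards a null set. Finally, $\bbeta'_j=P_\varpi\bbeta_j$ follows because the map $\boeta_j\leftrightarrow\bbeta_j$ through $\Phi$ is invertible. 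Equality of the relabeled $\p$, together with the perfect-map assumption, gives $\varpi(\calG)\equiv\widetilde\calG$.

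I expect the main obstacle to be Step 1's explicit rank certificate: building, inside each sign orthant, a point where the $m_k$ anchor items binary-encode the $M_k$ levels.
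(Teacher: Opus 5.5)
Your plan follows the paper's proof in all essentials: a binary-code Kronecker witness for $\MN_1^{(t)},\MN_2^{(t)}$, permutation stabilization, generic injectivity of $\eta_j$ on $K_j$-blocks plus no-containment to reach $(\prod_k S_{M_k})\rtimes S_{K,\mathbf M}$, monotonicity to remove level relabelings, and inversion of $F=\bigotimes_k F_k$. However, the step you single out as the main obstacle fails as you set it up, and it is also unnecessary.

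\emph{Why the orthant perturbation fails.} A small perturbation of your binary-coded witness can only enter orthants whose signs agree with the witness's nonzero ladder coefficients $b_{i,1}\{c_{k,s}(v)-c_{k,s}(v-1)\}$. For a two-valued profile these alternate in sign along $v$. Now take the all-positive ladder orthant. In particular, an anchor item with $K_i=\{k\}$ must lie there under Assumption~\ref{assm:poly_monotone_binary_style}. In that orthant every bit $v\mapsto c_{k,s}(v)$ would have to be a monotone threshold. But $m_k$ thresholds separate at most $m_k+1$ levels, and $m_k+1<M_k$ once $M_k\ge 4$. So no level-separating binary-coded witness lies in the closure of that orthant.

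\emph{How the paper avoids this.} The paper does not work orthant by orthant. It treats $f_{1,t}=\det\bigl((\MN_1^{(t)})^\top\MN_1^{(t)}\bigr)$ as real-analytic on the whole connected chart $\widetilde\Omega_1(\MQ)=\RR^{q_1}\times(0,\infty)^{|\calJ_1^\gamma|}$, where ladder coefficients may vanish. Your boundary witness, with $f_{1,t}>0$, therefore already makes the zero set null in the entire chart. Hence it is null in $\Omega_1(\MQ)$, in every orthant, and in the monotone region.

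\emph{Two smaller points.} First, in Step~2 the claim ``$\mathfrak S$ maps $K_j$-blocks onto $\widetilde K_j$-blocks'' needs $\widetilde\eta_j$ to separate $\widetilde K_j$-blocks. This does not hold generically for free, because the competitor is arbitrary; the paper's Lemma~\ref{lem:poly_disentangle_core} also assumes it on both sides. It does follow from the top-set part of Assumption~\ref{assm:poly_monotone_binary_style} applied to the competitor. Since $\mathfrak S$ is a bijection, $|\arg\max\eta_j|=\prod_{k\notin K_j}M_k$ must equal $|\arg\max\widetilde\eta_j|=\prod_{k\notin\widetilde K_j}M_k$. So the $\prod_{k\in K_j}M_k$ distinct values of $\widetilde\eta_j$ exhaust its $\prod_{k\in\widetilde K_j}M_k$ blocks.

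Second, your signature comparison mixes a competitor coordinate with a true one before $\pi$ exists. Run the block argument through both $\mathfrak S$ and $\mathfrak S^{-1}$. This shows that any competitor coordinate $h$ moved by a change in $z_i$ has column support $\widetilde C_h:=\{j:\widetilde q_{j,h}=1\}\subseteq C_i:=\{j:q_{j,i}=1\}$, and also that $\widetilde C_h$ contains some $C_{i'}$. No-containment for $\MQ$ then forces $\widetilde C_h=C_i$, which yields the bijection $\pi$. This is the paper's minimal-difference-set step, after which its Hamming-edge argument gives the level-relabeling form.

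Finally, Step~3 needs no extra generic exclusion. The strict threshold inequalities make the top-context profile strictly increasing in both models, so $g=g'\circ\tau_k$ forces $\tau_k=\mathrm{id}$. The null set in the paper's group-reduction step comes only from the generic injectivity you already impose in Step~2.
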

Condition (i) implies that at least $2\sum_{k=1}^K(\lceil \log_2 M_k\rceil)+1$ items are required to achieve generic identifiability. For the binary-attribute case, condition (i) reduces to $J\ge 2K+1$, matching the binary generic-identifiability requirement in Theorem~\ref{thm2}. The logarithmic order \(m_k=\lceil \log_2 M_k\rceil\) here is optimal up to constants and the ceiling since this theorem is required to hold uniformly over the general response families. It suffices to consider the binary-response submodel contained in our framework. Fix a coordinate \(k\), and consider any collection of \(l\) items with \(q_{j,k}=1\). Write their joint response as \(\MX\in\{0,1\}^l\).
Choose a latent DAG \(\calG\) in which \(Z_k\) is a root so that its marginal \(\p^{(k)}\) can vary freely on an open set, and fix all remaining latent parameters as well as parameters \((\MB,\bgamma)\).
Then for each \(s\in\{0,\dots,M_k-1\}\), the conditional law of \(\MX\) given \(Z_k=s\) is a vector \(\bv_s\in\RR^{2^l}\), and the marginal law of \(Y\) is the mixture
$\PP(\MX=\cdot)=\sum_{s=0}^{M_k-1}   p^{(k)}_s \bv_s$,
so the map \(  \p^{(k)}\mapsto \PP(\MX=\cdot)\) is linear.
If \(2^l<M_k\), this linear map cannot be injective on any open set: there exists \(0\ne\bh\in\RR^{M_k}\) with \(\sum_s h_s=0\) such that \(\sum_s h_s \bv_s=0\), and hence for every interior \(  \p^{(k)}\) and all sufficiently small \(\varepsilon\), the two distinct marginals \(  \p^{(k)}\) and \(  \p^{(k)}+\varepsilon \bh\) induce the same law of \(\MX\). Thus the set of parameters yielding non-identifiability contains an open subset, so generic identifiability fails in this binary submodel unless \(2^l\ge M_k\). Consequently, the logarithmic scaling in condition~(i) is sharp in magnitude for the general-response setting, reflecting that polytomous attributes with more categories require more items to achieve identifiability.


For condition (i) in Theorem~\ref{thm:poly_generic_ident_delta}, we can equivalently rewrite it as follows:
\begin{enumerate}
\item[(i)]
The item index set can be partitioned as $\calJ_1=\{1,\ldots,\tilde d\}$, $\calJ_2=\{\tilde d+1,\ldots,2\tilde d\}$, and $\calJ_3=\{2\tilde d+1,\ldots,J\}$,
and there exist bijections: $\rho_1:\calJ_1\to\{(k,s):k\in[K],\ s\in[m_k]\}$ and $\rho_2:\calJ_2\to\{(k,s):k\in[K],\ s\in[m_k]\}$.
For each $a\in\{1,2\}$ and each $j\in\calJ_a$, write $\rho_a(j)=(k(j),s(j))$.
$q_{j,k(j)}=1$ for all $j\in\calJ_1\cup\calJ_2$,
and for each $k\in[K]$ there exists $j_k\in\calJ_3$ such that
$q_{j_k,k}=1.$
\end{enumerate}
We will use \(\rho_1\) and \(\rho_2\) later to index the items in the first two blocks.
\subsection{Proof of Theorem~\ref{thm:poly_generic_ident_delta}}

\subsubsection{Kruskal reduction via a parameter-independent refinement scheme}
Fix an enumeration of $\calC_j^\mathrm{can}\setminus\{\calX_j\}$ as
\[
\calC_j^\mathrm{can}\setminus\{\calX_j\}=\{T_{1,j},T_{2,j},\dots\},
\qquad (j\in[J]).
\]

For each $t\ge 1$, define the parameter-independent finite discretization
\[
\bar{\mathcal D}_j^{(t)}:=\{T_{1,j},\dots,T_{t,j}\}\cup\{\calX_j\}\subseteq \calC_j^\mathrm{can},
\qquad
\bar{\mathcal D}^{(t)}:=(\bar{\mathcal D}_j^{(t)})_{j\in[J]}.
\]
Let $\kappa_j^{(t)}:=|\bar{\mathcal D}_j^{(t)}|=t+1$ and index
\[
\bar{\mathcal D}_j^{(t)}=(S^{(t)}_{1,j},\dots,S^{(t)}_{\kappa_j^{(t)},j})
\qquad\text{with}\qquad S^{(t)}_{\kappa_j^{(t)},j}=\calX_j.
\]

For each $t\ge 1$, define factor matrices $\MN_1^{(t)},\MN_2^{(t)},\MN_3^{(t)}$ whose columns are indexed by $\bz\in\calZ$ as follows.

For $\xi_1=(\ell_j)_{j\in\calJ_1}$ with $\ell_j\in[\kappa_j^{(t)}]$ for all $j\in\calJ_1$, set
\[
\MN_1^{(t)}(\xi_1,\bz)
:=
\PP\Bigl(\bigcap_{j\in\calJ_1}\{X_j\in S_{\ell_j,j}^{(t)}\}\,\Big|\, \bz\Bigr).
\]
For $\xi_2=(\ell_j)_{j\in\calJ_2}$ with $\ell_j\in[\kappa_j^{(t)}]$ for all $j\in\calJ_2$, set
\[
\MN_2^{(t)}(\xi_2,\bz)
:=
\PP\Bigl(\bigcap_{j\in\calJ_2}\{X_j\in S_{\ell_j,j}^{(t)}\}\,\Big|\, \bz\Bigr).
\]
For $\xi_3=(\ell_j)_{j\in\calJ_3}$ with $\ell_j\in[\kappa_j^{(t)}]$ for all $j\in\calJ_3$, set
\[
\MN_3^{(t)}(\xi_3,\bz)
:=
\PP\Bigl(\bigcap_{j\in\calJ_3}\{X_j\in S_{\ell_j,j}^{(t)}\}\,\Big|\, \bz\Bigr).
\]
Define the tensor $\mathbf P_0^{(t)}$ by
\[
\mathbf P_0^{(t)}(\xi_1,\xi_2,\xi_3)
=
\PP\Bigl(\bigcap_{j=1}^J \{X_j\in S^{(t)}_{\ell_j,j}\}\Bigr),
\qquad
\xi_a=(\ell_j)_{j\in\calJ_a},
\]
where $(\ell_j)_{j=1}^J$ is the concatenation of $\xi_1,\xi_2,\xi_3$ in the natural item order.
By conditional independence given $\bz$ and the law of total probability,
\begin{align}\label{eq:poly_tensor_decomp_paramfree_fixed}
\mathbf P_0^{(t)}(\xi_1,\xi_2,\xi_3)
&=
\sum_{\bz\in\calZ}p_{\bz}\,
\MN_1^{(t)}(\xi_1,\bz)\,
\MN_2^{(t)}(\xi_2,\bz)\,
\MN_3^{(t)}(\xi_3,\bz),
\\
\mathbf P_0^{(t)}
&=
\big[\,\MN_1^{(t)}\mathrm{Diag}(\p),\,\MN_2^{(t)},\,\MN_3^{(t)}\,\big].
\nonumber
\end{align}
Since $S^{(t)}_{\kappa_j^{(t)},j}=\calX_j$, each $\MN_a^{(t)}$ contains a row equal to $\bone_r^\top$.

\begin{lemma}\label{lem:poly_kruskal_gid_paramfree}
Consider a discrete causal representation learning framework with parameters
$(\p^\star,\calG^\star,\MB^\star,\MQ^\star,\bgamma^\star)$ satisfying the conditions of Theorem~\ref{thm:poly_generic_ident_delta}.
For each $t\ge1$, let $\mathbf P_0^{(t)}$ be the tensor induced by the parameter-independent discretization
$\bar{\mathcal D}^{(t)}$ defined above, with factor matrices $\MN_1^{(t)},\MN_2^{(t)},\MN_3^{(t)}$, so that $\mathbf P_0^{(t)}=[\MN_1^{(t)}\mathrm{Diag}(\p),\MN_2^{(t)},\MN_3^{(t)}]$. Then there exists a Lebesgue-null set
$\mathcal N_\infty\subset\Omega_K(\MTheta;\calG^\star,\MQ^\star)$
which constrains only $(\MB,\bgamma)$ such that the following holds.

For every $\MTheta\in\Omega_K(\MTheta;\calG^\star,\MQ^\star)\setminus\mathcal N_\infty$,
there exists an integer $t_0=t_0(\MTheta)<\infty$ such that for all $t\ge t_0$
the rank-$r$ CP decomposition of $\mathbf P_0^{(t)}$ is unique up to a common column permutation.
Moreover, since $\calX_j\in\bar{\mathcal D}_j^{(t)}$, each $\MN_a^{(t)}$ contains a row equal to
$\mathbf 1_{r}^\top$, hence the uniqueness contains no nontrivial scaling ambiguity.
\end{lemma}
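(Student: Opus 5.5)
We follow the binary proof of Lemma~\ref{lem:kruskal gid}. By Kruskal's theorem it suffices to find a Lebesgue-null set $\mathcal N_\infty$, depending only on $(\MB,\bgamma)$, and, for each $\MTheta$ outside it, a $t_0(\MTheta)$ such that for all $t\ge t_0$
\[
\mathrm{rk}_k(\MN_1^{(t)})=r,\qquad \mathrm{rk}_k(\MN_2^{(t)})=r,\qquad \mathrm{rk}_k(\MN_3^{(t)})\ge2,
\]
where $r=\prod_k M_k$. Then $r+r+2\ge 2r+2$. The all-$\calX$ rows already remove scaling ambiguity.

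\emph{Full column rank of $\MN_1^{(t)}$.} Every entry of $\MN_1^{(t)}$ is a finite product of terms $\PP_{j,g_j(\eta_j(\bz),\gamma_j)}(S)$ with $S$ fixed and independent of the parameter. Each $\eta_j(\bz)$ is linear in $\bbeta_j$, so by Assumption~\ref{assm2}(i),(iii) the function $f_{1,t}=\det((\MN_1^{(t)})^\top\MN_1^{(t)})$ is real-analytic on the local block $\Theta_1$ (coefficients of items in $\calJ_1$, plus dispersions). The projection of $\Omega_K$ onto $\Theta_1$ is a finite union of open connected orthants. These are indexed by sign patterns of the main-effect ladder coefficients $\beta_{j,\bu(k,v)}$, $k\in K_j$, $v\in[M_k-1]$.

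On each orthant we exhibit one point where $f_{1,t}>0$ for all large $t$; the zero set is then null by \cite{mityagin2015the}. The construction is a ``binary encoding''. Use the bijection $\rho_1$ so that for each $k$ the $m_k$ items $j$ with $\rho_1(j)=(k,s)$, $s\in[m_k]$, load on $Z_k$. At a boundary point, set all interactions and all main effects on coordinates other than $k(j)$ to zero. Choose the ladder $\beta_{j,\bu(k,v)}$, all nonzero with the prescribed signs, so that $\eta_j(\bz)$ takes values depending on $z_k$ in a way that the $m_k$ items jointly separate the $M_k$ levels. A concrete choice: item $s$ should produce different laws for levels whose $s$-th binary digit differs. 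The ladder increments are free nonzero reals, so with $m_k=\lceil\log_2M_k\rceil$ items one can make the map $z_k\mapsto(\eta_j(z_k))_{s}$ injective, and in fact make each level's tuple of laws distinct.

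The submatrix then becomes a Kronecker product $\bigotimes_k A_k$, where $A_k$ is the conditional law matrix of the $m_k$ items given $Z_k$. The key step, and the main obstacle, is to show that $A_k$ has full column rank $M_k$ for some finite choice of sets. The plan is as follows. The $M_k$ product measures $\bigotimes_s \mu_{s,z_k}$ are distinct, and for generic ladder values they are linearly independent. Linear independence of distinct product measures is not automatic, so here we would first try choosing the ladder so that for each item the $M_k$ parameters $g_j(\eta,\gamma_j)$ are all distinct. Distinct members of an identifiable family evaluated at distinct parameters admit sets making the $M_k$ one-dimensional laws linearly independent only if the family is rich enough. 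To avoid relying on that, we instead verify independence analytically: the determinant of a selected $M_k\times M_k$ minor is a real-analytic function of the ladder that is nonvanishing at some interior configuration, and the separating class supplies enough rows.

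Once the boundary point works for some finite $t_\sigma$, continuity lets us perturb into the open orthant: we switch on the remaining main effects with small magnitudes of the correct signs. Taking the maximum over the finitely many sign patterns gives $t_1$, and a countable union over $t$ keeps the exceptional set null. $\MN_2^{(t)}$ is handled identically with $\rho_2$.

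\emph{Kruskal rank of $\MN_3^{(t)}$.} For $\bz\ne\bz'$, pick $\ell$ with $z_\ell\ne z'_\ell$ and an item $j\in\calJ_3$ with $q_{j,\ell}=1$. The difference $\eta_j(\bz)-\eta_j(\bz')$ is a nontrivial linear functional of $\bbeta_j$: the coefficient of $\beta_{j,\bu(\ell,\max(z_\ell,z'_\ell))}$ is $\pm1$. So off a hyperplane the two laws differ by Assumption~\ref{assm2}(ii),(iii). A separating set then enters $\bar{\mathcal D}^{(t)}_j$ for large $t$, which gives pairwise distinct columns. Together with the all-ones row, this yields $\mathrm{rk}_k\ge2$. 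Taking the union of all exceptional sets completes the argument.
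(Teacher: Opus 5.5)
\paragraph{Assessment.} Your reduction to the three Kruskal-rank conditions is fine, and so is your treatment of $\MN_3^{(t)}$: the coefficient of $\beta_{j,\bu(\ell,\max(z_\ell,z'_\ell))}$ is indeed $\pm1$. The gap is the full-column-rank step for $\MN_1^{(t)}$, which is the heart of the lemma, and you leave it unproved. You assert that for generic ladder values the $M_k$ product laws are linearly independent. You then propose to check this through a real-analytic minor that is ``nonvanishing at some interior configuration'', but no such configuration is ever exhibited, so the argument is circular.

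Your own worry is justified. With Bernoulli items (allowed by Assumption~\ref{assm2}), any three laws of a single item are linearly dependent. So the rank must come from the product structure across the $m_k$ items, and generic $M_k$-valued ladders give no direct handle on it.

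Moreover, requiring every ladder entry to be nonzero with a prescribed sign actively blocks the natural witness. In the interior of an orthant, an item whose $\eta$ is two-valued in $z_k$ must alternate, so all such items carry the same parity bit. In the closure of an orthant, zero entries can be perturbed away but wrong signs cannot. For instance, under an all-positive pattern each bit would have to be a threshold function $\mathbf 1\{z_k\ge u\}$. Such bits separate at most $m_k+1<M_k$ levels once $M_k\ge 4$.

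\paragraph{How to close the gap.} Two ingredients are missing.

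First, drop the orthant decomposition. The function $f_{1,t}$ is real-analytic on the whole connected coefficient space $\widetilde\Omega_1(\MQ)=\RR^{q_1}\times(0,\infty)^{|\calJ_1^\gamma|}$, with zero ladder entries included. Hence a single witness point anywhere in it shows $f_{1,t}\not\equiv 0$, and its zero set is null. The admissible block differs from $\widetilde\Omega_1(\MQ)$ only by finitely many coordinate hyperplanes, so the null conclusion transfers.

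Second, use a genuinely two-valued witness.
\begin{itemize}
\item Fix an injective code $c_k:[M_k]\to\{0,1\}^{m_k}$ for each $k$.
\item For the item $i=\rho_1^{-1}(k,s)$, set $\eta_i(\bz)=b_{i,0}+b_{i,1}c_{k,s}(z_k)$ with $b_{i,1}\ne0$. This means ladder entries $\beta_{i,\bu(k,v)}=b_{i,1}\bigl(c_{k,s}(v)-c_{k,s}(v-1)\bigr)$, which may vanish or alternate in sign, and all other coefficients zero.
\item Each item then has only two conditional laws $\mu_{i,0}\ne\mu_{i,1}$, separated by a single set $B_i\in\calC_i^{\mathrm{can}}$.
\item Restrict to the rows that use $B_i$ or $\calX_i$ for each item. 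The resulting submatrix is $\bigotimes_k H_k$, where $H_k$ is the restriction of the invertible matrix $\bigotimes_{s=1}^{m_k}\begin{pmatrix}1&1\\ x_{k,s,0}&x_{k,s,1}\end{pmatrix}$ to the $M_k$ distinct codeword columns.
\item Therefore $\mathrm{rank}(H_k)=M_k$, and the submatrix has rank $r$.
\end{itemize}
This is exactly where $m_k=\lceil\log_2M_k\rceil$ enters. It requires no linear-independence property of general product measures from the response family.
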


\begin{proof}
The argument is parallel to the proof of Lemma~\ref{lem:kruskal gid} in the binary-attribute setting. The only places changed are the following.
First, the latent state space is $\calZ=\prod_{k=1}^K [M_k]$ of size $r=\prod_{k=1}^K M_k$,
so every instance of $2^K$ in Lemma~\ref{lem:kruskal gid} is replaced by $r$ here.
Second, the witness construction that certifies $\mathrm{rank}(\MN_1^{(t)})=r$ uses Condition (i) in Theorem~\ref{thm:poly_generic_ident_delta} to obtain a Kronecker product block built from
$m_k=\lceil\log_2 M_k\rceil$ items per coordinate $k$, rather than the $2\times2$ blocks
in the binary argument.

Finally, the discussion of dispersion parameters $\gamma$ is exactly the same as in Lemma~\ref{lem:kruskal gid}.
When an item family is one-parameter, we  do not treat $\gamma_j$
as a free coordinate.
When $\gamma_j$ is genuinely present, Assumption~\ref{assm2} ensures that
$(\eta,\gamma)\mapsto \PP_{j,g_j(\eta,\gamma)}(S)$ is real-analytic on $\RR\times(0,\infty)$ for each fixed
$S\in\bar{\mathcal D}_j^{(t)}$. For simplicity, we only treat the second case here.

Fix $t\ge 1$.
By Kruskal's theorem, uniqueness holds provided that
\[
\mathrm{rk}_k(\MN_1^{(t)})+\mathrm{rk}_k(\MN_2^{(t)})+\mathrm{rk}_k(\MN_3^{(t)})\ \ge\ 2r+2.
\]
Thus it suffices to establish, outside a Lebesgue-null exceptional set,
\begin{equation}\label{eq:poly_kruskal_target_paramfree}
\mathrm{rk}_k(\MN_1^{(t)})=r,\qquad \mathrm{rk}_k(\MN_2^{(t)})=r,\qquad \mathrm{rk}_k(\MN_3^{(t)})\ge 2,
\end{equation}
for all sufficiently large $t$.

\smallskip
\emph{(i) Generic full column rank for $\MN_1^{(t)}$ and $\MN_2^{(t)}$.}

For each $i\in\calJ_1$ define
\begin{equation}\label{eq:UiQ_def}
\calU_i^{\MQ}
:=
\{\mathbf u\in\calZ:\ \mathrm{supp}(\mathbf u)\subseteq K_i\}.
\end{equation}
Let
\begin{equation}\label{eq:I1Q_def}
\calI_1^{\MQ}
:=
\{(i,\mathbf u):\ i\in\calJ_1,\ \mathbf u\in\calU_i^{\MQ}\},
\qquad
q_1:=|\calI_1^{\MQ}|.
\end{equation}
As before, the block parameter vector $\Theta_1$ collects exactly the free coefficients
$\bbeta_1:=(\beta_{i,\mathbf u})_{(i,\mathbf u)\in\calI_1^{\MQ}}\in\RR^{q_1}$ and the dispersion parameters
$\bgamma_1\in(0,\infty)^{|\calJ_1^\gamma|}$, where $\calJ_1^\gamma$ denote the indices of items among the first block whose response family genuinely includes an unknown dispersion parameter $\gamma_j$. In other words, we identify
\begin{equation}\label{eq:Theta1_chart_Q}
\Theta_1=(\bbeta_1,\bgamma_1)\ \in\ \widetilde\Omega_1(\MQ)
:=
\RR^{q_1}\times(0,\infty)^{|\calJ_1^\gamma|}.
\end{equation}
Fix $t$ and view $\MN_1^{(t)}$ as a function of $\Theta_1$.
By Assumption~\ref{assm2}, for each fixed row index $\xi_1$ and column $\bz$, the entry $\MN_1^{(t)}(\xi_1,\bz)$ is a finite product of analytic maps of the form
\[
(\eta,\gamma)\ \longmapsto\ \PP_{j,\; g_j(\eta,\gamma)}(S),
\qquad S\in\bar{\mathcal D}_j^{(t)},
\]
evaluated at analytic functions of the local item parameters in $\Theta_1$.
Hence every entry of $\MN_1^{(t)}$ is real-analytic in the ambient Euclidean coordinates of $\Theta_1$, and so is $f_{1,t}(\Theta_1):=\det((\MN_1^{(t)}(\Theta_1))^\top \MN_1^{(t)}(\Theta_1))$.

Therefore $f_{1,t}$ is real-analytic on the open connected domain $\widetilde\Omega_1(\MQ)$, and hence either
$f_{1,t}\equiv 0$ on $\widetilde\Omega_1(\MQ)$, or else its zero set is Lebesgue-null \citep{mityagin2015the}.

The feasible block-$\calJ_1$ set in $\Omega_K(\MTheta;\calG^\star,\MQ^\star)$still imposes the nonzero ladder constraint for every main effect.
Accordingly define
\begin{equation}\label{eq:Omega1Q_def}
\Omega_1(\MQ)
:=
\Bigl\{\Theta_1\in\widetilde\Omega_1(\MQ):\
\beta_{i,\mathbf u(k,v)}\neq 0\ \ \forall i\in\calJ_1,\ \forall k\in K_i,\ \forall v\in[M_k-1]\Bigr\}.
\end{equation}
In $\Theta_1$-coordinates, $\Omega_1(\MQ)$ is obtained from $\widetilde\Omega_1(\MQ)$ by removing finitely many coordinate hyperplanes
$\{\beta_{i,\mathbf u(k,v)}=0\}$, hence $\Omega_1(\MQ)$ is dense in $\widetilde\Omega_1(\MQ)$ and has full Lebesgue measure in it.

Consequently, it suffices to construct a single witness point
$\Theta_{1,\mathrm{wit}}\in\widetilde\Omega_1(\MQ)$ such that $f_{1,t}(\Theta_{1,\mathrm{wit}})>0$ for all sufficiently large $ t$.
Indeed, that implies $f_{1,t}\not\equiv 0$ on $\widetilde\Omega_1(\MQ)$, hence
$\{\Theta_1\in\widetilde\Omega_1(\MQ):\ f_{1,t}(\Theta_1)=0\}$ is Lebesgue-null.
Therefore,
\begin{equation}\label{eq:null_intersection_Q}
\{\Theta_1\in\Omega_1(\MQ): f_{1,t}(\Theta_1)=0\}
=
\{\Theta_1\in\widetilde\Omega_1(\MQ): f_{1,t}(\Theta_1)=0\}\cap \Omega_1(\MQ),
\end{equation}
is Lebesgue-null in $\Omega_1(\MQ)$ as well.

For each $k\in[K]$, fix an injective map
\[
c_k:\{0,1,\dots,M_k-1\}\to\{0,1\}^{m_k},
\qquad
c_k(a)=(c_{k,1}(a),\ldots,c_{k,m_k}(a)).
\]
For each $i\in\calJ_1$, choose numbers $b_{i,0}\neq 0$ and $b_{i,1}\neq 0$ and define a function on $[M_{k(i)}]$ by
\[
f_i(a):=b_{i,0}+b_{i,1}\,c_{k(i),s(i)}(a),
\qquad a\in[M_{k(i)}].
\]
Since $q_{i,k(i)}=1$ for $i\in\calJ_1$, all main-effect coordinates
$\{\mathbf u(k(i),v): v\in[d_{k(i)}]\}$ are admissible.
We define the witness coefficients by
\begin{equation}\label{eq:bit_witness_coeffs}
\beta_{i,\mathbf 0}:=f_i(0),
\qquad
\beta_{i,\mathbf u(k(i),v)}:=f_i(v)-f_i(v-1)\quad (v\in[d_{k(i)}]),
\qquad
\beta_{i,\mathbf v}:=0\ \ \text{for all remaining }\mathbf v\in\calZ.
\end{equation}
Then for all $\bz\in\calZ$,
\begin{equation}\label{eq:bit_witness_eta}
\eta_i(\bz)=f_i(z_{k(i)})=b_{i,0}+b_{i,1}\,c_{k(i),s(i)}(z_{k(i)}).
\end{equation}

For each $i\in\calJ_1$, define two probability measures on $\calX_i$ by
\[
\mu_{i,0}(\cdot):=\PP_{i,\; g_i(b_{i,0},\gamma_i)}(\cdot),
\qquad
\mu_{i,1}(\cdot):=\PP_{i,\; g_i(b_{i,0}+b_{i,1},\gamma_i)}(\cdot).
\]
Since $b_{i,1}\neq 0$, Assumption~\ref{assm2}(iii) and Assumption~\ref{assm2}(ii) imply $\mu_{i,0}\neq \mu_{i,1}$.
Since $\calC_i^\mathrm{can}$ is separating, we can choose $B_i\in\calC_i^\mathrm{can}$ such that $\mu_{i,0}(B_i)\neq \mu_{i,1}(B_i)$.

For each $k\in[K]$ and each $s\in[m_k]$, define the unique index $i_{k,s}\in\calJ_1$ by
\[
i_{k,s}:=\rho_1^{-1}(k,s),
\qquad
B_{k,s}:=B_{i_{k,s}},
\]
and set
\[
x_{k,s,0}:=\mu_{i_{k,s},0}(B_{k,s}),
\qquad
x_{k,s,1}:=\mu_{i_{k,s},1}(B_{k,s}),
\qquad s\in[m_k],
\]
so that $x_{k,s,0}\neq x_{k,s,1}$.

Choose $t_\star$ large enough so that $B_i\in\bar{\mathcal D}_i^{(t_\star)}$ for all $i\in\calJ_1$.
Fix any $t\ge t_\star$.
Consider the submatrix $\MN_{1,\mathrm{sub}}^{(t)}$ of $\MN_1^{(t)}$ obtained by restricting to the $2^{\tilde d}$ rows
indexed by $\varepsilon=(\varepsilon_{k,s})_{k\in[K],\,s\in[m_k]}\in\{0,1\}^{\tilde d}$, where for each $i_{k,s}$ we select
\[
S^{(t)}_{\ell_{i_{k,s}},\,i_{k,s}}
=
\begin{cases}
B_{k,s}, & \varepsilon_{k,s}=1,\\
\calX_{i_{k,s}}, & \varepsilon_{k,s}=0,
\end{cases}
\]
and for columns we keep all $\bz\in\calZ$.
Under \eqref{eq:bit_witness_eta} and conditional independence across $j\in\calJ_1$ given $\bz$,
\begin{equation}\label{eq:MN1sub_bit_form}
\MN_{1,\mathrm{sub}}^{(t)}(\varepsilon,\bz)
=
\prod_{k=1}^K\prod_{s=1}^{m_k}
(1-\varepsilon_{k,s}+\varepsilon_{k,s}x_{k,s,\;c_{k,s}(z_k)}),
\qquad \varepsilon\in\{0,1\}^{\tilde d},\ \bz\in\calZ.
\end{equation}

For each $k\in[K]$, define the $2^{m_k}\times 2^{m_k}$ matrix $H_k^{\mathrm{full}}$ indexed by
$\varepsilon^{(k)}\in\{0,1\}^{m_k}$ and $b\in\{0,1\}^{m_k}$ via
\[
H_k^{\mathrm{full}}(\varepsilon^{(k)},b):=\prod_{s=1}^{m_k} (1-\varepsilon^{(k)}_s+\varepsilon^{(k)}_sx_{k,s,b_s}).
\]
Then
\[
H_k^{\mathrm{full}}
=
\bigotimes_{s=1}^{m_k}
\begin{pmatrix}
1 & 1\\
x_{k,s,0} & x_{k,s,1}
\end{pmatrix},
\qquad
\det(H_k^{\mathrm{full}})=\prod_{s=1}^{m_k}(x_{k,s,1}-x_{k,s,0})^{2^{m_k-1}}\neq 0,
\]
so $H_k^{\mathrm{full}}$ is invertible.
Let $H_k$ be the $2^{m_k}\times M_k$ submatrix obtained by restricting the columns of $H_k^{\mathrm{full}}$ to the subset
$\{c_k(a):a\in[M_k]\}\subseteq\{0,1\}^{m_k}$.
Since $c_k$ is injective, these $M_k$ columns are linearly independent, hence $\mathrm{rank}(H_k)=M_k$.

By \eqref{eq:MN1sub_bit_form}, the matrix $\MN_{1,\mathrm{sub}}^{(t)}$ is the Kronecker product of these blocks,
\[
\MN_{1,\mathrm{sub}}^{(t)}=\bigotimes_{k=1}^K H_k,
\]
so we have
\[
\mathrm{rank}(\MN_{1,\mathrm{sub}}^{(t)})=\prod_{k=1}^K \mathrm{rank}(H_k)=\prod_{k=1}^K M_k=r.
\]
Therefore, at the coefficient choice \eqref{eq:bit_witness_coeffs}, the full matrix $\MN_1^{(t)}$ has full column rank $r$
for every $t\ge t_\star$.

The similar real-analytic argument applied to block $\calJ_2$ yields a Lebesgue-null set on which $\mathrm{rk}_k(\MN_2^{(t)})=r$ for all $t\ge t_\star$.

\smallskip
\emph{(ii) Eventual $\mathrm{rk}_k(\MN_3^{(t)})\ge 2$.}

Fix $\bz\neq\bz'$ and choose $k$ with $z_k\neq z_k'$.
Let $j_k\in\calJ_3$ be as in Condition (i) in Theorem~\ref{thm:poly_generic_ident_delta}, so that $q_{j_k,k}=1$.
Set $u_\star:=\min\{z_k,z_k'\}+1\in[M_k-1]$ so that $\calI_{k,u_\star}(\bz)\neq \calI_{k,u_\star}(\bz')$.
Then $\beta_{j_k,\mathbf u(k,u_\star)}$ is a free coordinate in the chart for item $j_k$ (subject only to $\beta_{j_k,\mathbf u(k,u_\star)}\neq 0$),
so $\eta_{j_k}(\bz)=\eta_{j_k}(\bz')$ defines a proper affine hyperplane in that coefficient space.
Hence, outside the union of these hyperplanes over all pairs $\bz\neq\bz'$, we have $\eta_{j_k}(\bz)\neq \eta_{j_k}(\bz')$ for every $\bz\neq\bz'$. By Assumption~\ref{assm2}, this implies the corresponding conditional laws differ, hence for all large enough $t$ the columns of $\MN_3^{(t)}$ are pairwise distinct, and together with the all-$\calX$ row this yields $\mathrm{rk}_k(\MN_3^{(t)})\ge 2$.

\smallskip
\emph{(iii) Kruskal uniqueness for all large $t$.}
Combining the three blocks yields \eqref{eq:poly_kruskal_target_paramfree} for all sufficiently large $t$ outside a Lebesgue-null set, hence uniqueness up to a common column permutation.
\end{proof}

Fix $\Theta$ outside a Lebesgue-null set to be specified and take $t$ large enough so that Lemma~\ref{lem:poly_kruskal_gid_paramfree} applies.
Let $\MN_a^{\prime(t)}$ denote the analogous factor matrices constructed from $\Theta'$ using the same discretizations and the same item blocks $\calJ_1,\calJ_2,\calJ_3$.
If
\[
\mathbf P_0^{(t)}
=
\big[\,\MN_1^{(t)}\mathrm{Diag}(\p),\,\MN_2^{(t)},\,\MN_3^{(t)}\,\big]
=
\big[\,\MN_1^{\prime(t)}\mathrm{Diag}(\p'),\,\MN_2^{\prime(t)},\,\MN_3^{\prime(t)}\,\big],
\]
then there exists a permutation $\mathfrak S^{(t)}\in  S_r$ such that
\begin{equation}\label{eq:poly_align_MN23_paramfree}
\MN_a^{(t)}(\cdot,\bz)=\MN_a^{\prime(t)}(\cdot,\mathfrak S^{(t)}(\bz))
\ \ (a=2,3),
\qquad
\big(\MN_1^{(t)}\mathrm{Diag}(\p)\big)(\cdot,\bz)=\big(\MN_1^{\prime(t)}\mathrm{Diag}(\p')\big)(\cdot,\mathfrak S^{(t)}(\bz)).
\end{equation}
Let $\xi_{1,\mathrm{all}}$ be the row index of $\MN_1^{(t)}$ selecting $\calX_j$ for every $j\in\calJ_1$.
Then $\MN_1^{(t)}(\xi_{1,\mathrm{all}},\bz)=1$ and $\MN_1^{\prime(t)}(\xi_{1,\mathrm{all}},\bz)=1$ for all $\bz$, hence evaluating \eqref{eq:poly_align_MN23_paramfree} at $\xi_{1,\mathrm{all}}$ yields
\begin{equation}\label{eq:poly_pi_align_paramfree}
p_{\bz}=p'_{\mathfrak S^{(t)}(\bz)}\qquad\forall \bz\in\calZ.
\end{equation}
Dividing the last identity in \eqref{eq:poly_align_MN23_paramfree} columnwise by $p_{\bz}$ yields
\begin{equation}\label{eq:poly_align_MN1_paramfree}
\MN_1^{(t)}(\cdot,\bz)=\MN_1^{\prime(t)}(\cdot,\mathfrak S^{(t)}(\bz))
\qquad\forall \bz\in\calZ.
\end{equation}
In particular, for every $j\in[J]$, every $S\in\bar{\mathcal D}_j^{(t)}$, and every $\bz\in\calZ$,
\begin{equation}\label{eq:poly_prob_compare_paramfree}
\PP_{j,g_j(\eta_j(\bz),\gamma_j))}(S)=\PP_{j,g_j(\eta'_j(\mathfrak S^{(t)}(\bz)),\gamma'_j))}(S).
\end{equation}

\begin{lemma}[Permutation stability under refinement]\label{lem:poly_perm_stability_paramfree}
Let $\bar{\mathcal D}^{(t)}\subseteq \bar{\mathcal D}^{(t+1)}$ be the nested discretizations above, and let
$\mathfrak S^{(t)},\mathfrak S^{(t+1)}$ be the aligning permutations obtained from Lemma~\ref{lem:poly_kruskal_gid_paramfree} at levels $t$ and $t+1$.
If $\MN_1^{(t)}$ has full column rank $r$, then $\mathfrak S^{(t+1)}=\mathfrak S^{(t)}$.
\end{lemma}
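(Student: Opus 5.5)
The argument follows the proof of Lemma~\ref{lem:perm_stability} in the binary case, with $2^K$ replaced by $r=\prod_k M_k$. It uses only the alignment identities \eqref{eq:poly_align_MN1_paramfree} at levels $t$ and $t+1$ and the nestedness $\bar{\mathcal D}_j^{(t)}\subseteq\bar{\mathcal D}_j^{(t+1)}$.

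\emph{Step 1: restriction of rows.} Each row of $\MN_1^{(t)}$ is indexed by $\xi_1=(\ell_j)_{j\in\calJ_1}$, which selects $S^{(t)}_{\ell_j,j}\in\bar{\mathcal D}_j^{(t)}$ for every $j\in\calJ_1$. Since $\bar{\mathcal D}_j^{(t)}\subseteq\bar{\mathcal D}_j^{(t+1)}$, the same product event $\bigcap_{j\in\calJ_1}\{X_j\in S^{(t)}_{\ell_j,j}\}$ also indexes a row of $\MN_1^{(t+1)}$. Let $R_t$ be the map that restricts a column of $\MN_1^{(t+1)}$ to these rows. Then $R_t\MN_1^{(t+1)}(\cdot,\bz)=\MN_1^{(t)}(\cdot,\bz)$ for every $\bz$. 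The same identity holds for the primed matrices, because they are built from the same parameter-independent discretizations.

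\emph{Step 2: comparing the two alignments.} Equation \eqref{eq:poly_align_MN1_paramfree} at level $t+1$ gives $\MN_1^{(t+1)}(\cdot,\bz)=\MN_1^{\prime(t+1)}(\cdot,\mathfrak S^{(t+1)}(\bz))$. Applying $R_t$ to both sides yields $\MN_1^{(t)}(\cdot,\bz)=\MN_1^{\prime(t)}(\cdot,\mathfrak S^{(t+1)}(\bz))$. Equation \eqref{eq:poly_align_MN1_paramfree} at level $t$ gives $\MN_1^{(t)}(\cdot,\bz)=\MN_1^{\prime(t)}(\cdot,\mathfrak S^{(t)}(\bz))$. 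Combining the two,
\[
\MN_1^{\prime(t)}\bigl(\cdot,\mathfrak S^{(t)}(\bz)\bigr)=\MN_1^{\prime(t)}\bigl(\cdot,\mathfrak S^{(t+1)}(\bz)\bigr)\qquad\forall\,\bz\in\calZ .
\]

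\emph{Step 3: distinct columns force equality.} By hypothesis $\MN_1^{(t)}$ has full column rank $r$, so its $r$ columns are pairwise distinct. The level-$t$ alignment shows that $\MN_1^{\prime(t)}$ is a column permutation of $\MN_1^{(t)}$, so the columns of $\MN_1^{\prime(t)}$ are also pairwise distinct. The display in Step~2 therefore forces $\mathfrak S^{(t)}(\bz)=\mathfrak S^{(t+1)}(\bz)$ for each $\bz$, that is, $\mathfrak S^{(t+1)}=\mathfrak S^{(t)}$.

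The one point needing care is the use of \eqref{eq:poly_align_MN1_paramfree} rather than the raw alignment of $\MN_1^{(t)}\mathrm{Diag}(\p)$. This requires $p_\bz>0$ for all $\bz$, which Assumption~\ref{assm1}(a) provides, so dividing columnwise by $p_\bz$ is legitimate. Everything else is bookkeeping, so I do not expect a real obstacle.
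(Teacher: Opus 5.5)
Your proposal is correct and follows essentially the same argument as the paper. You restrict the level-$(t+1)$ alignment \eqref{eq:poly_align_MN1_paramfree} to the level-$t$ row events using nestedness, compare it with the level-$t$ alignment, and conclude from the pairwise distinctness of the columns of $\MN_1^{\prime(t)}$, which is a column permutation of the full-rank $\MN_1^{(t)}$. Your remark that the columnwise division by $p_{\bz}$ needs $p_{\bz}>0$ (Assumption~\ref{assm1}(a)) is a correct detail that the paper uses implicitly when deriving \eqref{eq:poly_align_MN1_paramfree}.
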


\begin{proof}
Because $\bar{\mathcal D}^{(t)}\subseteq\bar{\mathcal D}^{(t+1)}$, every row event defining $\MN_1^{(t)}$ also appears among the rows defining $\MN_1^{(t+1)}$.
Thus the column $\MN_1^{(t)}(\cdot,\bz)$ is obtained from $\MN_1^{(t+1)}(\cdot,\bz)$ by restricting to a subset of rows.
Using \eqref{eq:poly_align_MN1_paramfree} at levels $t$ and $t+1$ and restricting the level $t+1$ identity to the level $t$ rows yields
\[
\MN_1^{\prime(t)}(\cdot,\mathfrak S^{(t)}(\bz))=\MN_1^{\prime(t)}(\cdot,\mathfrak S^{(t+1)}(\bz)).
\]
Since $\MN_1^{(t)}$ has full column rank, its columns are pairwise distinct, and the same holds for $\MN_1^{\prime(t)}$ because it is a column permutation of $\MN_1^{(t)}$.
Therefore $\mathfrak S^{(t)}(\bz)=\mathfrak S^{(t+1)}(\bz)$ for all $\bz$.
\end{proof}

On a generic set where $\MN_1^{(t)}$ has full column rank for all sufficiently large $t$, Lemma~\ref{lem:poly_perm_stability_paramfree} implies that $\mathfrak S^{(t)}$ is constant for all large $t$.
Denote the common permutation by $\mathfrak S\in S_r$.

Now fix any $j\in[J]$ and any set $S\in\calC_j^\mathrm{can}$.
Since $\bigcup_{t\ge 1}\bar{\mathcal D}_j^{(t)}=\calC_j^\mathrm{can}$, there exists $t$ large enough with $S\in\bar{\mathcal D}_j^{(t)}$.
Then \eqref{eq:poly_prob_compare_paramfree} implies
\[
\PP_{j,g_j(\eta_j(\bz),\gamma_j))}(S)=\PP_{j,g_j(\eta_j'(\mathfrak S(\bz)),\gamma_j'))}(S)
\qquad\text{for all }\bz\in\calZ.
\]
Because $\calC_j^\mathrm{can}$ is separating, we conclude
\[
\PP_{j,g_j(\eta_j(\bz),\gamma_j))}=\PP_{j,g_j(\eta'_j(\mathfrak S(\bz)),\gamma'_j))}
\qquad\text{as probability measures on }\calX_j,
\qquad\forall j\in[J],\ \bz\in\calZ.
\]
Finally, by Assumption~\ref{assm2}(ii) and injectivity of $g_j$ in Assumption~\ref{assm2}(iii), we obtain
\begin{equation}\label{eq:step1_tables_identified}
\eta_j(\bz)=\eta'_j(\mathfrak S(\bz)),
\qquad
\gamma_j=\gamma'_j,
\qquad \forall j\in[J],\ \bz\in\calZ,
\end{equation}
and combining with \eqref{eq:poly_pi_align_paramfree} (stabilized to $\mathfrak S$) yields
\begin{equation}\label{eq:step1_pi_identified}
p_{\bz}=p'_{\mathfrak S(\bz)}
\qquad \forall \bz\in\calZ.
\end{equation}

\subsubsection{Recovering $\MQ$ and restricting admissible relabelings (yielding Corollary~\ref{cor:poly_moran_indeterminacy})}

\begin{lemma}\label{lem:poly_disentangle_core}
Fix integers $M_k\ge 2$ and $\calZ=\prod_{k=1}^K[M_k]$.
Let $\MQ=(q_{j,k})\in\{0,1\}^{J\times K}$ and define $K_j=\{k:q_{j,k}=1\}$.
For each item $j$, let $\eta_j:\calZ\to\RR$ satisfy the structural restriction
\[
\eta_j(\MZ)=\eta_j(\MZ')\qquad\text{whenever }\MZ_{K_j}=\MZ'_{K_j}.
\]
Assume the generic injectivity condition
\begin{equation}\label{eq:poly_generic_inj}
\eta_j(\MZ)\neq \eta_j(\MZ')\qquad\text{whenever }\MZ_{K_j}\neq \MZ'_{K_j}.
\end{equation}
Let $(\MQ',\{\eta'_j\}_{j=1}^J)$ be another design/predictor pair with the analogous property.
Suppose there exists a bijection $\sigma:\calZ\to\calZ$ such that
\begin{equation}\label{eq:poly_eta_rowperm}
\eta_j(\MZ)=\eta'_j\big(\sigma(\MZ)\big)\qquad\text{for all }j\in[J],\ \MZ\in\calZ.
\end{equation}
Assume moreover the no-containment condition for $\MQ$
\begin{equation}\label{eq:no_contain}
\text{for any }p\neq q,\ \text{neither }\MQ_{:,p}\succeq \MQ_{:,q}\ \text{nor }\MQ_{:,q}\succeq \MQ_{:,p}.
\end{equation}

Then the following conclusions hold.

\begin{enumerate}
\item[(a)] Let $C_k:=\{j\in[J]:q_{j,k}=1\}$ denote the support set of column $k$ of $\MQ$.
For any $\MZ,\MZ'\in\calZ$, define the coordinate-difference set
\[
S(\MZ,\MZ'):=\{k\in[K]:Z_k\neq Z_k'\}
\]
and the item-difference pattern
\[
D(\MZ,\MZ'):=\{j\in[J]:\eta_j(\MZ)\neq \eta_j(\MZ')\}.
\]
Then
\begin{equation}\label{eq:D_union}
D(\MZ,\MZ')=\bigcup_{k\in S(\MZ,\MZ')} C_k .
\end{equation}

\item[(b)] The collection of inclusion-minimal nonempty sets among
$\{D(\MZ,\MZ'):\MZ\neq\MZ'\}$ equals $\{C_k:k\in[K]\}$.
Consequently, the multiset of columns of $\MQ$ is identified from $\{\eta_j(\MZ)\}$,
hence $\MQ$ is identified up to a column permutation.

\item[(c)] There exists a permutation $\varpi\in S_K$ such that
\[
C_k = C'_{\varpi(k)}\qquad\text{for all }k\in[K],
\]
equivalently $\MQ' = \MQ\Pi$ for the permutation matrix $\Pi$ of $\varpi^{-1}$.

\item[(d)] The relabeling $\sigma$ must lie in 
\[
\Bigl(\prod_{k=1}^K  S_{M_k}\Bigr)\rtimes  S_{K,\mathbf M},
\qquad
 S_{K,\mathbf M}:=\{\varpi\in S_K:\ M_{\varpi(k)}=M_k\ \forall k\}.
\]
More explicitly, with $\varpi$ from {\rm (c)}, there exist permutations $\tau_k\in S_{M_k}$ such that
for all $\MZ=(Z_1,\ldots,Z_K)\in\calZ$ and all $k\in[K]$,
\[
\bigl(\sigma(\MZ)\bigr)_{\varpi(k)}=\tau_k(Z_k).
\]
Equivalently, for all $(Z_1,\ldots,Z_K)\in\calZ$,
\[
\sigma(Z_1,\ldots,Z_K)
=
\bigl(\tau_{\varpi^{-1}(1)}(Z_{\varpi^{-1}(1)}),\ldots,\tau_{\varpi^{-1}(K)}(Z_{\varpi^{-1}(K)})\bigr).
\]

\end{enumerate}
\end{lemma}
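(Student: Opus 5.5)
The plan is to read off the support structure of $\MQ$ from the pattern of which items change between latent states, and then to use that pattern to constrain how $\sigma$ can move states. Throughout I will use only the two hypotheses on each $\eta_j$: it depends on $\MZ$ only through $\MZ_{K_j}$, and it is injective in $\MZ_{K_j}$ by \eqref{eq:poly_generic_inj}.

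\emph{Part (a).} Together these give $\eta_j(\MZ)\neq\eta_j(\MZ')$ if and only if $\MZ_{K_j}\neq\MZ'_{K_j}$. That holds if and only if $K_j\cap S(\MZ,\MZ')\neq\varnothing$, i.e.\ if and only if $j\in C_k$ for some $k\in S(\MZ,\MZ')$. This is exactly \eqref{eq:D_union}. The same argument applies verbatim to $(\MQ',\eta')$, giving $D'(\MW,\MW')=\bigcup_{l\in S(\MW,\MW')}C'_l$.

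\emph{Part (b).} First I note that no column of $\MQ$ is zero: since $K\ge2$, a zero column would be contained in any other column, contradicting \eqref{eq:no_contain}. Also \eqref{eq:no_contain} forces the sets $C_k$ to be pairwise distinct and pairwise incomparable.

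Taking $\MZ,\MZ'$ that differ only in coordinate $k$ shows every $C_k$ occurs as some $D(\MZ,\MZ')$. Conversely, suppose $D(\MZ,\MZ')=\bigcup_{l\in S}C_l$ with $S$ nonempty. Then it contains some $C_k$, so any minimal such set equals that $C_k$. In the other direction, if $\bigcup_{l\in S}C_l\subseteq C_k$, then each $C_l\subseteq C_k$, forcing $S=\{k\}$; hence each $C_k$ is minimal. So the minimal family is exactly $\{C_k\}$, consisting of $K$ distinct sets, and the multiset of columns of $\MQ$ is identified.

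\emph{Part (c).} From \eqref{eq:poly_eta_rowperm} we get $D(\MZ,\MZ')=D'(\sigma(\MZ),\sigma(\MZ'))$. Since $\sigma$ is a bijection, the families $\{D(\MZ,\MZ'):\MZ\neq\MZ'\}$ and $\{D'(\MW,\MW'):\MW\neq\MW'\}$ coincide, and hence so do their minimal elements. By the argument of (b) applied to $\MQ'$ (which needs only part (a), not no-containment for $\MQ'$), every minimal element of the primed family equals some nonempty $C'_l$. So the $K$ distinct sets $C_1,\dots,C_K$ all lie in $\{C'_1,\dots,C'_K\}$. A counting argument then forces $\{C'_l\}=\{C_k\}$ with the $C'_l$ distinct, which yields the permutation $\varpi$ with $C_k=C'_{\varpi(k)}$.

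\emph{Part (d).} I expect this step to be the main obstacle, because pairwise no-containment does not stop $C_k$ from sitting inside a union of other columns. So I will argue only through single-coordinate moves.

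\begin{enumerate}
\item If $\MZ,\MZ'$ differ only in coordinate $k$, then $D'(\sigma(\MZ),\sigma(\MZ'))=C_k=C'_{\varpi(k)}$.
\item Writing this as $\bigcup_{l\in S'}C'_l$, incomparability of the $C'_l$ forces $S'=\{\varpi(k)\}$. So $\sigma$ maps single-coordinate moves in direction $k$ to single-coordinate moves in direction $\varpi(k)$.
\item Given $\MZ,\MZ'$ with $Z_k=Z'_k$, I will connect them by a path of single-coordinate moves in directions $l\neq k$. Each step changes only coordinate $\varpi(l)\neq\varpi(k)$ of the image, so $(\sigma(\MZ))_{\varpi(k)}$ depends only on $Z_k$. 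This defines $\tau_k:[M_k]\to[M_{\varpi(k)}]$.
\item $\tau_k$ is injective: if $\tau_k(a)=\tau_k(b)$ with $a\neq b$, take $\MZ,\MZ'$ differing only in coordinate $k$, with $Z_k=a$ and $Z'_k=b$. Their images would agree in every coordinate, contradicting injectivity of $\sigma$.
\item Hence $M_k\le M_{\varpi(k)}$ for all $k$. Following the cycles of $\varpi$ gives $M_{\varpi(k)}=M_k$, so $\varpi\in S_{K,\mathbf M}$ and each $\tau_k\in S_{M_k}$.
\end{enumerate}

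This gives the displayed form of $\sigma$ and completes the proof.
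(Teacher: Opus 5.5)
Your proposal is correct and follows essentially the same route as the paper. Both arguments write each pattern $D(\MZ,\MZ')$ as a union of column supports, recover $\{C_k\}$ as the minimal patterns, transport these through $\sigma$ to obtain $\varpi$, and show that $\sigma$ maps moves along coordinate $k$ to moves along coordinate $\varpi(k)$, which gives the coordinatewise form. Your minor variations are all valid and slightly tidier than the paper's versions: you explicitly rule out zero columns, you use a counting argument in (c) that does not need no-containment for $\MQ'$, and you follow the cycles of $\varpi$ (rather than applying the argument to $\sigma^{-1}$) to get $M_{\varpi(k)}=M_k$.
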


\begin{proof}
We prove (a)--(d) in order.

\smallskip
\emph{Proof of (a).}
Fix $\MZ,\MZ'\in\calZ$.
For any item $j$, by the defining restriction of $K_j$ we have
$\eta_j(\MZ)=\eta_j(\MZ')$ whenever $\MZ_{K_j}=\MZ'_{K_j}$.
Conversely, by \eqref{eq:poly_generic_inj},
$\eta_j(\MZ)\neq \eta_j(\MZ')$ whenever $\MZ_{K_j}\neq \MZ'_{K_j}$.
Hence
\[
\eta_j(\MZ)\neq \eta_j(\MZ')
\Longleftrightarrow
\MZ_{K_j}\neq \MZ'_{K_j}
\Longleftrightarrow
K_j\cap S(\MZ,\MZ')\neq\varnothing.
\]
Therefore
\[
D(\MZ,\MZ')
=\{j:K_j\cap S(\MZ,\MZ')\neq\varnothing\}
=\bigcup_{k\in S(\MZ,\MZ')} \{j:k\in K_j\}
=\bigcup_{k\in S(\MZ,\MZ')} C_k,
\]
which is \eqref{eq:D_union}.

\smallskip
\emph{Proof of (b).}
Let $\mathcal D:=\{D(\MZ,\MZ'):\MZ\neq\MZ'\}$.
By (a), every element of $\mathcal D$ is a union of some subcollection of $\{C_k\}_{k=1}^K$.
Fix $k\in[K]$ and choose $\MZ,\MZ'$ that differ only at coordinate $k$.
Then $S(\MZ,\MZ')=\{k\}$ and (a) gives $D(\MZ,\MZ')=C_k\in\mathcal D$,
so every $C_k$ appears in $\mathcal D$.

Now take any nonempty $D\in\mathcal D$.
Then $D=\bigcup_{k\in S}C_k$ for some nonempty $S\subseteq[K]$.
If $|S|\ge 2$ then for any $k_0\in S$ we have $C_{k_0}\subseteq D$.
By \eqref{eq:no_contain}, the inclusion is strict, hence $D$ is not inclusion-minimal.
Thus the inclusion-minimal nonempty elements of $\mathcal D$ are exactly $\{C_k:k\in[K]\}$.
Knowing all $C_k$ recovers $\MQ$ up to a permutation of columns.

\smallskip
\emph{Proof of (c).}
From \eqref{eq:poly_eta_rowperm}, for any $\MZ,\MZ'\in\calZ$ and any $j\in[J]$,
\[
\eta_j(\MZ)\neq \eta_j(\MZ')
\Longleftrightarrow
\eta'_j\big(\sigma(\MZ)\big)\neq \eta'_j\big(\sigma(\MZ')\big),
\]
hence $D(\MZ,\MZ')=D'(\sigma(\MZ),\sigma(\MZ'))$.
Because $\sigma$ is a bijection on $\calZ$, the map $(\MZ,\MZ')\mapsto(\sigma(\MZ),\sigma(\MZ'))$
is a bijection on $\{(\MZ,\MZ'):\MZ\neq \MZ'\}$, and therefore
\[
\mathcal D=\{D(\MZ,\MZ'):\MZ\neq\MZ'\}
=
\{D'(\tilde\MZ,\tilde\MZ'):\tilde\MZ\neq\tilde\MZ'\}
=\mathcal D'
\]
as \emph{sets of subsets} of $[J]$.

By part (b) (which uses \eqref{eq:no_contain} for $\MQ$), the inclusion-minimal nonempty elements of $\mathcal D$
are exactly $\{C_k:k\in[K]\}$. Since $\mathcal D=\mathcal D'$, the inclusion-minimal nonempty elements of $\mathcal D'$
are also exactly $\{C_k:k\in[K]\}$.

Fix $k\in[K]$ and choose $\MZ,\MZ'\in\calZ$ that differ only at coordinate $k$.
Then $D(\MZ,\MZ')=C_k$, so $D'(\sigma(\MZ),\sigma(\MZ'))=C_k$.
Applying part (a) to $(\MQ',\eta')$ gives
\[
D'\big(\sigma(\MZ),\sigma(\MZ')\big)
=\bigcup_{h\in S'\big(\sigma(\MZ),\sigma(\MZ')\big)} C'_h.
\]
Every $C'_h$ belongs to $\mathcal D'$ (take two states that differ only at coordinate $h$),
so each $C'_h$ appearing in the above union is a nonempty element of $\mathcal D'$ and satisfies
$C'_h\subseteq \bigcup_{h\in S'} C'_h = C_k$.
But $C_k$ is inclusion-minimal among the nonempty elements of $\mathcal D'$,
hence no nonempty element of $\mathcal D'$ can be a \emph{proper} subset of $C_k$.
Therefore every $C'_h$ appearing in the union must equal $C_k$.
In particular, there exists at least one index $\varpi(k)\in[K]$ such that $C'_{\varpi(k)}=C_k$.

If $\varpi(k_1)=\varpi(k_2)$ then $C_{k_1}=C'_{\varpi(k_1)}=C'_{\varpi(k_2)}=C_{k_2}$.
Under \eqref{eq:no_contain}, the sets $\{C_k\}_{k=1}^K$ are pairwise distinct, so $k_1=k_2$.
Thus $\varpi$ is injective and hence a permutation of $[K]$.
Consequently $C_k=C'_{\varpi(k)}$ for all $k\in[K]$, equivalently $\MQ'=\MQ\Pi$ for the permutation matrix $\Pi$.

\smallskip
\emph{Proof of (d).}

Define the Hamming graph on $\calZ$ whose vertex set is $\calZ$ and where two vertices are adjacent
if they differ in exactly one coordinate.
For each $k\in[K]$, call an edge $\{\MZ,\MZ'\}$ a $k$-edge if $\MZ,\MZ'$ differ only in coordinate $k$.

By (c), $\MQ'$ is a column permutation of $\MQ$, hence \eqref{eq:no_contain} also holds for $\MQ'$.
Now fix a $k$-edge $\{\MZ,\MZ'\}$, i.e. $\MZ,\MZ'$ differ only at coordinate $k$.
Then $D(\MZ,\MZ')=C_k$, so $D'(\sigma(\MZ),\sigma(\MZ'))=C_k=C'_{\varpi(k)}$.
If $\sigma(\MZ)$ and $\sigma(\MZ')$ differed in at least two coordinates, then by part (a) for $(\MQ',\eta')$
the set $D'\big(\sigma(\MZ),\sigma(\MZ')\big)$ would be a union of at least two distinct sets among $\{C'_h\}$.
Under \eqref{eq:no_contain} for $\MQ'$, such a union strictly contains each constituent, hence cannot equal $C'_{\varpi(k)}$.
Therefore $\sigma(\MZ)$ and $\sigma(\MZ')$ differ in exactly one coordinate.
Let that coordinate be $h$.
Then by part (a) for $(\MQ',\eta')$ applied to the pair $\sigma(\MZ),\sigma(\MZ')$ we have $D'\big(\sigma(\MZ),\sigma(\MZ')\big)=C'_h.$ Comparing with $D'(\sigma(\MZ),\sigma(\MZ'))=C'_{\varpi(k)}$ yields $C'_h=C'_{\varpi(k)}$,
and since \eqref{eq:no_contain} implies the supports $\{C'_1,\ldots,C'_K\}$ are pairwise distinct, we conclude $h=\varpi(k)$.
Consequently, $\sigma$ maps every $k$-edge to a $\varpi(k)$-edge.

Fix $k\in[K]$ and fix a context $\mathbf z_{-k}\in\prod_{h\neq k}[M_h]$.
For $t\in[M_k]$, write $\bz(\mathbf z_{-k},t)$ for the latent state whose $-k$ coordinates equal $\mathbf z_{-k}$
and whose $k$th coordinate equals $t$.
Then any two distinct vertices in the fiber
\[
\calF_{k}(\mathbf z_{-k}) := \{\bz(\mathbf z_{-k},t): t\in[M_k]\}
\]
differ in exactly one coordinate (namely $k$), hence are joined by a $k$-edge.
Since $\sigma$ maps $k$-edges to $\varpi(k)$-edges, it follows that for any $t\neq t'$,
the images $\sigma(\bz(\mathbf z_{-k},t))$ and $\sigma(\bz(\mathbf z_{-k},t'))$
differ in exactly one coordinate (namely $\varpi(k)$).
In particular, all vectors $\{\sigma(\bz(\mathbf z_{-k},t)):t\in[M_k]\}$ agree on coordinates outside $\varpi(k)$.

Therefore there exists a map $\tau_{k,\mathbf z_{-k}}:[M_k]\to[M_{\varpi(k)}]$ such that
\begin{equation}\label{eq:tau_context_def}
\bigl(\sigma(\bz(\mathbf z_{-k},t))\bigr)_{\varpi(k)}=\tau_{k,\mathbf z_{-k}}(t)
\qquad\text{for all }t\in[M_k].
\end{equation}
Because $\sigma$ is injective, the points $\sigma(\bz(\mathbf z_{-k},t))$ are all distinct as $t$ varies,
hence $\tau_{k,\mathbf z_{-k}}$ is injective. Thus $M_{\varpi(k)}\ge M_k$.
Applying the same argument to $\sigma^{-1}$ (which maps $\varpi(k)$-edges back to $k$-edges) yields $M_k\ge M_{\varpi(k)}$.
Consequently $M_{\varpi(k)}=M_k$ and $\tau_{k,\mathbf z_{-k}}\in   S_{M_k}$ is a permutation.

Next we show that $\tau_{k,\mathbf z_{-k}}$ does not depend on the context $\mathbf z_{-k}$.
Fix $t\in[M_k]$ and take two contexts $\mathbf z_{-k},\mathbf b_{-k}$.
In the Hamming graph on $\calZ$, there is a path from $\bz(\mathbf z_{-k},t)$ to $\bz(\mathbf b_{-k},t)$
that changes only coordinates in $[K]\setminus\{k\}$.
Along this path, each step is an $h$-edge for some $h\neq k$, hence its image under $\sigma$ is a $\varpi(h)$-edge.
Since $\varpi$ is a permutation, $\varpi(h)\neq\varpi(k)$ whenever $h\neq k$,
so the $\varpi(k)$-coordinate remains constant along the image path.
Therefore $(\sigma(\bz(\mathbf z_{-k},t)))_{\varpi(k)}=(\sigma(\bz(\mathbf b_{-k},t)))_{\varpi(k)}$.
By \eqref{eq:tau_context_def}, this implies $\tau_{k,\mathbf z_{-k}}(t)=\tau_{k,\mathbf b_{-k}}(t)$ for every $t$,
hence $\tau_{k,\mathbf z_{-k}}$ is the same permutation for all contexts.
Denote this common permutation by $\tau_k\in   S_{M_k}$.

We have shown that for every $\MZ=(Z_1,\ldots,Z_K)\in\calZ$ and every $k\in[K]$, $(\sigma(\MZ))_{\varpi(k)}=\tau_k(Z_k)$. Equivalently, writing $m=\varpi(k)$ and $k=\varpi^{-1}(m)$,
$(\sigma(\MZ))_{m}=\tau_{\varpi^{-1}(m)}(Z_{\varpi^{-1}(m)})$. Therefore, for all $(Z_1,\ldots,Z_K)\in\calZ$,
\[
\sigma(Z_1,\ldots,Z_K)
=
\bigl(\tau_{\varpi^{-1}(1)}(Z_{\varpi^{-1}(1)}),\ldots,\tau_{\varpi^{-1}(K)}(Z_{\varpi^{-1}(K)})\bigr).
\]
Finally, since $M_{\varpi(k)}=M_k$ for all $k$, we have $\varpi\in   S_{K,\mathbf M}$.
This proves that $\sigma$ lies in $\Bigl(\prod_{k=1}^K    S_{M_k}\Bigr)\rtimes    S_{K,\mathbf M}$.
\end{proof}
We now explain explicitly why Lemma~\ref{lem:poly_disentangle_core} implies Corollary~\ref{cor:poly_moran_indeterminacy}. Fix any $\xi\in A(\calF,\calP)$ and any $f\in\calF$. By definition of the indeterminacy set, the transformed pair $(f\circ\xi^{-1},\,\xi_{\#}p)$ also belongs to $(\calF,\calP)$. Write
\[
f' := f\circ\xi^{-1}.
\]
Since $f\in\calF$, there exists a binary matrix $\MQ$ such that $(\MQ,\{f_j\}_{j=1}^J)$ satisfies the structural restriction
\[
f_j(\bz)=f_j(\bz')\qquad\text{whenever }\bz_{K_j}=\bz'_{K_j},
\]
the generic injectivity condition
\[
f_j(\bz)\neq f_j(\bz')\qquad\text{whenever }\bz_{K_j}\neq \bz'_{K_j},
\]
and the subset condition on the columns of $\MQ$. Likewise, since $f'\in\calF$, there exists another binary matrix $\MQ'$ such that $(\MQ',\{f'_j\}_{j=1}^J)$ satisfies the analogous properties. Now set
\[
\eta_j:=f_j,\qquad \eta'_j:=f'_j,\qquad \sigma:=\xi.
\]
Then for every $j\in[J]$ and every $\bz\in\calZ$,
\[
\eta_j(\bz)=f_j(\bz)=f'_j(\xi(\bz))=\eta'_j(\sigma(\bz)),
\]
so all assumptions of Lemma~\ref{lem:poly_disentangle_core} are satisfied. Therefore Lemma~\ref{lem:poly_disentangle_core}(d) yields
\[
\xi\in \Bigl(\prod_{k=1}^K S_{M_k}\Bigr)\rtimes S_{K,\mathbf M},
\]
which is exactly the conclusion of Corollary~\ref{cor:poly_moran_indeterminacy}.

Note that the corollary is, in one respect, more restrictive in its setup than Lemma~\ref{lem:poly_disentangle_core}. The lemma assumes the subset condition only for the true matrix $\MQ$, whereas Corollary~\ref{cor:poly_moran_indeterminacy} is stated in terms of $\xi\in A(\calF,\calP)$, and hence requires both $f$ and $f\circ\xi^{-1}$ to belong to $\calF$. Consequently, both associated design matrices $\MQ$ and $\MQ'$ must satisfy the defining constraints of $\calF$, including the subset condition. 
\begin{remark}
The argument in Lemma~\ref{lem:poly_disentangle_core} is essentially combinatorial and does not rely on the finiteness of $\calZ$.
In particular, the same proof applies if one replaces $\calZ=\prod_{k=1}^K[M_k]$ by a product $\calZ=\prod_{k=1}^K\calZ_k$ of arbitrary coordinate sets, in which case the symmetry group $\bigl(\prod_{k=1}^K S_{M_k}\bigr)\rtimes S_{K,\mathbf M}$ is replaced by the group of coordinatewise bijections
\[
\Bigl(\prod_{k=1}^K \mathrm{Bij}(\calZ_k)\Bigr)\rtimes S_K,
\]
so that $\sigma$ also decompose into a coordinate permutation composed with per--coordinate relabelings.
However, when $\calZ_k$ are continuous domains (e.g.\ $\calZ_k=\RR$), the injectivity condition \eqref{eq:poly_generic_inj} is typically incompatible with mild regularity of $\eta_j$ as soon as $|K_j|>1$:
indeed, there is no continuous injective map from $\RR^{|K_j|}$ into $\RR$ for $|K_j|>1$.
For this reason we state the lemma in the discrete setting, where \eqref{eq:poly_generic_inj} is a natural condition.
\end{remark}

Fix an item $j\in[J]$ and recall $K_j=\{k\in[K]: q_{j,k}=1\}$ and
$\mathrm{supp}(\bu):=\{k\in[K]: u_k\ge 1\}$.
Define the admissible index set
\begin{equation}\label{eq:ZjQ_index_set}
\calZ_j^{\MQ}
:=
\{\bu\in\calZ:\ \mathrm{supp}(\bu)\subseteq K_j\},
\end{equation}
the corresponding feature vector
\begin{equation}\label{eq:Phi_jQ_def}
\Phi_j^{\MQ}(\MZ)
:=
\Bigl(\ \prod_{k=1}^K \calI_{k,u_k}(\MZ)\ \Bigr)_{\bu\in\calZ_j^{\MQ}}
\in\{0,1\}^{|\calZ_j^{\MQ}|},
\end{equation}
and the free coefficient subvector
\begin{equation}\label{eq:beta_jQ_def}
\bbeta_j^{\MQ}
:=
(\beta_{j,\bu})_{\bu\in\calZ_j^{\MQ}}\in\RR^{|\calZ_j^{\MQ}|}.
\end{equation}
Then we have the reduced representation
\begin{equation}\label{eq:eta_reduced_Q}
\eta_j(\MZ)=\langle \bbeta_j^{\MQ},\ \Phi_j^{\MQ}(\MZ)\rangle.
\end{equation}

\begin{lemma}\label{lem:Q_structural_restriction_for_eta}
Fix item $j$.
If $\MZ_{K_j}=\MZ'_{K_j}$ then $\eta_j(\MZ)=\eta_j(\MZ')$.
\end{lemma}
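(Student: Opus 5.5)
The argument is a direct term-by-term comparison in the reduced representation \eqref{eq:eta_reduced_Q}. We write
\[
\eta_j(\MZ)=\sum_{\bu\in\calZ_j^{\MQ}}\beta_{j,\bu}\prod_{k=1}^K \calI_{k,u_k}(\MZ),
\]
using the convention that the factor with $u_k=0$ is identically $1$, i.e.\ $\calI_{k,0}\equiv 1$. This convention is exactly the leading $1$ in each Kronecker block of $\Phi$. The key observation is that every monomial appearing here depends on $\MZ$ only through the coordinates in $K_j$.

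To justify this, fix any admissible index $\bu\in\calZ_j^{\MQ}$, so that $\mathrm{supp}(\bu)\subseteq K_j$. For each $k\notin K_j$ we have $k\notin\mathrm{supp}(\bu)$, hence $u_k=0$ and the factor $\calI_{k,u_k}(\MZ)=1$ does not depend on $\MZ$. Therefore
\[
\prod_{k=1}^K \calI_{k,u_k}(\MZ)=\prod_{k\in K_j}\mathbf 1\{Z_k\ge u_k\},
\]
with the same convention for any $u_k=0$. This expression is a function of $\MZ_{K_j}$ alone.

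Now let $\MZ_{K_j}=\MZ'_{K_j}$. Then $\Phi_j^{\MQ}(\MZ)=\Phi_j^{\MQ}(\MZ')$ coordinatewise, and taking the inner product with $\bbeta_j^{\MQ}$ gives $\eta_j(\MZ)=\eta_j(\MZ')$.

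The only point that needs care is the passage from the full representation $\langle\bbeta_j,\Phi(\MZ)\rangle$ to the reduced one \eqref{eq:eta_reduced_Q}. This passage uses the structural restriction $\beta_{j,\bu}=0$ whenever $\mathrm{supp}(\bu)\nsubseteq K_j$, which removes exactly the terms indexed outside $\calZ_j^{\MQ}$. Once that restriction is applied, the remaining steps are routine, and no step presents a real obstacle.
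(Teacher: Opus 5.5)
Your proof is correct and follows the paper's argument essentially verbatim: for every $\bu\in\calZ_j^{\MQ}$, each $k\notin K_j$ has $u_k=0$, so $\calI_{k,0}\equiv 1$ and the basis product depends only on $\MZ_{K_j}$. Hence $\Phi_j^{\MQ}(\MZ)=\Phi_j^{\MQ}(\MZ')$, and \eqref{eq:eta_reduced_Q} gives $\eta_j(\MZ)=\eta_j(\MZ')$. You also correctly note that the reduced representation comes from the structural zeros $\beta_{j,\bu}=0$ for $\mathrm{supp}(\bu)\nsubseteq K_j$.
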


\begin{proof}
Take any $\bu\in\calZ_j^{\MQ}$.
If $k\notin K_j$, then $\mathrm{supp}(\bu)\subseteq K_j$ forces $u_k=0$, hence $\calI_{k,u_k}(\cdot)=\calI_{k,0}(\cdot)\equiv 1$.
Therefore the basis product $\prod_{k=1}^K \calI_{k,u_k}(\cdot)$ depends only on $\MZ_{K_j}$.
If $\MZ_{K_j}=\MZ'_{K_j}$ then every coordinate of $\Phi_j^{\MQ}(\MZ)$ equals the corresponding coordinate of $\Phi_j^{\MQ}(\MZ')$,
and \eqref{eq:eta_reduced_Q} yields $\eta_j(\MZ)=\eta_j(\MZ')$.
\end{proof}

\begin{lemma}\label{lem:full_ladder_implies_feature_separation_Q}
For any $\MZ,\MZ'\in\calZ$,
\[
\MZ_{K_j}\neq \MZ'_{K_j}\ \Longrightarrow\ \Phi_j^{\MQ}(\MZ)\neq \Phi_j^{\MQ}(\MZ').
\]
Consequently, outside a Lebesgue-null set in the free coordinates $\bbeta_j^{\MQ}$ (equivalently, in the non-structural-zero coordinates of $\bbeta_j$),
\[
\MZ_{K_j}\neq \MZ'_{K_j}\ \Longrightarrow\ \eta_j(\MZ)\neq \eta_j(\MZ').
\]
\end{lemma}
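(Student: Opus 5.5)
The first claim is purely combinatorial, so I will prove it directly. Fix $\MZ,\MZ'$ with $\MZ_{K_j}\neq\MZ'_{K_j}$ and pick $k\in K_j$ with $Z_k\neq Z'_k$; without loss of generality $Z_k<Z'_k$. Set $u:=Z'_k\in[M_k-1]$, which is at least $1$, and let $\bu:=\bu(k,u)$ be the main-effect index with $u$ in coordinate $k$ and zeros elsewhere. Then $\mathrm{supp}(\bu)=\{k\}\subseteq K_j$, so $\bu\in\calZ_j^{\MQ}$. Because $\calI_{h,0}\equiv1$ for $h\neq k$, the $\bu$-coordinate of $\Phi_j^{\MQ}$ equals $\calI_{k,u}$. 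This gives $\calI_{k,u}(\MZ')=1$, while $Z_k\le u-1$ gives $\calI_{k,u}(\MZ)=0$. Hence $\Phi_j^{\MQ}(\MZ)\neq\Phi_j^{\MQ}(\MZ')$.

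For the generic statement I will use the reduced representation \eqref{eq:eta_reduced_Q}. For a fixed pair, the difference is
\[
\eta_j(\MZ)-\eta_j(\MZ')=\bigl\langle \bbeta_j^{\MQ},\,\Phi_j^{\MQ}(\MZ)-\Phi_j^{\MQ}(\MZ')\bigr\rangle .
\]
By the first part, this is a linear functional of $\bbeta_j^{\MQ}$ with a nonzero coefficient vector. Its zero set is therefore a proper linear hyperplane in $\RR^{|\calZ_j^{\MQ}|}$ and has Lebesgue measure zero. Since $\calZ$ is finite, only finitely many pairs $(\MZ,\MZ')$ occur, and the union of these hyperplanes is again null.

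The admissible set also imposes the open constraints $\beta_{j,\bu(k,v)}\neq0$ for all $k\in K_j$ and $v\in[M_k-1]$. These remove only coordinate hyperplanes, so the admissible set has full measure and a null set remains null relative to it. Outside the exceptional union, every pair with $\MZ_{K_j}\neq\MZ'_{K_j}$ satisfies $\eta_j(\MZ)\neq\eta_j(\MZ')$.

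There is no real obstacle here. The only point to check carefully is that the chosen threshold index $\bu(k,u)$ is indeed a free, non-structural-zero coordinate, and this follows from $k\in K_j$.
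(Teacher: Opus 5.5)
Your proposal is correct and follows essentially the same route as the paper's proof. Both pick a differing coordinate $k\in K_j$, take the main-effect threshold $\bu(k,Z'_k)$ to separate the feature vectors, and then take the finite union of the resulting proper hyperplanes in $\bbeta_j^{\MQ}$. Both also note that the nonzero-ladder constraints remove only coordinate hyperplanes.
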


\begin{proof}
Assume $\MZ_{K_j}\neq \MZ'_{K_j}$ and pick $k\in K_j$ such that $Z_k\neq Z_k'$.
Without loss of generality $Z_k<Z_k'$.
Set $v:=Z_k'\in[M_k-1]$.
Then $\calI_{k,v}(\MZ)=0$ and $\calI_{k,v}(\MZ')=1$.
Since $k\in K_j$, the coordinate $\bu(k,v)$ belongs to $\calZ_j^{\MQ}$, and its corresponding feature in \eqref{eq:Phi_jQ_def} is exactly
$\prod_{h=1}^K \calI_{h,\bu(k,v)_h}(\cdot)=\calI_{k,v}(\cdot)$, because $\bu(k,v)_h=0$ for all $h\neq k$ and $\calI_{h,0}\equiv 1$.
Hence the $\bu(k,v)$-coordinate of $\Phi_j^{\MQ}(\MZ)$ differs from that of $\Phi_j^{\MQ}(\MZ')$, proving
$\Phi_j^{\MQ}(\MZ)\neq \Phi_j^{\MQ}(\MZ')$.

For the generic injectivity, fix any distinct pair $(\MZ,\MZ')$ with $\MZ_{K_j}\neq \MZ'_{K_j}$.
By the first part, $\Phi_j^{\MQ}(\MZ)-\Phi_j^{\MQ}(\MZ')\neq 0$, so the equality
$\eta_j(\MZ)=\eta_j(\MZ')$ is the proper affine hyperplane
\[
\bigl\{\bbeta_j^{\MQ}:\ \langle \bbeta_j^{\MQ},\ \Phi_j^{\MQ}(\MZ)-\Phi_j^{\MQ}(\MZ')\rangle=0\bigr\}
\]
in $\RR^{|\calZ_j^{\MQ}|}$ (using \eqref{eq:eta_reduced_Q}).
Since $|\calZ|<\infty$, the union of these hyperplanes over all such pairs is Lebesgue-null.
Intersecting with the constraint set (which only removes finitely many coordinate hyperplanes
$\{\beta_{j,\bu(k,v)}=0\}$ and thus does not change nullness) yields the claim.
\end{proof}

From Step~1 there exists a stabilized bijection $\sigma:\calZ\to\calZ$ such that
\begin{equation}\label{eq:step2_global_eta_align_again}
\eta_j(\bz)=\eta'_j(\sigma(\bz))
\qquad\forall j\in[J],\ \forall \bz\in\calZ.
\end{equation}

Apply Lemma~\ref{lem:Q_structural_restriction_for_eta} and Lemma~\ref{lem:full_ladder_implies_feature_separation_Q}
to every item $j$ and intersect the corresponding generic sets over $j\in[J]$.
On this intersection, the pair $(\MQ,\{\eta_j\}_{j=1}^J)$ satisfies the hypotheses of Lemma~\ref{lem:poly_disentangle_core}
with $K_j=\{k:\ q_{j,k}=1\}$.
The same holds for $(\MQ',\{\eta'_j\}_{j=1}^J)$.

Applying Lemma~\ref{lem:poly_disentangle_core} to $(\MQ,\{\eta_j\})$ and $(\MQ',\{\eta'_j\})$ yields an permutation $\varpi\in   S_{K,\mathbf M}$ such that $\MQ'=\MQ\Pi$, where $\Pi$ is the permutation matrix of $\varpi^{-1}$, and the stabilized relabeling $\sigma$ must lie in $(\prod_{k=1}^K    S_{M_k})\rtimes    S_{K,\mathbf M}$.
Equivalently, there exist permutations $\tau_k\in\mathfrak S_{M_k}$ such that
\begin{equation}\label{eq:step2_sigma_wreath_form}
\sigma(z_1,\ldots,z_K)
=
\bigl(\tau_1(z_{\varpi^{-1}(1)}),\ldots,\tau_K(z_{\varpi^{-1}(K)})\bigr).
\end{equation}

At this stage we have identified the coordinate permutation $\varpi$. In the next step we remove the within-coordinate relabelings $\{\tau_k\}$, thereby concluding $\tau_k=\mathrm{Id}$ and hence
$\sigma=\sigma_{\varpi}$.

\subsubsection{Eliminate within-coordinate relabelings}
Recall that $\sigma_{\varpi}$ for the coordinate permutation on $\calZ$ is defined by $\sigma_{\varpi}(z_1,\ldots,z_K)=(z_{\varpi^{-1}(1)},\ldots,\\z_{\varpi^{-1}(K)})$, and write $\tau$ for the within-coordinate map $\tau(z_1,\ldots,z_K):=(\tau_1(z_1),\ldots,\tau_K(z_K))$,
we have the factorization $\sigma=\tau\circ\sigma_{\varpi}$.

Define the relabeled alternative predictors
\[
\widetilde\eta'_j(\bz):=\eta'_j\big(\sigma_{\varpi}(\bz)\big),\qquad (j\in[J],\ \bz\in\calZ),
\]and define the relabeled stabilized permutation $\widetilde\sigma:=\sigma_{\varpi}^{-1}\circ\sigma$.
Then $\sigma=\tau\circ\sigma_{\varpi}$ implies $\widetilde\sigma$ has within-coordinate form
\[
\widetilde\sigma(z_1,\ldots,z_K)=(\widetilde\tau_1(z_1),\ldots,\widetilde\tau_K(z_K))
\qquad \text{for some }\widetilde\tau_k\in   S_{M_k},
\]
and the Step~2 alignment $\eta_j(\bz)=\eta'_j(\sigma(\bz))$ becomes
\begin{equation}\label{eq:eta_align_after_absorb_varpi}
\eta_j(\bz)=\widetilde\eta'_j\big(\widetilde\sigma(\bz)\big)\qquad \forall j\in[J],\ \forall \bz\in\calZ.
\end{equation}

Similarly, relabel the alternative measurement matrix by permuting coordinate indices according to $\varpi$.
Let $\Pi$ denote the permutation matrix of $\varpi^{-1}$ so that Step~2 yields $\MQ'=\MQ\,\Pi$.
Define $\widetilde\MQ':=\MQ'\Pi^\top$.
Then $\widetilde\MQ'=\MQ$ and hence $K_j^{\widetilde\MQ'}=K_j$ for all $j$.
Moreover, Assumption~\ref{assm:poly_monotone_binary_style} is invariant under this coordinate relabeling.

Therefore, to prove that within-coordinate relabelings are trivial, it suffices to work with the relabeled alternative model
$(\widetilde\MQ',\{\widetilde\eta'_j\})$ and the relabeled stabilized permutation $\widetilde\sigma$.
For notational simplicity, we may assume
\[
\varpi=\mathrm{Id},
\qquad
\MQ'=\MQ,
\qquad
\sigma(z_1,\ldots,z_K)=(\tau_1(z_1),\ldots,\tau_K(z_K)),
\]
with $\tau_k\in   S_{M_k}$.

Thus we are reduced to the within-coordinate form
\begin{equation}\label{eq:sigma_within_only_fixed}
\sigma(z_1,\ldots,z_K)=(\tau_1(z_1),\ldots,\tau_K(z_K)),
\qquad \tau_k\in   S_{M_k},
\end{equation}
and we show $\tau_k=\mathrm{Id}$ for every $k$.

\begin{lemma}\label{lem:kill_within_coord_perm_binary_style}
Assume Assumption~\ref{assm:poly_monotone_binary_style} holds for both
$(\MQ,\{\eta_j\})$ and $(\MQ',\{\eta'_j\})$.
Assume also that $\MQ'=\MQ$ and that $\sigma$ has the within-coordinate form \eqref{eq:sigma_within_only_fixed}.

Fix $k\in[K]$ and let $j_k\in\calJ_3$ be an anchor item in the true design (Condition(i) in Theorem~\ref{thm:poly_generic_ident_delta})
satisfying $q_{j_k,k}=1$.
Suppose the alignment for this item holds
\begin{equation}\label{eq:eta_align_item_jk_fixed}
\eta_{j_k}(\bz)=\eta'_{j_k}(\sigma(\bz))\qquad \forall \bz\in\calZ.
\end{equation}
Then $\tau_k=\mathrm{Id}$.
\end{lemma}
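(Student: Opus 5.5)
The plan is to show that $\tau_k$ preserves every upper level set $\{z_k\ge u\}$, $u\in[M_k-1]$. A permutation of $[M_k]$ with this property fixes each singleton $\{u\}=\{z_k\ge u\}\setminus\{z_k\ge u+1\}$, so $\tau_k=\mathrm{Id}$. Write $j:=j_k$. Because $\MQ'=\MQ$, the sets $\mathcal M_j$, $\mathcal T^{(k,u)}_{1,j}$ and $\mathcal T^{(k,u)}_{0,j}$ are the same for $\eta_j$ and $\eta'_j$. By Assumption~\ref{assm:poly_monotone_binary_style}, each of them is a strictly separated upper block for $\eta_j$, and also for $\eta'_j$.

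The basic tool is an elementary uniqueness fact. Let $h$ be a real function on a finite set $A$, and call $U\subseteq A$ a \emph{separated upper set} of $h$ if $\min_U h>\max_{A\setminus U}h$. Then for each cardinality $m$ there is at most one separated upper set of size $m$, namely the $m$ states with the largest values. Moreover, if $h=h'\circ\sigma$ with $\sigma:A\to A'$ a bijection, then $\sigma$ maps separated upper sets of $h$ bijectively onto separated upper sets of $h'$, preserving cardinality.

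\emph{Step 1 (top set).} Apply the fact on $A=\calZ$ to \eqref{eq:eta_align_item_jk_fixed}. The set $\mathcal M_j$ is a separated upper set for $\eta_j$, so $\sigma(\mathcal M_j)$ is a separated upper set for $\eta'_j$ of the same size. But $\mathcal M_j$ is also a separated upper set for $\eta'_j$ of that size, so uniqueness gives $\sigma(\mathcal M_j)=\mathcal M_j$. Since $\sigma$ acts coordinatewise by \eqref{eq:sigma_within_only_fixed}, this forces $\tau_h(M_h-1)=M_h-1$ for every $h\in K_j$, and in particular for $h=k$.

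\emph{Step 2 (fiber).} Consider the fiber $F:=\{\MZ: Z_h=M_h-1\ \forall h\in K_j\setminus\{k\}\}=\mathcal T^{(k,u)}_{1,j}\cup\mathcal T^{(k,u)}_{0,j}$. By Step~1 and the coordinatewise form of $\sigma$, we have $\sigma(F)=F$, so \eqref{eq:eta_align_item_jk_fixed} restricts to the identity $\eta_j|_F=\eta'_j|_F\circ\sigma|_F$. Fix $u\in[M_k-1]$. By Assumption~\ref{assm:poly_monotone_binary_style}, $\mathcal T^{(k,u)}_{1,j}$ is a separated upper set within $F$ for both $\eta_j$ and $\eta'_j$. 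The uniqueness fact on $A=F$ then gives $\sigma(\mathcal T^{(k,u)}_{1,j})=\mathcal T^{(k,u)}_{1,j}$. Reading off the $k$-th coordinate, $\tau_k(\{u,\dots,M_k-1\})=\{u,\dots,M_k-1\}$ for every $u$, which yields $\tau_k=\mathrm{Id}$.

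The only delicate point is Step~2: the second part of the monotonicity assumption controls $\eta_j$ only on the fiber $F$, not globally. So one must first confine $\sigma$ to $F$, which is exactly what Step~1 supplies. I expect no other obstacle; in particular, the fact that $\eta_j$ depends only on $\MZ_{K_j}$ plays no role beyond ensuring that these sets are well defined.
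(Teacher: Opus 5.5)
Your proof is correct and follows the paper's structure. Step~1 is exactly the paper's argument: uniqueness of the strictly separated top block forces $\sigma(\mathcal M_j)=\mathcal M_j$, hence $\tau_h(M_h-1)=M_h-1$ for all $h\in K_j$. In Step~2 the paper restricts to a single line $\bz^{(t)}$ in the top context and uses the strictly increasing profiles in $g(t)=g'(\tau_k(t))$ to show $\tau_k$ is order-preserving. You instead apply the same separated-upper-set uniqueness on the whole fiber $F$ to show $\tau_k$ preserves every upper interval; this is a minor variant, with the small advantage that it never needs $\eta'_j$ to depend only on $\MZ_{K_j}$.
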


\begin{proof}
Write $j:=j_k$ and $R:=K_j$.
Since $\MQ'=\MQ$, we also have $K'_j=K_j=R$.

Recall the top-context block
\[
\mathcal M_j
:=
\bigl\{\bz\in\calZ:\ z_h=M_h-1\ \text{for all }h\in R\bigr\},
\qquad
m:=|\mathcal M_j|=\prod_{h\notin R} M_h.
\]
Define $\mathcal M'_j$ analogously for the alternative model. Since $K'_j=K_j$, we have $\mathcal M'_j=\mathcal M_j$.

By Assumption~\ref{assm:poly_monotone_binary_style} in the true model, every $\bz\in\mathcal M_j$
has $\eta_j(\bz)$ strictly larger than every $\bz\notin\mathcal M_j$.
Equivalently, $\mathcal M_j$ is the unique subset $S\subseteq\calZ$ with $|S|=m$ such that
\[
\max_{\bz\notin S}\eta_j(\bz)\ <\ \min_{\bz\in S}\eta_j(\bz).
\]
The same uniqueness statement holds for $\eta'_j$ with $\mathcal M'_j$ by applying
Assumption~\ref{assm:poly_monotone_binary_style} to the alternative model.

Now use the alignment \eqref{eq:eta_align_item_jk_fixed}: for any $\bz,\bz'\in\calZ$,
\[
\eta_j(\bz)>\eta_j(\bz')
\Longleftrightarrow
\eta'_j(\sigma(\bz))>\eta'_j(\sigma(\bz')).
\]
Hence $\sigma$ maps the set of states attaining the top $m$ values of $\eta_j$ onto the set of states attaining the top $m$
values of $\eta'_j$.
By uniqueness of the corresponding size-$m$ separated block, this implies
$\sigma(\mathcal M_j)=\mathcal M'_j=\mathcal M_j$.

Since $\sigma$ has the coordinatewise form \eqref{eq:sigma_within_only_fixed}, for any $h\in R$ and any $\bz\in\mathcal M_j$, $(\sigma(\bz))_h=\tau_h(M_h-1)$. But $\sigma(\bz)\in\mathcal M_j$ forces $(\sigma(\bz))_h=M_h-1$ for all $h\in R$.
Therefore
\begin{equation}\label{eq:tau_fix_top_fixed}
\tau_h(M_h-1)=M_h-1
\qquad\forall h\in R.
\end{equation}

Next, for $t\in\{0,1,\ldots,M_k-1\}$, define a top-context state $\bz^{(t)}\in\calZ$ by
\[
z_k=t,\qquad z_h=M_h-1\ \ (h\in R\setminus\{k\}),
\qquad \text{and arbitrary outside }R.
\]
By Lemma~\ref{lem:Q_structural_restriction_for_eta}, the value of $\eta_j(\bz^{(t)})$ depends only on $\bz^{(t)}_R$,
so the choice outside $R$ is irrelevant.
Define the one-dimensional profile
$g(t)=\eta_j(\bz^{(t)})$ for $t\in\{0,1,\ldots,M_k-1\}$.
We claim that $g$ is strictly increasing.
Fix any $u\in[M_k-1]$.
Since $k\in R=K_j$ (because $q_{j,k}=1$), Assumption~\ref{assm:poly_monotone_binary_style} gives
\[
\max_{\MZ\in\mathcal T^{(k,u)}_{0,j}} \eta_j(\MZ)\ <\ \min_{\MZ\in\mathcal T^{(k,u)}_{1,j}} \eta_j(\MZ).
\]
Taking $\MZ=\bz^{(t)}$ with $t\le u-1$ in $\mathcal T^{(k,u)}_{0,j}$ and with $t\ge u$ in $\mathcal T^{(k,u)}_{1,j}$ yields $g(u-1)<g(u)$, hence $g(0)<g(1)<\cdots<g(M_k-1)$.

The same ladder also holds in the alternative model along coordinate $k$. Define $\bz'^{(t)}$ in the alternative model analogously by setting
$z_k=t$, $z_h=M_h-1$ for $h\in R\setminus\{k\}$, and arbitrary outside $R$,
and set $g'(t):=\eta'_j(\bz'^{(t)})$.
Since $K'_j=K_j=R$, Assumption~\ref{assm:poly_monotone_binary_style} applied to the alternative model yields, for each $u\in[M_k-1]$, $g'(u-1)<g'(u)$ and therefore $g'(0)<g'(1)<\cdots<g'(M_k-1)$.

By \eqref{eq:tau_fix_top_fixed}, for every $h\in R\setminus\{k\}$ we have $\tau_h(M_h-1)=M_h-1$, hence $\sigma(\bz^{(t)})=\bz'^{(\tau_k(t))}$ for $t\in\{0,1,\ldots,M_k-1\}$. Using \eqref{eq:eta_align_item_jk_fixed}, we obtain
\[
g(t)=\eta_j(\bz^{(t)})=\eta'_j(\sigma(\bz^{(t)}))=\eta'_j(\bz'^{(\tau_k(t))})=g'(\tau_k(t)).
\]
If $s<t$, then we have $g(s)<g(t)$, so $g'(\tau_k(s))<g'(\tau_k(t))$.
Since $g'$ is strictly increasing, it follows that $\tau_k(s)<\tau_k(t)$ for all $s<t$.
Thus $\tau_k$ is a strictly increasing permutation of $\{0,1,\ldots,M_k-1\}$, hence $\tau_k=\mathrm{Id}$.
\end{proof}

\smallskip
Applying Lemma~\ref{lem:kill_within_coord_perm_binary_style} to each $k\in[K]$ yields $\tau_k=\mathrm{Id}$ for all $k$,
so the stabilized relabeling reduces to $\sigma=\sigma_{\varpi}$.

\subsubsection{Recover $\{\beta_j\}$ after aligning by $\sigma_{\varpi}$.}

For each item $j$, define the $r$-vector of identified linear predictors
\[
\eta_j:=\bigl(\eta_j(\bz)\bigr)_{\bz\in\calZ}\in\RR^r,
\qquad
\boeta'_j:=\bigl(\eta'_j(\bz)\bigr)_{\bz\in\calZ}\in\RR^r.
\]
After aligning by $\sigma:=\sigma_{\varpi}$, we have
\begin{equation}\label{eq:eta_table_alignment_vector}
\boeta_j = Q_{\sigma}\,\boeta'_j,
\end{equation}
where $Q_{\sigma}$ is the $r\times r$ permutation matrix acting on the state index $\bz\in\calZ$.

List $\MZ\in\calZ$ in lexicographic order and set $F=(\Phi(\MZ)^\top)_{\MZ\in\calZ}\in\RR^{r\times r}$. With the same one-coordinate ordering, let $F_k$ be the $M_k\times M_k$ matrix
\[
F_k(t,s)=
\begin{cases}
1, & s=0,\\
\mathbf 1\{t\ge s\}, & s\in\{1,\ldots,M_k-1\},
\end{cases}
\qquad t\in\{0,\ldots,M_k-1\}.
\]
Then $F_k$ is lower triangular with ones on the diagonal, hence invertible, and the lexicographic construction gives $F = F_1\otimes F_2\otimes \cdots \otimes F_K$, so $F$ is invertible.

Therefore, the saturated coefficients satisfy
\begin{equation}\label{eq:beta_from_eta}
\boeta_j = F\,\bbeta_j,
\qquad
\boeta'_j = F\,\bbeta'_j,
\qquad\text{equivalently}\qquad
\bbeta_j = F^{-1}\boeta_j,
\ \ \bbeta'_j = F^{-1}\boeta'_j.
\end{equation}

Moreover, the coordinate relabeling $\sigma=\sigma_{\varpi}$ acts on the design vectors by $\Phi(\sigma(\bz)) = P_{\varpi}\Phi(\bz)$, which implies the matrix identity
\begin{equation}\label{eq:F_perm_identity}
Q_{\sigma}\,F = F\,P_{\varpi}^{\top}.
\end{equation}
Indeed, for each $\bz\in\calZ$,
\[
(Q_{\sigma}F)_{\bz,:}
=
F_{\sigma^{-1}(\bz),:}
=
\Phi(\sigma^{-1}(\bz))^\top
=
\bigl(P_{\varpi}^{-1}\Phi(\bz)\bigr)^\top
=
\Phi(\bz)^\top P_{\varpi}^{\top}
=
(FP_{\varpi}^{\top})_{\bz,:}.
\]

Combining \eqref{eq:eta_table_alignment_vector}--\eqref{eq:F_perm_identity} yields
\[
\bbeta_j
=
F^{-1}\boeta_j
=
F^{-1}Q_{\sigma}\boeta'_j
=
F^{-1}Q_{\sigma}F\,\bbeta'_j
=
F^{-1}F P_{\varpi}^{\top}\bbeta'_j
=
P_{\varpi}^{\top}\bbeta'_j,
\]
hence
\begin{equation}\label{eq:beta_transformation}
\bbeta'_j = P_{\varpi}\,\bbeta_j,
\qquad j\in[J],
\end{equation}
which completes the proof.

\subsection{Proof of Theorem \ref{cor: local consistency}}\label{sec: proof of cor4}
We first recall two other properties of a scoring criterion \citep{chickering2002optimal}: decomposability and consistency.

First, decomposability requires that the score can be expressed as $S(G,\calD)=\sum_{i=1}^ns(R_i,\\\mathbf{Pa}_i^G)$, where each local term depends only on $R_i$ and its parent set $\MPa_i^G$.

Second, a score is said to be consistent if, in the limit as $N\to\infty$, we have $S(H,\mathcal D)>S(G,\mathcal D)$ whenever $p^\star\in\mathcal M(H)$ and $p^\star\notin\mathcal M(G)$, and $S(H,\mathcal D)<S(G,\mathcal D)$ whenever $p^\star\in\mathcal M(H)\cap\mathcal M(G)$ and $G$ contains fewer parameters than $H$.

By Lemma 7 in \cite{chickering2002optimal}, a decomposable consistent score is locally consistent. Since BDeu is decomposable, we are ready to introduce another useful concept, namely $\{c_N\}$-consistency, which is a rate-robust version of consistency and will assist our proof.
\begin{definition}\label{def:c_N consistency}
Let $p^\star=P_{\Mtheta^\star}$ and $\calD_N=\{\MR_{N,1},\dots,\MR_{N,N}\}$ be i.i.d.\ from some $p_N=P_{\Mtheta_N}$. We say a score $S$ is $\{c_N\}$-consistent if $\|\Mtheta_N-\Mtheta^\star\|=O_p(1/c_N)$ with $c_N\to\infty$, and as $N\to\infty$ the following hold:
(i) if $p^\star\in\calM(H)$ and $p^\star\notin\calM(G)$, then $S(H,\calD_N)>S(G,\calD_N)$ with probability $\to1$;
(ii) if $p^\star\in\calM(H)\cap\calM(G)$ and $G$ contains fewer parameters than $H$, then $S(G,\calD_N)>S(H,\calD_N)$ with probability $\to1$.
\end{definition}

We have the following lemma.
\begin{lemma}
Decomposability together with $\{c_N\}$-consistency implies $\{c_N\}$-local consistency.
\end{lemma}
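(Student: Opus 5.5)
The plan is to transcribe the proof of Lemma~7 in \cite{chickering2002optimal}, which derives local consistency from decomposability plus consistency. Chickering's argument uses the asymptotic statement only by applying it to two specific graphs, so replacing ``in the limit'' with ``with probability tending to one under $p_N$'' should carry over directly.

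Fix $G$ and let $G'$ be $G$ with the edge $u\to v$ added. By decomposability, $S(G',\calD_N)-S(G,\calD_N)=s(R_v,\MPa_v^G\cup\{R_u\})-s(R_v,\MPa_v^G)$, since every other local term is unchanged. Following Chickering, this difference will be compared through two auxiliary DAGs on the same nodes. Let $H$ contain only the edges $\MPa_v^G\to v$ together with a complete DAG among $\MPa_v^G\cup\{u\}$, with $u$ ordered first, so $u$ is a parent of every other node in that set. Let $H'=H\cup\{u\to v\}$. The construction is arranged so that the local scores of $v$ in $(H,H')$ coincide with those in $(G,G')$, and the other local terms of $H$ and $H'$ are identical. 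Hence
\[
S(H',\calD_N)-S(H,\calD_N)=S(G',\calD_N)-S(G,\calD_N)
\]
holds exactly for every realization of $\calD_N$. This is the step that needs care: it is deterministic, so no probabilistic issue arises there.

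The two cases then follow from $\{c_N\}$-consistency applied to $(H,H')$. The graph $H'$ is complete on $\MPa_v^G\cup\{u,v\}$ and places no further constraints, so $p^\star\in\calM(H')$. The only additional independence encoded by $H$ is $R_v\perp\!\!\!\perp R_u\mid \MPa_v^G$, and remaining nodes are isolated, so their factors are the marginals.

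\begin{itemize}
\item If $R_v\not\!\perp\!\!\!\perp_{p^\star}R_u\mid\MPa_v^G$, then $p^\star\notin\calM(H)$. Clause (i) of Definition~\ref{def:c_N consistency} gives $S(H',\calD_N)>S(H,\calD_N)$ with probability tending to one.
\item If the conditional independence holds, then $p^\star\in\calM(H)\cap\calM(H')$, and $H$ has strictly fewer parameters. Clause (ii) gives $S(H,\calD_N)>S(H',\calD_N)$ with probability tending to one.
\end{itemize}

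Transferring these through the displayed identity gives exactly clauses (i) and (ii) of Definition~\ref{def:c_N local consistency}.

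The main obstacle is to check that $\calM(H)$ imposes precisely the single needed constraint, and not spurious constraints among the isolated nodes. If the isolated nodes cause trouble, I would instead join all remaining nodes completely to $\MPa_v^G\cup\{u\}$ in an order placing them before $v$. Since $v$ then has no extra parents, its local term is unchanged. The resulting $H'$ is a complete DAG, so $p^\star\in\calM(H')$ trivially, and $H$ differs from it only by the missing edge $u\to v$. $\calM(H)$ then encodes $R_v\perp\!\!\!\perp R_u\mid$ (parents of $v$), which is the required constraint.
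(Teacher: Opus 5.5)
There is a genuine gap in your choice of the auxiliary pair $(H,H')$. In your main construction the nodes outside $\MPa_v^G\cup\{u,v\}$ are isolated, and isolated nodes are not unconstrained. $\calM(H')$ forces those variables to be mutually independent and independent of $(R_u,R_v,R_{\MPa_v^G})$. Since $p^\star$ is faithful to a DAG that typically has edges touching these nodes, in general $p^\star\notin\calM(H')$.

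For example, take $\calG^\star=1\to2\to3$, $G$ empty, and add $1\to2$. Then $H'$ is $1\to2$ with $3$ isolated, while $R_2\not\!\perp\!\!\!\perp_{p^\star}R_3$. Consequently neither clause of $\{c_N\}$-consistency applies. In the dependence case both $H$ and $H'$ exclude $p^\star$, and in the independence case $p^\star\notin\calM(H)\cap\calM(H')$.

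Your fallback does not repair this as written. If the remaining nodes $W$ are placed before $v$ but $v$ gets no new parents, then there are no edges between $W$ and $v$. So $H'$ is not complete, and $\calM(H')$ still imposes (at least) $R_v\perp\!\!\!\perp R_W\mid \MPa_v^G\cup\{R_u\}$, which $p^\star$ generally violates. The three requirements ``$W$ before $v$'', ``$v$ has no extra parents'' and ``$H'$ is complete'' cannot hold simultaneously.

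The fix is to place $W$ after $v$. Take the order $\MPa_v^G\prec u\prec v\prec W$ and let $H'$ be the complete DAG consistent with it. Each $w\in W$ then has all earlier nodes, including $v$, as parents, while $\MPa_v^{H'}=\MPa_v^G\cup\{u\}$ is unchanged. Let $H$ be $H'$ with the edge $u\to v$ removed.

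By the chain rule, a distribution factorizes according to $H$ if and only if the conditional of $R_v$ given $(R_{\MPa_v^G},R_u)$ equals its conditional given $R_{\MPa_v^G}$. So $\calM(H)$ is exactly the set where $R_v\perp\!\!\!\perp R_u\mid \MPa_v^G$ holds, and $p^\star\in\calM(H')$ trivially. With this pair, your decomposability identity and two-case argument go through verbatim. This ordering is precisely the construction used in the paper's proof.
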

\begin{proof}

Fix a DAG $G$ and an addition $i\to j$, and write $G'=G+(i\to j)$. By decomposability,\[S(G',D_N)-S(G,D_N)= s\big(R_j,\MPa^G_j\cup\{R_i\}\big) - s\big(R_j,\MPa_j^G\big),\]so the score change depends only on the local family at $X_j$.

To analyze this local change, construct a convenient comparison pair $(H,H')$ as follows. Choose a total order $\tau$ of the vertices in which every node in $\MPa^G_j$ comes before $i$, $i$ comes before $j$, and $j$ comes before all remaining nodes. Let $H'$ be the complete (tournament) DAG consistent with $\tau$ (i.e., orient every pair $u\prec_\tau v$ as $u\to v$). Then $\MPa^{H'}_j=\MPa^G_j\cup\{R_i\}$. Define $H$ by deleting the single edge $i\to j$ from $H'$. Deleting an edge preserves acyclicity and yields $\MPa^{H}_j=\MPa^G_j$, and $H'=H+(i\to j)$. Since $H$ and $H'$ differ only at the family of $R_j$, decomposability gives\[S(H',D_N)-S(H,D_N)= s\big(R_j,\MPa^G_j\cup\{R_i\}\big) - s\big(R_j,\MPa_j^G\big)= S(G',D_N)-S(G,D_N).\]

Now apply $\{c_N\}$-consistency to the global comparison between $H$ and $H'$. In the dependence case $R_j \not\!\perp\!\!\!\perp_{p^\star} R_i\mid \MPa_j^G$, the complete DAG $H'$ imposes no conditional-independence constraints and therefore contains $p^\star$, whereas $H$ enforces the false constraint $R_j \perp\!\!\!\perp R_i \mid \MPa^G_j$ and thus excludes $p^\star$. By $\{c_N\}$-consistency, $S(H',D_N)>S(H,D_N)$ with probability $\to 1$, hence $S(G',D_N)>S(G,D_N)$ with probability $\to 1$.

In the independence case $R_j \perp\!\!\!\perp_{p^\star} R_i\mid \MPa_j^G$, both $H$ and $H'$ contain $p^\star$, but $H'$ has strictly more parameters (one extra parent for $R_j$). By the second clause of $\{c_N\}$-consistency, $S(H,D_N)>S(H',D_N)$ with probability $\to 1$, hence $S(G,D_N)>S(G',D_N)$ with probability $\to 1$. This is exactly $\{c_N\}$-local consistency.
\end{proof}

Now it suffices to show BDeu is $\{c_N\}$-consistent for discrete causal graphical models, where $c_N=\omega(\sqrt{\frac{N}{\log N}})$.

Denote the finite state space by $\calZ$. By Assumption~\ref{assm1}(a) there exists $\varepsilon \in (0, \tfrac12\min_{\bz}\p^\star_\bz).$ Define the high-probability event $\mathcal E_N:=\{\|\p_N-\p^\star\|_\infty<\varepsilon\}$, for which  $\PP(\mathcal E_N)\to1$ since $\|\p_N-\p^\star\|=O_p(\tfrac{1}{c_N})$ while $c_N\to\infty$. Since $\PP(\calE_N)\to 1$, it suffices to establish all subsequent asymptotic statements on $\calE_N$. 

On $\mathcal E_N$ we have
$\min_\bz (\p_N)_\bz \ge \varepsilon$. Fix any baseline $\bz_0$ and enumerate the remaining $d = |\calZ|-1$ states as
$\bz_1,\dots,\bz_d$. Define $\calT_\bz(\p):= \log\!\Big(\frac{\p_\bz}{\p_{\bz_0}}\Big)$ for $ \bz_1,\dots,\bz_d$. Set $\Mtheta^\star:=\calT(\p^\star)$ and ${\Mtheta}_N:=\calT(\p_N)$, where $\calT$ is a $C^\infty$ diffeomorphism from $\Delta^\circ_{d}$ onto $\RR^d$. The whole line segment
$[\p^\star,\p_N] := \big\{\p^\star+t(\p_N-\p^\star):t\in[0,1]\big\}$ is contained in the convex, compact set
$K_\varepsilon := \big\{\p: \p_\bz\ge\varepsilon \text{ for all }\bz\text{ and }  \sum_{\bz}\p_\bz=1\big\}$,
which lies strictly inside the positive simplex.

The map $\calT$ is a composition of an affine map and the coordinatewise logarithm on the open set
$\{\p_\bz>0,\ \sum_\bz\p_\bz=1\}$, so $\calT$ is continuously differentiable there and the Jacobian
$\nabla \calT(\p)$ is a continuous function of $\p$. Since $K_\varepsilon$ is compact, we have
\[
L_\varepsilon\ :=\ \sup_{\p\in K_\varepsilon}\ \big\|\nabla \calT(\p)\big\|_{\mathrm{op}}\ <\ \infty.
\]
Therefore $\calT$ is $L_\varepsilon$-Lipschitz on $K_\varepsilon$. Applying the integral form of Taylor’s theorem along the segment $[\p^\star,\p_N]$, we obtain
\[
\|{\Mtheta}_N-\Mtheta^\star\|_2
 \le
L_\varepsilon\|\p_N-\p^\star\|_2=\ O_p(\tfrac{1}{c_N}).
\]


On the event $\calE_N$, every coordinate of $\p^\star$ and $\p_N$ is strictly positive, so both
$\Mtheta^\star = \calT(\p^\star)$ and ${\Mtheta}_N = \calT(\p_N)$ lie in the natural parameter
space associated with the saturated multinomial family on the finite state space $\calZ$.
For $\bz\in\calZ$ define the sufficient statistic
\[
  \phi_j(\bz) := \mathbf 1\{\bz = \bz_j\}, \qquad j=1,\dots,d,
  \qquad
  \phi(\bz) := \bigl(\phi_1( \bz),\dots,\phi_d(\bz)\bigr)^\top \in \RR^d.
\]
With respect to the counting measure on $\calZ$, the saturated multinomial family can be written in
canonical exponential-family form as
\[
  p_\Mtheta(\bz)
  \;=\;
  h(\bz)\exp\!\bigl\{\langle \Mtheta,\phi(\bz)\rangle - A(\Mtheta)\bigr\},
  \qquad \bz\in\calZ,
\]
where we take $h(\bz)\equiv 1$ and $\Mtheta=(\theta_1,\dots,\theta_d)^\top\in\MXi=\RR^d$, and the
log-partition function is
\[
  A(\Mtheta)
  \;:=\;
  \log\Bigl(1 + \sum_{j=1}^d e^{\theta_j}\Bigr).
\]
This saturated multinomial family is regular and minimal. The mean map
$\Mmu(\Mtheta):=\nabla_\Mtheta A(\Mtheta)$ has components
\(
  \mu_j(\Mtheta) = \PP_\Mtheta(\MZ=\bz_j)
\),
and the Fisher information
\(
  I(\Mtheta):=\nabla_\Mtheta^2 A(\Mtheta)
\)
is continuous on $\MXi=\RR^d$. In particular, at the true parameter
$\Mtheta^\star = \calT(\p^\star)$ we have $I(\Mtheta^\star)\succ 0$.

By continuity of $I(\Mtheta)$, we can choose a convex open neighborhood (for example, a small open ball) $U_\Mtheta\subset\MXi$ with $\Mtheta^\star\in U_\Mtheta$ such that
\[
  \lambda_+ \succeq I(\Mtheta) \succeq \lambda_- > 0
  \qquad (\Mtheta\in U_\Mtheta).
\]
Then $A$ is $\lambda_-$–strongly convex on $U_\Mtheta$, so $\nabla A$ is injective and, by the inverse function theorem, the gradient map $\nabla A: U_\Mtheta \to U_\Mmu:=\nabla A(U_\Mtheta)$ is a $C^\infty$ diffeomorphism onto its image, and $\Mmu^\star:=\Mmu(\Mtheta^\star)\in U_\Mmu$. We can further pick $s>0$ such that $B(\Mmu^\star,s)\subseteq U_\Mmu$.
All model comparisons are taken over feasible sets $\MXi_M=\{\Mtheta: \calT^{-1}(\Mtheta)\in\calM(M)\}$ with $ M\in\{G,H\}.$ Write $f(\Mtheta;\Mmu):=\langle\Mmu,\Mtheta\rangle-A(\Mtheta)$ and $V_M(\Mmu):=\sup_{\Mtheta\in\MXi_M}f(\Mtheta;\Mmu)$.


Recall that $\MR_{N,1},\dots,\MR_{N,N}\overset{\mathrm{i.i.d.}}{\sim}\p_{N}$. Define $\hat{\Mmu}_N:=N^{-1}\sum_{i=1}^N \phi(\MR_{N,i})$. Since $\|\Mtheta_N-\Mtheta^\star\|=O_p(\frac{1}{c_N})$ and $U_\Mtheta$ is open, there exists $\tau>0$ such that $\overline{B(\Mtheta^\star,\tau)}\subseteq U_\Mtheta$ and $\mathbb{P}(\Mtheta_N\in \overline{B(\Mtheta^\star,\tau)})\to1$. 

\begin{lemma}\label{lem:V-cont}
$V_M$ is continuous at $\Mmu^\star$.
\end{lemma}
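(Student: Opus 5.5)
The plan is to avoid analysing the maximizers of $f(\cdot;\Mmu)$ over $\MXi_M$ and instead use convexity: $V_M$ is a pointwise supremum of affine functions of $\Mmu$, so it is convex. A convex function that is finite on an open set is continuous there. It therefore suffices to show that $V_M$ is finite (neither $+\infty$ nor $-\infty$) on an open neighbourhood of $\Mmu^\star$; the ball $B(\Mmu^\star,s)\subseteq U_\Mmu$ already fixed above will serve.

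\emph{Convexity.} For each fixed $\Mtheta\in\MXi_M$, the map $\Mmu\mapsto f(\Mtheta;\Mmu)=\langle\Mmu,\Mtheta\rangle-A(\Mtheta)$ is affine in $\Mmu$. Hence $V_M(\Mmu)=\sup_{\Mtheta\in\MXi_M}f(\Mtheta;\Mmu)$ is convex as an extended-real-valued function on $\RR^d$.

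\emph{Lower bound.} The set $\MXi_M$ is nonempty. For instance, the uniform distribution on $\calZ=\{0,1\}^K$ has mutually independent coordinates, so it is Markov to every DAG $M$, and it has full support, so its image under $\calT$ lies in $\MXi_M$. Fixing any such $\Mtheta_0\in\MXi_M$ gives $V_M(\Mmu)\ge\langle\Mmu,\Mtheta_0\rangle-A(\Mtheta_0)>-\infty$ for all $\Mmu$.

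\emph{Upper bound.} Because $\MXi_M\subseteq\RR^d$, we have $V_M(\Mmu)\le\sup_{\Mtheta\in\RR^d}f(\Mtheta;\Mmu)=A^*(\Mmu)$, the convex conjugate of the log-partition function. For the regular minimal saturated multinomial family, $A^*$ is finite exactly on the closed mean polytope. Explicitly, $A^*(\Mmu)=\sum_{j}\mu_j\log\mu_j+\mu_0\log\mu_0$ with $\mu_0=1-\sum_j\mu_j$, which is bounded above by $0$ there. Moreover, $\nabla A(\RR^d)$ is the interior of that polytope, so $U_\Mmu=\nabla A(U_\Mtheta)$ and hence $B(\Mmu^\star,s)$ lie in the interior. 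Therefore $V_M$ is finite and convex on the open ball $B(\Mmu^\star,s)$. The standard result that a convex function finite on an open convex subset of $\RR^d$ is (locally Lipschitz) continuous there then gives continuity at $\Mmu^\star$.

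The main obstacle is the finiteness step, specifically making sure that $\MXi_M\neq\varnothing$ and that the ball around $\Mmu^\star$ lies inside the effective domain of $A^*$. Both follow directly from the explicit form of the saturated multinomial family. If one instead tried a direct $\varepsilon$–$\delta$ argument, one would need uniformly bounded near-maximizers in $\MXi_M$, which can fail because $\MXi_M$ is unbounded. The convexity route sidesteps this entirely.
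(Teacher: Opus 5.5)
Your proof is correct and follows essentially the same route as the paper: $V_M$ is convex as a supremum of affine maps, it is bounded above by $A^\ast$, which is finite on the ball $B(\Mmu^\star,s)\subseteq U_\Mmu$, and a convex function finite on an open convex set is continuous there. Your lower bound, obtained by exhibiting a full-support product distribution in $\MXi_M$, is a small but genuine addition: the paper asserts finiteness of $V_M$ from the upper bound alone and leaves $\MXi_M\neq\varnothing$ implicit.
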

\begin{proof}
For each $\Mtheta\in \MXi_M$,
the map $\Mmu\mapsto\langle\Mmu,\Mtheta\rangle-A(\Mtheta)$ is affine. As a pointwise supremum of affine functions, $V_M$ is convex. Furthermore,
\[
V_M(\mu) = \sup_{\Mtheta\in \MXi_M}\{\langle\Mmu,\Mtheta\rangle-A(\Mtheta)\}
 \le
\sup_{\Mtheta\in\text{int}(\MXi)}\{\langle\Mmu,\Mtheta\rangle-A(\Mtheta)\}
 = A^\ast(\Mmu).
\]
If $\Mmu\in U_\Mmu$, then $\sup_{\Mtheta\in\text{int}(\MXi)}\{\langle\Mmu,\Mtheta\rangle-A(\Mtheta)\}=\langle\Mmu,(\nabla A)^{-1}(\Mmu)\rangle-A((\nabla A)^{-1}(\Mmu))<\infty$, thus $A^\ast(\Mmu)<\infty$ for all
$\Mmu\in U_\Mmu$. Thus $V_M$ is finite on $U_\Mmu$. As \(V_M\) is convex and finite on the open convex set \(B(\Mmu^\star,s)\), it is continuous throughout \(B(\Mmu^\star,s)\), in particular at \(\Mmu^\star\).
\end{proof}
\begin{lemma}\label{lem: convergence-rate}
    $\|\hat\Mmu_N-\Mmu^\star\|=O_p(N^{-1/2}+c_N^{-1})$.
\end{lemma}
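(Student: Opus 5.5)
We split the error with the triangle inequality:
\[
\|\hat\Mmu_N-\Mmu^\star\|\le \|\hat\Mmu_N-\Mmu(\Mtheta_N)\|+\|\Mmu(\Mtheta_N)-\Mmu^\star\|,
\]
where $\Mmu(\Mtheta_N)=\nabla A(\Mtheta_N)$ is the mean of $\phi(\MR_{N,i})$ under $\p_N$. The first term is a sampling fluctuation around the current law $\p_N$, and the second is a deterministic-given-$\p_N$ bias from $\p_N\neq\p^\star$. We bound each separately.

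For the bias, we work on the event $\{\Mtheta_N\in\overline{B(\Mtheta^\star,\tau)}\}$, which has probability tending to one. On this event the segment $[\Mtheta^\star,\Mtheta_N]$ lies in the convex set $U_\Mtheta$, where $\nabla^2 A\preceq\lambda_+$. The mean value inequality then gives
\[
\|\nabla A(\Mtheta_N)-\nabla A(\Mtheta^\star)\|\le \lambda_+\|\Mtheta_N-\Mtheta^\star\|=O_p(c_N^{-1}).
\]
Alternatively, we can note that $\Mmu(\Mtheta_N)$ is just the vector $((\p_N)_{\bz_j})_{j\le d}$. Hence the bias is bounded by $\|\p_N-\p^\star\|$ directly, and this is $O_p(c_N^{-1})$ from the earlier Lipschitz comparison of the two parameterizations.

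For the fluctuation, we condition on $\p_N$, which is fixed or independent of the resampling noise. Given $\p_N$, the vectors $\phi(\MR_{N,i})$ are i.i.d. indicator vectors with mean $\Mmu(\Mtheta_N)$. Each coordinate has variance at most $1/4$, so
\[
\E\big[\|\hat\Mmu_N-\Mmu(\Mtheta_N)\|_2^2\mid \p_N\big]\le d/(4N).
\]
Conditional Chebyshev then gives $\PP(\|\hat\Mmu_N-\Mmu(\Mtheta_N)\|>M N^{-1/2}\mid\p_N)\le d/(4M^2)$, uniformly in $\p_N$. Taking expectations removes the conditioning, so the first term is $O_p(N^{-1/2})$ unconditionally. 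Combining the two bounds yields $O_p(N^{-1/2}+c_N^{-1})$.

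The main obstacle is the triangular-array structure: the sampling law changes with $N$ and may be random. The point to get right is that the variance bound is uniform over all laws on the finite space $\calZ$. This uniformity is what allows the conditional Chebyshev bound to pass to an unconditional $O_p$ statement without any further regularity of $\p_N$.
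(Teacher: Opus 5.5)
Your proof is correct and takes essentially the same route as the paper: a triangle-inequality split into a conditional-Chebyshev fluctuation term around $\Mmu(\Mtheta_N)$ and a Lipschitz bias term $\|\Mmu(\Mtheta_N)-\Mmu^\star\|$, combined on a high-probability localization event. The differences are minor: you bound the conditional variance uniformly by $d/(4N)$ because the coordinates are indicators, and you note that the bias is simply bounded by $\|\p_N-\p^\star\|$, whereas the paper bounds the variance by $\operatorname{tr} I(\Mtheta)\le d\lambda_+$ and the bias by a Lipschitz constant of $\nabla A$, both on the ball $\overline{B(\Mtheta^\star,\tau)}$.
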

\begin{proof}
Let $\calF_N:=\{\Mtheta_N\in\overline{B(\Mtheta^\star,\tau)}\}$. Because $I(\Mtheta)=\nabla^2_{\Mtheta}A(\Mtheta)$ is continuous on $U_{\Mtheta}$ and satisfies $\lambda_- \MI_d \preceq I(\Mtheta)\preceq \lambda_+\MI_d$ for all $\Mtheta\in\overline{B(\Mtheta^\star,\tau)}$, we have $\sup_{\Mtheta\in\overline{B(\Mtheta^\star,\tau)}} \operatorname{tr} I(\Mtheta)\le d\lambda_+<\infty$.

For any $t>0$, write the probability with the intersection as $$\PP\big(\sqrt{N}\|\hat{\Mmu}_N-\Mmu(\Mtheta_N)\|>t, \calF_N\big)=\E\big[\mathbf 1_{\calF_N}\PP\big(\sqrt{N}\|\hat{\Mmu}_N-\Mmu(\Mtheta_N)\|>t\mid \Mtheta_N\big)\big].$$ On $\{\Mtheta_N=\Mtheta\in \calF_N\}$ we have $\mathrm{Var}(\hat{\Mmu}_N)=I(\Mtheta)/N$, hence $\E\big[N\|\hat{\Mmu}_N-\Mmu(\Mtheta_N)\|^2\mid \Mtheta_N\big]=\operatorname{tr} I(\Mtheta_N)$. Chebyshev’s inequality yields $$\PP\big(\sqrt{N}\|\hat{\Mmu}_N-\Mmu(\Mtheta_N)\|>t, \calF_N\big)\le t^{-2}\E\big[\mathbf 1_{\calF_N}\operatorname{tr} I(\Mtheta_N)\big]\le d\lambda_+/t^2,$$ equivalently $\PP\big(\|\hat{\Mmu}_N-\Mmu(\Mtheta_N)\|>t/\sqrt{N}, \calF_N\big)\le d\lambda_+/t^2$ for all $t>0$. 

By continuity of $\nabla^2 A$ on the compact $\overline{B(\Mtheta^\star,\tau)}$, there exists $L<\infty$ with $\|\nabla^2A(\Mtheta)\|_{\mathrm{op}}\le L$ on this set, so $\nabla A$ is $L$-Lipschitz there. Consequently, for any $\eta>0$ we have $$\PP\!\big(\|\Mmu(\Mtheta_N)-\Mmu^\star\|>\eta,\ \calF_N\big)\le \PP\!\big(L\|\Mtheta_N-\Mtheta^\star\|>\eta\big)=\PP\!\big(\|\Mtheta_N-\Mtheta^\star\|>\eta/L\big).$$ Taking $\eta=K'/c_N$ gives $\PP\!\big(\|\Mmu(\Mtheta_N)-\Mmu^\star\|>K'/c_N,\ \calF_N\big)=o(1)$ because $\|\Mtheta_N-\Mtheta^\star\|=O_p(c_N^{-1})$.

For any $K,K'>0$, 
\begin{align*}
&\PP\!\big(\|\hat{\Mmu}_N-\Mmu^\star\|>\frac{K}{\sqrt{N}}+\frac{K'}{c_N}\big)
\\\le& \PP\!\big(\|\hat{\Mmu}_N-\Mmu(\Mtheta_N)\|>\frac{K}{\sqrt{N}},\calF_N\big)
+\PP\!\big(\|\Mmu(\Mtheta_N)-\Mmu^\star\|>\frac{K'}{c_N},\calF_N\big)
+\PP(\calF_N^c).
\end{align*}
Using the two bounds above together with $\PP(\calF_N^c)\to0$ gives $\limsup_{N\to\infty}\PP\!\big(\|\hat{\Mmu}_N-\Mmu^\star\|>K/\sqrt{N}+K'/c_N\big)\le d\lambda_+/K^{2}$. Since the bound holds for all $K,K'>0$, take $K'=K$ to get, for every $K>0$, $\limsup_{N\to\infty}\PP\!\Big(\|\hat{\Mmu}_N-\Mmu^\star\|>K\big(N^{-1/2}+c_N^{-1}\big)\Big)\ \le\ \frac{d\lambda_+}{K^2}.$
Given $\varepsilon>0$, choose $K_\varepsilon\ge\sqrt{d\lambda_+/\varepsilon}$. Then there exists $N_\varepsilon$ such that for all $N\ge N_\varepsilon$,
\[
\PP\!\Big(\|\hat{\Mmu}_N-\Mmu^\star\|>K_\varepsilon\big(N^{-1/2}+c_N^{-1}\big)\Big)\le \varepsilon,
\]
which is exactly $\|\hat{\Mmu}_N-\Mmu^\star\|=O_p\!\big(N^{-1/2}+c_N^{-1}\big)$.
\end{proof}

\subsubsection*{Case 1: $\Mtheta^\star\in\MXi_H\setminus\MXi_G$}
Because $I\succ0$ on $\text{int}(\MXi)$, $f(\cdot;\Mmu^\star)$ is strictly concave on $\text{int}(\MXi)$ and has a unique maximizer $\Mtheta^\star=(\nabla A)^{-1}(\Mmu^\star)$ on $\text{int}(\MXi)$. Furthermore, since $\Mtheta^\star \in \MXi_H$ but $\Mtheta^\star \notin \MXi_G$, we define
\begin{equation}\label{eq:I}
\epsilon_0 
  := \sup_{\Mtheta \in \MXi_H} f(\Mtheta;\Mmu^\star) 
   - \sup_{\Mtheta \in \MXi_G} f(\Mtheta;\Mmu^\star)
  = f(\Mtheta^\star;\Mmu^\star) - \sup_{\Mtheta \in \MXi_G} f(\Mtheta;\Mmu^\star).
\end{equation}
Since $A$ is $\lambda_-$–strongly convex on $U_\Mtheta$, $f(\Mtheta;\Mmu^\star)=\langle\Mmu^\star,\Mtheta\rangle-A(\Mtheta)$ is $\lambda_-$–strongly concave on $U_\Mtheta$. 

Note that the independence constraints of \(G\) are given by polynomial equalities in the joint probabilities, so the set
\(\calM(G):=\{\p:\text{independence constraints of }G\text{ hold};\sum_x \p_x=1\}\) is algebraic and hence Euclidean closed. Therefore, \(\calM(G)\cap \Delta_{d}^\circ\) is closed in
\(\Delta_{d}^\circ\) under the relative topology. Since \(\calT\) is a homeomorphism,
\(\MXi_G\) is closed in \(\MXi=\mathbb{R}^{d}\). 

Because $\MXi_G$ is relatively closed in $\mathrm{int}(\MXi)$ and excludes $\Mtheta^\star$, we have $\delta=\inf_{\Mtheta\in\MXi_G}\|\Mtheta^\star-\Mtheta\|_2>0$. Pick any $r\in(0,\delta)$ with $\overline{B(\Mtheta^\star,r)}\subset U_\Mtheta$. For any $\Mtheta\in\MXi_G$, set $t=\frac{r}{\|\Mtheta-\Mtheta^\star\|}\in(0,1)$ and $\Mtheta_r=\Mtheta^\star+t(\Mtheta-\Mtheta^\star)$. Then $\|\Mtheta_r-\Mtheta^\star\|=r$ and $\Mtheta_r\in \overline{B(\Mtheta^\star,r)}\subset U_\Mtheta$. By concavity of $f(\cdot;\Mmu^\star)$ on the convex set $\text{int}(\MXi)$, $f(\Mtheta_r;\Mmu^\star) \ge (1-t)f(\Mtheta^\star;\Mmu^\star)+tf(\Mtheta;\Mmu^\star)$,
hence $f(\Mtheta;\Mmu^\star)\le f(\Mtheta_r;\Mmu^\star)$. By $\lambda_-$–strong concavity on $U_\Mtheta$,
\[
f(\Mtheta^\star;\Mmu^\star)-f(\Mtheta_r;\Mmu^\star)\ \ge\ \frac{\lambda_-}{2}\,\|\Mtheta_r-\Mtheta^\star\|^2\ =\ \frac{\lambda_-}{2}\,r^2.
\]
Therefore, we have
\[
f(\Mtheta;\Mmu^\star)\ \le\ f(\Mtheta^\star;\Mmu^\star)-\frac{\lambda_-}{2}\,r^2\qquad(\forall\,\Mtheta\in\MXi_G).
\]
It follows that $\epsilon_0\ge(\lambda_-/2)\,r^2>0$.

By Lemma~\ref{lem:V-cont} and Lemma~\ref{lem: convergence-rate}, with probability
$\to1$,
\[
\sup_{\Mtheta \in \MXi_H} f(\Mtheta;\hat\Mmu_N)-\sup_{\Mtheta \in \MXi_G} f(\Mtheta;\hat\Mmu_N)\ \ge\ \epsilon_0/2,
\]
hence
\begin{align*}
S(H,\mathcal D)-S(G,\mathcal D)&= \tfrac12(d_\calG-d_\calH)\log N\ +\ N\!\Big[\sup_{\Mtheta \in \MXi_H} f(\Mtheta;\hat\Mmu_N)-\sup_{\Mtheta \in \MXi_G} f(\Mtheta;\hat\Mmu_N)\Big]+O_p(1)
\\&>\tfrac12(d_\calG-d_\calH)\log N+\tfrac{N\epsilon_0}{2}+O_p(1)
\\&>0.
\end{align*}
We now make explicit why the BDeu (BD) score admits a BIC expansion with an $O_p(1)$ remainder under the triangular-array sampling scheme $\MR_{N,1},\dots,\MR_{N,N}\overset{\mathrm{i.i.d.}}{\sim}p_N$, where $p_N$ is random. Recall that when $p_N$ is fixed, for discrete causal graphical models, Theorem~18.1 of \citet{koller2009probabilistic} gives
\[
S_{\text{BDeu}}(\mathcal D\mid G)=\ell(\hat\Mtheta;\mathcal D)-\tfrac12 d_G\log N+O(1).
\]
However, since $p_N$ is random here, we need to show why this remainder becomes $O_p(1)$.

Fix a candidate DAG $M$ on the discrete vector $\MZ=(Z_v)_{v\in V}$.
For each node $v\in V$, write $r_v$ and $q_v$ for the number of states of $Z_v$ and the number of parent configurations of $\MPa_M(v)$, respectively. Index the parent configurations of $\MPa_M(v)$ by $u\in\{1,\dots,q_v\}$ and the states of $Z_v$ by
$k\in\{1,\dots,r_v\}$.
Given the dataset $\mathcal D:=\{\MR_{N,1},\dots,\MR_{N,N}\}$ with $\MR_{N,i}\in\calZ$,
define the empirical counts
\[
N_{vuk}
:=
\sum_{i=1}^N \mathbf 1\Big\{ (\MR_{N,i})_{\MPa_M(v)}=u,\ (\MR_{N,i})_v=k\Big\},
\qquad
N_{vu}
:=
\sum_{k=1}^{r_v} N_{vuk}
=
\sum_{i=1}^N \mathbf 1\Big\{ (\MR_{N,i})_{\MPa_M(v)}=u\Big\}.
\]
Write $\theta_{vuk}:=\PP(Z_v=k\mid \MPa_M(v)=u)$ and $\theta_{vu}:=(\theta_{vu1},\dots,\theta_{vur_v})\in\Delta_{r_v-1}$,
and assume independent Dirichlet priors:
for each $(v,u)$,
\[
\theta_{vu}\sim\mathrm{Dir}(\alpha_{vu1},\dots,\alpha_{vur_v}),
\qquad
\alpha_{vuk}>0,
\qquad
\alpha_{vu}:=\sum_{k=1}^{r_v}\alpha_{vuk}.
\]
For BDeu, one uses the special choice $\alpha_{vuk}\equiv \alpha/(q_v r_v)$, hence $\alpha_{vu}\equiv \alpha/q_v$.

Now we write
\begin{align}
\log P(\mathcal D\mid M)
&=
\sum_{v\in V}\sum_{u=1}^{q_v}
\Bigg[
\log\Gamma(\alpha_{vu})
-\log\Gamma(\alpha_{vu}+N_{vu})
+\sum_{k=1}^{r_v}\Big(\log\Gamma(\alpha_{vuk}+N_{vuk})-\log\Gamma(\alpha_{vuk})\Big)
\Bigg].
\\&=
C_M(\alpha)
+
\sum_{v\in V}\sum_{u=1}^{q_v}
\Bigg[
\sum_{k=1}^{r_v}\log\Gamma(\alpha_{vuk}+N_{vuk})
-\log\Gamma(\alpha_{vu}+N_{vu})
\Bigg],
\label{eq:BD-log-split}\end{align}
where $C_M(\alpha)=
\sum_{v\in V}\sum_{u=1}^{q_v}
[
\log\Gamma(\alpha_{vu})
-\sum_{k=1}^{r_v}\log\Gamma(\alpha_{vuk})
]$
depends only on $M$ and the hyperparameters.

We now analyze one fixed CPT row $(v,u)$.
To simplify notation in this row, write
\[
r:=r_v,\quad
n_k:=N_{vuk},\quad
n:=N_{vu}=\sum_{k=1}^r n_k,\quad
a_k:=\alpha_{vuk},\quad
a:=\alpha_{vu}=\sum_{k=1}^r a_k.
\]
In what follows we work on the event $\min_{1\le k\le r} n_k\ge 1$, so that $\log n_k$, $\log\hat\theta_k$, and $\log(1+a_k/n_k)$ are well-defined. We will show this event holds with probability $\to1$ on $\calE_N$ later (see \eqref{eq:EA-prob}).

Define 
$\hat\theta_k:=\frac{n_k}{n}$, then the row log-likelihood at the MLE is
\[
\ell_{vu}(\hat\theta;\mathcal D)
:=
\sum_{k=1}^r n_k\log\hat\theta_k
=
\sum_{k=1}^r n_k\log\Big(\frac{n_k}{n}\Big).
\]

We apply Stirling's expansion with an explicit remainder:
for all $x\ge 1$,
\begin{equation}\label{eq:stirling}
\log\Gamma(x)
=
\Bigl(x-\tfrac12\Bigr)\log x-x+\tfrac12\log(2\pi)+\eta(x),
\qquad
|\eta(x)|\le \frac{1}{12x}.
\end{equation}
Apply \eqref{eq:stirling} to each $\log\Gamma(a_k+n_k)$ and to $\log\Gamma(a+n)$.
We obtain
\begin{align}
\sum_{k=1}^r\log\Gamma(a_k+n_k)-\log\Gamma(a+n)
&=
\sum_{k=1}^r\Bigl(a_k+n_k-\tfrac12\Bigr)\log(a_k+n_k)
-\Bigl(a+n-\tfrac12\Bigr)\log(a+n)
\notag\\
&\quad
+\frac{r-1}{2}\log(2\pi)
+\sum_{k=1}^r\eta(a_k+n_k)-\eta(a+n).
\label{eq:row-after-stirling}
\end{align}

Define
\begin{align}
T_1
&:=
\sum_{k=1}^r\Bigl(a_k+n_k-\tfrac12\Bigr)\log n_k
-\Bigl(a+n-\tfrac12\Bigr)\log n.
\label{eq:T1-def}
\\T_2
&:=
\sum_{k=1}^r\Bigl(a_k+n_k-\tfrac12\Bigr)\log\Big(1+\frac{a_k}{n_k}\Big)
-\Bigl(a+n-\tfrac12\Bigr)\log\Big(1+\frac{a}{n}\Big).
\label{eq:T2-def}
\end{align}
Then we have:
\begin{align}
T_1
&=
\sum_{k=1}^r\Bigl(a_k+n_k-\tfrac12\Bigr)\Bigl(\log n+\log\Big(\frac{n_k}{n}\Big)\Bigr)
-\Bigl(a+n-\tfrac12\Bigr)\log n
\\
&=
\Biggl(\sum_{k=1}^r(a_k+n_k-\tfrac12)-(a+n-\tfrac12)\Biggr)\log n
+
\sum_{k=1}^r\Bigl(a_k+n_k-\tfrac12\Bigr)\log\Big(\frac{n_k}{n}\Big).
\\
&=
-\frac{r-1}{2}\log n
+
\sum_{k=1}^r n_k\log\Big(\frac{n_k}{n}\Big)
+
\sum_{k=1}^r\Bigl(a_k-\tfrac12\Bigr)\log\Big(\frac{n_k}{n}\Big)
\notag\\
&=
-\frac{r-1}{2}\log n
+
\ell_{vu}(\hat\theta;\mathcal D)
+
\sum_{k=1}^r\Bigl(a_k-\tfrac12\Bigr)\log\hat\theta_k.
\label{eq:T1-final}
\end{align}

Combining \eqref{eq:row-after-stirling} with \eqref{eq:T1-final} and \eqref{eq:T2-def} gives 
\begin{align}
\sum_{k=1}^r\log\Gamma(a_k+n_k)-\log\Gamma(a+n)
&=
\ell_{vu}(\hat\theta;\mathcal D)
-\frac{r-1}{2}\log n
+\sum_{k=1}^r\Bigl(a_k-\tfrac12\Bigr)\log\hat\theta_k
\notag\\
&\quad
+T_2
+\frac{r-1}{2}\log(2\pi)
+\sum_{k=1}^r\eta(a_k+n_k)-\eta(a+n).
\label{eq:row-identity}
\end{align}

We now return to the original indices.
For each row $(v,u)$, define
\[
\hat\theta_{vuk}:=\frac{N_{vuk}}{N_{vu}},
\qquad
\ell_{vu}(\hat\theta;\mathcal D):=\sum_{k=1}^{r_v}N_{vuk}\log\hat\theta_{vuk},
\]
and define
\begin{align}
T_{2,vu}
&:=
\sum_{k=1}^{r_v}\Bigl(\alpha_{vuk}+N_{vuk}-\tfrac12\Bigr)\log\Big(1+\frac{\alpha_{vuk}}{N_{vuk}}\Big)
-\Bigl(\alpha_{vu}+N_{vu}-\tfrac12\Bigr)\log\Big(1+\frac{\alpha_{vu}}{N_{vu}}\Big),
\label{eq:T2vu}
\\
E_{vu}
&:=
\sum_{k=1}^{r_v}\eta(\alpha_{vuk}+N_{vuk})-\eta(\alpha_{vu}+N_{vu}).
\label{eq:Evu}
\end{align}
Then \eqref{eq:row-identity} becomes, for every $(v,u)$,
\begin{align}
\sum_{k=1}^{r_v}\log\Gamma(\alpha_{vuk}+N_{vuk})-\log\Gamma(\alpha_{vu}+N_{vu})
&=
\ell_{vu}(\hat\theta;\mathcal D)
-\frac{r_v-1}{2}\log N_{vu}
+\sum_{k=1}^{r_v}\Bigl(\alpha_{vuk}-\tfrac12\Bigr)\log\hat\theta_{vuk}
\notag\\
&\quad
+T_{2,vu}
+\frac{r_v-1}{2}\log(2\pi)
+E_{vu}.
\label{eq:row-final}
\end{align}

Sum \eqref{eq:row-final} over all $v$ and $u$ and substitute into \eqref{eq:BD-log-split}.
This yields
\begin{align}
\log P(\mathcal D\mid M)
&=
\ell(\hat\theta_M;\mathcal D)
-\frac12\sum_{v\in V}\sum_{u=1}^{q_v}(r_v-1)\log N_{vu}
+R'_{M,N},
\label{eq:global-before-logN}
\\
\ell(\hat\theta_M;\mathcal D)
&:=
\sum_{v\in V}\sum_{u=1}^{q_v}\ell_{vu}(\hat\theta;\mathcal D)
=
\sum_{v\in V}\sum_{u=1}^{q_v}\sum_{k=1}^{r_v}N_{vuk}\log\Big(\frac{N_{vuk}}{N_{vu}}\Big),
\notag\\
R'_{M,N}
&:=
C_M(\alpha)
+\frac12\sum_{v\in V}\sum_{u=1}^{q_v}(r_v-1)\log(2\pi)
+\sum_{v\in V}\sum_{u=1}^{q_v}\sum_{k=1}^{r_v}\Bigl(\alpha_{vuk}-\tfrac12\Bigr)\log\hat\theta_{vuk}
\notag\\
&\quad
+\sum_{v\in V}\sum_{u=1}^{q_v}T_{2,vu}
+\sum_{v\in V}\sum_{u=1}^{q_v}E_{vu}.
\label{eq:Rprime-def}
\end{align}

Define the model dimension
$d_M=\sum_{v\in V}\sum_{u=1}^{q_v}(r_v-1)$.
\begin{align}
-\frac12\sum_{v,u}(r_v-1)\log N_{vu}
&=
-\frac12\sum_{v,u}(r_v-1)\log N
-\frac12\sum_{v,u}(r_v-1)\log\Big(\frac{N_{vu}}{N}\Big)
\notag\\
&=
-\frac12 d_M\log N
-\frac12\sum_{v,u}(r_v-1)\log\Big(\frac{N_{vu}}{N}\Big).
\label{eq:split-log}
\end{align}
Substitute \eqref{eq:split-log} into \eqref{eq:global-before-logN} to obtain
\begin{align}
\log P(\mathcal D\mid M)
&=
\ell(\hat\theta_M;\mathcal D)-\frac12 d_M\log N + R_{M,N},
\label{eq:BD-BIC}
\\
R_{M,N}
&:=
R'_{M,N}
-\frac12\sum_{v\in V}\sum_{u=1}^{q_v}(r_v-1)\log\Big(\frac{N_{vu}}{N}\Big).
\label{eq:R-def}
\end{align}

We now prove that $R_{M,N}=O_p(1)$ under our sampling. Fix $(v,u,k)$ and define the cylinder set
\(A_{vuk}:=\{z\in\calZ:\ z_{\MPa_M(v)}=u,\ z_v=k\}.
\) On $\calE_N$ we then have
\[
p_N(A_{vuk})
=
\sum_{z\in A_{vuk}}(p_N)_z
\ge
\min_{z\in A_{vuk}}(p_N)_z
\ge
\varepsilon.
\]
Conditional on $p_N$, the count $N_{vuk}$ is binomial:
\[
N_{vuk}
=
\sum_{i=1}^N \mathbf 1\{\MR_{N,i}\in A_{vuk}\}
\ \Big|\ p_N
\sim
\mathrm{Bin}\big(N,\ p_N(A_{vuk})\big).
\]
Using the Chernoff bound, 
\begin{equation}\label{eq:chernoff-cell}
\PP\!\Big(N_{vuk}\le \tfrac12 N\,p_N(A_{vuk})\ \Big|\ p_N\Big)
\le
\exp\Big(-\frac{N\,p_N(A_{vuk})}{8}\Big)
\le
\exp\Big(-\frac{\varepsilon N}{8}\Big)
\qquad\text{on }\calE_N.
\end{equation}
Since $p_N(A_{vuk})\ge \varepsilon$ on $\calE_N$, 
\begin{equation}\label{eq:chernoff-cell-eps}
\PP\!\Big(N_{vuk}\le \tfrac{\varepsilon}{2}N\ \Big|\ p_N\Big)
\le
\exp\Big(-\frac{\varepsilon N}{8}\Big)
\qquad\text{on }\calE_N.
\end{equation}

Let $m(M):=\sum_{v\in V} q_v r_v$ be the total number of triples $(v,u,k)$.
Define the event
\[
\calA_N(M):=\Big\{\min_{v\in V}\min_{1\le u\le q_v}\min_{1\le k\le r_v} N_{vuk}\ \ge\ \tfrac{\varepsilon}{2}N\Big\}.
\]
By a union bound and \eqref{eq:chernoff-cell-eps},
\begin{align}
\PP\big(\calA_N(M)^c\mid p_N\big)
&\le
\sum_{v,u,k}\PP\!\Big(N_{vuk}\le \tfrac{\varepsilon}{2}N\ \Big|\ p_N\Big)
\le
m(M)\exp\Big(-\frac{\varepsilon N}{8}\Big)
\qquad\text{on }\calE_N.
\label{eq:unionbound}
\end{align}
Hence
\begin{align*}
\PP\big(\calA_N(M)^c\big)
&=
\PP\big(\calA_N(M)^c\cap \calE_N\big)+\PP(\calE_N^c)
\\
&=
\E\big[\mathbf 1_{\calE_N}\PP(\calA_N(M)^c\mid p_N)\big]+\PP(\calE_N^c)
\\
&\le
m(M)\exp\Big(-\frac{\varepsilon N}{8}\Big)+\PP(\calE_N^c)\ \longrightarrow\ 0.
\end{align*}
Therefore
\begin{equation}\label{eq:EA-prob}
\PP\big(\calE_N\cap \calA_N(M)\big)\to 1.
\end{equation}
On $\calA_N(M)$ we have, for every $(v,u)$, $N_{vu}=\sum_{k=1}^{r_v}N_{vuk}\ge \tfrac{\varepsilon}{2}N$, and for every $(v,u,k)$,
\[
\hat\theta_{vuk}=\frac{N_{vuk}}{N_{vu}}
\ge
\frac{N_{vuk}}{N}
\ge
\tfrac{\varepsilon}{2}.
\]
Hence, on $\calA_N(M)$,
\begin{equation}\label{eq:log-bounds}
\log\Big(\frac{N_{vu}}{N}\Big)\in\big[\log(\varepsilon/2),\,0\big],
\qquad
\log\hat\theta_{vuk}\in\big[\log(\varepsilon/2),\,0\big].
\end{equation}

We now bound each term in $R_{M,N}$ on $\calA_N(M)$.

First, $C_M(\alpha)$ and $\frac12\sum_{v,u}(r_v-1)\log(2\pi)$ are finite and do not depend on $N$.

Second, by \eqref{eq:log-bounds},
\[
\Big|\sum_{v,u,k}\Bigl(\alpha_{vuk}-\tfrac12\Bigr)\log\hat\theta_{vuk}\Big|
\le
\sum_{v,u,k}\Big|\alpha_{vuk}-\tfrac12\Big|\cdot \Big|\log(\varepsilon/2)\Big|,
\]
which is a finite constant because the sum is over finitely many $(v,u,k)$.

Third, we bound $T_{2,vu}$ defined in \eqref{eq:T2vu}.
On $\calA_N(M)$ we have $N_{vuk}\ge (\varepsilon/2)N$ and $N_{vu}\ge (\varepsilon/2)N$. Therefore, 
\begin{align*}
0
\le
\Bigl(\alpha_{vuk}+N_{vuk}-\tfrac12\Bigr)\log\Big(1+\frac{\alpha_{vuk}}{N_{vuk}}\Big)
&\le
\Bigl(\alpha_{vuk}+N_{vuk}\Bigr)\frac{\alpha_{vuk}}{N_{vuk}}\le
\alpha_{vuk}+\frac{2\alpha_{vuk}^2}{\varepsilon N}.
\end{align*}
Similarly,
\[
0
\le
\Bigl(\alpha_{vu}+N_{vu}-\tfrac12\Bigr)\log\Big(1+\frac{\alpha_{vu}}{N_{vu}}\Big)
\le
\alpha_{vu}+\frac{2\alpha_{vu}^2}{\varepsilon N}.
\]
Hence, on $\calA_N(M)$,
\begin{align}
|T_{2,vu}|
&\le
\sum_{k=1}^{r_v}\Bigl(\alpha_{vuk}+\frac{2\alpha_{vuk}^2}{\varepsilon N}\Bigr)
+\Bigl(\alpha_{vu}+\frac{2\alpha_{vu}^2}{\varepsilon N}\Bigr)
\le
2\alpha_{vu}+\frac{2}{\varepsilon N}\Big(\sum_{k=1}^{r_v}\alpha_{vuk}^2+\alpha_{vu}^2\Big).
\label{eq:T2-bound}
\end{align}
Summing \eqref{eq:T2-bound} over finitely many $(v,u)$ shows $\sum_{v,u}T_{2,vu}$ is bounded by a constant plus $O(1/N)$ on $\calA_N(M)$.

Fourth, we bound $\sum_{v,u}E_{vu}$ defined in \eqref{eq:Evu}.
On $\calA_N(M)$ we have $\alpha_{vuk}+N_{vuk}\ge N_{vuk}\ge (\varepsilon/2)N$ and
$\alpha_{vu}+N_{vu}\ge N_{vu}\ge (\varepsilon/2)N$.
Since $|\eta(x)|\le 1/(12x)$ for $x\ge 1$, we obtain, on $\calA_N(M)$,
\[
|\eta(\alpha_{vuk}+N_{vuk})|
\le
\frac{1}{12(\alpha_{vuk}+N_{vuk})}
\le
\frac{1}{12 N_{vuk}}
\le
\frac{1}{12(\varepsilon/2)N}
=
\frac{1}{6\varepsilon N},
\]
and similarly $|\eta(\alpha_{vu}+N_{vu})|\le 1/(6\varepsilon N)$.
Therefore, on $\calA_N(M)$,
\[
|E_{vu}|
\le
\sum_{k=1}^{r_v}\frac{1}{6\varepsilon N}+\frac{1}{6\varepsilon N}
=
\frac{r_v+1}{6\varepsilon N},
\]
and summing over finitely many $(v,u)$ gives $\sum_{v,u}E_{vu}=O(1/N)$ on $\calA_N(M)$.

Finally, the extra term in $R_{M,N}$ is $-\frac12\sum_{v,u}(r_v-1)\log(\frac{N_{vu}}{N})$, which is bounded on $\calA_N(M)$ because of \eqref{eq:log-bounds} and finiteness of $\sum_{v,u}(r_v-1)=d_M$.

Combining the previous bounds with \eqref{eq:EA-prob}, this implies $R_{M,N}=O_p(1)$ under the triangular-array sampling
$\MR_{N,i}\sim p_N$. Since we only compare finitely many models (in particular $\{G,H\}$), the same argument applies to each
of them, and hence all remainders appearing in the score differences can be taken as $O_p(1)$.

\subsubsection*{Case 2: $\Mtheta^\star\in\MXi_H\cap\MXi_G$ and $d_\calG<d_\calH$} 
Recall $V_M(\Mmu)=\sup_{\Mtheta\in\MXi_M}\{\langle\Mmu,\Mtheta\rangle-A(\Mtheta)\}$ and 
$A^\ast(\Mmu)=\sup_{\Mtheta\in\text{int}(\MXi)}\{\langle\Mmu,\Mtheta\rangle-A(\Mtheta)\}$.
Because $\Mtheta^\star\in\MXi_M$ and $A^\ast(\Mmu)$ maximizes over a superset $\text{int}(\MXi)$,
for any $\Mmu$ and any $M\in\{G,H\}$ we have 
\begin{equation}\label{eq:VM-sandwich}
0\ \le\ V_M(\Mmu)-f(\Mtheta^\star;\Mmu)\ \le\ A^\ast(\Mmu)-f(\Mtheta^\star;\Mmu).
\end{equation}

Recall that the gradient map $\nabla A:U_\Mtheta\to U_\Mmu$ is a $C^\infty$ diffeomorphism and $\Mtheta(\cdot)=(\nabla A)^{-1}(\cdot):U_\Mmu\to U_\Mtheta$ is a $C^\infty$ inverse map . By the Fenchel–Young equality,
\[
A^\ast(\Mmu)=\sup_{\Mtheta\in\MXi}\{\langle \Mmu,\Mtheta\rangle-A(\Mtheta)\}
=\langle \Mmu,\Mtheta(\Mmu)\rangle - A\!\big(\Mtheta(\Mmu)\big)
\qquad(\Mmu\in U_\Mmu),
\]
whence, by the chain rule,
$\nabla A^\ast(\Mmu)=\Mtheta(\Mmu) $ and $\nabla^2A^\ast(\Mmu)=\big[\nabla^2A\big(\Mtheta(\Mmu)\big)\big]^{-1}$.
Evaluating at $\Mmu^\star$ gives
$\nabla A^\ast(\Mmu^\star)=\Mtheta^\star$ and 
$\nabla^2 A^\ast(\Mmu^\star)=I(\Mtheta^\star)^{-1}$.

Therefore, by Taylor's theorem on $B(\Mmu^\star,s)$,
\begin{align}
A^\ast(\Mmu)
&=A^\ast(\Mmu^\star)+\langle\nabla A^\ast(\Mmu^\star),\,\Mmu-\Mmu^\star\rangle
+\tfrac12\,(\Mmu-\Mmu^\star)^\top \nabla^2 A^\ast(\Mmu^\star)(\Mmu-\Mmu^\star)
+o(\|\Mmu-\Mmu^\star\|^2)\notag\\
&=\big(\langle\Mmu^\star,\Mtheta^\star\rangle-A(\Mtheta^\star)\big)
+\langle \Mtheta^\star,\Mmu-\Mmu^\star\rangle
+\tfrac12\,(\Mmu-\Mmu^\star)^\top I(\Mtheta^\star)^{-1}(\Mmu-\Mmu^\star)
+o(\|\Mmu-\Mmu^\star\|^2)
\\&=f(\Mtheta^\star;\Mmu^\star)+\langle \Mtheta^\star,\Mmu-\Mmu^\star\rangle
+\tfrac12\,(\Mmu-\Mmu^\star)^\top I(\Mtheta^\star)^{-1}(\Mmu-\Mmu^\star)
+o(\|\Mmu-\Mmu^\star\|^2).\label{eq:Astar-expansion}
\end{align}
On the other hand,
\begin{equation}\label{eq:f-star-line}
f(\Mtheta^\star;\Mmu)
=\langle \Mmu,\Mtheta^\star\rangle-A(\Mtheta^\star)
=f(\Mtheta^\star;\Mmu^\star)+\langle \Mtheta^\star,\Mmu-\Mmu^\star\rangle.
\end{equation}
Subtracting \eqref{eq:f-star-line} from \eqref{eq:Astar-expansion} yields
\begin{equation}\label{eq:Astar-minus-f}
A^\ast(\Mmu)-f(\Mtheta^\star;\Mmu)
=\tfrac12(\Mmu-\Mmu^\star)^\top I(\Mtheta^\star)^{-1}(\Mmu-\Mmu^\star)
+o(\|\Mmu-\Mmu^\star\|^2).
\end{equation}
Combining \eqref{eq:VM-sandwich} and \eqref{eq:Astar-minus-f}, we conclude that there exists $0<\ell<s$ such that for $\Mmu\in B(\Mmu^\star,\ell)$,
\begin{equation}\label{eq:VM-quadratic}
0 \le\ V_M(\Mmu)-f(\Mtheta^\star;\Mmu) \le \tfrac1{\lambda_-}\|\Mmu-\Mmu^\star\|^2.
\end{equation}

Applying \eqref{eq:VM-quadratic} to $M=G,H$ and subtracting, for all $\Mmu\in B(\Mmu^\star,\ell)$ we get
\begin{equation}\label{eq:VH-VG-quad}
\big|V_H(\Mmu)-V_G(\Mmu)\big|
 \le \big|V_H(\Mmu)-f(\Mtheta^\star;\Mmu)\big|+\big|V_G(\Mmu)-f(\Mtheta^\star;\Mmu)\big|
 \le \tfrac2{\lambda_-}\|\Mmu-\Mmu^\star\|^2.
\end{equation}
By Lemma~\ref{lem: convergence-rate}, $\|\hat\Mmu_N-\Mmu^\star\|=O_p(N^{-1/2}+c_N^{-1})$, hence
\begin{equation}\label{eq:value-gap-op}
V_H(\hat\Mmu_N)-V_G(\hat\Mmu_N)
=O_p\Big(\|\hat\Mmu_N-\Mmu^\star\|^2\Big)
=O_p\Big((N^{-1/2}+c_N^{-1})^2\Big).
\end{equation}
Plugging \eqref{eq:value-gap-op} into the score difference, we obtain
\begin{align}
S(H,\mathcal D)-S(G,\mathcal D)
&=\tfrac12(d_\calG-d_\calH)\log N
+N\big[V_H(\hat\Mmu_N)-V_G(\hat\Mmu_N)\big]+O_p(1)\notag\\
&=\tfrac12(d_\calG-d_\calH)\log N
+O_p\Big(1+\sqrt{N}\,c_N^{-1}+N\,c_N^{-2}\Big)+O_p(1).\label{eq:BIC-diff-case2}
\end{align}
In particular, if $c_N=\omega\big(\sqrt{N/\log N}\big)$, then
$\sqrt{N}\,c_N^{-1}=o(\log N)$ and $N\,c_N^{-2}=o(\log N)$, so
\[
S(H,\mathcal D)-S(G,\mathcal D)
=\tfrac12(d_\calG-d_\calH)\log N\ +\ o_p(\log N)+O_p(1).
\]
Consequently, if $d_\calG<d_\calH$, then $S(H,\mathcal D)-S(G,\mathcal D)<0$ with probability $\to1$.

\subsection{Proof of Theorem~\ref{thm:consistency_pipeline}}\label{sec: proof of thm5}
Conclusion (i) follows directly from Theorem~\ref{thm:estimation consistency}. It remains to show Conclusion (ii).

Following the notation in Section~\ref{subsec: Local Consistency}, we conclude that $\widetilde{\Mtheta}_N$ is a $\sqrt{f(N)}$-consistent estimator of $\Mtheta^\star$ and $g(N)=\sqrt{N}$. Combining this with $f(N)= o(N\log N)$, we immediately have \[g(f^{-1}(N))=\omega(\sqrt{\frac{N}{\log N}}).\] 
By Theorem~\ref{cor: local consistency} and the analysis in Section~\ref{subsec: Local Consistency}, the following holds:  
let $G$ be any DAG and $G'$ be a different DAG obtained by adding the edge $i \to j$ to $G$. 
As $N \to \infty$, with probability$\to1$ we have
\begin{enumerate}
    \item[(L1)] If $Z_j \not\!\perp\!\!\!\perp_{\p^\star} Z_i \mid \mathbf{Pa}_j^{G}$, then $S(G',\hat\MZ) > S(G,\hat \MZ) $.
    \item[(L2)] If $Z_j \!\perp\!\!\!\perp_{\p^\star} Z_i \mid \mathbf{Pa}_j^{G}$, then $S(G',\hat\MZ) < S(G,\hat\MZ)$.
\end{enumerate}
Here $S(\cdot)$ is the BDeu score. 
Note that BDeu is score equivalent by Theorem 8 in \cite{chickering1995a}.

Next, we adopt the reformulation in \cite{nazaret2024extremely}. For a MEC $M$, let $\mathcal I(M)$ (insertions) and $\mathcal D(M)$ (deletions) denote, respectively, the sets of MECs reachable from $M$ by adding or deleting a single edge in some DAG representative of $M$. 
Since our score $S$ is score-equivalent, we write $S(M)$ for the common value $S(G)$ over any $G\in M$.
We introduce two propositions:
\[
P_1(M;\p^\star):\ \text{all conditional independencies encoded by $M$ hold in }\p^\star,\]
\[
P_2(M;\p^\star):\ \p^\star \text{ is DAG-perfect for every DAG in }M,
\]
By Theorems~6–8 in \cite{nazaret2024extremely}, the following three statements are enough to guarantee correctness of the two-phase greedy search in MEC space:

\begin{itemize}
\item[(A)] If $\max_{M'\in\mathcal I(M)} S(M')\le S(M)$, then $P_1(M;\p^\star)$ holds.
\item[(B)] If $P_1(M;\p^\star)$ holds and $M'\in\mathcal D(M)$ satisfies $S(M')\ge S(M)$, then $P_1(M';\p^\star)$ holds as well.
\item[(C)] If $P_1(M;\p^\star)$ holds and $\max_{M'\in\mathcal D(M)} S(M')\le S(M)$, then $P_2(M;\p^\star)$ holds.
\end{itemize}

As a result, our last task is to show these three statements hold for score equivalent scores satisfying (L1) and (L2). The following verification essentially follows the ideas of Proposition 8 and Lemmas 9–10 in \cite{chickering2002optimal}, but we reprove it here for completeness.

\noindent\textbf{Verification of (A).}

Assume $\max_{M' \in \mathcal I(M)} S(M') \le S(M)$.
We prove $P_1(M;\p^\star)$ by contraposition. Suppose $P_1(M;\p^\star)$ fails. Then there exist a DAG $G \in M$, a node $Z_j$, and a non-descendant $Z_i$ of $Z_j$ in $G$ such that $Z_j \not\!\perp\!\!\!\perp_{\p^\star} Z_i \mid \MPa_j^G$. Because $Z_i$ is a non-descendant, adding the edge $Z_i \to Z_j$ to $G$ does not create a cycle. Moreover, since the conditional independence with respect to $\MPa_j^G$ is violated, this $Z_i$ cannot belong to $\MPa_j^G$ (if $Z_i \in \MPa_j^G$ the statement $Z_j \perp\!\!\!\perp_{\p^\star} Z_i \mid \MPa_j^G$ is trivially true). Thus we can form an MEC $M' \in \mathcal I(M)$ by inserting $Z_i \to Z_j$ into $G$. By (L1) this insertion strictly increases the score, so
$S(M') > S(M)$, contradicting $\max_{M' \in \mathcal I(M)} S(M') \le S(M)$. Hence $P_1(M;\p^\star)$ must hold.

\noindent\textbf{Verification of (B).}

Let $M$ be an MEC such that $P_1(M;\p^\star)$ holds, and let $M'\in\mathcal D(M)$ satisfy $S(M')\ge S(M)$. Pick $G\in M$ and let $G'\in M'$ be obtained from $G$ by deleting a single edge $Z_i\to Z_j$, so that the parent set of $Z_j$ changes from $\MPa_j^G$ to $\MPa_j^G\setminus\{Z_i\}$.

Since $P_1(M;\p^\star)$ holds, the law $\p^\star$ factorizes according to $G$:
\begin{equation}\label{eq: 1}
\p^\star(\MZ)
= \p^\star(Z_j \mid \MPa_j^G)\;\prod_{k\neq j} \p^\star(Z_k \mid \MPa_k^G).
\end{equation}

Now consider the reverse operation that inserts the edge $Z_i\to Z_j$ into $G'$. This insertion produces $G$.
On the high-probability event where (L1)--(L2) hold for this insertion, the alternative in (L1) would imply
$S(G,\hat{\MZ})>S(G',\hat{\MZ})$, contradicting $S(G',\hat{\MZ})\ge S(G,\hat{\MZ})$.
Hence the (L2) alternative must hold, which yields
\begin{equation}\label{eq: 2}
\p^\star(Z_j \mid Z_i, \MPa_j^{G}\setminus\{Z_i\})
= \p^\star(Z_j \mid \MPa_j^{G}\setminus\{Z_i\}).
\end{equation}

Define a set of local conditionals for $G'$ by keeping all other conditionals the same as in $G$ and by replacing the conditional at $Z_j$ with
\[
q^\star(Z_j \mid \MPa_j^G\setminus\{Z_i\})
:= \p^\star(Z_j \mid \MPa_j^G\setminus\{Z_i\}).
\]
Combining \eqref{eq: 1} and \eqref{eq: 2} we obtain, 
\[
\p^\star(\MZ)
= q^\star(Z_j \mid \MPa_j^G\setminus\{Z_i\})\;\prod_{k\neq j} \p^\star(Z_k \mid \MPa_k^G),
\]
which is exactly the factorization of $\p^\star$ with respect to $G'$.
Hence $\p^\star$ is also represented by $G'$, and therefore $P_1(M';\p^\star)$ holds (Theorem 6.2 in \cite{evans2025graphical}).

\noindent\textbf{Verification of (C).}

Assume $P_1(M;\p^\star)$ holds and $\max_{M'\in\mathcal D(M)} S(M') \le S(M)$. We will show $P_2(M;\p^\star)$.
Suppose, toward a contradiction, that $P_2(M;\p^\star)$ fails.
Let $G^\star$ be a DAG such that $\p^\star$ is DAG-perfect for $G^\star$, and let $M^\star$ be its MEC, so that
$\mathcal I(\p^\star)=\mathcal I(M^\star)$.
Since $P_2(M;\p^\star)$ fails, we have $M^\star\neq M$.
Since $P_1(M;\p^\star)$ holds, we have $\mathcal I(M)\subseteq \mathcal I(\p^\star)=\mathcal I(M^\star)$, and hence
$\mathcal I(M)\subsetneqq \mathcal I(M^\star)$.
Pick representatives $G\in M$ and $G^\star\in M^\star$.
By Theorem 4 in \cite{chickering2002optimal}, there is a finite sequence of single-edge operations (covered edge reversals and edge deletions) that transforms $G$ into $G^\star$, and along the entire sequence $\calI(G^\star)\supseteq\calI(G')$ holds for every intermediate DAG $G'$. In particular, there exists a first deletion in the sequence, say $G\to G_1$ obtained by removing $Z_i\to Z_j$, with $\calI(G^\star) \supseteq\calI( G_1)\supseteq\calI (G)$. Let $M_1$ denote the MEC of $G_1$.

Because $\calI(G^\star) \supseteq\calI( G_1)$ and $\p^\star$ is DAG-perfect for $G^\star$, every conditional independence encoded by $G_1$ holds in $\p^\star$.  Deleting $Z_i\!\to\!Z_j$ from $G$ yields $G_1$ with $\MPa_j^{G_1}=\MPa_j^{G}\setminus\{Z_i\}$ and $Z_i$ a non-descendant of $Z_j$ in $G_1$. Thus, by the local Markov property in $G_1$, we have in particular $Z_j \perp\!\!\!\perp Z_i\big| \MPa_j^{G}\setminus\{Z_i\}$, and so this independence holds in $\p^\star$, namely $Z_j \perp\!\!\!\perp_{\p^\star} Z_i \big| \MPa_j^{G}\setminus\{Z_i\}$.
By (L2) applied to the reverse insertion that adds $Z_i\to Z_j$ to $G_1$ (thereby recovering $G$), we have
$S(G,\hat{\MZ})<S(G_1,\hat{\MZ})$ for large $N$ with probability$\to1$.
Therefore, the deletion $G\to G_1$ strictly increases the score, and by score equivalence $S(M_1)> S(M)$ with probability$\to1$.
But $M_1\in \mathcal D(M)$, contradicting $\max_{M'\in\mathcal D(M)} S(M')\le S(M)$.
\begin{remark}
In \citet[Prop.~8; Lemmas~9–10]{chickering2002optimal}, the proof of GES correctness implicitly mixes local consistency with consistency. In this paper we follow the XGES reformulation \citep{nazaret2024extremely} and provide a new proof using only the two local-consistency conditions (L1)–(L2), thereby avoiding any appeal to consistency.
\end{remark}
\subsection{Implementation Details}\label{sec: Implementation Details}
\subsubsection{Assumptions regrading the penalty function}\label{sec: penalty assumptions}
For completeness, we spell out the shape assumptions on the penalty
$p_{\lambda_N,\tau_N}$ and tuning parameters $(\lambda_N,\tau_N)$ used in
Theorem~\ref{thm:estimation consistency}. For some $\lambda_N,\tau_N>0$,
$p_{\lambda_N,\tau_N}:\mathbb{R}\to[0,\infty)$ is a sparsity–inducing
symmetric penalty that is nondecreasing on $[0,\infty)$, nondifferentiable at
$0$, differentiable on $(0,\tau_N)$, and satisfies
\[
  p_{\lambda_N,\tau_N}(b)=0 \text{ if } b=0,\qquad
  p'_{\lambda_N,\tau_N}(b)\le C\frac{\lambda_N}{\tau_N} \text{ if } |b|\le\tau_N,
  \qquad
  p_{\lambda_N,\tau_N}(b)=\lambda_N \text{ if } |b|\ge\tau_N,
\]
for some constant $C<\infty$ independent of $N$.
\subsubsection{Detailed Initialization Algorithm}\label{sec: Detailed Ini}
Let $\mu: H \to \mathcal{X}_j$ denotes the known mean function of the observed-layer parametric family as described in \eqref{eq:observed exp fam}.
\begin{algorithm}[h!]
\caption{Spectral initialization}
\label{al: Ini}
\SetKwInOut{Input}{Input}
\SetKwInOut{Output}{Output}
\KwData{$\MX, K$, function $\tilde{g}=\mu\circ g$, truncation parameters $\epsilon, \delta$}. Algorithm \ref{al: Ini} summarizes this algorithm.

\begin{enumerate}
    \item Apply SVD to $\MX$ and write $\MX = \MU \Sigma \MV^{\top}$, where $\Sigma = \text{diag}(\sigma_i)$ and $\sigma_1 \ge \ldots \ge \sigma_J$.
    \item Let ${\MX}_{\tilde{K}} = \sum_{k=1}^{\tilde{K}} \sigma_k \bu_k \bv_k^{\top}$, where $\tilde{K} := \max\{K+1, \arg\max_k \{ \sigma_k \ge 1.01 \sqrt{N} \} \}$.
    \item Define $\hat{\MX}_{\tilde{K}}$ by truncating $\MX_{\tilde{K}}$ to the range of responses, at level $\epsilon$.
    \item Define $\hat{\ML}$ by letting $ \hat{l}(i,j) =  \tilde{g}^{-1}(\hat{y}_{\tilde{K}}(i,j))$.
    \item Let $\hat{\ML}_0$ be the centered version of $\hat{\ML}$, that is, $\hat{l}_0(i,j)= \hat{l}(i,j) - \frac{1}{N} \sum_{k=1}^N \hat{l}(k,j)$.
    \item Apply SVD to $\hat{\ML}_0$ and write its rank-$K$ approximation as $\hat{\ML}_0 \approx \hat{\MU} \hat{\Sigma} \hat{\MV}$.
    \item Let $\tilde{\MV}$ be the rotated version of $\hat{\MV}$ according to the Varimax criteria.
    \item Entrywise threshold $\tilde{\MV}$ at $\delta$ to induce sparsity, and flip the sign of each column so that all columns have positive mean.
   Let $\hat{\MQ}$ be the estimated sparsity pattern. 
    \item Estimate the centered $\MZ_0$ by $\hat{\MZ}_0 := \hat{\ML}_0 \tilde{\MV} (\tilde{\MV}^{\top} \tilde{\MV})^{-1}$, and estimate $\MZ$ by reading off the signs: $\hat{Z}(i,k) = \mathbbm{1}(Z_0(i,k) > 0).$ 
    \item Let $\hat{\MZ}_{\text{long}} := [\mathbf{1}, \hat{\MZ}]$.
    Estimate $\MB$ by $\hat{\MB} := C_g ( ( \hat{\MZ}^{\top}_{\text{long}} \hat{\MZ}_{\text{long}})^{-1} \hat{\MZ}_{\text{long}}^{\top} \hat{\ML}_0) \cdot \hat{\MG}$, where $\cdot$ is the element-wise product and $C_g$ is a positive constant.
    \item Define $\hat{\MR}=\MX-\hat{\MZ}_{\text{long}}\hat{\MX}^\top$ and estimate $\gamma_j$ by $\hat{\gamma}_j=\frac{1}{N}\sum_{i=1}^N\hat{\r}(i,j)^2$. 
\end{enumerate}
 \Output {$\hat{\p}, \ \hat{\MB},\ \hat{\bgamma},\ \hat{\MZ}$.}
\end{algorithm}

We explain more on the truncation details in Step 3 by considering specific response types. For Normal responses, the original sample space is $\mathbb{R}$ and the truncation (Steps 1-4 in Algorithm \ref{al: Ini}) may be omitted. For Binary responses, we set 
$$\hat{x}_{K}(i,j) = \begin{cases}\epsilon, &\text{ if } x_{K}(i,j) = 0, \\1-\epsilon, &\text{ if } x_{K}(i,j) = 1.\end{cases}$$
For Poisson responses, we set $$\hat{x}_{K}(i,j) = \begin{cases}\epsilon, &\text{ if } x_{K}(i,j) < \epsilon, \\x_{K}(i,j), &\text{ otherwise.}
\end{cases}$$In terms of implementing the method, we follow the suggestions of \cite{zhang2020note} with $\epsilon = 10^{-4}$.

\subsubsection{Detailed Gibbs--SAEM Algorithm}\label{sec: Detailed SAEM}

This subsection provides the full pseudocode for the penalized Gibbs--SAEM procedure described in Section~\ref{subsec: EM}. The algorithm alternates between a Gibbs-based stochastic E--step for updating the latent variables and a penalized SAEM M--step for updating the model parameters. Algorithm~\ref{al: PSAEM} summarizes the full procedure.
\begin{algorithm}[htbp]
\SetAlgoLined
\caption{Penalized Gibbs--SAEM Algorithm}\label{al: PSAEM}
\KwData{$\MX, K$, tuning parameters $\lambda_{N},\tau_{N}$, stepsizes $\{\theta_t\}_{t\ge1}$, number of Gibbs sweeps $C\ge1$.}

\textbf{Initialize} $\MZ^{[0]}$, $\MTheta^{[0]}=(\mathbf p^{[0]},\bbeta^{[0]},\bgamma^{[0]})$, $F_j^{[0]}\equiv0$ for $j\in[J]$; set $t\gets0$.

\While{$\|{\MTheta}^{[t+1]} - {\MTheta}^{[t]} \|$ is larger than a threshold}{

  $\MZ_{\mathrm{cur}}\gets \MZ^{[t]}$\; 

  \For{$r\in[C]$}{
    \For{$i\in[N]$}{
      \For{$k$ in a random permutation of $\{1,\dots,K\}$}{

        $\mathbf z^{(1)}\gets (1,\MZ_{\mathrm{cur},i,-k}),\quad
         \mathbf z^{(0)}\gets (0,\MZ_{\mathrm{cur},i,-k})$\;

        $\displaystyle \Delta\ell_{ik}\gets
        \log \mathbf p^{[t]}(\mathbf z^{(1)}) - \log \mathbf p^{[t]}(\mathbf z^{(0)})
        + \sum_{j=1}^J \log\frac{\PP(X_{ij}\mid \mathbf z^{(1)};\bbeta_j^{[t]},\gamma_j^{[t]})}
                               {\PP(X_{ij}\mid \mathbf z^{(0)};\bbeta_j^{[t]},\gamma_j^{[t]})}$\; 

        $\MZ_{\mathrm{cur},i,k}\sim\mathrm{Bernoulli}(\mathrm{expit}(\Delta\ell_{ik}))$\;
      }
    }
    $\MZ^{[t+1],r}\gets \MZ_{\mathrm{cur}}$\;
  }

  $\MZ^{[t+1]}\gets \MZ^{[t+1],C}$\;

  $\displaystyle \widehat{\mathbf p}^{[t+1]}(\mathbf z)
  \gets \frac{1}{C N}\sum_{r=1}^{C}\sum_{i=1}^{N}
  \mathbf 1\{\MZ^{[t+1],r}_{i,:}=\mathbf z\}$ for $\mathbf z\in\{0,1\}^K$\;

  $\mathbf p^{[t+1]}(\mathbf z)\gets
  (1-\theta_{t+1})\,\mathbf p^{[t]}(\mathbf z)
  + \theta_{t+1}\,\widehat{\mathbf p}^{[t+1]}(\mathbf z)$\;

  \For{$j\in[J]$}{
    $\displaystyle \widehat{F}_{j}^{[t+1]}(\bbeta_j,\gamma_j)
    \gets \frac{1}{C}\sum_{r=1}^{C}\sum_{i=1}^{N}
    \log \PP\!\big(X_{ij}\mid \MZ_i=\MZ^{[t+1],r}_i;\bbeta_j,\gamma_j\big)$\;

    $\displaystyle F_j^{[t+1]}(\bbeta_j,\gamma_j)
    \gets (1-\theta_{t+1})\,F_j^{[t]}(\bbeta_j,\gamma_j)
       + \theta_{t+1}\,\widehat{F}_{j}^{[t+1]}(\bbeta_j,\gamma_j)$\;

    $(\bbeta_j^{[t+1]},\gamma_j^{[t+1]})
    \gets \arg\max_{\bbeta_j,\gamma_j}
    \big\{F_j^{[t+1]}(\bbeta_j,\gamma_j)-p_{\lambda_N,\tau_N}(\bbeta_j)\big\}$\;
  }

  $t\gets t+1$\;
}

Let $\widehat{\mathbf p}\gets \mathbf p^{[T]}$ and $(\widehat\bbeta,\widehat\bgamma)\gets(\bbeta^{[T]},\bgamma^{[T]})$ at convergence\;
Estimate the measurement graph $\MQ$ from the sparsity pattern of \eqref{eq:graphical matrix estimator}\;

\textbf{Output:} $\hat{\MTheta}=(\widehat{\mathbf p},\widehat\bbeta,\widehat\bgamma)$ and $\MQ$.
\end{algorithm}

\subsubsection{Simulation Setup Details}\label{sec:sim-details}

In all simulations, we set $\MQ$ and $\p$ as follows. The measurement matrix $\MQ$ takes the form
\[
\MQ \;=\; \begin{pmatrix}
\MQ'\\[2pt]
\mathbf{I}_K \\[2pt]
\mathbf{I}_K 
\end{pmatrix},
\qquad 
\MQ_1'=\begin{pmatrix}
 1 & 1 &  & 0 \\
1 & \ddots & \ddots &  \\
 & \ddots & \ddots & 1 \\
0 &  & 1 & 1
\end{pmatrix}_{K \times K},
\qquad
\MQ_2'=\begin{pmatrix}
1 & 1 & 1 &  & 0 \\
1 & \ddots & \ddots & \ddots &  \\
1 & \ddots & \ddots & \ddots & 1 \\
 & \ddots & \ddots & \ddots & 1 \\
0 &  & 1 & 1 & 1
\end{pmatrix}_{K \times K}.
\]
We consider two banded choices for the submatrix $\MQ'$: $\MQ_1'$ and $\MQ_2'$, and denote the corresponding full matrices by $\MQ_1$, $\MQ_2$ respectively. Both choices satisfy the identifiability conditions in Theorem~\ref{thm1}.

Given a DAG on the latent variables, we define the distribution $\p$ so as to yield balanced conditional probabilities and avoid degeneracy. If a child has a single parent, then when the parent equals $1$ the Bernoulli parameter of the child is drawn uniformly from $[0.3,0.35]$ or from $[0.65,0.7]$ with equal probability, and the parameter for parent $0$ is set to be the complement. If a child has two parents, we consider the four parent configurations $(0,0)$, $(0,1)$, $(1,0)$, and $(1,1)$. For configuration $(0,0)$ we draw the Bernoulli parameter independently from $[0.2,0.25]$, and for $(1,1)$ we draw it from $[0.6,0.65]$. For the mixed configurations $(0,1)$ and $(1,0)$, we draw one parameter from $[0.35,0.4]$ and the other from $[0.77,0.82]$.

We consider three parametric families for the observed layer: Gaussian, Poisson, and Bernoulli for continuous, count, and binary data, respectively. This allows us to assess the robustness of our method under both continuous and discrete measurement models. Because these families live on different scales, we specify different values for the regression parameters $\beta$:
\begin{align*}
    \beta_{j,0} &= \begin{cases}
        -1, & \text{Gaussian};\\
        1, & \text{Poisson/Bernoulli};
    \end{cases} \\
    \beta_{j,k} &=\begin{cases}
        \dfrac{3}{\sum_{k'=1}^K q_{j,k'}} \mathbf{1}(q_{j,k} = 1), & \forall j \in [J],~ k \in [K], \quad \text{Gaussian};\\[4pt]
        \dfrac{2}{\sum_{k'=1}^K q_{j,k'}} \mathbf{1}(q_{j,k} = 1), & \forall j \in [J],~ k \in [K], \quad \text{Poisson/Bernoulli}.
    \end{cases} 
\end{align*}
The variance parameter is fixed to be $\gamma_j = \sigma_j^2 = 1$ for all $j$.

\subsubsection{Choice of $f(N)$}\label{sec: choice of fn}
Recalling our algorithmic procedure, we obtain a $f(N)\times K$ matrix by sampling from $\hat{\p}$.
Although we have specified the admissible range for $f(N)$, the exact choice within this range remains to be determined. In our simulations, balancing computational efficiency and performance, we record the recovered DAGs when $f(N) = N, 2N, 3N$. In these three cases, the matrices on which GES is applied are denoted as $\hat{\MZ}_1, \hat{\MZ}_2, \hat{\MZ}_3$. We record the averaged SHD (Structural Hamming Distance) for different scenarios in Table~\ref{tab:SHD-MT9}   

\begin{table}[h!]
\centering
\begin{tabular}{lllccccccccc}
\toprule
& & & \multicolumn{3}{c}{Bernoulli} & \multicolumn{3}{c}{Poisson} & \multicolumn{3}{c}{Lognormal} \\
\cmidrule(lr){4-6}\cmidrule(lr){7-9}\cmidrule(lr){10-12}
Model & $\MQ$ & $\hat\MZ$ & 3000 & 5000 & 7000 & 3000 & 5000 & 7000 & 3000 & 5000 & 7000 \\
\midrule
\multirow{6}{*}{Chain-10}
  & \multirow{3}{*}{$\MQ_1$} & $\hat\MZ_1$ & 5.55  & 3.294 & 2.938 & 2.248 & 1.362 & 0.638 & 0.412 & 0.254 & 0.122 \\
  &                      & $\hat\MZ_2$ & 5.966 & 4.92  & 4.88  & 2.284 & 1.986 & 1.206 & 0.992 & 0.49  & 0.33  \\
  &                      & $\hat\MZ_3$ & 7.852 & 6.91  & 7.344 & 3.518 & 3.178 & 2.542 & 1.82  & 1.246 & 0.95  \\
  & \multirow{3}{*}{$\MQ_2$} & $\hat\MZ_1$ &5.664  &3.258  & 3.042   &2.174&0.704&0.488  & 0.406   &0.208    &0.146  \\
  &                      & $\hat\MZ_2$ & 6.042 &4.784  & 5.314 & 2.24  & 1.306 & 1.048 & 0.894 & 0.562&0.328    \\
  &                      & $\hat\MZ_3$ & 8.058 & 7.016&7.54 & 3.534& 2.53 & 2.272 & 1.75&1.286&1.026   \\
\addlinespace
\multirow{6}{*}{Tree-10}
  & \multirow{3}{*}{$\MQ_1$} & $\hat\MZ_1$ & 4.372 & 2.308 & 1.308 & 1.85  & 1.692 & 1.36  & 0.94  & 0.51  & 0.308 \\
  &                      & $\hat\MZ_2$ & 5.024 & 3.674 & 3.068 & 3.09  & 2.656 & 2.66  & 1.528 & 0.788 & 0.344 \\
  &                      & $\hat\MZ_3$ & 7.07  & 5.566 & 5.106 & 4.622 & 4.518 & 3.982 & 2.388 & 1.32  & 1.032 \\
  & \multirow{3}{*}{$\MQ_2$} & $\hat\MZ_1$ & 4.3    & 1.936    & 1.26  &2.676 &1.556& 1.146 & 1.14 & 0.618    & 0.346  \\
  &                      & $\hat\MZ_2$ & 5.152   &3.346    &2.936 & 3.62  & 2.41 & 2.218 & 1.582  & 0.878 & 0.496\\
  &                      & $\hat\MZ_3$ & 7.26 &5.368  & 4.964  & 5.22& 4.136& 3.706 & 2.66  & 1.578 & 0.882    \\
\addlinespace
\multirow{6}{*}{Model-7}
  & \multirow{3}{*}{$\MQ_1$} & $\hat\MZ_1$ & 7.692 & 6.334 & 5.798 & 5.838 & 5.422 & 4.892 & 0.196 & 0.042 & 0     \\
  &                      & $\hat\MZ_2$ & 7.352 & 6.274 & 6.132 & 5.278 & 4.754 & 4.786 & 0.06  & 0.036 & 0.014 \\
  &                      & $\hat\MZ_3$ & 7.806 & 6.976 & 6.994 & 5.18  & 5.362 & 6.288 & 0.144 & 0.1   & 0.102 \\
  & \multirow{3}{*}{$\MQ_2$} & $\hat\MZ_1$ & 7.554  & 6.304   &5.68  & 5.848   &5.526 & 5.082&0.262  & 0.054  & 0.004 \\
  &                      & $\hat\MZ_2$ & 7.312 & 6.238 &6.022 &5.39  & 5.048& 5.124& 0.1  & 0.036 & 0.036  \\
  &                      & $\hat\MZ_3$ & 7.714 &6.83&6.87    &5.148  & 5.452& 6.552 & 0.188 & 0.098    & 0.084 \\
\addlinespace
\multirow{6}{*}{Model-8}
  & \multirow{3}{*}{$\MQ_1$} & $\hat\MZ_1$ & 4.336 & 2.682 & 2.19  & 2.106 & 1.916 & 1.878 & 0.132 & 0.048 & 0     \\
  &                      & $\hat\MZ_2$ & 5.012 & 3.498 & 3.336 & 2.356 & 2.238 & 2.742 & 0.172 & 0.084 & 0.058 \\
  &                      & $\hat\MZ_3$ & 6.354 & 5.108 & 5.532 & 3.17  & 3.292 & 4.336 & 0.416 & 0.248 & 0.22  \\
  & \multirow{3}{*}{$\MQ_2$} & $\hat\MZ_1$ & 4.342 &2.72   & 2.374 &2.264 & 1.9 &1.782  & 0.202  & 0.052& 0.002  \\
  &                      & $\hat\MZ_2$ & 4.992  &3.416   & 3.29& 2.516 &2.388& 2.588 & 0.15   & 0.07  & 0.036  \\
  &                      & $\hat\MZ_3$ & 6.264& 4.952 & 5.454 & 3.336& 3.808 & 4.17 & 0.498 & 0.302 & 0.25    \\
\addlinespace
\multirow{6}{*}{Model-13}
  & \multirow{3}{*}{$\MQ_1$} & $\hat\MZ_1$ & 22.37  & 16.454 & 14.062 & 24.65 & 16.472 & 14.162 & 3.206 & 1.646 & 1.008 \\
  &                      & $\hat\MZ_2$ & 23.274 & 17.152 & 15.544 & 24 & 16.106 & 15.258& 3.138 & 1.788 & 1.17  \\
  &                      & $\hat\MZ_3$ & 25.134 & 19.594 & 18.24  & 25.16 & 18.706 & 18.732 & 3.89  & 2.554 & 1.728 \\
  & \multirow{3}{*}{$\MQ_2$} & $\hat\MZ_1$ & 22.252  & 16.29  & 14.606    & 25.134  &15.626 & 12.934 &3.032   & 1.872&0.994 \\
  &                      & $\hat\MZ_2$ & 23.348 &16.816  & 15.622  & 24.686  &15.162& 14.258  & 2.852  & 2.082 & 1.158\\
  &                      & $\hat\MZ_3$ & 25.262 &19.334  & 18.68  &25.612& 17.592 & 17.46 &3.624 & 2.696 & 1.68 \\
\bottomrule
\end{tabular}
\caption{SHD.}
\label{tab:SHD-MT9}
\end{table}

One can find that although $\hat\MZ_1$ achieves the best results in most cases, $\hat\MZ_2$ and $\hat\MZ_3$ can occasionally outperform it. Therefore, we regard $\frac{f(N)}{N}$ as a tuning parameter, and the choice of its optimal value remains an open question. At this stage, we recommend using $\hat\MZ_1$.
\subsubsection{Q-matrix for the TIMSS data}\label{subsubsec:TIMSS-Q}

This subsection provides the $\MQ$-matrix used in Section~\ref{sec: TIMSS}; see Table~\ref{tab:TIMSS}.
\begin{table}[htbp]
\centering
\caption{$\MQ$-matrix for the TIMSS 2019 math assessment booklet 14.}
\begin{tabular}{c|ccccccc}
\hline
Item ID & Number & Algebra & Geometry & Data and Prob. & Knowing & Applying & Reasoning \\
\hline
1  & 1 & 0 & 0 & 0 & 1 & 0 & 0 \\
2  & 1 & 0 & 0 & 0 & 0 & 1 & 0 \\
3  & 1 & 0 & 0 & 0 & 0 & 1 & 0 \\
4  & 1 & 0 & 0 & 0 & 1 & 0 & 0 \\
5  & 0 & 1 & 0 & 0 & 0 & 1 & 0 \\
6  & 0 & 1 & 0 & 0 & 1 & 0 & 0 \\
7  & 0 & 1 & 0 & 0 & 0 & 0 & 1 \\
8  & 0 & 1 & 0 & 0 & 0 & 1 & 0 \\
9  & 0 & 0 & 1 & 0 & 0 & 0 & 1 \\
10 & 0 & 0 & 1 & 0 & 0 & 1 & 0 \\
11 & 0 & 0 & 1 & 0 & 0 & 1 & 0 \\
12 & 0 & 0 & 0 & 1 & 0 & 1 & 0 \\
13 & 0 & 0 & 0 & 1 & 0 & 1 & 0 \\
14 & 1 & 0 & 0 & 0 & 1 & 0 & 0 \\
15 & 1 & 0 & 0 & 0 & 1 & 0 & 0 \\
16 & 1 & 0 & 0 & 0 & 0 & 0 & 1 \\
17 & 1 & 0 & 0 & 0 & 0 & 1 & 0 \\
18 & 0 & 1 & 0 & 0 & 1 & 0 & 0 \\
19 & 0 & 1 & 0 & 0 & 1 & 0 & 0 \\
20 & 0 & 1 & 0 & 0 & 0 & 1 & 0 \\
21 & 0 & 1 & 0 & 0 & 0 & 1 & 0 \\
22 & 0 & 1 & 0 & 0 & 0 & 1 & 0 \\
23 & 0 & 0 & 1 & 0 & 1 & 0 & 0 \\
24 & 0 & 0 & 1 & 0 & 0 & 0 & 1 \\
25 & 0 & 0 & 1 & 0 & 0 & 0 & 1 \\
26 & 0 & 0 & 1 & 0 & 0 & 1 & 0 \\
27 & 0 & 0 & 0 & 1 & 0 & 1 & 0 \\
28 & 0 & 0 & 0 & 1 & 0 & 1 & 0 \\
29 & 0 & 0 & 0 & 1 & 0 & 0 & 1 \\
\hline
\end{tabular}\label{tab:TIMSS}
\end{table}
\newpage
\subsubsection{Seesaw Image Generator and Preprocessing Details}\label{app:seesaw_generation}

For each sample $i\in[10000]$, we draw $(Z_{1i},Z_{2i})$ independently as $\mathrm{Bernoulli}(0.5)$.
Then $Z_{3i}$ is generated from a stochastic seesaw-response rule
\[
Z_{3i}\sim \mathrm{Bernoulli}\bigl(p_3(Z_{1i},Z_{2i})\bigr),
\qquad
p_3(1,1)=0.8,
\qquad
p_3(z_1,z_2)=0.2\ \text{for }(z_1,z_2)\neq(1,1).
\]
The occluded-ball visibility $Z_{4i}$ is generated conditionally on $Z_{3i}$ as
\[
Z_{4i}\sim \mathrm{Bernoulli}\bigl(p_4(Z_{3i})\bigr),
\qquad
p_4(1)=0.99,
\qquad
p_4(0)=0.
\]

Given a latent configuration $(z_1,z_2,z_3,z_4)$, the generator renders a $256\times256$ RGB scene and converts it to grayscale.
The scene consists of a fixed pivot and a rigid plank (the seesaw) that rotates by an angle
\[
\alpha(z_3)=
\begin{cases}
-\alpha_0, & z_3=1 \ (\text{left side up}),\\
+\alpha_0, & z_3=0 \ (\text{left side down}),
\end{cases}
\qquad\text{with }\alpha_0=25^\circ.
\]
A left ball is placed on top of the plank near its left endpoint.
The two tray balls ($z_1$ and $z_2$) lie on two well-separated slots of a horizontal forked tray whose position is fixed in the image, so the tray does not rotate with the seesaw.

Across images, we introduce mild nuisance variation through small i.i.d.\ positional jitter in the seesaw subsystem.
In particular, for each rendered scene we perturb the pivot and plank center by independent uniform offsets in $[-\delta,\delta]$ along each axis, with $\delta=0.001$ in normalized coordinates, and we apply an additional small shared offset to the seesaw-side balls. This prevents the images from being identical templates within the same latent state and makes the learning task more challenging, while preserving the intended semantics.

The fourth ball is positioned at the left ball's canonical ``down'' location (the $z_3=0$ geometry) and is drawn before the left ball so that, when $z_3=0$, the left ball occludes it.
When $z_3=1$, the left ball moves upward and the fourth ball may become visible if $z_4=1$.

Let $X^{\mathrm{original}}_i\in\{0,1,\ldots,255\}^{256\times256}$ denote the grayscale image for sample $i$.
We then construct an inverted binary mask at the original resolution by thresholding at $80$:
pixels with grayscale value greater than $80$ are set to $1$, and dark pixels (the balls) are set to $0$.
We resize this binary mask to $96\times96$ using nearest-neighbor interpolation to preserve sharp object boundaries, and denote the result by $X^{\mathrm{mask}}_i\in\{0,1\}^{96\times96}$.
Under this convention, ball pixels are coded as $0$ and non-ball pixels are coded as $1$.

Finally, we produce the $16\times16$ pooled representation $X_i^{\mathrm{pooled}}\in\{0,1\}^{16\times16}$ by applying min-pooling over non-overlapping blocks of the $96\times96$ mask, yielding $256$ binary features per image.
Because the mask is inverted, a pooled entry equals $0$ if at least one ball pixel appears in that spatial cell, and equals $1$ otherwise.

Therefore, we obtain
\[
\MX_{\mathrm{original}}\in\{0,\ldots,255\}^{10000\times 256^2},\quad
\MX_{\mathrm{mask}}\in\{0,1\}^{10000\times 96^2},\]
\[\MX_{\mathrm{pooled}}\in\{0,1\}^{10000\times 256},\quad
\MZ\in\{0,1\}^{10000\times 4},
\]
where each row corresponds to one rendered scene and its associated latent state.
Example images are shown below.
In each figure, the single row displays (from left to right) the original image, the balls-only mask, and the pooled representation.
We fit our model using the pooled data $\MX_{\mathrm{pooled}}$.
\begin{figure}[htbp]
\begin{minipage}{0.99\linewidth}
\vspace{3pt}
\centerline{\includegraphics[height=5cm]{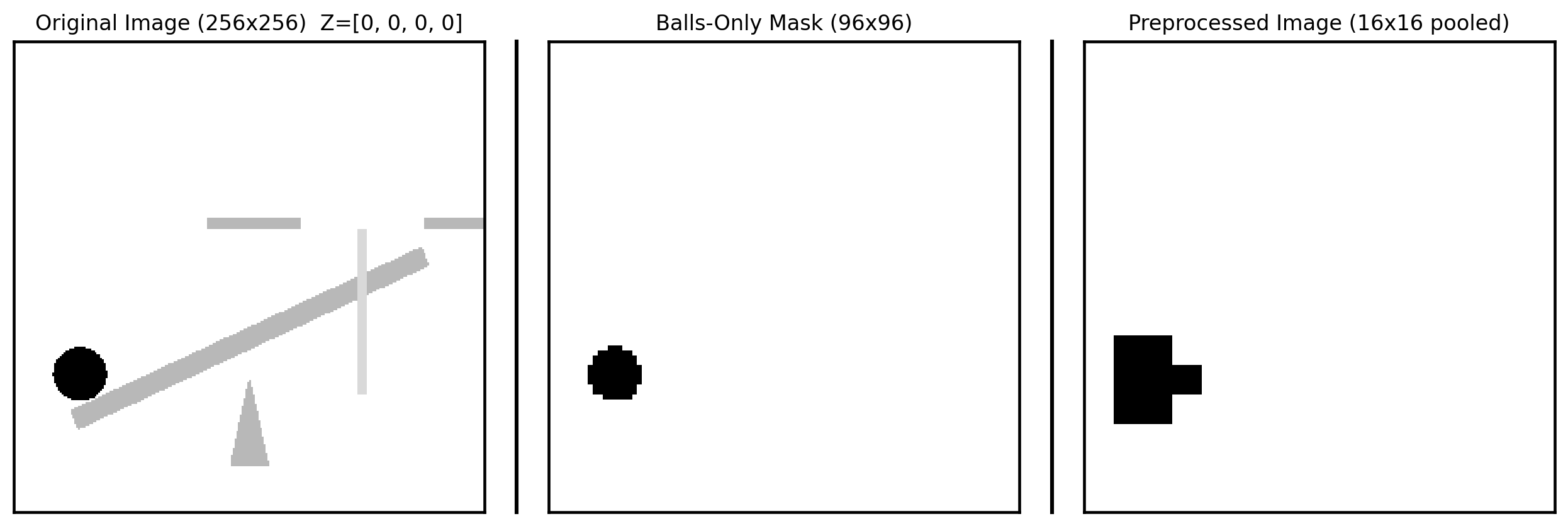}}
\end{minipage}
\end{figure}
\begin{figure}[htbp]
\begin{minipage}{0.99\linewidth}
\vspace{3pt}
\centerline{\includegraphics[height=5cm]{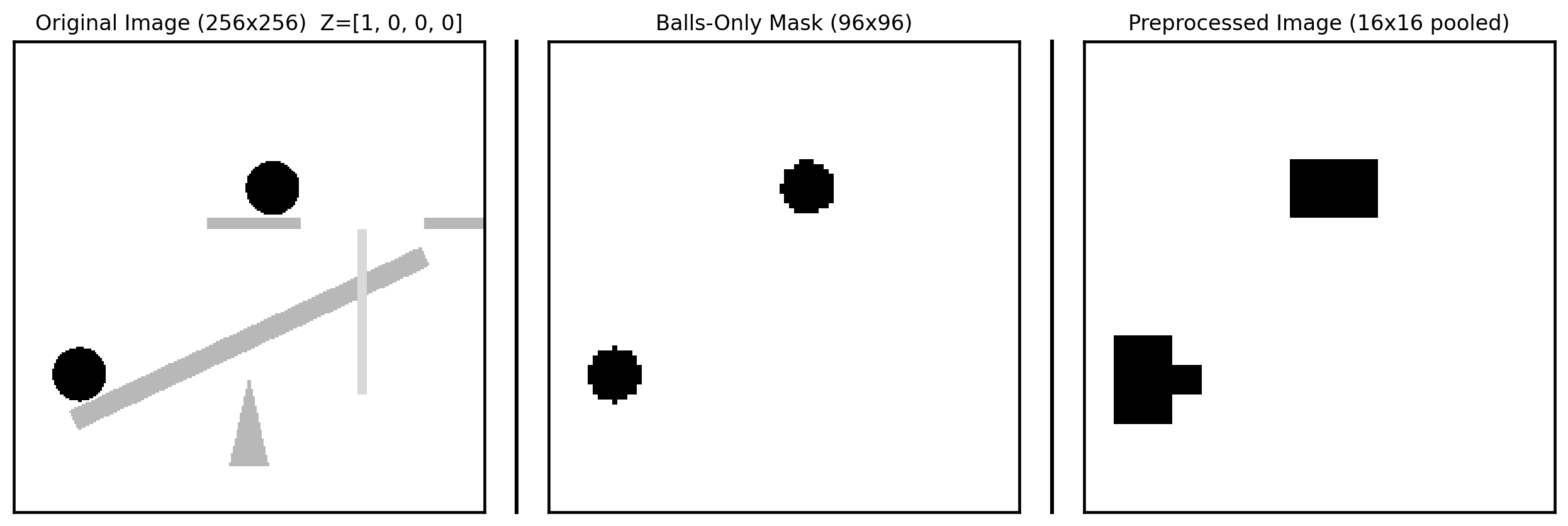}}
\end{minipage}
\end{figure}
\begin{figure}[htbp]
\begin{minipage}{0.99\linewidth}
\vspace{3pt}
\centerline{\includegraphics[height=5cm]{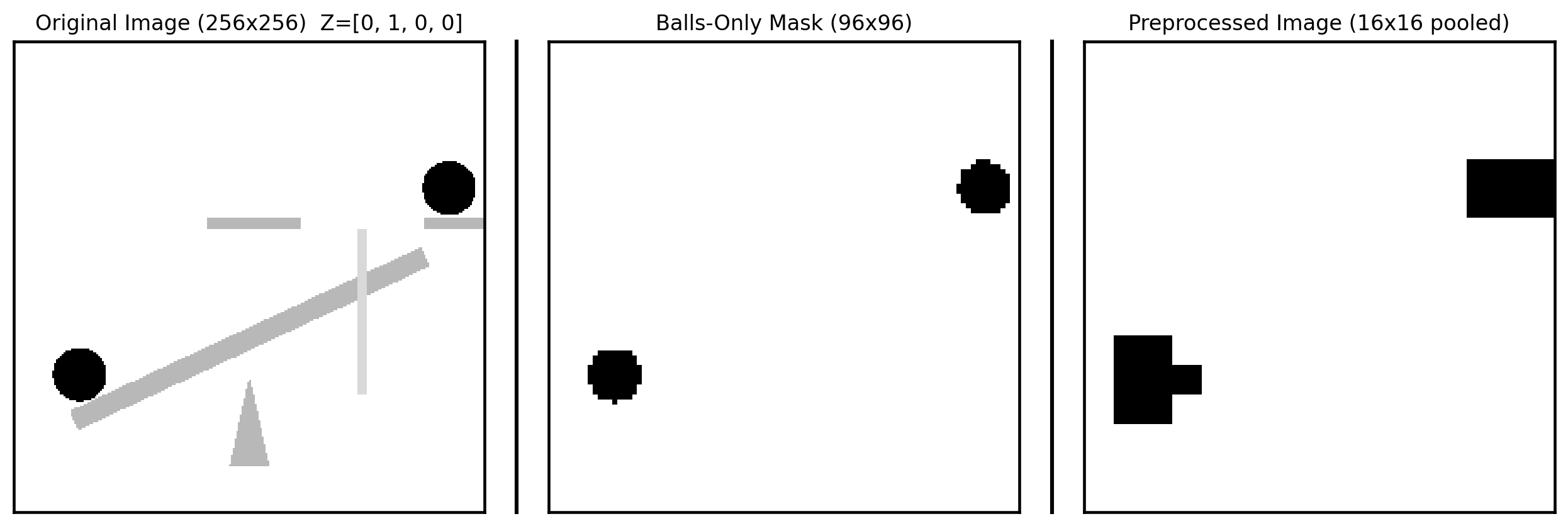}}
\end{minipage}
\end{figure}
\begin{figure}[htbp]
\begin{minipage}{0.99\linewidth}
\vspace{3pt}
\centerline{\includegraphics[height=5cm]{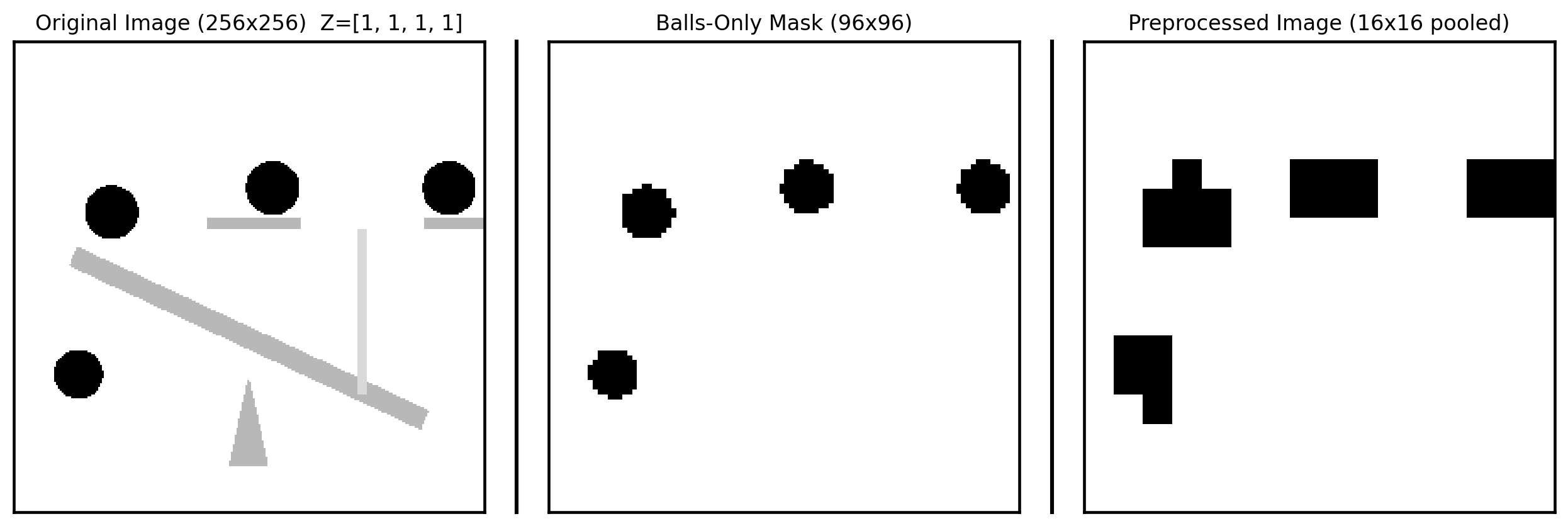}}
\end{minipage}
\end{figure}
\begin{figure}[htbp]
\begin{minipage}{0.99\linewidth}
\vspace{3pt}
\centerline{\includegraphics[height=5cm]{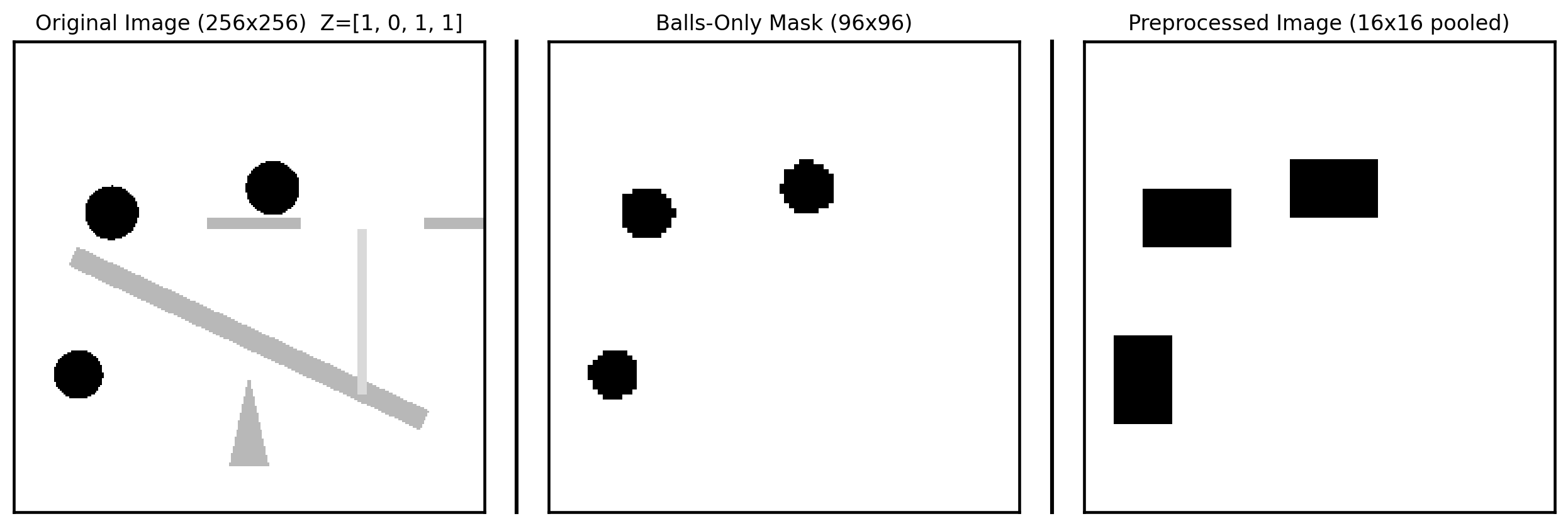}}
\end{minipage}
\end{figure}
\begin{figure}[htbp]
\begin{minipage}{0.99\linewidth}
\vspace{3pt}
\centerline{\includegraphics[height=5cm]{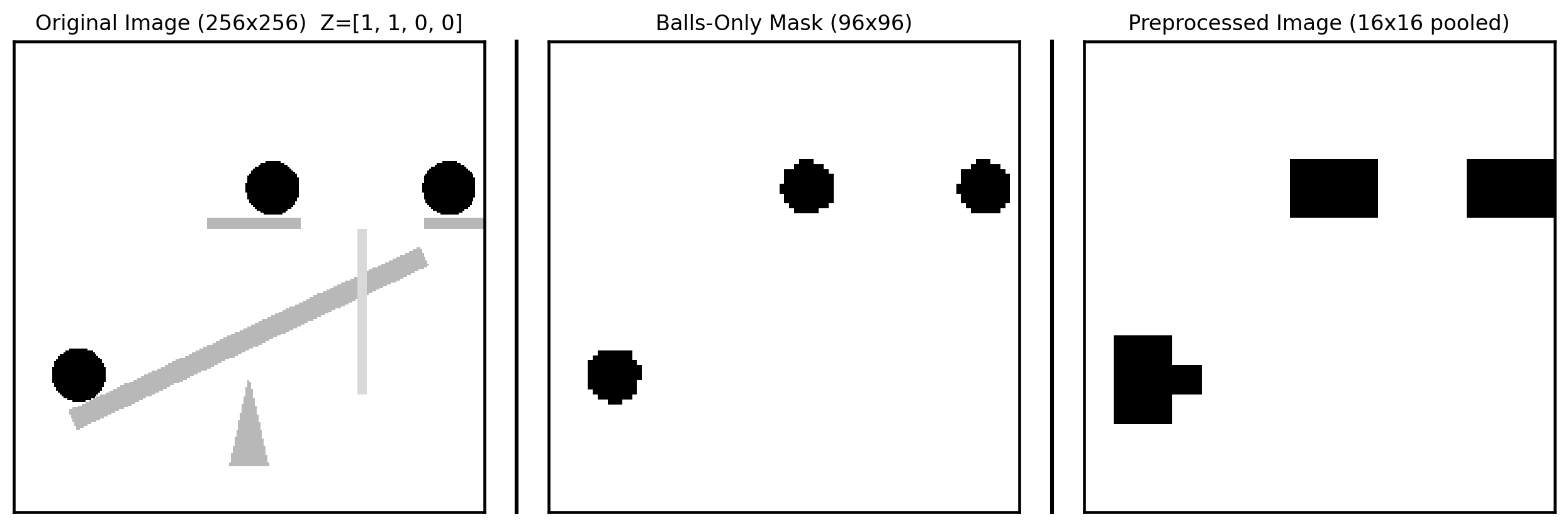}}
\end{minipage}
\end{figure}
\begin{figure}[htbp]
\begin{minipage}{0.99\linewidth}
\vspace{3pt}
\centerline{\includegraphics[height=5cm]{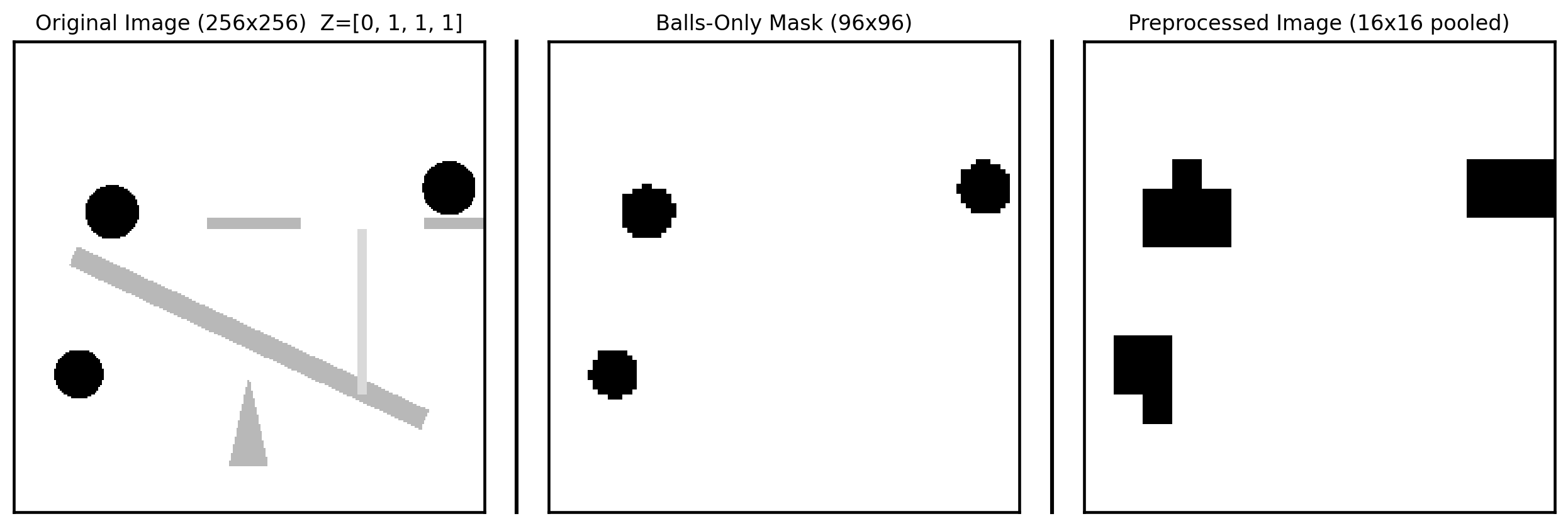}}
\end{minipage}
\end{figure}

\subsection{Connections with Existing Studies}\label{sec: Connections with Existing Studies}
We emphasize that our statistical specification of causal mechanisms is deliberately broad and flexible. For the causal structure among latent variables, we allow for arbitrarily complex dependencies among the latent factors. The only structural limitation is that the latent variables are taken to be discrete. Far from being restrictive, this choice has proven especially fruitful. On the theoretical side, assuming discrete latent variables allows us to employ powerful identifiability results from mixture and latent class models, thereby greatly facilitating more rigorous analysis \citep{teicher1967identifiability,yakowitz1968on,allman2008Identifiability}. On the modeling side, discrete latent hierarchies have formed the basis of influential architectures in machine learning, such as deep Boltzmann machines (DBMs) \citep{salakhutdinov2009deep}, deep belief networks (DBNs) \citep{hinton2006a}. DBMs and DBNs in particular were originally designed with multiple binary latent layers 
\citep{salakhutdinov2015learning}. Moreover, a collection of $K$ binary latent variables yields $2^K$ possible latent configurations, so that a relatively small number of latent factors can encode a combinatorially large family of data-generating regimes. This corresponds to a distributed representation in the usual sense of representation learning: each regime is encoded by a pattern of activations across multiple latent units, rather than a single categorical latent with $2^K$ levels. These examples show that discrete latents are sufficiently rich to capture complex data distributions while yielding parsimonious representations and tractable identifiability analysis. 

For the links between latent and observable variables, we adopt a specification in the spirit of all-effect general-response CDMs (GR-CDMs) \citep{liu2025exploratory}. Broadly speaking, cognitive diagnosis models (CDMs) are latent variable models in which each subject possesses a vector of binary latent attributes indicating mastery versus non-mastery on a collection of skills, and each item is designed to depend only on a specified subset of these attributes, typically encoded by a binary design matrix. Our CDM-style measurement layer is motivated by the remarkable success of CDMs in educational measurement, where they have proven to be powerful tools for modeling multidimensional discrete skills with both mature identifiability theory \citep{lee2024new} and rich representational capacity. In the binary-response case, our parameterization naturally subsumes well-known models such as the additive CDM (ACDM) and the generalized DINA (G-DINA) \citep{delaTorre2011the}, while for continuous responses it also includes extensions such as the continuous DINA (cDINA) model under positive outcomes \citep{minchen2017a}. In this way, the same decomposition provides a unified framework that accommodates polytomous, continuous, and mixed responses, while allowing higher–order interactions when supported by data. 

\citet{prashant2024differentiable} investigate causal discovery in hierarchical latent-variable models whose observed and latent variables are all modeled as continuous.
While the functional relationships among variables are quite flexible, their identifiability theory imposes a strong structural restriction on the latent DAG: latent variables are partitioned into hierarchical layers and edges are permitted only across layers. Consequently, the recoverable graphs are essentially concatenations of bipartite graphs between successive layers. In addition, although their identifiability result is also obtained from observational data alone, it requires a stronger measurement assumption than ours: each latent variable must possess at least two pure children. By contrast, our framework allows arbitrary latent DAGs and provides a unified treatment of both continuous and discrete observed variables, while our subset condition is strictly weaker than such pure-child structure.

Recent work by \citet{dong2026score} also studies a score-based greedy search procedure for partially observed causal models. Their theory is developed for a linear-Gaussian latent-variable SEM, where the greedy search compares maximized scores that depend on the observed covariance matrix \(\Sigma_X\). In our framework, the latent variables are discrete and the likelihood of $X$ is not determined by \(\Sigma_X\) alone. Therefore, the covariance-based Gaussian scoring theory of \citet{dong2026score} is not applicable to our setting. 

\citet{kivva2021learning} consider a general discrete-latent setting and reduce recovery to a mixture-oracle problem. But as already discussed in Section~\ref{sec:sim}, this generality comes at the price of not performing well in all settings. In terms of identifiability, although both our work and \citet{kivva2021learning} impose subset condition, the mechanisms are fundamentally different. \citet{kivva2021learning} obtain identifiability through access to a mixture oracle, hence they have to assume that the mixture model over $X_S$ is identifiable for \emph{every} subset $S\subseteq \mathcal X$. This assumption is typically violated in discrete-response settings, which we also include here. By contrast, we establish identifiability directly from the observed joint distribution in a completely different way, without requiring any oracle knowledge.

We further compare our framework with several works that also establish identifiability from observational data alone. Both \citet{moran2022identifiable} and \citet{kivva2022identifiability} fall into this category, but they differ from ours in several fundamental ways. First, both works focus on continuous latent variables under Gaussian assumptions, whereas our setting targets discrete latent variables. In particular, \citet{moran2022identifiable} also require the observational variables to be Gaussian and impose anchor features, which are analogous to pure children and are strictly stronger than our subset condition. On the other hand, \citet{kivva2022identifiability} assume a well-posed additive-noise observation model together with a piecewise-affine decoder and a Gaussian-mixture latent structure. Their disentanglement claims rely crucially on the Gaussian-mixture covariance structure across mixture components, and identifiability of the decoder further requires an injectivity condition. Consequently, both the assumptions and the proof techniques in these works are fundamentally different from those needed in our discrete-latent framework.

\end{document}